\newtheorem{thm}{Theorem}[section]
\newtheorem{prop}[thm]{Proposition}
\newtheorem{lem}[thm]{Lemma}
\newtheorem{cor}[thm]{Corollary}
\theoremstyle{definition}
\newtheorem{cond}[thm]{Condition}
\newtheorem{rem}[thm]{Remark}
\newtheorem*{thm*}{Theorem}
\newtheorem*{cor*}{Corollary}
\renewcommand{\phi}{\varphi}
\renewcommand{\Re}{\mathrm{Re}}
\newcommand{\eps}{\varepsilon}
\newcommand{\nn}{\nonumber}
\newcommand{\supp}{\mathrm{supp}}
\newcommand{\ao}{\mathfrak{a}}
\newcommand{\ud}{\mathrm{d}}
\newcommand{\ue}{\mathrm{e}}
\newcommand{\ui}{\mathrm{i}}
\newcommand{\R}{\mathbb{R}}
\newcommand{\C}{\mathbb{C}}
\newcommand{\N}{\mathbb{N}}
\newcommand{\Z}{\mathbb{Z}}
\newcommand{\cH}{\mathcal{H}}
\newcommand{\cN}{\mathcal{N}}
\newcommand{\cL}{\mathcal{L}}
\newcommand{\cB}{\mathcal{B}}
\newcommand{\cE}{\mathcal{E}}
\newcommand{\Ho}{\mathbb{H}_0}
\newcommand{\W}{\mathbb{W}}
\newcommand{\B}{\mathcal{B}}
\newcommand{\1}{\mathds{1}}
\newcommand{\dd}{\mathrm{d}}
\newcommand{\hc}{\mathrm{h.c.}}
\newcommand{\aV}{ \widetilde{\mathfrak{a}}_{V_N}}
\newcommand{\aW}{\widetilde{\mathfrak{a}}_{W_N}}
\newcommand{\T}{\mathbb{T}}
\newcommand{\Vphi}{\widetilde{\phi}_{\mathrm{B}}}
\newcommand{\Vphih}{\widehat{\widetilde{\phi}}_{\mathrm{B}}}
\newcommand{\Wphi}{\widetilde{\phi}_{\mathrm{I}}}
\newcommand{\Wphix}{\widetilde{\phi}_{\mathrm{I},x}}
\newcommand{\Wphih}{\widehat{\widetilde{\phi}}_{\mathrm{I}}}
\newcommand{\Wx}{W_{N,x}}
\newcommand{\mw}{w}
\numberwithin{equation}{section}
\title{The excitation spectrum of a Bose gas with an impurity in the Gross--Pitaevskii regime}
\author{Jonas Lampart\thanks{CNRS \& Laboratoire Interdisciplinaire Carnot de Bourgogne (UMR 6303), Universit\'e de Bourgogne, 9 Av. A. Savary, 21078 Dijon Cedex, France. \texttt{jonas.lampart@u-bourgogne.fr}}, Arnaud Triay\thanks{Department of Mathematics, LMU Munich, Theresienstrasse 39, 80333 Munich, Germany. \texttt{triay@math.lmu.de}}}
\begin{document}

\maketitle

\begin{abstract}
 We study a dilute system of $N$ interacting bosons coupled to an impurity particle via a pair potential in the Gross--Pitaevskii regime. We derive an expansion of the ground state energy up to order one in the boson number, and show that the difference of excited eigenvalues to the ground state is given by the eigenvalues of the renormalized Bogoliubov--Fröhlich Hamiltonian in the limit $N\to \infty$.
\end{abstract}

\tableofcontents

\section{Introduction}

At very low temperatures a gas of bosons forms a Bose--Einstein condensate (BEC), a quantum state in which a macroscopic fraction of the bosons occupy the same single-particle state. The particles outside the condensate state, often called excitations or phonons, play a crucial role in the thermodynamic properties of the BEC. In 1947, Bogoliubov developed a theory to explain the emergence of superfluidity in dilute Bose gases using Landau's criterion of superfluidity~\cite{Landau-41}. In Bogoliubov's theory \cite{Bogoliubov-47}, the excitations are described by a quantum field governed by a Hamiltonian that is quadratic in the creation and annihilation operators of the excitations. This allowed him to compute explicitly the ground state energy and the low-lying excitation spectrum of the system. The rigorous justification of Bogoliubov's theory has now been studied in depth from a mathematical point of view, both from the spectral point of view \cite{Sei-11,GreSei-13,LewNamSerSol-15,YauYin-09,DerNap-14,BocBreCenSch-18,basti-21,NamTri-23,BreSchSch-22,FouSol-20,FouSol-22,BosPetSei-21,bossmann-22,brooks-23}, and for the dynamics \cite{RodSch-09,GrilMacMar-10,GriMacMar-11,LewNamSch-15,BenOliSch-15,CarOldSch-24,BocCenSch-17}. See also \cite{NamNap-19,CarCenSch-23,fournais-22} for two-dimensional systems and \cite{boccato-23,HabHaiNamSeiTri-23,BreLeeNam-24,HabHaiSchTri-24,fournais-24,CapDeu-23} for the positive temperature case.

 The interaction of the Bose gas with an impurity particle will create additional excitations out of the condensate, which in turn act on the impurity.
In the regime of large boson number, the Bogoliubov-Fr\"ohlich Hamiltonian has been proposed as an effective model (cf.~\cite{grusdt2016, MySe-20, LaPi-22}) for the excitation spectrum of the system. In this Hamiltonian, the interaction between the impurity and the excitation field is given by a coupling term that is linear in the field operator. This model has already been mathematically justified in the mean-field limit~\cite{MySe-20, LaPi-22}. However, Bose--Einstein condensates, as they are prepared in experiments, are better modeled by the dilute Gross--Pitaevskii limit, and the validity of the  Bogoliubov-Fr\"ohlich model has been questioned in the physics literature~\cite{christensen2015} (see Remark~\ref{rem:christensen}).
The goal of this article is to study the ground state energy and spectrum of the Bose gas with an impurity in this limit and prove validity of the approximation by the Bogoliubov-Fr\"ohlich Hamiltonian.

In the Gross--Pitaevskii regime, $N$ bosons are spread out over a volume of diameter $L=N$ in three dimensions. The density is therefore of order $N^{-2}$.
It is however more convenient to rescale the system in a volume of order one. This leads us to consider the Hamiltonian 
\begin{equation}\label{eq:H_N intro}
H_N=-\Delta_x +\sum_{i=1}^N (-\Delta_{y_i}) + \sum_{1\leq i < j \leq N} N^2 V(N(y_i-y_j)) + \sum_{i=1}^N N W(\sqrt N(x-y_i)).
\end{equation}
For simplicity, we will consider a system with periodic boundary conditions.  The position of the impurity will be denoted by $x\in \T^3$ and those of the $N$ bosons by $(y_i)_{1\leq i\leq N} \in \T^{3N}$, where $\T^3=\R^3/\Z^3$ is the three-dimensional unit torus.
We assume that the boson-boson potential $V$ and boson-impurity potential $W$ belong to $L^2(\R^3)$, have compact support and are non-negative. If $N$ is sufficiently large, the scaled potentials $V_N(y)=N^2V(Ny)$ and $W_N(x)=NW(\sqrt{N} x)$ have support inside the cube $[0,1]^3$, therefore we may identify the support with a subset of the torus and replace the potentials by their periodic extension.
The Hamiltonian $H_N$ thus acts on the Hilbert space $\mathscr{H}_N=L^2(\mathbb{T}^3)\otimes L^2(\mathbb{T}^3)^{\otimes_{\mathrm{sym}} N}$ with domain $H^2(\mathbb{T}^{3N+1})\cap \mathscr{H}_N$.
Heuristically, one should think that as $N\to \infty$, the boson-boson interaction behaves like  $N^{-1}\delta(y_i-y_j)$, and the boson-impurity interaction like $ N^{-1/2}\delta(x-y_i)$.
This singular behavior leads to considerable difficulties in the analysis, even at leading order of the energy. For the case without an impurity, such an analysis has been carried out in the works~\cite{LieYng-98, LieSei-02,LieSeiYng-00,LieSei-06,NamRouSei-16,BocBreCenSch-20,brooks-23}.

The scaling for $W$ is chosen so that the interaction between the impurity and the excitations above the condensate is of the same order as the kinetic energy of the Bogoliubov excitations, similarly to~\cite{MySe-20, LaPi-22}, see Remark \ref{rem:scaling_BF} below.

The starting point for our analysis will be to assume Bose-Einstein condensation of low-energy states, in the form of Condition~\ref{cond:BEC} below.
To state this, let $\ao_v$ denote the scattering length of a potential $0\leq v\in L^1(\R^3)$, defined by
\begin{equation}\label{eq:scatt_length-intro}
8 \pi \mathfrak{a}_v:=\inf_{\phi\in \dot{H}^1(\R^3)} \int_{\mathbb{R}^{3}} \Big( 2 |\nabla \phi(x)|^2 + v(x) |1+\phi(x)|^2\Big)\ud x.
\end{equation}
Note the scaling property $\mathfrak{a}_{\lambda^2 v (\lambda\cdot)} = \lambda^{-1} \mathfrak{a}_{v}$.

\begin{cond}[Bose--Einstein condensation]\label{cond:BEC}
There exist $c,d>0$ so that
 \begin{equation*}
H_N - 4\pi \ao_V N \geq  c \sum_{j=1}^N Q_{j} -d\sqrt{N},
 \end{equation*}
  where we  $Q_j $ denotes the projection  onto the orthogonal of the constant function on $\T^3$, acting on the $j$-th factor of $L^2(\mathbb{T}^3)^{\otimes_{\mathrm{sym}} N}$.
\end{cond}
As we are already assuming $W\geq 0$, the boson-impurity interaction can be dropped from $H_N$ for a lower bound, which makes this a condition on $V$  only.
It is known to be satisfied if $V$ is radial~\cite{BocBreCenSch-20, HaiSchTri-22}, where it holds with a rate of order one instead of $\sqrt{N}$.  For our use, the improved rate is of no importance since the impurity will change the energy at order $\sqrt{N}$ in any case.
Condition~\ref{cond:BEC} is, however, expected to hold more generally. A discussion of the non-radial case  can be found in~\cite{NamRicTri-23}.

\begin{thm}\label{thm:asymptotic}
Let $0\leq V,W\in L^2(\R^3)$ be even and compactly supported.
Assume moreover that Bose--Einstein condensation holds in the sense of Condition~\ref{cond:BEC}.
Then as $N\to \infty$
\begin{equation*}
 \inf \sigma (H_N) = 4\pi \ao_{V}N + 8\pi \ao_W\sqrt{N} -  32\pi  (2\pi/3 - \sqrt{3}) \ao_W^4\log N +\mathcal{O}(1).
\end{equation*}
\end{thm}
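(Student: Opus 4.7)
The overall plan is to reduce $H_N$ to the renormalized Bogoliubov--Fröhlich Hamiltonian via a sequence of unitary conjugations, each designed to extract one of the three explicit energy contributions. The upper bound then follows by testing on a Bogoliubov--Fröhlich ground state pulled back through these transformations, and the lower bound by showing that after conjugation $H_N$ is bounded below by the explicit expansion plus the Bogoliubov--Fröhlich ground-state energy up to $O(1)$ errors.

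First I would map $\mathscr{H}_N$ onto $L^2(\T^3)\otimes \FM$ using the standard excitation map (c-number substitution for the condensate mode $u_0\equiv 1$), and apply Condition~\ref{cond:BEC} to localize low-energy states to the sector $\mathcal{N}_+ \lesssim \sqrt{N}$. Next I would apply a Bogoliubov-type transformation $T_V$ associated to the solution $\Vphi$ of the scattering equation for $V_N$, following the now-standard approach in the Gross--Pitaevskii regime. After conjugation the boson sector produces $4\pi\ao_V N$ plus a Bogoliubov quadratic form $\sum_{k\neq 0}\omega_k a_k^* a_k$ with dispersion $\omega_k = \sqrt{|k|^4+16\pi\ao_V|k|^2}$, together with cubic and quartic remainders controlled on the low-excitation sector.

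To treat the impurity--boson coupling, which is linear in the excitation field and of order $\sqrt{N}$, I would apply a fiber Weyl (coherent-state) transformation $W(\sqrt{N}\,\Wphix)$, where $\Wphix$ is the renormalized scattering solution for $W_N$ centered at the impurity position $x$. This extracts the $8\pi\ao_W\sqrt{N}$ contribution and leaves a residual cubic impurity--phonon vertex with effective UV cutoff $\sqrt{N}$, matching the Bogoliubov--Fröhlich coupling. Finally I would apply a Gross-type (Fröhlich-polaron) transformation $e^{B_W}$ designed to cancel the UV-singular part of this residual vertex; its second-order shift is a momentum-space integral that diverges logarithmically at cutoff $\sqrt{N}$ and, after explicit evaluation using the Bogoliubov dispersion, yields exactly $-32\pi(2\pi/3-\sqrt{3})\ao_W^4\log N$.

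The main obstacle will be twofold. First, the exact constant $2\pi/3-\sqrt{3}$ requires identifying precisely which combination of the Bogoliubov coefficients $(u_k,v_k)$ enters the impurity vertex after conjugation by $T_V$, and then computing the renormalized self-energy in closed form; the prefactor $\ao_W^4$ indicates that this computation must be carried out to one order beyond the naive Fröhlich second-order shift, necessitating careful treatment of the next-to-leading correction to the impurity scattering solution $\Wphix$. Second, the four transformations do not commute, so their cross-errors must be controlled down to $O(1)$; in particular, the shift $W(\sqrt{N}\,\Wphix)$ interacts nontrivially with both $T_V$ and $e^{B_W}$, and tracking the resulting commutators at the required precision is where the bulk of the technical work lies.
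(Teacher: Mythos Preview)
Your overall architecture---excitation map, quadratic renormalization for $V$, an $x$-dependent Weyl transformation for the impurity coupling, then a Gross-type step---matches the paper in broad outline, and the min-max upper/lower bound strategy is exactly what the paper uses. But the mechanism you identify for the $\log N$ term is wrong, and this is a genuine gap rather than a technical detail.

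After the Weyl transformation $W(\sqrt{N}\,\Wphix)$, the residual impurity--phonon coupling is \emph{not} merely a linear Bogoliubov--Fr\"ohlich vertex. Because the generator depends on $x$, conjugating $-\Delta_x$ produces, from the square of $U_W^*\,\ui\nabla_x\,U_W=\ui\nabla_x-(a^*(\sqrt{N}\,\ui\nabla\Wphix)+\hc)$, a \emph{two-phonon} creation term $a^*(\sqrt{N}\,\ui\nabla\Wphix)^2+\hc$. This is the singular object: its form factor behaves like $|k|^{-1}$ and is not in $\ell^2$, and no further Weyl/Gross transformation can remove it. The paper handles it by the dressing factorization $(1-G_N^*)(-\Delta_x+\ud\Gamma(\epsilon)+1)(1-G_N)$ with $G_N=-(-\Delta_x+\ud\Gamma(\epsilon)+1)^{-1}a^*(\ui\nabla v_{N,x})^2$, and the logarithm sits in the subtracted vacuum expectation
\[
E_N=-2\sum_{p,q}\frac{(p\cdot q)^2\,|\widehat v_N(p)|^2|\widehat v_N(q)|^2}{(p+q)^2+\epsilon(p)+\epsilon(q)+1},
\]
a two-boson resolvent expression \emph{quartic} in $\widehat v_N(p)\sim 4\pi\ao_W/p^2$. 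That is the origin of the $\ao_W^4$: it is a three-body (impurity plus two excitations) effect, not a higher-order correction to the two-body scattering solution $\Wphix$. By contrast, a Gross transformation on a linear vertex yields only the linearly divergent shift $-\sum_k|\widehat w(k)|^2/(k^2+\epsilon(k))$, never a logarithm. So your proposal is missing the $a^{*2}$ term entirely, and with it both the correct source of the divergence and the technique (the $(1-G)$-dressing of Lampart--Schmidt type) needed to isolate it. As a smaller point, you also omit the cubic transformation that converts $\widehat V(0)$ into $8\pi\ao_V$ in the low-momentum boson Hamiltonian; this does not affect the three displayed terms but is needed for $O(1)$ accuracy.
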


The first two terms in the expansion  account for the direct boson-boson and the boson-impurity interaction,  respectively. They are natural guesses, as two-body interactions are expected to be responsible for the leading order in the energy, due to the diluteness of the gas. The $\log N$-term, on the other hand, is explained by an effective three-body interaction between the impurity and two bosonic excitations (see Remark~\ref{rem:three-body}).
The emergence of such a term from the Bogoliubov-Fr\"ohlich Hamiltonian with cutoff $\Lambda =N$ was proved in~\cite{La-20}, after being observed in~\cite{grusdt2015, grusdt2016}.

\begin{rem}\label{rem:scaling_BF}
The scaling for $W$ is chosen so as to observe meaningful interactions of the impurity with the excitations. To explain this in more detail, recall that the Bogoliubov Hamiltonian represents the Hessian of the  Gross--Pitaevskii functional~\cite{LewNamSerSol-15}.
In other words, without impurity, the second variation of the Gross--Pitaevskii energy describes the asymptotic excitation spectrum of $H_N$. The Gross--Pitaevskii energy with impurity is obtained by formally replacing the potentials in $H_N$ by $8\pi \ao_V N^{-1} \delta(y_i-y_j)$,  $8\pi \ao_{W} N^{-1/2}\delta(x-y_i)$ and calculating the quadratic form of $H_N$ in an uncorrelated state $\psi\otimes u^{\otimes N}$. Doing so, we arrive at
\begin{align}
\mathcal E_{\textrm{GP}}(\psi,u)
	&=  \int |\nabla_x \psi| + N \int |\nabla_y u|^2 + 4 \pi \mathfrak{a}_V (N-1) \int |u(y)|^4  \notag\\
	&\quad + 8\pi \mathfrak{a}_W \sqrt{N} \int |u(x)|^2|\psi(x)|^2.
\end{align}
With no impurity, $\psi=0$, the constant function $u= \1$  minimizes this functional.
This function represents a condensate of $N$ bosons, so finitely many excitations out of the condensate (as expected in the Gross--Pitaevskii regime \cite{BocBreCenSch-20}) correspond to a variation of $\int |u|^2$ of order $N^{-1}$, i.e., $u = \1 + N^{-1/2}\delta u$.
Inserting this into the functional, we obtain
\begin{align}
&\mathcal E_{\textrm{GP}}(\psi,\1+ N^{-1/2} \delta u)  - \mathcal E_{\textrm{GP}}(0,\1) - 8\pi \ao_W \sqrt{N}\notag \\
&	\approx  \int |\nabla_x \psi |^2 + \int \left( |\nabla_y \delta u |^2 + 8 \pi \mathfrak{a}_V |\delta u (y)|^2\right) \notag\\
	&\quad +  4 \pi \mathfrak{a}_V \int (\delta u(y)^2 +   \overline{\delta u(y)}^2)   + 8\pi \mathfrak{a}_W \int (\delta u(y) + \overline{\delta u(y)})|\psi(x)|^2,
\end{align}
up to higher order terms. Here, the interaction between the impurity and the excitations represented by $\delta u$ is of the same order as the interaction of the excitations among themselves due to our choice of interaction strength, which justifies the scaling for $W$.
\end{rem}

Our second main result is that the low-lying spectrum of $H_N$ is well approximated by the one of the renormalized Bogoliubov-Fr\"ohlich Hamiltonian. This operator acts on
\begin{equation}
 \mathscr{H}_+ = L^2(\T^3) \otimes \Gamma(\mathfrak{H}_+), 
\end{equation}
where $\Gamma(\mathfrak H_+)$ is the bosonic Fock space over $\mathfrak{H}_+ = \{\1\}^\perp$, the orthogonal to the (constant) condensate wave-function $\1_{\T^3}$. Formally, the Bogoliubov-Fr\"ohlich Hamiltonian is given by
\begin{equation}\label{eq:H_BF formal}
  -\Delta_x + \ud \Gamma(\epsilon) + a(\mw_x) + a^*(\mw_x),
\end{equation}
where $\epsilon$ is the Fourier multiplier with Bogoliubov's dispersion relation
\begin{equation}\label{eq:omega-def}
 \epsilon(p)=\sqrt{p^4+ 16\pi \ao_{V} p^2},
\end{equation}
 $a^*, a$ are the bosonic creation and annihilation operators (see~\eqref{eq:a*-a}) and $\mw_x(y)=w(x-y)$, where $w$ is the distribution with Fourier coefficients
\begin{equation}\label{eq:w-def}
 \widehat{  \mw}(p) = 8\pi \ao_W |p| \epsilon(p)^{-1/2}, \; p\in 2\pi\Z^3\setminus\{0\}.
\end{equation}
However, the expression~\eqref{eq:H_BF formal} is ill-defined due to the lack of decay of $\widehat w(p)\sim 8\pi \ao_W$ for $p\to \infty$. For a rigorous definition, it should be replaced by the renormalized Bogoliubov-Fr\"ohlich Hamiltonian on the torus, constructed as in~\cite{La-20}.

\begin{prop}\label{prop:renorm}
Define $\mw^\Lambda$ by its Fourier coefficients $\widehat \mw^\Lambda(k)= \widehat \mw(k) \1_{|k|\leq \Lambda}$ for $\Lambda \in \R_+$, and set
\begin{equation}\label{eq:H_BF cutoff}
 H_\mathrm{BF}^\Lambda= -\Delta_x + \ud \Gamma(\epsilon) +  a(\mw^\Lambda_x) + a^*(\mw^\Lambda_x),
\end{equation}
which defines a self-adjoint operator on $D(-\Delta_x + \ud \Gamma(\epsilon))$.
There exists a family $(E_\Lambda)_{\Lambda\geq 0}\subset \R$ and a self-adjoint operator $(H_\mathrm{BF}$, $D(H_\mathrm{BF}))$ with compact resolvent so that
\begin{equation*}
 H_\mathrm{BF}^\Lambda -E_\Lambda \stackrel{\Lambda\to \infty}{\rightarrow} H_\mathrm{BF}
\end{equation*}
in norm resolvent sense.
\end{prop}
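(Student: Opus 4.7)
My plan is to follow the Gross dressing strategy employed in~\cite{La-20} and adapt it to the form factor and dispersion at hand.

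For fixed $\Lambda<\infty$, self-adjointness of $H_\mathrm{BF}^\Lambda$ on $D(H_0)$, where $H_0:=-\Delta_x+\ud\Gamma(\epsilon)$, follows from Kato--Rellich: since $\mw^\Lambda\in L^2(\T^3)$, the operators $a(\mw^\Lambda_x)$ and $a^*(\mw^\Lambda_x)$ are dominated by $\|\mw^\Lambda\|_{L^2}\|(\cN+1)^{1/2}\cdot\|$, and the spectral gap $\epsilon(k)\geq c(1+k^2)$ for $k\in 2\pi\Z^3\setminus\{0\}$ gives $\cN\leq C\,\ud\Gamma(\epsilon)$, so the interaction is infinitesimally $H_0$-bounded. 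The same spectral gap, together with the compactness of $(-\Delta_x-z)^{-1}$ on $L^2(\T^3)$, shows that $H_0$ has compact resolvent on $\mathscr H_+$. For the renormalization I fix an auxiliary infrared scale $\sigma>0$, decompose $\widehat\mw^\Lambda=\widehat\mw^\sigma+\widehat\mw^{[\sigma,\Lambda]}$ by momentum support, and introduce the dressing kernel $\widehat B^\Lambda(k)=-\widehat\mw^{[\sigma,\Lambda]}(k)/(k^2+\epsilon(k))$ with $B^\Lambda_x(y)=B^\Lambda(x-y)$. Since $|\widehat\mw(k)|$ is bounded and $k^2+\epsilon(k)\geq c\,k^2$, one has $\widehat B^\Lambda\in\ell^2(2\pi\Z^3)$ uniformly in $\Lambda$ and $B^\Lambda\to B^\infty$ in $L^2(\T^3)$; consequently the $x$-fibered unitary $U_\Lambda=\exp\bigl(a^*(B^\Lambda_x)-a(B^\Lambda_x)\bigr)$ converges strongly to $U_\infty$.

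Expanding $U_\Lambda^*H_\mathrm{BF}^\Lambda U_\Lambda$ via the Weyl commutation relations and the chain rule in $x$ produces, by design of $B^\Lambda$, an exact cancellation of the linear interaction at momenta in $[\sigma,\Lambda]$, leaving
\begin{equation*}
 U_\Lambda^*H_\mathrm{BF}^\Lambda U_\Lambda - E_\Lambda = H_0 + a(\mw^\sigma_x)+a^*(\mw^\sigma_x) + V_\Lambda,
\end{equation*}
where $E_\Lambda$ absorbs the scalar self-energy contributions (dominated by $-\sum_{\sigma<|k|\leq\Lambda}|\widehat\mw(k)|^2/(k^2+\epsilon(k))$), and $V_\Lambda$ collects (i) a Wick-ordered quadratic form in $a,a^*$ built from $\epsilon B^\Lambda_x$ and $\mw^{[\sigma,\Lambda]}_x$, (ii) a cross term of the form $\nabla_x\cdot\bigl(a^*((\nabla B^\Lambda)_x)-a((\nabla B^\Lambda)_x)\bigr)+\hc$, and (iii) bounded zero-particle pieces. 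Using the decay $\widehat B^\Lambda(k)=O(|k|^{-2})$ and the spectral gap of $\epsilon$, one shows that each term in $V_\Lambda$ is $H_0$-bounded with relative bound strictly less than one, uniformly in $\Lambda$, and that $V_\Lambda\to V_\infty$ in operator norm from $D(H_0)$ to $\mathscr H_+$.

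By the resolvent identity, the resolvents $R_\Lambda(z):=\bigl(H_0+a(\mw^\sigma_x)+a^*(\mw^\sigma_x)+V_\Lambda-z\bigr)^{-1}$ then converge in norm to $\widetilde R_\infty(z)$, which is compact by bounded perturbation of $H_0^{-1}$. From the identity
\begin{equation*}
 U_\Lambda R_\Lambda(z) U_\Lambda^* - U_\infty\widetilde R_\infty(z) U_\infty^* = U_\Lambda\bigl(R_\Lambda(z)-\widetilde R_\infty(z)\bigr)U_\Lambda^*+(U_\Lambda-U_\infty)\widetilde R_\infty(z) U_\Lambda^*+U_\infty\widetilde R_\infty(z)(U_\Lambda^*-U_\infty^*),
\end{equation*}
the first term is small by norm convergence of $R_\Lambda$, while the last two are small because strong convergence of $U_\Lambda,U_\Lambda^*$ becomes norm convergence after composition with the compact operator $\widetilde R_\infty(z)$. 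Setting $H_\mathrm{BF}:=U_\infty\bigl(H_0+a(\mw^\sigma_x)+a^*(\mw^\sigma_x)+V_\infty\bigr)U_\infty^*$ then yields the desired self-adjoint operator with compact resolvent. The main obstacle is the delicate estimate of the cross term in $V_\Lambda$: since $\|\nabla B^\Lambda\|_{L^2}^2\sim\sum_{|k|\leq\Lambda}|k|^{-2}$ diverges with $\Lambda$, this term is not operator-bounded by $\|\cN^{1/2}\cdot\|$ alone and must be partly absorbed into $-\Delta_x$ via Cauchy--Schwarz, with the remainder controlled by $\ud\Gamma(\epsilon)$ using $\epsilon(k)\geq c\,k^2$. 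This is the heart of the renormalization and the reason for keeping $\sigma$ fixed as $\Lambda\to\infty$.
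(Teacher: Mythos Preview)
Your dressing transformation $U_\Lambda$ and the cancellation of the linear high-momentum interaction are exactly the first step in the paper's proof (their $U_\kappa^\Lambda$). The gap is in what comes after. Conjugating $-\Delta_x$ by $U_\Lambda$ produces, via the chain rule, not only the cross term you isolate in (ii) but also the pair-creation term
\[
a^*\big((\ui\nabla B^\Lambda)_x\big)^2 + a\big((\ui\nabla B^\Lambda)_x\big)^2
\]
(together with its Wick-ordered partner $2a^*a$). You absorb this into your ``Wick-ordered quadratic form'' (i) and claim it is $H_0$-bounded with relative bound $<1$ uniformly in $\Lambda$. It is not. Since $|k|\,\widehat B^\Lambda(k)\sim|k|^{-1}$, one has $\epsilon^{-1/2}\nabla B^\Lambda\in\ell^2$ but $\nabla B^\Lambda\notin\ell^2$; the pair-creation term is therefore neither uniformly operator- nor form-bounded by $H_0$, and in fact its vacuum contribution after a resolvent insertion produces a second divergent scalar $E_\Lambda^{(2)}\sim-\log\Lambda$ (equations~\eqref{eq:E^2} and~\eqref{eq:T_Lambda-def} in the paper). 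This logarithmic divergence is precisely the $\log N$ in Theorem~\ref{thm:asymptotic}, so it cannot be made to disappear by a better estimate.

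You have also misidentified the hard term: the cross term $\nabla_x\cdot a^*((\nabla B^\Lambda)_x)+\hc$ \emph{is} form-bounded on $Q(H_0)$ uniformly in $\Lambda$, by exactly the Cauchy--Schwarz splitting you describe (it sits in the paper's $R_\Lambda$, Lemma~\ref{lem:R_Lambda bound}). The genuine obstacle is the $a^{*2}$ term. The paper handles it by a second, non-unitary dressing: one writes the transformed Hamiltonian as $(1-G_\Lambda^*)(H_0+1)(1-G_\Lambda)-G_\Lambda^*(H_0+1)G_\Lambda+R_\Lambda$ with $G_\Lambda=-(H_0+1)^{-1}a^*(\ui\nabla B^\Lambda)^2$, subtracts the divergent vacuum expectation $E_\Lambda^{(2)}$ from the middle term, and shows that what remains ($T_\Lambda$) is relatively bounded by the dressed kinetic operator $K_\Lambda=(1-G_\Lambda^*)(H_0+1)(1-G_\Lambda)$ after normal ordering. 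A consequence is that $Q(H_\mathrm{BF})=U_\kappa^\infty(1-G_\infty)^{-1}Q(H_0)$ is genuinely different from $U_\infty Q(H_0)$ --- indeed $D(K_\infty)\cap D(H_0^{1/2})=\{0\}$ --- so your claimed convergence $V_\Lambda\to V_\infty$ in operator norm from $D(H_0)$ to $\mathscr H_+$ cannot hold.
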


We prove this proposition in Section~\ref{sect:renorm}, where we also give more details on $H_\mathrm{BF}$. With the renormalized Bogoliubov-Fr\"ohlich Hamiltonian, we can resolve the excitation spectrum of $H_N$ for $N\to \infty$.

\begin{thm}\label{thm:spect}
Let $0\leq V,W\in L^2(\R^3)$ be even and compactly supported, and assume that Bose--Einstein condensation in the sense of Condition~\ref{cond:BEC} holds.
Let $(H_\mathrm{BF}$, $D(H_\mathrm{BF}))$ be the renormalized Bogoliubov-Fr\"ohlich Hamiltonian of Proposition~\ref{prop:renorm}, and for $k\in \N_0$ let $e_k(H)$ denote the $k$-th min-max value of the operator $H$.
Then as $N\to \infty$
 \begin{equation*}
  e_k(H_N)-e_0(H_N)= e_k(H_\mathrm{BF})-e_0(H_\mathrm{BF}) + \mathcal{O}(N^{-1/80}).
 \end{equation*}
 \end{thm}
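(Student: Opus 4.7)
The plan is to combine the excitation-map strategy of Lewin--Nam--Serfaty--Solovej with the singular Gross--Pitaevskii renormalization of Boccato--Brennecke--Cenatiempo--Schlein and the Fr\"ohlich-type UV renormalization of~\cite{La-20}, adapting the impurity scheme of~\cite{MySe-20,LaPi-22} to the dilute regime. The overall goal is to construct a unitary $\mathcal{U}_N$ on $\mathscr{H}_N$ such that, on low-energy states,
\begin{equation*}
\mathcal{U}_N \bigl(H_N - 4\pi \ao_V N - 8\pi \ao_W \sqrt{N} + 32\pi(2\pi/3 - \sqrt{3})\ao_W^4 \log N\bigr) \mathcal{U}_N^*
\end{equation*}
is close, in a suitable sense, to the UV-cutoff Bogoliubov--Fr\"ohlich Hamiltonian $H_\mathrm{BF}^\Lambda - E_\Lambda$ on $\mathscr{H}_+$, with cutoff $\Lambda = N^\alpha$ for a suitable $\alpha > 0$. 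Combined with the min--max principle and Proposition~\ref{prop:renorm}, this will transfer spectral information from $H_\mathrm{BF}$ back to $H_N$.

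First, Condition~\ref{cond:BEC} together with the a priori upper bound inherited from Theorem~\ref{thm:asymptotic} controls the expected number of excitations $\sum_j Q_j$ on low-lying eigenfunctions. The Lewin--Nam--Serfaty--Solovej unitary $U_N : \mathscr{H}_N \to L^2(\T^3) \otimes \mathcal{F}_+^{\leq N}$, acting as the identity on the impurity factor, then conjugates $H_N$ into an operator on the truncated excitation Fock space in which the boson--boson interaction gives rise to constant, quadratic, cubic and quartic terms in the bosonic operators with singular GP prefactors, and the boson--impurity interaction contributes constant, linear and quadratic terms coupled to the impurity position. The singular quadratic terms are then tamed by a generalized Bogoliubov transformation $e^{B}$ whose kernel solves a modified zero-energy scattering equation for $V$, as in~\cite{BocBreCenSch-20,brooks-23}; this substitutes $V$ by $8\pi \ao_V$ in the quadratic part and produces the dispersion $\ud\Gamma(\epsilon)$. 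A cubic transformation absorbs the dominant singular cubic contribution, and an analogous conjugation based on the scattering solution for $W_N$ turns the singular $W_N$-coupling into a Fr\"ohlich-type linear coupling with form factor $\widehat{\mw}^\Lambda$, $\Lambda \sim N^\alpha$, generating the $8\pi \ao_W \sqrt{N}$ contribution as a commutator constant.

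After these quadratic renormalizations, the residual coupling is still of Fr\"ohlich type with a UV divergence of order $\log N$. A Weyl-type dressing transformation $e^{A}$ as in~\cite{La-20}, with $A$ skew-adjoint, linear in the bosonic operators and depending on the impurity momentum, is then applied to produce the counterterm $-32\pi(2\pi/3 - \sqrt{3}) \ao_W^4 \log N$ together with a UV-regular remainder. Comparing the resulting effective operator with $H_\mathrm{BF}^\Lambda - E_\Lambda$ as a quadratic form on a dense invariant subspace of $\mathscr{H}_+$, and using the a priori bounds on particle number and bosonic kinetic energy from Condition~\ref{cond:BEC}, one obtains
\begin{equation*}
e_k(H_N) - e_0(H_N) = e_k(H_\mathrm{BF}^\Lambda - E_\Lambda) - e_0(H_\mathrm{BF}^\Lambda - E_\Lambda) + \mathcal{O}(N^{-\beta})
\end{equation*}
for some $\beta > 0$ depending on $\alpha$. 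Proposition~\ref{prop:renorm} then removes the $\Lambda$-dependence on the right-hand side, and optimizing the exponents produces the rate $N^{-1/80}$.

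The main obstacle is the simultaneous control of the three renormalization layers: the singular Bogoliubov transformation handling the GP short-range correlations, the conjugation renormalizing the boson--impurity coupling, and the Weyl dressing producing the $\log N$ counterterm. Each layer generates remainders that must be absorbed by the same a priori estimates, while the cutoff $\Lambda = N^\alpha$ is delicate: too small and the Bogoliubov--Fr\"ohlich limit is not reached, too large and the cubic and quartic remainders left over from the GP step fail to be small. In particular, verifying that the commutators between the dressing transformation and the Bogoliubov-transformed bosonic operators do not re-introduce singular contributions at scale $N^\alpha$, and that the impurity kinetic energy can be propagated through all three conjugations, is the technical heart of the argument.
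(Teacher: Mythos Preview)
Your overall architecture---excitation map, quadratic Bogoliubov transform for $V$, cubic renormalization, Weyl transform for $W$, then comparison with a cutoff Bogoliubov--Fr\"ohlich operator---matches the paper's strategy. But two points in your sketch are not just vague, they are wrong as stated, and each corresponds to a genuine technical obstacle.

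First, the Weyl/Gross dressing (linear in $a,a^*$, depending on the impurity position) does \emph{not} produce the $\log N$ counterterm. It only removes a linearly divergent energy $E_\Lambda^{(1)}\sim\Lambda$, and in the process it generates the quadratic terms $a^*(\ui\nabla f_x)^2+\hc$ from commuting $-\Delta_x$ through the dressing. These new terms are the source of the logarithmic divergence, and they cannot be removed by a further unitary: one needs the non-unitary factorization $(1-G^*)(-\Delta_x+\ud\Gamma(\epsilon)+1)(1-G)$ with $G=-(-\Delta_x+\ud\Gamma(\epsilon)+1)^{-1}a^*(\ui\nabla f_x)^2$, and then a normal-ordering analysis of $G^*(-\Delta_x+\ud\Gamma(\epsilon)+1)G$ to extract the scalar $E_\Lambda^{(2)}\sim\log\Lambda$ and show the remainder $T_\Lambda$ is $K_\Lambda$-bounded uniformly. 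This is the mechanism of~\cite{La-20}, and it is how the paper proceeds (Section~3 and Section~5). Your proposal skips this layer entirely.

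Second, you omit the quadratic boson--impurity term $\ud\Gamma(W_{N,x})$. After all the unitaries, this term survives and is of order one---it is neither small nor absorbed into the Fr\"ohlich coupling. The paper keeps it in the effective operator $\cH_{\mathrm{Sing}}$ and proves (Lemma~5.11, Section~5.2) that its only effect is to shift the divergent scalar $E_N$ to $E_{N,W}$ by an amount of order one, so that $T_{N,W}-T_{N}\to 0$ and the excitation spectrum is unchanged. This is precisely the point raised in~\cite{christensen2015} (cf.\ Remark~1.6): without this argument you cannot conclude that the differences $e_k-e_0$ are given by $H_{\mathrm{BF}}$ rather than by some deformation of it. Your proposal to compare directly with $H_{\mathrm{BF}}^\Lambda-E_\Lambda$ therefore has no mechanism for dealing with this term.
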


Note that in Theorem \ref{thm:asymptotic} we determine the ground state energy $e_0(H_N) =  \inf \sigma (H_N)$ with an error $\mathcal O(1)$ but in Theorem \ref{thm:spect} we can express the elementary excitations  $ e_k(H_N)-e_0(H_N)$ up to a precision $o(1)$.
From the proof of Theorem~\ref{thm:spect} one can extract an expression for $e_0(H_N)$ with this precision, but it is quite involved and not completely explicit, since the value of $e_0(H_\mathrm{BF})$ is not known.

The proof of Theorem~\ref{thm:spect} also shows that the eigenfunctions of $H_N$ can be expressed in terms of those of $H_\mathrm{BF}$ and two unitary transformations: the excitation map $U_X:\mathscr{H}_+\to \mathscr{H}_N $ (see Section~\ref{sect:excite}) and a Bogoliubov transformation $U_\mathrm{B}$ (see Section~\ref{sect:Bog}).
This information is also sufficient to obtain an approximation of the unitary dynamics.

\begin{cor}\label{cor:operators}
Assume that the hypothesis of Theorem~\ref{thm:spect} hold. Let $U_X$ be the excitation map defined by~\eqref{eq:excitation-map} and $U_\mathrm{B}$ the Bogoliubov transformation whose generator is given by~\eqref{eq:Bog-generator} with $N=\infty$. Then:
\begin{enumerate}[a)]
 \item For any $k\in \N$, the corresponding min-max values $e_k$
  and  $\Psi_N \in \ker(H_N-e_k(H_N))$,  the sequence $U_\mathrm{B}^*U_X^* \Psi_N$ converges, up to a subsequence, to a limit $\Psi \in\ker(H_\mathrm{BF}-e_k(H_\mathrm{BF}))$.
 \item The corresponding spectral projections converge in norm: Let $B$ be an open interval with  $\sigma(H_\mathrm{BF}-e_0(H_\mathrm{BF})) \cap B=e_k(H_\mathrm{BF})-e_0(H_\mathrm{BF})$, then
  \begin{equation*}
 \lim_{N\to \infty} U_X^* \1_{B}\big(H_N=e_k(H_N)\big) U_X=U_\mathrm{B} \1(H_\mathrm{BF}=e_k\big(H_\mathrm{BF})\big)U_\mathrm{B}^*.
 \end{equation*}
\item For any sequence $\Psi_N\in \mathscr{H}_N$ so that $U_X^*\Psi_N$ has a limit $\Psi\in \mathscr{H}_+$,
\begin{equation*}
 \lim_{N\to \infty} U_X^*\ue^{-\ui t(H_N-e_0(H_N))}\Psi_N = U_\mathrm{B} \ue^{-\ui t (H_\mathrm{BF}-e_0(H_\mathrm{BF}))} U_\mathrm{B}^*\Psi
\end{equation*}
in the norm of $\mathscr{H}_+$.
\end{enumerate}

\end{cor}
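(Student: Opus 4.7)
The plan is to extract operator-level bounds from the proof of Theorem~\ref{thm:spect} and combine them with the norm-resolvent convergence of Proposition~\ref{prop:renorm} to upgrade scalar convergence of min-max values to convergence of eigenfunctions, spectral projections, and unitary dynamics.

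First, from the proof of Theorem~\ref{thm:spect} (which proceeds by successively conjugating $H_N$ by the excitation map $U_X$ and the Bogoliubov transformation $U_\mathrm{B}$ to reduce $H_N - e_0(H_N)$ to a Bogoliubov--Fr\"ohlich-type operator on a truncated Fock space) one extracts an approximate intertwining of the form
\begin{equation*}
 \pm \bigl( U_\mathrm{B}^* U_X^* (H_N - e_0(H_N)) U_X U_\mathrm{B} - ( H_\mathrm{BF}^{\Lambda_N} - E_{\Lambda_N}) \bigr) \leq \mathcal{O}(N^{-1/80}) \bigl( H_\mathrm{BF}^{\Lambda_N} - E_{\Lambda_N} + C \bigr),
\end{equation*}
for some cutoff $\Lambda_N \to \infty$, valid on the range of $U_X U_\mathrm{B}$. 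Combined with Proposition~\ref{prop:renorm}, this yields norm-resolvent convergence of the left-hand side to $H_\mathrm{BF}$ on $\mathscr{H}_+$, once suitably extended using that Condition~\ref{cond:BEC} forces low-energy states of $H_N$ to lie almost entirely in the range of $U_X$.

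Given this, part (a) follows quickly: for a unit eigenvector $\Psi_N \in \ker(H_N - e_k(H_N))$, set $\Phi_N := U_\mathrm{B}^* U_X^* \Psi_N$. The above bounds yield $\|\Phi_N\| \to 1$ and, together with Theorem~\ref{thm:spect}, $\|(H_\mathrm{BF} - e_k(H_\mathrm{BF}))\Phi_N\| \to 0$. Since $H_\mathrm{BF}$ has compact resolvent, $(\Phi_N)$ is precompact in $\mathscr{H}_+$ and any accumulation point lies in $\ker(H_\mathrm{BF} - e_k(H_\mathrm{BF}))$. For part (b), the same norm-resolvent convergence gives norm convergence of the Riesz projections onto isolated eigenvalues,
\begin{equation*}
 \1_B(H_N - e_0(H_N)) = -\frac{1}{2\pi \mathrm{i}} \oint_{\partial B} (H_N - e_0(H_N) - z)^{-1}\, \ud z,
\end{equation*}
from which the claim follows after conjugation by $U_X^*$ and $U_X$. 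Part (c) is then a direct consequence of the standard fact that norm-resolvent convergence entails strong convergence of the associated unitary groups on bounded time intervals, applied to the assumed convergence $U_X^* \Psi_N \to \Psi$.

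The main obstacle is the first step: promoting the scalar min-max convergence of Theorem~\ref{thm:spect} into a genuine operator inequality, with remainder controlled relative to $H_\mathrm{BF}^{\Lambda_N} - E_{\Lambda_N}$ rather than merely by a constant. This requires revisiting each unitary conjugation used in that proof---in particular the transformation renormalising the cubic terms responsible for the $\log N$ correction in Theorem~\ref{thm:asymptotic}---and verifying that every remainder is bounded uniformly in the total energy. Once this operator-level intertwining is in hand, the three assertions of the corollary are fairly standard consequences of spectral theory for operators with compact resolvent.
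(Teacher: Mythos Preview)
Your proposal has a genuine structural gap: the operator inequality you claim to extract from the proof of Theorem~\ref{thm:spect},
\[
 \pm \bigl( U_\mathrm{B}^* U_X^* (H_N - e_0(H_N)) U_X U_\mathrm{B} - ( H_\mathrm{BF}^{\Lambda_N} - E_{\Lambda_N}) \bigr) \leq \mathcal{O}(N^{-1/80}) \bigl( H_\mathrm{BF}^{\Lambda_N} - E_{\Lambda_N} + C \bigr),
\]
does not follow from that proof and is in fact not available. The conjugation that brings $H_N$ close to a Bogoliubov--Fr\"ohlich-type operator is not by $U_X U_\mathrm{B}$ but by $U_X U$ with the full five-step unitary $U = U_q U_W U_c U_\mathrm{B} U_\mathrm{G}$. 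Even after this full transformation (see~\eqref{eq:U-remind}), the remainder is $\cL_4 + \cE^{(U)}$: the quartic term $\cL_4$ is only nonnegative, not relatively small, and $\cE^{(U)}$ is bounded by quantities such as $\ud\Gamma(W_{N,x})$, $\cN_+^2$, $\log N\,\ud\Gamma(|\ui\nabla|)$ that are not dominated by $H_\mathrm{BF}^{\Lambda_N}$. No clean norm-resolvent convergence of $U_\mathrm{B}^* U_X^* H_N U_X U_\mathrm{B}$ is established, so your Riesz-projection argument for (b) and your invocation of ``norm-resolvent $\Rightarrow$ strong group convergence'' for (c) are unsupported.

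The paper's proof proceeds differently. For (b) it first shows, as a separate step, that the auxiliary unitaries $U_q$, $U_W$, $U_c$ converge strongly to the identity and $U_\mathrm{G}$ to $U_\kappa^\infty$ on the regularized subspace; this is why only $U_X$ and $U_\mathrm{B}$ survive in the statement. It then compares the spectral projections $\mathbb{P}_k$ of $H_N$ and $\widetilde{\mathbb{P}}_k$ of $\cH_N^U$ by an inductive trace argument, exploiting the variational inequalities from Lemmas~\ref{lem:upper} and~\ref{lem:lower} (which themselves use localization $\chi_M$ in $\cN_+$ and drop $\cL_4$ by sign for the lower bound). Part (a) is then deduced from (b), and part (c) from (b) together with convergence of eigenvalue differences via a finite-rank approximation of $\Psi$, not from an abstract resolvent-to-group argument. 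Your sketch misses both the handling of the extra unitaries and the fact that the comparison of $H_N$ with $H_\mathrm{BF}$ is only available at the level of min-max values and localized quadratic forms, not as a uniform operator bound.
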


\begin{rem}\label{rem:christensen}
 There has been some debate in the physics literature concerning the applicability of the Bogoliubov-Fr\"ohlich model.
In~\cite{christensen2015}, the authors performed an expansion of $e_0(H_N)$ in $\ao_W$ and found an expression of order $\ao_W^3 \log \ao_W$. It is argued that this cannot arise from the Bogoliubov-Fr\"ohlich Hamiltonian, whose (formal) perturbation series is even in $\ao_W$.
In our scaling, the corresponding term in the perturbation series is of order $N^{-1/2} \log N$. It is related to the creation of pairs of Bogoliubov phonons by the impurity, which is suppressed by the fact that the boson-boson interaction is weaker than the boson-impurity interaction  (see Section~\ref{sect:Bog}).
Hence, its absence from our results is not in contradiction to~\cite{christensen2015}, where no relation between the two interactions is assumed and the expansion is in $\ao_W$.

Our results show that, in the precise scaling we consider, the spectrum is correctly described by the Bogoliubov-Fr\"ohlich Hamiltonian.
\end{rem}

\begin{rem}

The first application of Bogoliubov's theory was to provide an explanation of superfluidity. Landau argued~\cite{Landau-41}, based on conservation of energy and momentum, that an obstacle moving at velocity below the speed of sound relative to a quantum liquid could not excite phonons and would thus not create friction. This could be made more precise using the Bogoliubov-Fr\"ohlich model, where the impurity serves to probe the superfluid behavior. A step in this direction was recently made~\cite{HinLam2023}, where the Bogoliubov-Fr\"ohlich model in $\R^3$ is analyzed, and it is shown that a moving impurity does not experience friction when moving at velocities smaller than the speed of sound, which equals $4\sqrt{\pi \ao_V}$ in our derivation. Connecting our derivation of the Bogoliubov-Fr\"ohlich Hamiltonian on the torus to the one on $\R^3$ (e.g., by taking the size of the torus to infinity, as is done in the physics literature~\cite{grusdt2016}, see also~\cite{DerNap-14}) would thus provide a detailed explanation of this effect from a many-body quantum model.
This has also been investigated in a semi-classical setting in~\cite{Eglietal.2013, FrohlichGang.2014b,DeFrPiPi-16, Leger.2020}.
\end{rem}

\begin{rem}\label{rem:masses}
 The prefactor of $\log N$ in Theorem~\ref{thm:asymptotic} is due to our choice of the masses of both the bosons and the impurity being equal to one-half.
For a general mass of the particle $m$ and reduced mass $\mu^{-1}=m^{-1}+2$, the prefactor is
\begin{equation}\label{eq:c-log}
 c_{m}=-16 \pi \mu^{-1} \ao_{\mu W}^4 \Big(\tfrac{m}{\mu} \arcsin\big(\tfrac{\mu}{m}\big)- \sqrt{1-\big(\tfrac{\mu}{m}\big)^2}\Big),
\end{equation}
as can be checked by keeping track of the constants in the calculations of Section~\ref{sect:Thm1} (see also~\cite{LaTr-24}).
\end{rem}

\begin{rem}\label{rem:scaling}
The scaling of our Hamiltonian can be expressed in the natural units of a dilute system (see also~\cite{LaTr-24}). Consider a system of bosons with density $\rho$ interacting among themselves through a potential $V$ and with one impurity via $W$, in a box of side-length  $\ell_{\textrm{GP}} = (\rho \ao)^{-1/2}$ (the Gross--Pitaveskii length), where $\ao:= \ao_V$ is the scattering length of $V$ and where we also denote by $\ao_W$ the one of $W$.
 The Hamiltonian for the total system is
 \begin{equation}
  H_{\textrm{GP}} = -\Delta_{x}+\sum_{i=1}^{N}(-\Delta_{y_i})+ \sum_{1\leq i< j \leq N}V(y_i-y_j) +
  \sum_{1\leq j \leq N}W(x-y_j),
 \end{equation}
We consider the dilute regime, in which the dimensionless parameter $\rho \mathfrak a^3$ is small and the polaronic coupling coupling constant (compare~\cite{tempere2009,grusdt2016, LaTr-24}, where definitions differ by factors of $\sqrt{8\pi}$)
\begin{equation}
 \alpha=\ao_W^2 \ell_{\textrm{GP}} \rho=\frac{\ao_W^2}{\ao \ell_{\textrm{GP}}}
\end{equation}
is of order one. The number of bosons is given by $N= \rho \ell_{\textrm{GP}}^3=(\rho \mathfrak{a}^3)^{-1/2} = \ell_{\textrm{GP}}/\ao \gg 1$ and the number of excited bosons is of order one \cite{BocBreCenSch-20}.
The coupling to the impurity satisfies the relation $\ell_{\textrm{GP}}/\ao_W = \sqrt{N/\alpha}$, so its range is still much shorter than $\ell_{\textrm{GP}}$, though larger than $\ao$.

In order to normalize the interactions, we may write $V(y)=\mathfrak{a}^{-2} \widetilde V (\mathfrak{a}^{-1} y)$ and $W(x) =\ao_W^{-2} \widetilde W (\ao_W^{-1} x)$ with potentials $\widetilde V$ and $\widetilde W$ that have scattering length equal to one. Then, rescaling by $\ell_{\textrm{GP}}$ and multiplying with $\ell_\mathrm{GP}^2$, we have $\ell_{\textrm{GP}}^2 V(\ell_{\textrm{GP}}x)=N^2 \widetilde V(Nx)$, and $\ell_{\textrm{GP}}^2 W(\ell_{\textrm{GP}}x)=N/\alpha \widetilde W(\sqrt{N/\alpha} x)$ which leads to a Hamiltonian of the form~\eqref{eq:H_N intro}.
The expansion of Theorem~\ref{thm:asymptotic} can then be written as
\begin{align*}
&\frac{\inf \sigma( H_{\textrm{GP}})}{\ell_{\textrm{GP}}^3} \\
&\quad =  4\pi \mathfrak{a} \rho^2 \Big( 1 + 2 \sqrt{\alpha} (\rho \mathfrak{a}^3)^{1/4} - 8 (2\pi/3-\sqrt{3}) \alpha^2 \sqrt{\rho \mathfrak{a}^3} \log (\rho \mathfrak{a}^3) +\mathcal{O}\big(\sqrt{\rho \mathfrak{a}^3}\big)\Big).
\end{align*}
In the case of different masses, the correct formula is obtained from~\eqref{eq:c-log} (see also~\cite{LaTr-24}).
 
For comparison, note that the $\rho^{5/2} \log \rho$ term in this expansion is larger than the well-known Lee-Huang-Yang correction~\cite{LeeHuangYang-57, FouSol-20, basti-21,FouSol-22, fournais-22,HabHaiNamSeiTri-23}, which is of the order $\rho^{5/2}$. It is however very similar in nature to the term proposed in~\cite{wu1959,sawada-59, HuPi-59} to capture the three-body effects in a Bose gas. The latter is expected to be of order $\rho^{3} \log \rho$ and was very recently derived in the Gross--Pitaevskii limit \cite{CarOlgAubSch-23}.
\end{rem}

\begin{rem}\label{rem:three-body}
 The three-body nature of the $\log N$--contribution to the energy can be understood by the following heuristics. Since the Bose gas is dilute, the effect of the boson-impurity interaction is dominated by simple two-body interactions. These lead to correlations between the impurity and a single boson since, due to the repulsive nature of $W_N$, these will try to avoid getting closer than distances of order $N^{-1/2}$. This is captured precisely by the solution $\phi$ to the minimization problem~\eqref{eq:scatt_length-intro} with $v=W$, and it is responsible for the scattering length $\ao_W$ appearing in the coefficient of $\sqrt N$, instead of just the averaged potential $\int W$ as in the mean-field case.
 If we consider two bosons interacting with the impurity and make the ansatz $\Psi \approx\phi(\sqrt{N}(x-y_1))\phi(\sqrt N( x-y_2)) \Phi(x,y_1,y_2)$ at short distances, then $H_2 \Psi$ will contain a term with
 \begin{equation}
v^{(3)}(x,y_1,y_2) :=  N\nabla\phi\big(\sqrt N(x-y_1)\big )\nabla\phi\big(\sqrt N (x-y_2)\big),
 \end{equation}
coming from the action of $-\Delta_x$. This can be interpreted as an effective three-body interaction potential acting on $\Phi$. Minimizing over $\Phi$ will make the scattering length of this three-body potential appear (cf.~\cite{NamRicTri-22b} and Equation~\eqref{eq:E_NW}, noting that this potential integrates to zero), which multiplied by the number of pairs of bosons $\sim N^2$ will give the term of order $\log N$ in Theorem~\ref{thm:asymptotic}.
That this is of order $\log N$ can be easily checked. Note that for $1 \lesssim |y| $, $\phi(y)$ behaves like $1/ |y |$ and $\nabla\phi(y) \sim 1/  |y |^2$ (cf.~\cite[Theorem 6]{NamRicTri-23}). Then, passing to relative coordinates, the three-body scattering length, defined by a minimization problem analogous to~\eqref{eq:scatt_length-intro}, is approximately given by
\begin{align*}
 & \int v^{(3)} - \langle v^{(3)}, (-(\nabla_{y_1}+\nabla_{y_2})^2-\Delta_{y_1} - \Delta_{y_2} + v^{(3)})^{-1}v^{(3)}\rangle  \notag\\
 &\qquad\qquad\sim -N^2 \hspace{-12pt} \int\limits_{N^{-1/2}\lesssim |y_1|, |y_2|}\hspace{-12pt} \nabla\phi(\sqrt N y_1 )\nabla\phi(\sqrt N y_2) \notag\\
 &\qquad\qquad\qquad \qquad \times (-(\nabla_{y_1}+\nabla_{y_2})^2-\Delta_{y_1} - \Delta_{y_2} )^{-1} \nabla\phi(\sqrt N y_1)\nabla\phi(\sqrt N y_2) \notag \\
 &\qquad\qquad\sim -  \frac{1}{N^2} \int\limits_{N^{-1/2}\lesssim |y_1|, |y_2|} \frac{1}{|y_1|^2} \frac{1}{|y_2|^2} (-\Delta_{y_1} - \Delta_{y_2})^{-1}  \frac{1}{|y_1|} \frac{1}{|y_2|} \sim \frac{\log N}{N^2}.
\end{align*}
The proof of Theorem~\ref{thm:asymptotic} contains a rigorous implementation of these heuristic estimates where we do not use such an ansatz of product form, but rather a Weyl transformation which leads to similar expressions.
\end{rem}

\section{Preliminaries}\label{sect:prelim}

\subsection{Outline of the article}

The following outline of the article may serve as a guide to the proofs of our main results.
In the remainder of Section~\ref{sect:prelim} we introduce some global conventions on notation, and summarize the key properties of the scattering length and the associated scattering solution, as well as its analogues on the torus. These results will be proved in Appendix~\ref{app:scatt}.

In Section~\ref{sect:renorm}, we discuss the renormalization of the Bogoliubov-Fr\"ohlich Hamiltonian on the torus and prove Proposition~\ref{prop:renorm}. The proofs of the necessary technical lemmas are postponed to Section~\ref{sect:conv}, where a more general case with an additional excitation-impurity interaction is treated.

In Section~\ref{sect:trafo}, we write the Hamiltonian $H_N$ in the excitation picture. We then apply several unitary transformations to extract the correct leading asymptotics of the energy, and, at the same time, to make the correct model parameters appear. The general scheme is to deal first with the large momenta, which are responsible for the bigger terms in the energy asymptotics, and then work downwards step-by-step.
Specifically:
\begin{enumerate}[1)]
 \item In Section~\ref{sect:quadratic} we apply a Bogoliubov transformation $U_q$, whose generator is quadratic in boson creation and annihilation operators. It acts on the large boson momenta and makes the scattering length $\ao_V$ appear at the leading order of the energy.
 \item In Section~\ref{sect:Weyl} we apply a Weyl-transformation $U_W$, which additionally depends on the impurity position $x$ (similarly to the Gross transformation known from the literature on singular polaron models, like the Nelson model, cf.~\cite{Nelson-64,GrWu-18}). This transformation also acts on large boson momenta and makes the scattering length $\ao_W$ appear in the energy asymptotics.
 \item In Section~\ref{sect:cubic} we apply a transformation $U_c$ to the boson Hamiltonian that has a generator which is cubic in creation and annihilation operators. This accounts for scattering processes that mix low and high energy bosons and makes the scattering length  $\ao_V$ appear in the boson Hamiltonian restricted to small energies.
 \item In Section~\ref{sect:Bog}, we use Bogoliubov's transformation $U_\mathrm{B}$, which transforms the remaining boson Hamiltonian into the simple expression $\ud \Gamma(\epsilon)$, with $\epsilon$ given by~\eqref{eq:omega-def}.
 \item In Section~\ref{sect:Gross} we apply another Weyl transformation $U_\mathrm{G}$, which is precisely a variant of the Gross transform, which accounts for the two-particle impurity-excitation scattering at low energies.
\end{enumerate}
The final result of Section~\ref{sect:trafo} is Proposition~\ref{prop:U_G}, after which we can write the transformed Hamiltonian on low energy states as (cf.~\eqref{eq:H_NU-error})
\begin{align*}
U^* H_N U \approx 4\pi  \ao_V (N-1)+ 8\pi \ao_W \sqrt{N} +  e^{(U)}_N +  \cH_N^U 
\end{align*}
where $\cH_N^U$ is almost the Bogoliubov--Fröhlich Hamiltonian with an $N$-dependent cutoff $\Lambda_N$ (up to a Gross transform), and $ e^{(U)}_N$ is an energy contribution of order one.

In Section~\ref{sect:conv}, we analyze $\cH_N^U$ and show that after removing the logarithmic divergence, it converges to the renormalized Bogoliubov--Fröhlich Hamiltonian constructed in Proposition~\ref{prop:renorm}. We essentially follow the proof of that proposition, but with one key difference. The excitation Hamiltonian, and also $\cH_N^U$, contains an additional term accounting for the interaction between two excitations and the impurity which 
is not present in the Bogoliubov--Fröhlich Hamiltonian. In our scaling regime, this term is responsible for a global shift of order one in the energy. However, the excitation spectrum is still correctly described by the Bogoliubov--Fröhlich Hamiltonian as stated in Theorem~\ref{thm:spect}, since it only depends on the difference of eigenvalues.

In Section~\ref{sect:proof}, we prove the main Theorems and Corollary~\ref{cor:operators}.

\subsection{Notation}

We generally denote by $\widehat{f}\in \ell^2(2\pi \Z^3)$ the Fourier series of a function or distribution on $\T^3$, with the normalization convention and inversion formula given by
\begin{equation}
 \widehat{f}(k)= \int_{\T^3} \ue^{-\ui kx} f(x) \ud x,\qquad f(x)=\sum_{2\pi \Z^3}\ue^{\ui kx}\widehat{f}(k).
\end{equation}

For a symmetric operator $A$ on a Hilbert space $\mathscr{H}$, we will denote by $D(A)$ its domain, and by $Q(A)$ the domain of the associated quadratic form $x \mapsto \braket{x,Ax}$.

For a function $f,g:\N \to \mathcal{C}$ where $\mathcal{C}$ is a set with a partial order and admitting multiplication by positive real numbers (i.e., $\R_+$ or symmetric quadratic forms on $\mathscr{H}$), we denote
\begin{equation}
 f\lesssim g \Leftrightarrow \exists C\in \R_+: \forall N\in \N : f(N)\leq C g(N).
\end{equation}

For families of symmetric operators  $A_N$, $B_N$, $N\in \N$ associated with quadratic forms on $Q(A)$, $Q(B)$, independent of $N$, we denote by
\begin{equation}
 A_N=\mathcal{O}(B_N)
\end{equation}
the statement that $Q(B)\subset Q(A)$ and that there exists a constant $C$ so that for all $\Psi\in Q(B)$ and $N\in \N$
\begin{equation}
 |\langle \Psi, A_N \Psi\rangle| \leq C \langle \Psi, B_N\Psi\rangle.
\end{equation}
Note that this relation is transitive and invariant under unitary transformations.
 Moreover, we have as a consequence of the Cauchy-Schwarz inequality that
\begin{equation}\label{eq:AB Young}
 A^*_NB_N+B^*_NA_N=\mathcal{O}( A^*_NA_N + B^*_NB_N),
\end{equation}
when the left hand side is extended to the domain of the right.

For any function $h: \T^3 \to \mathbb{C}$, we will use the notation
\begin{align}
h_x(y) = h(x-y).
\end{align}

For a Hilbert $\mathfrak{H}$ denote by
\begin{equation}
 \Gamma(\mathfrak{H}):=\C\otimes \bigoplus_{n=1}^\infty \mathfrak{H}^{\otimes_\mathrm{sym} n}
\end{equation}
the bosonic Fock space over $\mathfrak{H}$. In particular, $\Gamma(\mathfrak{H}_+)$ with $\mathfrak{H}_+=\{\1\}^{\perp}\subset L^2(\T^3)$ (the orthogonal to the constant functions), is the Fock space of excitations.
For $\Psi \in \Gamma(\mathfrak{H})$ we denote by $\Psi^{(n)}\in \mathfrak{H}^{\otimes_\mathrm{sym} n}$ its $n$-particle component,
and by
\begin{equation}
 \cN \Psi = \sum_{n=1}^\infty n \Psi^{(n)}
\end{equation}
the number operator (with the canonical domain). In the case of $\Gamma(\mathfrak{H}_+)$ we denote the number operator by $\cN_+$.

 For $f\in \mathfrak{H}$, we denote by $a^*(f)$, $a(f)$ the bosonic creation and annihilation operators on $\Gamma( \mathfrak{H})$, who satisfy the commutation relations
 \begin{equation}\label{eq:CCR}
  [a(f),a^*(g)]=\langle f, g\rangle,\qquad [a(f),a(g)]=0=[a^*(f),a^*(g)].
 \end{equation}
 In the case $\mathfrak{H}=L^2(\T^3)$ we introduce the operators
 \begin{equation}
 a_p = a(\ue^{\ui p (\cdot) }), \qquad a_p^*=a^*(\ue^{\ui p (\cdot) }),
\end{equation}
and the operator-valued distributions $a_y$, $a^*_y$, $y\in \T^3$ as usual.
For a distribution $v$ on $\T^3$ and a quadratic form $A$ on $L^2(\T^3)$ with Schwartz kernel $A(y,y')$, we also introduce the quadratic forms
\begin{equation}\label{eq:a*-a}
 a(v)=\int v(y) a_y, \quad a^*(v)=\int v(y) a_y^*, \quad \ud \Gamma(A)= \int A(y,y')a_y^*a_{y'}
\end{equation}
on appropriate subspaces of $\Gamma(L^2(\T^3))$ (they are well defined at least on finite combinations of elements of $C^\infty(\T^{3n})$).

Any quadratic form on $\Gamma(L^2(\T^3))$ defines a quadratic form on $\Gamma(\mathfrak{H}_+)$ by restriction, which we denote by the same symbol. Note that this is not true for operators, where we would need to project the image back to $\Gamma(\mathfrak{H}_+)$, which is implicit for quadratic  forms.

The excitation space with an impurity is
\begin{equation}
 \mathscr{H}_+=L^2(\T^3) \otimes \Gamma(\mathfrak{H}_+) = L^2(\T^3,\Gamma(\mathfrak{H}_+)).
\end{equation}
We use the notations $\cN_+$, $\ud \Gamma (\epsilon)$, etc. for operators acting trivially on the first factor (i.e., we omit the tensor product with the identity).
In this context,  the variable $x$ is usually reserved for the impurity, while the variables $y_i$ describe the bosons.
To a function $x\mapsto A(x)$ on $\T^3$ whose values are quadratic forms with $Q(A_x)=D\subset \Gamma(\mathfrak{H}_+)$, we associate the quadratic form on $\mathscr{H}_+$
\begin{equation}
\begin{aligned}
\langle \Psi, A \Psi\rangle &= \int \langle \Psi(x), A(x) \Psi(x)\rangle \ud x,\\
Q(A)&=\Big\{\Psi:\T^3\to D\Big\vert \int |\langle \Psi(x), A(x) \Psi(x)\rangle| < \infty \Big\}
\end{aligned}
\end{equation}
by viewing $\mathscr{H}_+$ as $\Gamma(\mathfrak{H}_+)$-valued functions on $\T^3$.

Throughout the article we will explicitly write variables of integration or summation, their domains and the integration measures only when they are not obvious from the context.

\subsection{The approximate scattering problems}\label{sect:scattering}

By the scaling of the potentials, the interaction becomes increasingly short-ranged as $N\to \infty$, and thus the effective ``interaction time'' between two particles decreases to zero. For this reason, concepts from scattering theory of Schrödinger operators become relevant to the discussion, although we are working with periodic boundary conditions and the spectra of all operators are discrete.

The scattering length $\ao_v$ of a potential $0 \leq v \in L^1(\mathbb{R}^{3})$  is the first order in the low-energy expansion of the corresponding scattering amplitude~\cite{albeverio-1982}. In the expansion of the operator, it is accompanied by the generalized eigenfunction, also called scattering solution, at energy zero. This function $f:\R^3\to \C$ is asymptotic to one (i.e., $\ue^{-\ui k x}$ with $k=0$) and we may write it as $f=1+\phi_{\R^3}$, where $ \phi_{\R^3}\in \dot{H}^1(\R^3)$ solves
\begin{align} \label{eq:phi_full_space}
-\Delta \varphi_{\mathbb{R}^{3}} + \tfrac{1}{2} v \varphi_{\mathbb{R}^{3}} = - \tfrac{1}{2}v.
\end{align}
This is also the Euler-Lagrange equation of  the variational problem~\eqref{eq:scatt_length-intro}, and $\phi_{\R^3}$ is the unique minimizer.
The scattering length can thus be expressed as
\begin{align} \label{eq:rel_scat_R3}
8 \pi \ao_v = \int v (1+\varphi_{\mathbb{R}^{3}}).
\end{align}
As the support of the potentials $V_N$, $W_N$ shrinks to a point, the boundary becomes less relevant and approximations of $\phi_\R^3$ naturally appear in the analysis.
Following \cite{HaiSchTri-22}, we will use solutions to the equation analogous to~\eqref{eq:phi_full_space} on the torus, instead of truncations of $\phi_{\R^3}$, and call these (approximate) scattering solutions. Their properties are discussed in detail in Appendix~\ref{app:scatt}. Here, we collect the information of practical relevance for our analysis. The scattering solution $\phi_{\mathrm{B}}$ associated to the boson-boson interaction potential $V_N$ on the torus is defined to be the unique solution in $\ell^2(2\pi\mathbb{Z}^{3}\setminus\{0\})$ of
\begin{equation}
	\label{eq:scattV}
  p^2 \widehat\phi_{\mathrm{B}}(p) + \frac{1}{2} \sum_{q \in 2\pi\Z^3 \setminus\{0\}} \widehat\phi_{\mathrm{B}}(q) \widehat V_N(p-q) = -\frac{1}{2}\widehat V_N(p).
\end{equation}
For $\alpha>0$, we define its truncated version by the Fourier coefficients
\begin{equation}\label{eq:Vphi}
 \widehat{\widetilde{\phi}}_{\mathrm{B}}(p):=\widehat{\phi}_{\mathrm{B}}(p) \1_{|p| >  N^{\alpha}}.
\end{equation}
In analogy with~\eqref{eq:rel_scat_R3}, the torus ``scattering length'' is then defined by
\begin{equation}\label{eq:aV_N}
 8\pi \aV:=\int_{\T^3} N V_N(y)(1+\phi_{\mathrm{B}}(y))=  \widehat V(0) + \int N V_N \phi_{\mathrm{B}}.
\end{equation}

Similarly, for the interaction potential between the bosons and the particle $W_N$, we define  the scattering solution $\phi_\mathrm{I}$ to be the unique solution in $\ell^2(2\pi\mathbb{Z}^{3}\setminus\{0\})$ of
\begin{equation}
	\label{eq:scattW}
  p^2 \Wphi(p) + \frac{1}{2} \sum_{q \in 2\pi\Z^3 \setminus\{0\}} \Wphi(q) \widehat W_N(p-q) = -\frac{1}{2}\widehat W_N(p).
\end{equation}
Its truncated version is denoted, for $\alpha>0$, by
\begin{equation}\label{eq:Wphi}
 \Wphih(p)=\widehat{\phi}_{\mathrm{I}}(p) \1_{|p| >  N^{\alpha}}.
\end{equation}
and the associated tours scattering length is defined by
\begin{equation}\label{eq:aW_N}
 8\pi \aW=\int_{\T^3} \sqrt{N} W_N(y)(1+\phi_{\mathrm{I}}(y)). 
\end{equation}
We gather in the following lemmas useful properties on $\phi_{\mathrm{B}}, \phi_{\mathrm{I}}$, $\aV$ and $\aW$.
\begin{lem}[Properties of $\phi_{\mathrm{B}}$]\label{lem:phi_V}
We have for $0\leq \alpha \leq 1$
\begin{align}
\|\widetilde{\phi}_{\mathrm{B}}\|_2 &\lesssim N^{-1-\alpha/2}, &  \|\widetilde{\phi}_{\mathrm{B}}\|_\infty &\lesssim 1, \notag\\
  \|\widehat{\widetilde{\phi}}_{\mathrm{B}}\|_\infty &\lesssim N^{-1-2\alpha}, &\|p^2\widehat{\widetilde{\phi}}_{\mathrm{B}}\|_\infty &\lesssim N^{-1},\notag \\ 
\| |\nabla|^{1/2} \Vphi\|_2 &\lesssim N^{-1} \sqrt{\log N},  & \|\widetilde{\phi}_{\mathrm{B}}-\phi_{\mathrm{B}}\|_\infty &\lesssim N^{-1+\alpha} .\notag %
\end{align}
\end{lem}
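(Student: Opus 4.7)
The plan is to derive all six bounds from the single fundamental pointwise estimate
\[
|\widehat\phi_{\mathrm{B}}(p)| \lesssim \frac{1}{N|p|^2},\qquad p\in 2\pi\Z^3\setminus\{0\},
\]
together with the comparison between $\phi_{\mathrm{B}}$ and the full-space scattering solution. The starting point is the scattering equation \eqref{eq:scattV}, which may be rewritten as $|p|^2 \widehat\phi_{\mathrm{B}}(p) = -\tfrac{1}{2}\widehat{V_N(1+\phi_{\mathrm{B}})}(p)$. Since $V_N\geq 0$ and (as one checks from the variational formulation) $1+\phi_{\mathrm{B}}\geq 0$, one has the trivial estimate
\[
|\widehat{V_N(1+\phi_{\mathrm{B}})}(p)|\leq \int V_N(1+\phi_{\mathrm{B}}) = 8\pi\aV/N,
\]
using \eqref{eq:aV_N}, and the boundedness $\aV \lesssim \mathfrak{a}_V$ (proved separately in the appendix by comparing the variational problems). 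This immediately yields the pointwise bound above together with the two Fourier-side inequalities $\|\widehat{\widetilde\phi}_{\mathrm{B}}\|_\infty \lesssim N^{-1-2\alpha}$ and $\|p^2\widehat{\widetilde\phi}_{\mathrm{B}}\|_\infty\lesssim N^{-1}$.

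Next I would obtain the $L^2$ bound by direct summation: $\|\widetilde\phi_{\mathrm{B}}\|_2^2 \lesssim N^{-2}\sum_{|p|>N^\alpha}|p|^{-4}\lesssim N^{-2-\alpha}$, using the three-dimensional lattice comparison with a radial integral. For the $H^{1/2}$ bound, the naïve sum $\sum_{|p|>N^\alpha}|p|^{-3}$ diverges logarithmically in three dimensions, so I would split the sum at the natural scale $|p|\sim N$: in the range $N^\alpha < |p|\leq N$ the pointwise estimate gives $N^{-2}\sum|p|^{-3}\lesssim N^{-2}\log N$, while in the range $|p|>N$ I would switch to the energy bound $\|\nabla\phi_{\mathrm{B}}\|_2^2\lesssim N^{-1}$ (obtained from the rescaled variational formulation, which gives $\inf\int 2|\nabla\phi|^2+V_N|1+\phi|^2 = 8\pi\mathfrak{a}_V/N$) and estimate $\sum_{|p|>N}|p||\widehat\phi_{\mathrm{B}}|^2 \leq N^{-1}\|\nabla\phi_{\mathrm{B}}\|_2^2 \lesssim N^{-2}$. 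Adding the two contributions yields $\||\nabla|^{1/2}\widetilde\phi_{\mathrm{B}}\|_2\lesssim N^{-1}\sqrt{\log N}$.

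The remaining bound $\|\widetilde\phi_{\mathrm{B}}-\phi_{\mathrm{B}}\|_\infty\lesssim N^{\alpha-1}$ follows by absolutely summing the low-frequency tail:
\[
\|\widetilde\phi_{\mathrm{B}}-\phi_{\mathrm{B}}\|_\infty \leq \sum_{0<|p|\leq N^\alpha} |\widehat\phi_{\mathrm{B}}(p)| \lesssim \frac{1}{N}\sum_{0<|p|\leq N^\alpha}|p|^{-2}\lesssim N^{\alpha-1}.
\]
Finally, for $\|\widetilde\phi_{\mathrm{B}}\|_\infty\lesssim 1$ I would use the triangle inequality $\|\widetilde\phi_{\mathrm{B}}\|_\infty \leq \|\phi_{\mathrm{B}}\|_\infty + \|\phi_{\mathrm{B}}-\widetilde\phi_{\mathrm{B}}\|_\infty$, the second term being controlled by the previous step and the first requiring the comparison $\|\phi_{\mathrm{B}}\|_\infty \lesssim \|\phi_{\R^3}\|_\infty \lesssim 1$.

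The main obstacle is therefore not the final six estimates, which are essentially book-keeping once the input $|\widehat\phi_{\mathrm{B}}(p)|\lesssim (N|p|^2)^{-1}$ is available, but rather the two non-trivial prerequisites: the boundedness $\aV\lesssim 1$ (needed for the positivity-based estimate above) and the pointwise boundedness $\|\phi_{\mathrm{B}}\|_\infty\lesssim 1$. Both reduce to comparing the torus scattering solution with $\phi_{\R^3}(N\,\cdot)$, which requires a careful analysis of the boundary corrections caused by the periodic wrapping — the technical content that is postponed to Appendix~\ref{app:scatt}. The logarithmic factor in the $H^{1/2}$ bound, which is sharp and cannot be eliminated, is the quantitative footprint of the $|p|^{-2}$ decay of the scattering profile at the transition scale $|p|\sim N$, and it is precisely this factor that will later be responsible for the $\log N$ correction in the ground state energy asymptotics of Theorem~\ref{thm:asymptotic}.
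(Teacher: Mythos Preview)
Your reduction of all six bounds to the single pointwise estimate $|\widehat\phi_{\mathrm{B}}(p)|\lesssim (N|p|^2)^{-1}$ is exactly the right strategy, and once that estimate is in hand your derivations of $\|\widetilde\phi_{\mathrm{B}}\|_2$, $\|\widehat{\widetilde\phi}_{\mathrm{B}}\|_\infty$, $\|p^2\widehat{\widetilde\phi}_{\mathrm{B}}\|_\infty$, $\|\widetilde\phi_{\mathrm{B}}-\phi_{\mathrm{B}}\|_\infty$ and the $H^{1/2}$ bound (with the split at $|p|\sim N$ and the energy bound $\|\nabla\phi_{\mathrm{B}}\|_2^2\lesssim N^{-1}$ for the tail) are all correct.

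The difference from the paper lies in how the pointwise bound is obtained. You invoke the positivity $1+\phi_{\mathrm{B}}\geq 0$ to replace $|\widehat{V_N(1+\phi_{\mathrm{B}})}(p)|$ by $\int V_N(1+\phi_{\mathrm{B}})$. This positivity is \emph{not} obvious on the torus: the standard replacement $\phi\mapsto |1+\phi|-1$ that works on $\R^3$ violates the mean-zero constraint $\int\phi=0$, and the corrected trial function $|1+\phi|-1+c$ need not lower the energy. Establishing $1+\phi_{\mathrm{B}}\geq 0$ would in practice require comparison with $\phi_{\R^3}(N\cdot)$ in $L^\infty$, which is precisely the type of bound you are trying to prove --- so the argument is circular as stated. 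You list $\aV\lesssim 1$ and $\|\phi_{\mathrm{B}}\|_\infty\lesssim 1$ as the non-trivial prerequisites, but the positivity of $1+\phi_{\mathrm{B}}$ is a third, separate prerequisite that you have not identified.

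The paper's Appendix avoids this entirely: it first proves the energy bound $\|\nabla\phi\|_2\leq \||p|^{-1}\widehat v\|_2$ by testing the equation against $\phi$ (using only $v\geq 0$), then uses Cauchy--Schwarz to bound $\|\widehat v*\widehat\phi\|_\infty\lesssim \|v\|_2^{2/3}\|v\|_1^{4/3}$, and reads off the pointwise bound from the equation. This route needs no sign information on $1+\phi$ and no comparison with the full-space solution. Similarly, the $L^\infty$ bound $\|\phi_{\mathrm{B}}\|_\infty\lesssim 1$ is obtained in the paper via a direct $\ell^1$ Fourier estimate by iterating the equation, rather than by comparison with $\phi_{\R^3}$.

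Finally, your closing remark is inaccurate: the $\log N$ term in Theorem~\ref{thm:asymptotic} comes from the impurity-boson scattering via $E_{N,W}$, which is built from $\Wphi$, not $\Vphi$. The bound $\||\nabla|^{1/2}\Vphi\|_2\lesssim N^{-1}\sqrt{\log N}$ enters only in controlling subleading error terms (e.g.\ in Lemmas~\ref{lem:kin_T1} and~\ref{lem:Uc_Delta}).
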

\begin{proof}
This follows from Lemma \ref{lem:reg_est} in Appendix \ref{app:scatt}, by taking $v = V_N$
\end{proof}

\begin{lem}[Properties of $\phi_{\mathrm{I}}$]\label{lem:phi_W}
We have for $ 0\leq \alpha \leq 1$ and $0<s\leq 1/2$
\begin{align*}
 \|\Wphi\|_2 &\lesssim N^{-1/2-\alpha/2}, & \|\Wphi\|_\infty &\lesssim 1 \\
\||\nabla|^{1/2+s}\Wphi\|_2 &\lesssim N^{-1/2+s/2}, & \||\nabla|^{1/2}\Wphi\|_2 &\lesssim N^{-1/2} \sqrt{\log N},\\
 \|p^2 \Wphih\|_\infty &\lesssim N^{-1/2},& \|\phi_\mathrm{I}-\Wphi\|_\infty &\lesssim N^{-1/2+\alpha}.
\end{align*}
\end{lem}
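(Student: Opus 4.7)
My plan is to invoke the general regularity lemma for scattering solutions (Lemma \ref{lem:reg_est} from Appendix \ref{app:scatt}) with $v = W_N$, in exact parallel with the proof of Lemma \ref{lem:phi_V}. The equation \eqref{eq:scattW} has the same algebraic form as \eqref{eq:scattV}, so the abstract statements from the appendix apply verbatim; only the scaling parameters change. Since $W_N(x) = (\sqrt N)^2 W(\sqrt N x)$, the natural length scale is $N^{-1/2}$ rather than $N^{-1}$, and by the identity $\mathfrak a_{\lambda^2 v(\lambda\cdot)} = \lambda^{-1}\mathfrak a_v$ the scattering length satisfies $\mathfrak a_{W_N}\sim N^{-1/2}$. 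This accounts for the uniform replacement of $N^{-1}$ in Lemma \ref{lem:phi_V} by $N^{-1/2}$ throughout the present statement.

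Concretely, once the Birman--Schwinger type operator appearing in \eqref{eq:scattW} is inverted (which is the content of Lemma \ref{lem:reg_est}), one obtains the pointwise Fourier bound $|p|^2|\widehat{\phi}_\mathrm{I}(p)| \lesssim N^{-1/2}$, which is precisely the fifth estimate. From this single bound the remaining ones follow routinely: $\|\Wphi\|_2^2 \lesssim N^{-1}\sum_{|p|>N^\alpha}|p|^{-4} \lesssim N^{-1-\alpha}$ gives the first estimate; the $L^\infty$ bound is controlled by $\sum_p |\widehat{\phi}_\mathrm{I}(p)|$ using $|p|^{-2}$-decay at infinity and the softness of $W$ near zero; and the difference $\phi_\mathrm{I}-\Wphi$ has Fourier support in $\{|p|\le N^\alpha\}$, so summing $|\widehat{\phi}_\mathrm{I}(p)| \lesssim N^{-1/2}|p|^{-2}$ over this ball yields the $N^{-1/2+\alpha}$ bound (the $\sum_{|p| \le N^\alpha}|p|^{-2}$ contributes the factor $N^\alpha$).

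The additional fractional Sobolev bound $\||\nabla|^{1/2+s}\Wphi\|_2 \lesssim N^{-1/2+s/2}$, which has no analogue in Lemma \ref{lem:phi_V}, I would obtain directly from \eqref{eq:scattW}. Writing
\begin{equation*}
\||\nabla|^{1/2+s}\Wphi\|_2^2 = \sum_{|p|>N^\alpha}|p|^{1+2s}|\widehat{\phi}_\mathrm{I}(p)|^2,
\end{equation*}
and using the bound on $p^2\widehat{\phi}_\mathrm{I}$ together with the convolution estimate for the term containing $\widehat{\phi}_\mathrm{I} \ast \widehat W_N$, this is controlled by a sum of the form $\sum_p|\widehat W_N(p)|^2 |p|^{-3+2s}$. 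Approximating by an integral and performing the change of variables $p = \sqrt N k$, with $\widehat{W_N}(p) = N^{-1/2}\widehat W(p/\sqrt N)$, produces a factor $N^{-1+s}$ times $\int|\widehat W(k)|^2 |k|^{-3+2s}\,\mathrm dk$, and the latter integral converges for $0 < s \le 1/2$ thanks to $W\in L^2(\R^3)$ at infinity and the strict positivity $s>0$ at the origin. The borderline case $s=0$ diverges only logarithmically, producing the $\sqrt{\log N}$ factor in the fourth estimate.

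The main obstacle is really just establishing the uniform Fourier bound $|p|^2 |\widehat{\phi}_{\mathrm{I}}(p)| \lesssim N^{-1/2}$, which requires controlling the convolution term in \eqref{eq:scattW}. This is standard: rescaling by $\sqrt N$ turns \eqref{eq:scattW} into the full-space equation for $\phi_{\mathbb R^3}$ associated to $W$, modulo finite-volume corrections, and the uniform bound follows from $W \in L^2 \cap L^1$. Since Lemma \ref{lem:reg_est} is designed to do this in a scaling-covariant way, the proof reduces to tracking constants through the substitution $\lambda = \sqrt N$.
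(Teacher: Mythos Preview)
Your approach is essentially the paper's: both reduce to the general appendix estimates with $v=W_N$. The paper invokes Lemma~\ref{lem:est_phi_alpha} (not \ref{lem:reg_est}), which already packages the truncated estimates for $\phi_m$, the bound $\|\phi-\phi_m\|_\infty$, and the $s=1/2$ logarithmic case; citing \ref{lem:reg_est} alone leaves you redoing some of this by hand, which you do correctly.

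For the fractional bound $\||\nabla|^{1/2+s}\Wphi\|_2\lesssim N^{-1/2+s/2}$ with $0<s<1/2$, your direct computation is actually necessary: the interpolation argument recorded in Lemma~\ref{lem:est_phi_alpha} (H\"older between $\|\widehat\phi_m\|_2$ and $\|p\,\widehat\phi_m\|_2$) only yields $N^{-(3/2-s)/4}$, which is strictly weaker for $0<s<1/2$. Your approach via $\sum_p|p|^{2s-3}|\widehat{W_N}(p)|^2$ is the right one, but the convolution term $\widehat{W_N}\ast\widehat\phi_{\mathrm I}=\widehat{W_N\phi_{\mathrm I}}$ needs one more ingredient than you state: use $\|\phi_{\mathrm I}\|_\infty\lesssim 1$ (from $\|\widehat\phi_{\mathrm I}\|_1\lesssim 1$ in Lemma~\ref{lem:reg_est}) to get $|W_N(1+\phi_{\mathrm I})|\lesssim W_N$ pointwise, so that $g:=W_N(1+\phi_{\mathrm I})$ satisfies $\|\widehat g\|_\infty\lesssim\|W_N\|_1\sim N^{-1/2}$ and $\|\widehat g\|_2\lesssim\|W_N\|_2\sim N^{1/4}$. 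Splitting $\sum_p|p|^{2s-3}|\widehat g(p)|^2$ at $|p|=\sqrt N$ then gives exactly $N^{s-1}$, hence the claim.
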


\begin{proof}
This follows from Lemma  \ref{lem:est_phi_alpha} in Appendix \ref{app:scatt}, by taking $v =  W_N$.
\end{proof}

As we perform all our calculations on the torus, we need to control the difference of the quantities defined there to the real scattering lengths in $\R^3$.

\begin{lem}\label{lem:aM-difference} The differences between the scattering length on the torus and in $\mathbb{R}^{3}$ satisfy
 \begin{align}
  |\aV- \ao_V| &\lesssim N^{-1}, \notag \\
	\label{eq:diff_aw}
  |\aW- \ao_W| &\lesssim N^{-1/2}. \notag
 \end{align}
\end{lem}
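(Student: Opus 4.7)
The approach I would take is to compare the two scattering lengths via their variational characterizations, using matching trial functions in each direction.

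First, I would rewrite the claim using the scaling identity $\mathfrak{a}_{\lambda^2 v(\lambda\cdot)}=\lambda^{-1}\mathfrak{a}_v$. Applied to $V_N$ this gives $\mathfrak{a}_{V_N}=N^{-1}\mathfrak{a}_V$, while the definition~\eqref{eq:aV_N} reads $8\pi(\aV/N)=\int_{\T^3}V_N(1+\phi_\mathrm{B})$. Thus $|\aV-\ao_V|\lesssim N^{-1}$ is equivalent to $|\aV/N-\mathfrak{a}_{V_N}|\lesssim N^{-2}$, comparing the torus scattering length of $V_N$ with its full-space counterpart. A short computation with~\eqref{eq:scattV} (integrating $\phi_\mathrm{B}$ against the equation) gives the variational characterization
\begin{equation*}
8\pi\,\aV/N=\int_{\T^3}\bigl(2|\nabla\phi_\mathrm{B}|^2+V_N|1+\phi_\mathrm{B}|^2\bigr)=\inf_{\phi\in H^1(\T^3),\,\int\phi=0}\int_{\T^3}\bigl(2|\nabla\phi|^2+V_N|1+\phi|^2\bigr),
\end{equation*}
which is the exact analogue of~\eqref{eq:scatt_length-intro} on the torus. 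For large $N$ the potential $V_N$ has support in a ball of radius $\sim N^{-1}$ sitting well inside the fundamental domain, so the two minimization problems should agree up to corrections coming from the (distant) boundary.

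For the upper bound $\aV/N\le \mathfrak{a}_{V_N}+O(N^{-2})$ I would use the trial function $\phi^{\mathrm{tr}}=\chi\,\phi_{\R^3}^{V_N}-\bar c$, where $\chi$ is a smooth cutoff equal to $1$ on $B_{1/4}$ and vanishing outside $B_{1/3}$, and $\bar c=\int\chi\phi_{\R^3}^{V_N}$ enforces the zero-mean constraint. Using the classical decay $\phi_{\R^3}^{V_N}(y)\sim -\mathfrak{a}_{V_N}/|y|$ for $|y|\gg N^{-1}$ together with $\mathfrak{a}_{V_N}=N^{-1}\mathfrak{a}_V$, one checks that $\bar c=O(N^{-1})$ and that the extra contributions to the functional — from the tail $\int_{|y|>1/4}|\nabla\phi_{\R^3}^{V_N}|^2$, from the cutoff region $|\nabla\chi|^2|\phi_{\R^3}^{V_N}|^2$, and from the $\bar c$-shift in $V_N|1+\phi^{\mathrm{tr}}|^2$ — are each $O(N^{-2})$.

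For the lower bound $\mathfrak{a}_{V_N}\le\aV/N+O(N^{-2})$ I would symmetrically extend $\chi\,\phi_\mathrm{B}$ from the fundamental domain by zero to obtain a trial function for the full-space variational problem~\eqref{eq:scatt_length-intro}. This requires pointwise decay estimates of $\phi_\mathrm{B}$ and $\nabla\phi_\mathrm{B}$ away from $\supp V_N$ that mirror the full-space behaviour. These I would derive from the integral representation
\begin{equation*}
\phi_\mathrm{B}(y)=-\tfrac12\int_{\T^3}G_\T(y,z)V_N(z)(1+\phi_\mathrm{B}(z))\ud z,\qquad G_\T(y,z)=\tfrac{1}{4\pi|y-z|}+H(y,z),
\end{equation*}
with $H$ smooth and bounded, together with $\int V_N(1+\phi_\mathrm{B})=8\pi\aV/N=O(N^{-1})$. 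Since $V_N$ is concentrated near the origin, this yields $\phi_\mathrm{B}(y)=-\frac{\aV}{N|y|}+O(N^{-1})$ and $|\nabla\phi_\mathrm{B}(y)|\lesssim (N|y|^2)^{-1}$ for $y$ away from $\supp V_N$, exactly matching the $\R^3$ decay rates, and the same annulus computation delivers the $O(N^{-2})$ error.

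The estimate for $|\aW-\ao_W|\lesssim N^{-1/2}$ is obtained by the same argument with the scaling adjusted to $W_N(x)=NW(\sqrt N x)$: the support is of size $\sim N^{-1/2}$, the scattering length of $W_N$ is $N^{-1/2}\ao_W$, and the functional error becomes $O(N^{-1})$, corresponding to $O(N^{-1/2})$ in the scattering length. The main obstacle is the lower bound: one must work enough with the periodic Green's function to extract decay of $\phi_\mathrm{B}$, $\phi_\mathrm{I}$ and their gradients that is uniform in $N$ and sharp enough to get the prescribed rate. The smooth part $H$ of $G_\T$ is of order one — larger than the singular part at distances of order one — but it contributes only through multiplication by $\int V_N(1+\phi_\mathrm{B})=O(N^{-1})$, so it is absorbed into the final error.
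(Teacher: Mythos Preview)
Your proposal is correct and, for the upper bound $\ao_{\T^3}(v_n)\le \ao_{\R^3}(v_n)+O(n^{-2})$, essentially identical to the paper's: both cut off the full-space minimizer, subtract its mean to satisfy the torus constraint, and estimate the tail/annulus contributions from the known decay $\phi_{\R^3}(y)\sim -\ao/|y|$, $|\nabla\phi_{\R^3}(y)|\lesssim |y|^{-2}$.

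For the lower bound, your route works but is considerably heavier than the paper's. You plan to derive pointwise decay of $\phi_{\T^3}$ \emph{and} $\nabla\phi_{\T^3}$ in the annulus via the periodic Green's function, and then bound each piece of $|\nabla(\chi\phi_{\T^3})|^2-|\nabla\phi_{\T^3}|^2$ separately. The paper instead uses the Euler--Lagrange equation for the torus minimizer together with a one-line integration by parts,
\[
\int\bigl((-\Delta\chi)\phi_{\T^3}-2\nabla\chi\cdot\nabla\phi_{\T^3}\bigr)\chi\phi_{\T^3}=\int|\nabla\chi|^2\,|\phi_{\T^3}|^2,
\]
so the entire error reduces to $\int|\nabla\chi|^2|\phi_{\T^3}|^2\le C\|\phi_{\T^3}\|_2^2=O(n^{-2})$, where the last bound is just the $L^2$ estimate of Lemma~\ref{lem:reg_est}. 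No pointwise or gradient decay of the torus solution is needed, and the ``main obstacle'' you identify disappears. What your Green's-function analysis buys is a direct elliptic picture of $\phi_{\T^3}$ (which could be useful elsewhere); what the paper's identity buys is that the lower bound becomes a two-line computation using only already-established estimates.
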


\begin{proof}
This follows from Lemma \ref{lem:diff_scatt_gen} in Appendix \ref{app:scatt}, by taking respectively $v = V_N$ and $v  = W_N$.
\end{proof}

\begin{rem}\label{rem:diff_scat}
The error estimate above is in fact sharp. When $V$ is radial it can be computed that \cite{BocBreCenSch-18}
\begin{align*}
\lim_{N\to \infty} 4\pi N \left( \aV -  \ao_{V}\right) = 2 - \lim_{M\to \infty} \sum_{\substack{p\in \mathbb{Z}^{3} \\ \|p\|_{\infty} \leq M}} \frac{\cos(|p|)}{|p|^2}.
\end{align*}

\end{rem}

\section{Renormalization of the Bogoliubov-Fr\"ohlich Hamiltonian}\label{sect:renorm}

In this section we explain the renormalization of the Bogoliubov-Fr\"ohlich Hamiltonian and prove Proposition~\ref{prop:renorm}. For the proof of the technical Lemmas, we will refer to Section~\ref{sect:conv}, where the whole procedure is performed again, but on a slightly more general Hamiltonian. Although the proof is based on the ideas of~\cite{La-19,La-20}, we adapt it here for the use in the proof of Theorem~\ref{thm:spect}.

The problem with the formal expression of the Hamiltonian~\eqref{eq:H_BF formal} is that the interaction form factor doesn't decay for large momentum. Since the form factor is not square-integrable, the corresponding creation operator is not a densely defined operator on $\Gamma(\mathfrak{H}_+)$.

To approach this problem, we first consider the operator with ultra-violet cutoff $\Lambda\geq 0$,
\begin{equation}\label{eq:def_BF}
 H_\mathrm{BF}^\Lambda = -\Delta_x +\ud \Gamma(\epsilon) + a(\mw_x^\Lambda) + a^*(\mw_x^\Lambda),
\end{equation}
where
\begin{equation} \label{eq:w_Lambda}
 \widehat \mw^\Lambda(p)= 8\pi \ao_W |p| \epsilon(p)^{-1/2}  \1_{|p|\leq \Lambda}.
\end{equation}
This defines a self-adjoint operator on $\mathscr{H}_+=L^2(\T^3) \otimes \Gamma(\mathfrak{H}_+)$.
In our derivation of the Hamiltonian from $H_N$ we will encounter a similar but somewhat more complicated Hamiltonian with $\Lambda$ a function of the boson number $N$. It should be kept in mind that, as a consequence of this, the notation we introduce below is specific to this section.

We now perform a Weyl transformation on the Hamiltonian. This is a well-known procedure that is used for example in the renormalization of the Nelson Hamiltonian~\cite{Nelson-64,GrWu-18}. It will make the divergence of the Hamiltonian explicit, though in our case it will not yet suffice to arrive at an expression from which the cutoff may be removed.
For $\kappa\geq 0$, which will be chosen sufficiently large later, we define $x\mapsto \widehat{f}^\Lambda_{\kappa,x}\in L^2(\T^3,\ell^2(2\pi\Z^3\setminus\{0\})$ by
\begin{equation} \label{eq:def_f_sec_renorm}
\widehat f_{\kappa, x}^\Lambda (p)= - (-\Delta_x +\epsilon(p))^{-1}\widehat \mw_x^\Lambda(p) \1_{|p|> \kappa},
\end{equation}
and, denoting by $f^\Lambda_{\kappa,x}$ the inverse Fourier transform,
\begin{equation}
	\label{eq:def_U_kappa_Lambda}
 U_\kappa^\Lambda = \exp\big( a^*(f_{\kappa, x}^\Lambda) - a(f_{\kappa, x}^\Lambda )\big).
\end{equation}
This transformation satisfies the identities~\cite[Sect.2]{GrWu-18}
\begin{subequations}
 \begin{align} \label{eq:Gross_1}
 (U_\kappa^\Lambda)^* a(g) U_\kappa^\Lambda &= a( g) + \langle  g, f_{\kappa, x}^\Lambda\rangle, \\
 (U_\kappa^\Lambda)^* \ui \nabla_x U_\kappa^\Lambda &= \ui \nabla_x + a(\ui\nabla_x f_{\kappa, x}^\Lambda) + a^*(\ui \nabla_x f_{\kappa, x}^\Lambda). \label{eq:Gross_2}
\end{align}
\end{subequations}
The choice of  $f_{\kappa, x}^\Lambda$ leads to cancellations between terms in $ (U_\kappa^\Lambda)^*(-\Delta_x+\ud \Gamma(\epsilon))U_\kappa^\Lambda$ and $a^*(\mw_x)+a(\mw_x)$, e.g., we have
\begin{equation}
 (U_\kappa^\Lambda)^*\ud \Gamma(\epsilon)U_\kappa^\Lambda=\ud \Gamma(\epsilon) + a^*(\epsilon f_{\kappa, x}^\Lambda) + a(\epsilon f_{\kappa, x}^\Lambda) + \langle f_{\kappa, x}^\Lambda ,\epsilon f_{\kappa, x}^\Lambda \rangle.
\end{equation}
The terms coming from $-\Delta_x$ are similar but more involved, since taking the square of~\eqref{eq:Gross_2} also yields new quadratic expressions in $a^*$, $a$ (see~\cite[Sect.2]{GrWu-18}).
Collecting all the terms and using $\ui\nabla_x f_{\kappa, x}^\Lambda(y) = -\ui \nabla f_{\kappa, x}^\Lambda(y)$, we obtain for $\Lambda\geq \kappa$
\begin{align}
 (U_\kappa^\Lambda)^* H_\mathrm{BF}^\Lambda U_\kappa^\Lambda&= -\Delta_x + \ud \Gamma(\epsilon) + a^*(\mw^\kappa_x) + a( \mw^\kappa_x) \notag\\
 &\quad + a^*(\ui \nabla f_{\kappa, x}^\Lambda)^2 + a(\ui \nabla f_{\kappa, x}^\Lambda)^2 + 2 a^*(\ui \nabla f_{\kappa, x}^\Lambda)a(\ui \nabla f_{\kappa, x}^\Lambda)\notag\\
 &\quad - 2 a^*(\ui \nabla f_{\kappa, x}^\Lambda)\ui \nabla_x -  2\ui \nabla_x  a(\ui \nabla f_{\kappa, x}^\Lambda) + \langle \mw_x , f^\Lambda_{\kappa, x}\rangle.
 \label{eq:H_BF Weyl}
\end{align}
The numbers
\begin{equation}\label{eq:E^1}
 E_\Lambda^{(1)}=\langle  \mw^\Lambda_x , f_{\kappa, x}^\Lambda\rangle =-(8\pi \ao_W)^2 \sum_{p\in 2\pi \Z^3} \frac{p^2}{\epsilon(p)(p^2 +\epsilon(p))} \1_{\kappa \leq |p|\leq \Lambda}\sim -c \Lambda
\end{equation}
diverge as $\Lambda\to \infty$, but the rest of the operator is now more regular than before.
Indeed, the terms $a^*(\ui \nabla f_{\kappa, x}^\Lambda)a(\ui \nabla f_{\kappa, x}^\Lambda)$ and $a^*(\ui \nabla f_{\kappa, x}^\Lambda)\ui \nabla_x +\hc$ now define quadratic forms on $Q(-\Delta_x+\ud \Gamma(\epsilon))$, even for $\Lambda=\infty$ (which was not the case before, as $\epsilon^{-1/2} \mw$ is not square-integrable).
 However, the term $a^*(\ui \nabla f_{\kappa, x}^\Lambda)^2+\hc$ is still singular.
To deal with this, we employ an idea developed in~\cite{LaSch-19,La-19,La-20}.
We write
\begin{align}\label{eq:H-G intro}
 &(U_\kappa^\Lambda)^* H_\mathrm{BF}^\Lambda U_\kappa^\Lambda - E_\Lambda^{(1)}+1 \\
 &\quad = (1-G^*_\Lambda)(-\Delta_x+ \ud \Gamma(\epsilon)+1)(1-G_\Lambda) -G^*_\Lambda (-\Delta_x+ \ud \Gamma(\epsilon)+1)G_\Lambda + R_\Lambda ,\notag
\end{align}
with
\begin{align}
 G_\Lambda&=-  (-\Delta_x+ \ud \Gamma(\epsilon)+1)^{-1}a^*( \ui \nabla f_{\kappa, x}^\Lambda )^2,\label{eq:G_Lambda-def}\\
 R_\Lambda&=2a^*( \ui \nabla f_{\kappa, x}^\Lambda ) a(\ui \nabla f_{\kappa, x}^\Lambda ) - \big(2 \ui \nabla_x a(\ui \nabla f_{\kappa, x}^\Lambda ) + a(w^\kappa_x) +\hc\big).
\end{align}
Note that~\eqref{eq:H-G intro} is indeed just a simple identity, as expanding the first term and cancelling the term with $G^*_\Lambda(\cdot)G_\Lambda$ returns the expression~\eqref{eq:H_BF Weyl}.
After rewriting the Hamiltonian in this way, the singularity is contained in the term $G^*_\Lambda(\cdot)G_\Lambda$ and can be extracted, as we will see. We will now analyze each term in (\ref{eq:H-G intro}) as $\Lambda \to \infty$.

\paragraph{The dressing factor $G_{\Lambda}$.}
We begin with the dressing factor $G_\Lambda$ for which we have the following Lemma.
\begin{lem}[Regularity and limit of $G_\Lambda$]\label{lem:G_Lambda bound}
For all $0\leq s<1/2$, there exists a family $(C_\kappa)_{\kappa>0}$ with $\lim_{\kappa\to \infty} C_\kappa=0$, so that for all $\Lambda\in \R_+\cup\{\infty\}$, $\kappa>0$, $\Psi\in D(\cN_+)$
 \begin{equation*}
  \|(-\Delta_x+\ud \Gamma(\epsilon))^sG_\Lambda \Psi\|_{\mathscr{H}_+}\leq C_\kappa \|(\cN_++1)^{s} \Psi\|_{\mathscr{H}_+}.
 \end{equation*}
Moreover, for all $0\leq s<1/2$, there exists $(C_\Lambda)_{\Lambda>0}$ with $\lim_{\Lambda\to \infty}C_\Lambda =0$, so that for all $\kappa, \Lambda>0$, $\Psi\in D(\cN_+)$,
\begin{equation*}
 \|(-\Delta_x+\ud \Gamma(\epsilon))^s(G_\Lambda-G_\infty)\Psi\|_{\mathscr{H}_+}\leq C_\Lambda\|\cN_+^s\Psi\|_{\mathscr{H}_+}.
\end{equation*}
\end{lem}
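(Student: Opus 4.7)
The plan is to exploit the translation invariance of $h_x := \ui\nabla_y f^\Lambda_{\kappa,x}$. Since $f^\Lambda_{\kappa,x}(y)$ depends on $x,y$ only through $x-y$, its Fourier coefficients in $y$ factor as $\widehat{f^\Lambda_{\kappa,x}}(p) = -\widehat{\mw}(p)\ue^{-\ui px}/(p^2+\epsilon(p))\1_{\kappa<|p|\leq\Lambda}$, so that $\widehat{h_x}(p) = \tilde h(p)\ue^{-\ui px}$ with $|\tilde h(p)|^2\lesssim |p|^{-2}\1_{\kappa<|p|\leq\Lambda}$ at large momenta. Creating a phonon of momentum $p$ therefore shifts the impurity momentum by $p$ (total momentum is conserved). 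Making this explicit on the $n$-phonon sector yields
\begin{equation*}
 \widehat{G_\Lambda\Psi^{(n)}}(k,p_1,\ldots,p_{n+2}) = \frac{-2}{\sqrt{(n+2)(n+1)}\,(k^2+\sum_l\epsilon(p_l)+1)}\sum_{i<j}\tilde h(p_i)\tilde h(p_j)\widehat{\Psi^{(n)}}(k+p_i+p_j,\hat{p}_{i,j}),
\end{equation*}
where $\hat p_{i,j}$ denotes the remaining $n$ momenta.

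Applying $(-\Delta_x+\ud\Gamma(\epsilon))^s$ multiplies this by $(k^2+\sum_l\epsilon(p_l))^s$, so the combined Fourier multiplier is bounded by $(k^2+\sum_l\epsilon(p_l)+1)^{s-1}$. I would split this denominator via $A+B+1 \gtrsim (A+1)^\alpha(B+1)^{1-\alpha}$, $\alpha\in(0,1)$, taking $A = \epsilon(p_1)+\epsilon(p_2)$ and $B = (k-p_1-p_2)^2+\sum_{l\geq 3}\epsilon(p_l)$. After a Cauchy--Schwarz on the sum over pairs $i<j$ and using the symmetry of $\widehat{\Psi^{(n)}}$ to reduce to a representative pair, the estimate reduces to controlling
\begin{equation*}
 I_\kappa(s,\alpha) := \sum_{\kappa<|p_1|,|p_2|\leq\Lambda}\frac{|\tilde h(p_1)|^2|\tilde h(p_2)|^2}{(\epsilon(p_1)+\epsilon(p_2)+1)^{2\alpha(1-s)}}
\end{equation*}
multiplied by a contribution from the $B$-dependent piece acting on the $n$ spectator phonons.

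Using $|\tilde h(p)|^2\lesssim|p|^{-2}$ and $\epsilon(p)\sim|p|^2$ for $|p|\to\infty$, passing to polar coordinates in $(|p_1|,|p_2|)$ and rescaling $r_j = \kappa u_j$ gives $I_\kappa(s,\alpha)\lesssim \kappa^{-2(2\alpha(1-s)-1)}$, finite and vanishing as $\kappa\to\infty$ whenever $2\alpha(1-s)>1$. The constraint $s<1/2$ is precisely what allows such an $\alpha\in(0,1)$ to exist, yielding the first claim with $C_\kappa\to 0$. The convergence $G_\Lambda\to G_\infty$ then follows by applying the same splitting to $G_\Lambda-G_\infty$, whose Fourier kernel is supported on $|p|>\Lambda$; the analogous integral is bounded by $C_\Lambda\to 0$ via dominated convergence.

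The main obstacle is reconciling the combinatorial $(n+2)(n+1)$ produced by the two creation operators with the desired $(\cN_++1)^s$ bound rather than $(\cN_++1)$ on the right. The needed gain must come from the $B$-part of the split resolvent, which furnishes a factor of the form $(1+\ud\Gamma(\epsilon))^{-(1-\alpha)(1-s)}$ acting on the spectator phonons; combined with the spectral lower bound $\ud\Gamma(\epsilon)\geq \epsilon_0\cN_+$ on $\mathfrak{H}_+$ (where $\epsilon_0 = \inf_{p\neq 0}\epsilon(p)>0$), this absorbs the excess phonon-number factor. Balancing the splitting exponent $\alpha$ against $s$ so that both $I_\kappa\to 0$ and the number-operator bookkeeping close simultaneously is the technical heart of the proof, and motivates deferring the detailed estimates to Section~\ref{sect:conv}, where the same machinery is deployed for the more general excitation Hamiltonian $\cH^U_N$.
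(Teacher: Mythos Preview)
Your Fourier representation and the momentum-conservation observation are correct, but the step ``after a Cauchy--Schwarz on the sum over pairs $i<j$'' is where the argument breaks. That Cauchy--Schwarz, together with the relabeling to a representative pair, produces a net prefactor $(n+2)(n+1)$ in your bound for $\|(-\Delta_x+\ud\Gamma(\epsilon)+1)^s G_\Lambda\Psi^{(n)}\|^2$. Your two requirements then become (i) $2\alpha(1-s)>1$ for $I_\kappa<\infty$, hence $\alpha>1/(2(1-s))\ge 1/2$, and (ii) $(n+2)(n+1)\,n^{-2(1-\alpha)(1-s)}\lesssim n^{2s}$, i.e.\ $(1-\alpha)(1-s)\ge 1-s$, hence $\alpha\le 0$. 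These are incompatible for every $s\in[0,1/2)$, so no balancing of the splitting exponent can close the argument; the $B$-part of the resolvent can never recover the full factor of $n$ that the pair Cauchy--Schwarz loses.

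The paper avoids this loss by passing to the adjoint: it bounds $\|G_\Lambda^*(-\Delta_x+\ud\Gamma(\epsilon)+1)^s(\cN_++1)^{-s}\|$, so that the two \emph{annihilation} operators in $G_\Lambda^*=-a(\ui\nabla f_{\kappa,x}^\Lambda)^2(-\Delta_x+\ud\Gamma(\epsilon)+1)^{-1}$ act on a given state. Cauchy--Schwarz is then applied in the free momentum variables $p,q$ of $a_pa_q$ (not in the pair indices of a symmetrized sum), yielding the operator inequality
\[
 a^*(\ui\nabla f)^2a(\ui\nabla f)^2 \;\le\; \Big\|\,|k|\,\widehat f\,\epsilon^{-(1-s)/2}\Big\|_2^4\,\ud\Gamma(\epsilon^{1-s})^2,
\]
after which the sector-wise H\"older inequality $\ud\Gamma(\epsilon^{1-s})\le \cN_+^{s}\,\ud\Gamma(\epsilon)^{1-s}$ produces exactly the factor $\cN_+^{2s}$ with no combinatorial loss. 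The only convergence condition is $\epsilon^{-(1-s)/2}|k|\widehat f_{\kappa}^\Lambda\in\ell^2$, which holds precisely for $s<1/2$ and furnishes $C_\kappa\to 0$ as $\kappa\to\infty$. This is the argument of Lemma~\ref{lem:G_0_bound} that the paper invokes.
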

The proof follows the one of Lemma~\ref{lem:G_0_bound}. The key point is that $\epsilon^{-1/2+s/2} k f_{\kappa}^\infty\lesssim |k|^{s}f_{\kappa}^\infty\in \ell^2$ for $s<1/2$.

\paragraph{The dressed kinetic operator $K_{\Lambda}$.}
We can now discuss the behavior as $\Lambda\to \infty$ of the principal term,
\begin{equation}
 K_\Lambda := (1-G^*_\Lambda)(-\Delta_x+ \ud \Gamma(\epsilon)+1)(1-G_\Lambda),
\end{equation}
with,
\begin{equation}
 D(K_\Lambda)=\big\{\Psi \in \mathscr{H}_+ : (1-G_\Lambda)\Psi \in D(-\Delta_x+ \ud \Gamma(\epsilon))\big\}.
\end{equation}

\begin{lem}[Comparison estimates for $K_\Lambda$]\label{lem:K sa}
 For $\kappa$ sufficiently large and all $\Lambda\in \R_+\cup\{\infty\}$,  $K_\Lambda$ is self-adjoint with compact resolvent.
 Moreover, for $s<1/2$ there is $C>0$ such that for all $\Lambda \in \R_+\cup\{\infty\}$ and all $\Psi \in D(K_{\Lambda})$
  \begin{equation*}
  \|(\cN_++1)^{1-s} (-\Delta_x+\ud \Gamma(\epsilon)+1)^{s} \Psi\|\leq C \|K_{\Lambda} \Psi\|.
 \end{equation*}
\end{lem}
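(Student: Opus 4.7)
The plan is to exploit the factorization $K_\Lambda = S^*S$ with $S := (T+1)^{1/2}(1-G_\Lambda)$ and $T := -\Delta_x + \ud\Gamma(\epsilon)$. Lemma~\ref{lem:G_Lambda bound} at $s=0$ gives $\|G_\Lambda\|\leq C_\kappa$ uniformly in $\Lambda$, with $C_\kappa\to 0$ as $\kappa\to\infty$; for $\kappa$ large enough that $C_\kappa<1$, a Neumann series shows $(1-G_\Lambda)$ is boundedly invertible on $\mathscr{H}_+$. Then $S$ is the product of a closed operator with a bounded invertible one, hence closed on $(1-G_\Lambda)^{-1}D((T+1)^{1/2})$, and von Neumann's theorem gives that $K_\Lambda = S^*S$ is non-negative self-adjoint. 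Since $S^* = (1-G_\Lambda^*)(T+1)^{1/2}$ is defined on $D((T+1)^{1/2})$, the domain $D(K_\Lambda) = \{\Psi : (1-G_\Lambda)\Psi\in D(T)\}$ of the statement is recovered.

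For the comparison estimate, inverting $(1-G_\Lambda^*)$ in $K_\Lambda = (1-G_\Lambda^*)(T+1)(1-G_\Lambda)$ immediately yields $\|(T+1)(1-G_\Lambda)\Psi\|\leq (1-C_\kappa)^{-1}\|K_\Lambda\Psi\|$. Since $T$ preserves the boson number, $(T+1)^s$ and $(\cN_++1)^{1-s}$ commute, and the bound $\epsilon(p)\geq c>0$ on $\mathfrak H_+$ gives $T\geq c\,\cN_+$, hence the operator inequalities
\begin{equation*}
c(\cN_++1)\leq (\cN_++1)^{1-s}(T+1)^s\leq C(T+1).
\end{equation*}
Writing $\Psi = (1-G_\Lambda)\Psi + G_\Lambda\Psi$, the first piece is controlled by
\begin{equation*}
\|(\cN_++1)^{1-s}(T+1)^s(1-G_\Lambda)\Psi\|\leq C\|(T+1)(1-G_\Lambda)\Psi\|\leq C'\|K_\Lambda\Psi\|.
\end{equation*}
For the second piece, the identity $(\cN_++1)^{1-s}G_\Lambda = G_\Lambda(\cN_++3)^{1-s}$ (since $G_\Lambda$ raises the excitation number by two) combined with Lemma~\ref{lem:G_Lambda bound} at an $s\in (0,1/2)$ yields
\begin{equation*}
\|(\cN_++1)^{1-s}(T+1)^s G_\Lambda\Psi\|=\|(T+1)^s G_\Lambda(\cN_++3)^{1-s}\Psi\|\leq C_\kappa\|(\cN_++1)\Psi\|.
\end{equation*}
Using the left operator inequality above to bound $\|(\cN_++1)\Psi\|\leq c^{-1}\|(\cN_++1)^{1-s}(T+1)^s\Psi\|$, the right-hand side is a multiple $C'_\kappa\|(\cN_++1)^{1-s}(T+1)^s\Psi\|$ with $C'_\kappa\to 0$ as $\kappa\to\infty$. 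For $\kappa$ large enough that $C'_\kappa<1/2$, this term is absorbed into the left-hand side, giving the claimed bound uniformly in $\Lambda\in\R_+\cup\{\infty\}$.

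Compact resolvent follows softly: $-\Delta_x$ has compact resolvent on $L^2(\T^3)$, and since $\epsilon(p)\geq c>0$ on $\mathfrak H_+$ with $\epsilon(p)\to\infty$, the operator $\ud\Gamma(\epsilon)$ has compact resolvent on $\Gamma(\mathfrak H_+)$; hence $(T+1)^s$ has compact resolvent on $\mathscr{H}_+$ for any $s>0$. The comparison estimate shows $(K_\Lambda+1)^{-1}$ factors through the compact embedding $D((T+1)^s)\hookrightarrow\mathscr{H}_+$, so it is compact.

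The main obstacle of the argument is that the naive analogue of the estimate with exponent $1$ in place of $s$ is \emph{false} uniformly in $\Lambda$: expanding gives $(T+1)G_\Lambda\Psi = -a^*(\ui\nabla f^\Lambda_{\kappa,x})^2\Psi$, and the two-phonon form factor satisfies $\|\nabla f^\Lambda_{\kappa,x}\|\to\infty$ logarithmically as $\Lambda\to\infty$ (from the sum $\sum_p p^4/[\epsilon(p)(p^2+\epsilon(p))^2]$). The threshold $s<1/2$ in Lemma~\ref{lem:G_Lambda bound} reflects exactly this borderline regularity, and it is what dictates the same restriction in the comparison estimate; everything above is engineered to stay just below this threshold so that all constants remain uniform in $\Lambda$.
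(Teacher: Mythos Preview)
Your proof is correct and follows essentially the same route as the paper: split $\Psi = (1-G_\Lambda)\Psi + G_\Lambda\Psi$, control the first piece via $\|(T+1)(1-G_\Lambda)\Psi\|\lesssim\|K_\Lambda\Psi\|$, control the second via Lemma~\ref{lem:G_Lambda bound} together with the commutation $\cN_+ G_\Lambda = G_\Lambda(\cN_++2)$, and absorb for large $\kappa$. This is exactly the mechanism the paper invokes (``inserting $\pm G_\Lambda$'') and spells out later in Lemma~\ref{lem:K_0-props} and Corollary~\ref{cor:K_control_kin}. Your self-adjointness argument via the factorization $K_\Lambda=S^*S$ with $S=(T+1)^{1/2}(1-G_\Lambda)$ and von~Neumann's theorem is a clean alternative to the paper's ``product of three invertible operators'' reasoning; both are standard and yield the same domain.
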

\begin{proof}
 By Lemma~\ref{lem:G_Lambda bound}, $1-G_\Lambda$ is invertible for $\Lambda\in \R_+\cup\{\infty\}$ by Neumann series for $\kappa$ large enough.
 In particular, $D(K_\Lambda)$ is dense.
 As $-\Delta_x+\ud \Gamma(\epsilon)+1$ is invertible, the symmetric operator $K_\Lambda$ is a product of three invertible operators, and thus self-adjoint. Since the resolvent of $-\Delta_x+\ud \Gamma(\epsilon)+1$ is compact, so is that of $K_\Lambda$.

 The bound follows by inserting $\pm G_\Lambda$ and using the properties of $G_\Lambda$.  The details, and more general estimates, are given in Lemma~\ref{lem:K_0-props}.
\end{proof}

Note that $D(K_\Lambda)$ coincides with $D(-\Delta+\ud \Gamma(\epsilon))$ for $\Lambda<\infty$, but from Lemma~\ref{lem:G_Lambda bound} for $\Lambda=\infty$ we only have $D(K_\infty)\subset D((-\Delta+\ud \Gamma(\epsilon))^{s})$ for $s<1/2$ (and in fact the intersection is trivial for $s\geq 1/2$, cf.~\cite{LaSch-19}).

\paragraph{The singular term.}

We now consider the second term in (\ref{eq:H-G intro}), which contains the remaining singularity.
We can observe that, denoting the Fock vacuum by $\varnothing$,
\begin{align}
 E_\Lambda^{(2)}:&= -\Big\langle \1\otimes \varnothing, G^*(-\Delta_x + \ud \Gamma(\epsilon)+1)G \1\otimes  \varnothing \Big\rangle
  \nn \\
  &= - \Big\langle \1\otimes  \varnothing ,a(\ui \nabla f_{\kappa, x}^\Lambda)^2(-\Delta_x + \ud \Gamma(\epsilon)+1)^{-1} a^*(\ui \nabla f_{\kappa, x}^\Lambda)^2 \1\otimes  \varnothing \Big\rangle \nn\\
  &=- {2} \sum_{p,q\in 2\pi\Z^3} \frac{(p\cdot q)^2 |\widehat f_{\kappa, 0}^\Lambda(p)|^2 |\widehat f_{\kappa, 0}^\Lambda(q)|^2}{(p+q)^2 + \epsilon(p)+\epsilon(q)+1} \sim - c \log \Lambda.\label{eq:E^2}
\end{align}
Subtracting this divergent contribution, we define
\begin{equation}\label{eq:T_Lambda-def}
 T_{\Lambda} = -G^*(-\Delta_x + \ud \Gamma(\epsilon)+1)G- E_\Lambda^{(2)}.
\end{equation}
In order to exhibit the cancellation of the singularities, we move all creation operators to the left of the annihilation operators. To do so, we use that $\ui \nabla_x \ue^{-\ui kx}=\ue^{- \ui kx} (\ui \nabla_x + k)$ and $\ud\Gamma(\epsilon) a_k^* = a_k^*(\ud\Gamma(\epsilon)+\epsilon(k))$, which implies
\begin{align}
\Big(-\Delta_x + \ud \Gamma(\epsilon)\Big)^{-1} e^{-\ui k\cdot x} a^*_k = e^{-\ui k\cdot x} a^*_k \, \Big( (\ui \nabla_x + k)^2 + \epsilon(k) + \ud \Gamma(\epsilon)\Big)^{-1}.
\end{align}
This gives
\begin{equation}\label{eq:T_Lambda-Theta}
 T_\Lambda
 =\sum_{j=0}^2\Theta_{\Lambda,j},
\end{equation}
where $\Theta_{\Lambda,j}$ contains the terms with $j$ remaining creation/annihilation operators after reordering.
These are given explicitly by
\begin{align}
 \Theta_{\Lambda,0} &= - {2} \sum_{p,q \in 2\pi\Z^3}
 \frac{(p\cdot q)^2 |\widehat f_{\kappa, 0}^\Lambda(p)|^2 |\widehat f_{\kappa, 0}^\Lambda(q)|^2}{(\ui \nabla_x +p +q)^2 + \epsilon(p)+\epsilon(q) + \ud \Gamma(\epsilon)+1} - E_\Lambda^{(2)}\\
 \Theta_{\Lambda,1}&= {4} \sum_{p,q \in 2\pi\Z^3} \ue^{-\ui p x} a^*_p \theta_{\Lambda,1}(p,q)a_q \ue^{\ui q x}\\
 \Theta_{\Lambda,2}&= \sum_{p,q,k,\ell \in 2\pi\Z^3} \ue^{-\ui p x}\ue^{-\ui q x} a^*_{p}a^*_{q} \theta_{\Lambda,2}(p,q,k,\ell)a_{k} a_{\ell} \ue^{\ui k x}\ue^{\ui\ell x},
\end{align}
with
\begin{align}
&\theta_{\Lambda,1}(p,q)\nn \\
 &= - \sum_{k\in 2\pi \Z^3}  \frac{\widehat f_{\kappa, 0}^\Lambda(p)(p\cdot k)(q\cdot k) \widehat f_{\kappa, 0}^\Lambda(k)^2\widehat f_{\kappa, 0}^\Lambda(q)}{(\ui \nabla_x +p +q+k)^2+\epsilon(k) + \epsilon(p)+\epsilon(q)+\ud \Gamma(\epsilon)+1},
 \end{align}
 and
 \begin{align}
 &\theta_{\Lambda,2}(p,q,k,\ell)\nn \\
 &=-\frac{\widehat f_{\kappa, 0}^\Lambda(p)\widehat f_{\kappa, 0}^\Lambda(q)(p\cdot q)(k\cdot \ell)\widehat f_{\kappa, 0}^\Lambda(k)\widehat f_{\kappa, 0}^\Lambda(\ell)}{(\ui \nabla_x +p+q +k+\ell)^2+\epsilon(p)+\epsilon(q) + \epsilon(k)+\epsilon(\ell)+\ud \Gamma(\epsilon)+1}.
\end{align}
Note that the kernels $\theta_{\Lambda,1}, \theta_{\Lambda,2}$ as functions of the commuting operators $\ui \nabla_x$ and $\dd\Gamma(\epsilon)$, are well defined also for $\Lambda=\infty$, since in $\theta_{\Lambda,1}$ the sum over $k$ converges. Moreover, writing the terms in $ \Theta_{\Lambda,0}$ with a common denominator we obtain
\begin{align}
 \Theta_{\Lambda,0} =  \sum_{k,\ell \in 2\pi\Z^3} \bigg(&
 \frac{(k\cdot \ell)^2\widehat f_{\kappa, 0}^\Lambda(k)^2 \widehat f_{\kappa, 0}^\Lambda(\ell)^2}
 {(\ui \nabla_x +k +\ell)^2 + \epsilon(k)+\epsilon(\ell) + \ud \Gamma(\epsilon)+1} \nn \\
 &\qquad \times \frac{((\ui \nabla_x +k +\ell)^2 +\ud \Gamma(\epsilon) - (k+\ell)^2))}{(k +\ell)^2 + \epsilon(k)+\epsilon(\ell)+1}\bigg),
\end{align}
where the series also converges for $\Lambda=\infty$ to a well defined, although unbounded, function of $\ui \nabla_x$ and $\dd\Gamma(\epsilon)$. With these observations, we can make sense of $T_\infty$.
The key properties of $T_\Lambda$ are summarized in the following lemma.  In particular, $T_\infty$ defines an operator on $D(K_\infty)$ and  $K_\infty + T_\infty$ is self-adjoint on $D(K_\infty)$.

\begin{lem}(Estimates and convergence of $T_{\Lambda}$)\label{lem:T_Lambda relative}
Given $\delta>0$, the following holds for $\kappa$ sufficiently large.
For all $\Lambda,\Lambda'\in \R_+\cup\{\infty\}$, $\Psi\in D(K_{\Lambda'})$
\begin{equation*}
\|T_{\Lambda}\Psi\| \leq  \delta\|K_{\Lambda'}\Psi\|.
\end{equation*}
Moreover, there exists $(C_\Lambda)_{\Lambda>0}$ with $\lim_{\Lambda\to \infty} C_\Lambda=0$ so that for all $\Lambda,\Lambda'\in \R_+\cup\{\infty\}$, $\Psi\in D(K_\Lambda') $
\begin{equation*}
 \|(T_\Lambda-T_\infty)\Psi\|\leq C_\Lambda \|K_{\Lambda'}\Psi\|.
\end{equation*}
\end{lem}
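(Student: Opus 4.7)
The plan is to treat the three pieces $\Theta_{\Lambda,0}, \Theta_{\Lambda,1}, \Theta_{\Lambda,2}$ in the decomposition $T_\Lambda = \sum_{j=0}^2 \Theta_{\Lambda,j}$ separately, bound each one by a power of the number operator combined with a sub-critical power of $-\Delta_x+\ud\Gamma(\epsilon)$, and then invoke Lemma~\ref{lem:K sa} to convert this into the desired $K_{\Lambda'}$-bound. Concretely, I aim to prove that for some $s<1/2$ and a constant $c_\kappa$ tending to zero as $\kappa\to\infty$,
\begin{equation*}
\|T_\Lambda \Psi\| \leq c_\kappa \,\|(\cN_++1)^{1-s}(-\Delta_x+\ud\Gamma(\epsilon)+1)^s \Psi\|,
\end{equation*}
uniformly in $\Lambda\in\R_+\cup\{\infty\}$, after which Lemma~\ref{lem:K sa} upgrades this to $\|T_\Lambda \Psi\|\leq C c_\kappa \|K_{\Lambda'}\Psi\|$, and choosing $\kappa$ large gives the first claim with constant $\delta$.

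For $\Theta_{\Lambda,0}$, which is purely a function of $\ui\nabla_x$ and $\ud\Gamma(\epsilon)$, I use the single-denominator expression in the text: the numerator $(\ui\nabla_x+k+\ell)^2+\ud\Gamma(\epsilon)-(k+\ell)^2$ is linear in $\ui\nabla_x$ and controlled by $-\Delta_x+\ud\Gamma(\epsilon)+(k+\ell)^2$, so after cancelling one copy of the denominator, what remains is absolutely summable thanks to the bound $|\widehat f_{\kappa,0}^\Lambda(p)|\lesssim |p|^{-1}\epsilon(p)^{-1/2}\1_{|p|>\kappa}$ and elementary spherical-shell estimates on $\sum_{|k|,|\ell|>\kappa}(k\cdot\ell)^2|p|^{-2}|q|^{-2}\epsilon(p)^{-1}\epsilon(q)^{-1}[(k+\ell)^2+\epsilon(k)+\epsilon(\ell)+1]^{-1}$; the restriction $|k|,|\ell|>\kappa$ provides the small factor $c_\kappa\to 0$. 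For $\Theta_{\Lambda,1}$ and $\Theta_{\Lambda,2}$, I first pull out the operator-valued resolvent to its symmetric form using $|R|^{1/2}\cdots |R|^{1/2}$, then apply the $a^*/a$ Schur-test of the form
\begin{equation*}
\Big\|\sum_{p,q}\ue^{-\ui p x}a_p^* \Xi(p,q) a_q \ue^{\ui q x}\Psi\Big\|^2 \lesssim \Big(\sup_p\sum_q \|\Xi(p,q)\|\Big)\Big(\sup_q \sum_p \|\Xi(p,q)\|\Big)\|\cN_+\Psi\|^2,
\end{equation*}
and analogously for the two-creation term with $\cN_+^2$. The key input is that for $\Theta_{\Lambda,2}$ the kernel $\theta_{\Lambda,2}$ carries the product $(p\cdot q)(k\cdot\ell)$ times four factors $\widehat f_\kappa^\Lambda$, each decaying like $|\cdot|^{-3/2}$, which makes the relevant lattice sums convergent and small in $\kappa$; for $\Theta_{\Lambda,1}$ the internal $k$-sum already gives a bounded symbol in $p,q$ uniformly in $\Lambda$ (since $\sum_k |k|^{-2}\epsilon(k)^{-1}$ with $|k|>\kappa$ converges), and the remaining $p,q$ factors produce the necessary fractional powers of $\epsilon(p),\epsilon(q)$ after distributing resolvent weights.

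For the convergence statement, I follow exactly the same estimates but with $\widehat f_\kappa^\Lambda-\widehat f_\kappa^\infty = -\widehat f_\kappa^\infty\1_{|\cdot|>\Lambda}$ inserted into one slot at a time (a telescoping argument across the four $f$-factors appearing in each $\Theta_{\Lambda,j}$). Each insertion restricts one summation variable to $|\cdot|>\Lambda$, and the same absolutely convergent series whose sum bounded $\Theta_{\Lambda,j}$ now has tails that vanish as $\Lambda\to\infty$, producing the family $C_\Lambda\to 0$. Self-adjointness of $K_\infty+T_\infty$ on $D(K_\infty)$ is then the usual Kato--Rellich consequence of $T_\infty$ being $K_\infty$-bounded with relative bound $\delta<1$.

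The main obstacle, and the reason the argument works only below $s=1/2$, is the borderline behavior of $\epsilon^{-1/2}|k|$ in $\widehat f_\kappa^\infty$: the symbol is just not in $\ell^2$, so one cannot naively commute $a(\ui\nabla f)^2$ past the resolvent and must really exploit the cancellation built into the identity~\eqref{eq:H-G intro} (i.e.\ keep the resolvent between the two $a^*(\ui\nabla f)^2$ factors) to convert the singular ``$\Lambda$''-divergence of the diagonal into the milder ``$\log\Lambda$''-divergence that is absorbed in $E^{(2)}_\Lambda$. Making $c_\kappa$ (equivalently $\delta$) arbitrarily small rests on the integrability tails $\sum_{|p|>\kappa}|p|^{-4}\epsilon(p)^{-1}\to 0$, which is what forces us to introduce the infrared cutoff $\kappa$ in the first place.
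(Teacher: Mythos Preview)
Your overall strategy---decompose $T_\Lambda=\sum_j \Theta_{\Lambda,j}$, bound each piece by $c_\kappa\|(\cN_++1)^{1-s}(-\Delta_x+\ud\Gamma(\epsilon)+1)^s\Psi\|$ with $c_\kappa\to 0$, then invoke Lemma~\ref{lem:K sa}---is exactly the paper's approach (the paper omits the proof here and defers to the analogous Proposition~\ref{prop:T_N_relative} and Lemma~\ref{lem:T_N bound}). The treatment of $\Theta_{\Lambda,0}$ via the common-denominator form, and the telescoping argument for the convergence statement, are also in line with what the paper does.

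There is, however, a genuine gap in your handling of $\Theta_{\Lambda,1}$ and $\Theta_{\Lambda,2}$. First a minor point: $|\widehat f_{\kappa,0}^\Lambda(p)|\lesssim |p|^{-2}$, not $|p|^{-3/2}$. More seriously, the displayed Schur-type bound with $\|\cN_+\Psi\|$ on the right does not close in three dimensions. Using the sharp kernel bound $|\theta_{\Lambda,1}(p,q)|\lesssim |p|^{-1}|q|^{-1}(\epsilon(p)+\epsilon(q))^{-1/2}$ (cf.~\eqref{eq:theta_1-bound}), the Schur sum $\sup_p\sum_q\|\theta_{\Lambda,1}(p,q)\|$ contains $\sum_{|q|>\kappa}|q|^{-2}$, which diverges; the same obstruction arises for $\Theta_{\Lambda,2}$. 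A pure $\cN_+$-bound is therefore out of reach, which is consistent with the fact that the target estimate genuinely requires a positive power of $-\Delta_x+\ud\Gamma(\epsilon)$.

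What the paper actually does (see the proof of Lemma~\ref{lem:T_N bound}) is a \emph{weighted} Cauchy--Schwarz: one inserts factors $\epsilon(p)^{\pm t/2}\epsilon(q)^{\mp(1+t)/2}$ with $t\in(1/2,1)$, so that the summable quantity becomes $C_\kappa=\sum_{|p|>\kappa}\epsilon(p)^{-(1+t)}\to 0$, while the leftover operator is $\ud\Gamma(\epsilon^t)^{1/2}$, controlled through $\ud\Gamma(\epsilon^t)\leq \ud\Gamma(\epsilon)^t\cN_+^{1-t}$. This is precisely what generates the mixed power $\cN_+^{1-s}\ud\Gamma(\epsilon)^s$ (with $s=t-1/2$) in the final bound. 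Your phrase ``after distributing resolvent weights'' hints at this, but the displayed Schur inequality does not implement it and would fail as written; you need to replace it with the weighted argument.
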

The proof is similar to the one of Proposition~\ref{prop:T_N_relative} and is omitted here. The main idea is to estimate $T_\Lambda$, respectively the difference $T_\Lambda-T_\infty$, by operators of the form $\cN_+^{1-s}(-\Delta_x+\ud \Gamma(\epsilon))^s$, which are themselves controlled by $K_{\Lambda'}$ using Lemma \ref{lem:K sa}.

\paragraph{The remainder term $R_\Lambda$.}
Since $p\mapsto p \widehat f_{\kappa,x}^\infty(p)$ is not square summable, the expression $a^*(-\ui \nabla f_{\kappa,x}^\Lambda) \ui \nabla_x$
occuring in $R_\Lambda$ only makes sense as a quadratic form for $\Lambda=\infty$.
\begin{lem}\label{lem:R_Lambda bound}
 Given $\delta>0$, $|s|<1/8$, the following holds for $\kappa$ sufficiently large. There is $C>0$, so that for all $\Lambda,\Lambda' \in \R_+\cup\{\infty\}$ and $\Psi,\Phi\in D(K_{\Lambda'})$
 \begin{equation*}
  |\langle \Phi, R_\Lambda\Psi\rangle|\leq \delta \|(K_{\Lambda'}^{1/2-s}+C)\Phi\| \|(K_{\Lambda'}^{1/2+s}+C)\Psi\|.
  \end{equation*}
Moreover, there exists  $(C_\Lambda)_{\Lambda>0}$ with $\lim_{\Lambda\to \infty} C_\Lambda=0$ so that for all $\Lambda,\Lambda' \in \R_+\cup\{\infty\}$ and $\Psi\in Q(K_{\Lambda'})$
\begin{equation*}
  |\langle \Psi, (R_\Lambda -R_\infty)\Psi \rangle| \leq C_\Lambda \langle\Psi, K_{\Lambda'}\Psi\rangle .
 \end{equation*}
\end{lem}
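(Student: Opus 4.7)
The plan is to decompose $R_\Lambda = R_\Lambda^{(1)}+R_\Lambda^{(2)}+R_\Lambda^{(3)}$ into its three structural pieces and estimate each one separately. Concretely, $R_\Lambda^{(1)} := 2a^*(\ui\nabla f_{\kappa,x}^\Lambda)\,a(\ui\nabla f_{\kappa,x}^\Lambda)$ is number-conserving, $R_\Lambda^{(2)} := -2\ui\nabla_x\,a(\ui\nabla f_{\kappa,x}^\Lambda) + \hc$ couples the impurity momentum to the field, and $R_\Lambda^{(3)} := -a(\mw_x^\kappa) - a^*(\mw_x^\kappa)$ is a plain linear term in $a,a^*$. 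The basic tool throughout is the weighted pull-through bound $\|a(g_x)\Psi\|^2 \leq \int_{\T^3}\|\omega^{-1/2}g_x\|_{\ell^2}^2\,\langle\Psi(x),\ud\Gamma(\omega)\Psi(x)\rangle\,\ud x$, with the weight $\omega$ chosen as a power of $\epsilon$ adapted to the form factor.

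For $R_\Lambda^{(1)}$, the natural weight is $\omega=\epsilon$: the pointwise estimate $\widehat f_{\kappa,x}^\Lambda(p) = O(|p|^{-2})$ (uniform in $\Lambda$ and $x$) gives $\sum_{|p|>\kappa}p^2|\widehat f_{\kappa,x}^\Lambda(p)|^2/\epsilon(p) \leq C_\kappa$ with $C_\kappa \to 0$ as $\kappa\to\infty$, hence $R_\Lambda^{(1)} \leq C_\kappa\,\ud\Gamma(\epsilon)$ in the form sense. For $R_\Lambda^{(2)}$, Cauchy--Schwarz separates the factor $\ui\nabla_x$ onto $(-\Delta_x)^{1/2}$ on one side and leaves $a(\ui\nabla f_{\kappa,x}^\Lambda)$ on the other, yielding $|\langle\Phi,R_\Lambda^{(2)}\Psi\rangle|\leq C_\kappa\,\|(-\Delta_x)^{1/2}\Phi\|\,\|\ud\Gamma(\epsilon)^{1/2}\Psi\| + \text{symm.}$ For $R_\Lambda^{(3)}$, the form factor $\mw^\kappa$ has Fourier support in $\{|p|\leq \kappa\}$, so it lies in $\ell^2$ with a $\kappa$-dependent norm, giving $|\langle\Phi,R_\Lambda^{(3)}\Psi\rangle| \leq C'_\kappa\,\|\Phi\|(\|\cN_+^{1/2}\Psi\|+\|\Psi\|) + \text{symm.}$, where $C'_\kappa$ is finite but does not shrink with $\kappa$. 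Once $\kappa$ is fixed to make the $R^{(1)},R^{(2)}$ prefactors small, Lemma~\ref{lem:K sa} converts all the resulting right-hand sides---expressions of the form $\|\cN_+^{1-s}(-\Delta_x+\ud\Gamma(\epsilon))^s\Phi\|\,\|\cdots\|$---into bounds by $\|K_{\Lambda'}^{1/2\pm s}\Phi\|\|K_{\Lambda'}^{1/2\mp s}\Psi\|$ uniformly in $\Lambda'$; the restriction $|s|<1/8$ leaves enough room below the threshold $1/2$ in Lemma~\ref{lem:K sa} to split the exponents between the two sides of the bilinear form. The additive constant $C$ in $(K_{\Lambda'}^{1/2\pm s}+C)$ absorbs the finite $\kappa$-dependent prefactor $C'_\kappa$ from $R^{(3)}$ together with lower-order shifts from $K_{\Lambda'}$ only being bounded below up to a constant.

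For the convergence statement, observe that $R_\Lambda^{(3)}$ does not depend on $\Lambda$ and cancels in $R_\Lambda - R_\infty$, while $R_\Lambda^{(1)}-R_\infty^{(1)}$ and $R_\Lambda^{(2)}-R_\infty^{(2)}$ involve only $f_{\kappa,x}^\Lambda - f_{\kappa,x}^\infty$, whose Fourier transform is supported on $\{|p|>\Lambda\}$. Rerunning the above estimates with the summation restricted to this tail region yields a prefactor $C_\Lambda \to 0$ as $\Lambda\to\infty$. The main technical obstacle is $R^{(2)}$ at $\Lambda=\infty$: there $\ui\nabla f_{\kappa,x}^\infty$ is not square-summable, so $a(\ui\nabla f^\infty_{\kappa,x})$ exists only as a quadratic form against $\ud\Gamma(\epsilon)^{1/2}$-bounded vectors, and $a^*(\ui\nabla f^\infty_{\kappa,x})\,\ui\nabla_x$ only as the bilinear form implicit in our Cauchy--Schwarz pairing, with the $(-\Delta_x)^{1/2}$ factor absorbing the missing decay via the weight $\epsilon$.
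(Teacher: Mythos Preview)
Your decomposition $R_\Lambda = R^{(1)}+R^{(2)}+R^{(3)}$ and the overall strategy are the same as the paper's (which defers to Lemma~\ref{lem:R_N_bound}), and your treatment of $R^{(3)}$ and of the difference $R_\Lambda-R_\infty$ are essentially fine. There is, however, a real gap in your estimates for $R^{(1)}$ and $R^{(2)}$: your weight choice $\omega=\epsilon$ lands exactly at the critical exponent where the conversion via Lemma~\ref{lem:K sa} breaks down.

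Concretely, your bounds $R^{(1)}\le C_\kappa\,\ud\Gamma(\epsilon)$ and $|\langle\Phi,R^{(2)}\Psi\rangle|\le C_\kappa\|(-\Delta_x)^{1/2}\Phi\|\,\|\ud\Gamma(\epsilon)^{1/2}\Psi\|$ both place a full half-power of the bare kinetic energy on each side. But Lemma~\ref{lem:K sa} (and its refinement, the analogue of Corollary~\ref{cor:K_control_kin}) only compare $(-\Delta_x+\ud\Gamma(\epsilon))^{s'}$ with powers of $K_{\Lambda'}$ for $s'<1/2$. This restriction is not slack: the paper notes after Lemma~\ref{lem:K sa} that $D(K_\infty)\cap D\big((-\Delta_x+\ud\Gamma(\epsilon))^{s}\big)=\{0\}$ for $s\ge 1/2$, so for $\Lambda'=\infty$ your right-hand side $\|\ud\Gamma(\epsilon)^{1/2}\Psi\|$ is typically infinite on $D(K_{\Lambda'})$, and no conversion to $\|K_{\Lambda'}^{1/2\pm s}\Psi\|$ is possible. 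You correctly flag in your last paragraph that $a^*(\ui\nabla f_{\kappa,x}^\infty)\ui\nabla_x$ only exists as a form on $\ud\Gamma(\epsilon)^{1/2}$-bounded vectors, but that is precisely the space that differs from $Q(K_\infty)$.

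The fix, which is what the paper does in Section~\ref{sect:R_N}, is to work subcritically. For the $a^*a$ term one uses the weighted bound $\|a(\ui\nabla f)\Psi\|\le\|\epsilon^{-s'}\nabla f\|_2\,\|\ud\Gamma(\epsilon^{2s'})^{1/2}\Psi\|\le C_\kappa\|\ud\Gamma(\epsilon)^{s'}\cN_+^{1/2-s'}\Psi\|$ with $s'\in(1/4,1/2)$; the norm $\|\epsilon^{-s'}\nabla f\|_2$ is finite and tends to zero with $\kappa$ precisely for $s'>1/4$. For the $\nabla_x$ term one cannot simply Cauchy--Schwarz the full gradient onto one side; instead one splits $|\nabla_x|=|\nabla_x|^{2s'}\cdot|\nabla_x|^{1-2s'}$, puts $|\nabla_x|^{2s'}$ (with $2s'<1$) on $\Phi$, and commutes the leftover $|\nabla_x|^{1-2s'}$ through $a(\ui\nabla f)$ using $\big||p+k|^{1-2s'}-|p|^{1-2s'}\big|\lesssim|k|^{1-2s'}$. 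Both pieces then land in the range $s'\in(3/8,1/2)$ where Lemma~\ref{lem:K sa} applies, and the constraint $|s|<1/8$ in the statement is exactly what guarantees $3/8<1/2-|s|$ so that the splitting between $K_{\Lambda'}^{1/2\pm s}$ is compatible with this window.
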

  The proof of this lemma is a combination of Lemma~\ref{lem:R_N_bound} and Lemma~\ref{lem:K sa}.

\begin{prop}\label{prop:H_BF trafo conv}
For $\kappa$ sufficiently large, the quadratic form
\begin{equation*}
 K_\infty +T_\infty + R_\infty
\end{equation*}
is associated with a unique self-adjoint operator with compact resolvent and domain contained in $Q(K_\infty)=(1-G_\infty)^{-1}Q(-\Delta_x + \ud \Gamma(\epsilon))$. Moreover,
\begin{equation}
 \lim_{\Lambda \to \infty} \big(K_\Lambda +T_\Lambda + R_\Lambda \pm \ui \big)^{-1} = \big(K_\infty +T_\infty + R_\infty \pm \ui \big)^{-1}
\end{equation}
in the operator norm.
\end{prop}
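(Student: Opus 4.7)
The strategy is to treat $A_\Lambda := K_\Lambda + T_\Lambda + R_\Lambda$ as a form-perturbation of the self-adjoint operator $K_\Lambda$ from Lemma~\ref{lem:K sa}, uniformly in $\Lambda \in \R_+\cup\{\infty\}$, and then to pass to the limit via a form-version of the resolvent identity. First I would fix $\kappa$ large enough that the constant $\delta$ appearing in Lemmas~\ref{lem:T_Lambda relative} and~\ref{lem:R_Lambda bound} is strictly less than $1/4$. Lemma~\ref{lem:T_Lambda relative} then gives a symmetric operator $T_\infty$ that is $K_\infty$-bounded with relative bound $<1/4$, and Lemma~\ref{lem:R_Lambda bound} applied at $s=0$, $\Lambda=\Lambda'=\infty$ gives a symmetric quadratic form $R_\infty$ dominated by $\tfrac{1}{4}(K_\infty+C)$. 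The KLMN theorem then associates a unique self-adjoint operator $A_\infty$ with the closed form $K_\infty+T_\infty+R_\infty$, whose form domain equals $Q(K_\infty) = (1-G_\infty)^{-1} Q(-\Delta_x+\ud \Gamma(\epsilon))$. In particular $D(A_\infty)\subset Q(A_\infty) = Q(K_\infty)$, which yields the claimed inclusion. The compact resolvent follows at once from the form equivalence $\tfrac{1}{2}(K_\infty+1)-C \leq A_\infty+C \leq \tfrac{3}{2}(K_\infty+1)+C$: the embedding $Q(K_\infty)\hookrightarrow \mathscr{H}_+$ is compact because $K_\infty$ has compact resolvent, and the equivalence transfers this compactness to $Q(A_\infty)\hookrightarrow \mathscr{H}_+$.

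For the norm resolvent convergence I would first note that all of the above applies uniformly in $\Lambda \in \R_+\cup\{\infty\}$, since the constants in Lemmas~\ref{lem:K sa}, \ref{lem:T_Lambda relative}, and~\ref{lem:R_Lambda bound} are $\Lambda$-independent. This yields the uniform bounds $\|K_\Lambda^{1/2}(A_\Lambda+\ui)^{-1}\| \lesssim 1$ and, by interpolation, $\|K_\Lambda^{1/2\pm s}(A_\Lambda+\ui)^{-1}\|\lesssim 1$ for $|s|<1/2$, after an appropriate spectral shift. With $u_\Lambda := (A_\Lambda+\ui)^{-1}\Psi$ and $v_\infty := (A_\infty-\ui)^{-1}\Phi$, the form-valued resolvent identity reads
\begin{equation*}
\langle \Phi, ((A_\Lambda + \ui)^{-1} - (A_\infty + \ui)^{-1})\Psi\rangle = a_\infty(v_\infty, u_\Lambda) - a_\Lambda(v_\infty, u_\Lambda),
\end{equation*}
where $a_\Lambda$ denotes the closed form of $A_\Lambda$. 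The right-hand side splits as the sum of the three form-differences $K_\infty-K_\Lambda$, $T_\infty-T_\Lambda$, and $R_\infty-R_\Lambda$, each to be estimated by a constant $C_\Lambda \to 0$.

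The $T$- and $R$-differences are handled directly by the second parts of Lemmas~\ref{lem:T_Lambda relative} and~\ref{lem:R_Lambda bound}, applied with $\Lambda'=\infty$, in combination with the uniform bounds on $K_\infty^{1/2\pm s}(A_\infty\mp\ui)^{-1}$ and $K_\Lambda^{1/2}(A_\Lambda+\ui)^{-1}$. For the $K$-difference, the identity $K_\infty-K_\Lambda = (G_\Lambda-G_\infty)^*(-\Delta_x+\ud \Gamma(\epsilon)+1)(1-G_\infty)+(1-G_\Lambda^*)(-\Delta_x+\ud \Gamma(\epsilon)+1)(G_\infty-G_\Lambda)$ reduces matters to controlling $(-\Delta_x+\ud \Gamma(\epsilon)+1)^{1-s}(G_\infty-G_\Lambda)$ and its adjoint for some $s$ slightly smaller than $1/2$; by Lemma~\ref{lem:G_Lambda bound} this is $\mathcal{O}(C_\Lambda)$ applied to $(\cN_++1)^{1-s}$, and the latter is in turn bounded by $K_\Lambda^{(1-s)\cdot \gamma}$ for some $\gamma<1$ via Lemma~\ref{lem:K sa}, which keeps us below the resolvent threshold.

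The main technical difficulty I anticipate is the asymmetric $K^{1/2\pm s}$ structure of Lemma~\ref{lem:R_Lambda bound}, which forces a careful distribution of fractional powers of $K_\Lambda$ and $K_\infty$ between the two entries $u_\Lambda$ and $v_\infty$ of the sesquilinear form. In order for the uniform resolvent bounds to apply on both sides, the exponents must stay strictly below $1/2$, which is also the regime where Lemma~\ref{lem:G_Lambda bound} provides the convergence of $G_\Lambda\to G_\infty$ needed for the $K$-difference. The argument therefore has to be tuned by fixing a single value of $s$ sufficiently close to, but strictly less than, $1/2$ that works simultaneously in all three difference estimates; checking that such a choice is consistent, and in particular that the dressing factors $(1-G_\Lambda)$ that enter the definition of $K_\Lambda$ do not spoil the uniform estimates, is the bookkeeping-heavy part of the proof.
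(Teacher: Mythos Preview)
Your overall architecture matches the paper's: Kato--Rellich for $T_\Lambda$, KLMN for $R_\Lambda$, compact resolvent inherited from $K_\Lambda$, then a resolvent identity for the convergence. The existence and compactness parts are fine.

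There is, however, a genuine gap in the convergence argument. Your claim that interpolation yields $\|K_\Lambda^{1/2+s}(A_\Lambda+\ui)^{-1}\|\lesssim 1$ for $s>0$ is not correct: from KLMN you only get $Q(A_\Lambda)=Q(K_\Lambda)$, i.e.\ control of $K_\Lambda^{1/2}$ on the range of the resolvent, and interpolation with the trivial bound cannot push the exponent above $1/2$. Likewise, your claim that $(\cN_++1)^{1-s}$ is dominated by $K_\Lambda^{(1-s)\gamma}$ for some $\gamma<1$ via Lemma~\ref{lem:K sa} is false; the lemma (and its analogue~\eqref{eq:bound_cN-K_0}) gives exactly $\cN_+^{1-s}\lesssim K_\Lambda^{1-s}$ with no spare room, since on low-energy modes $\cN_+$ and $\ud\Gamma(\epsilon)$ are comparable. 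For the $K$-difference you therefore genuinely need $K^{1-s}$ with $1-s>1/2$ on one resolvent, which your ``stay strictly below $1/2$'' strategy cannot deliver.

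The correct mechanism, which the paper invokes by pointing to the proof of Proposition~\ref{prop:HsingW} (specifically Lemma~\ref{lem:R_N_pert}), is precisely the asymmetric structure you flagged: the $|s|<1/8$ freedom in Lemma~\ref{lem:R_Lambda bound} together with Lemma~\ref{lem:T_Lambda relative} gives $\|(K_\Lambda+C)^{t-1}R_\Lambda(K_\Lambda+C)^{-t}\|<1$ and $\|(K_\Lambda+C)^{t-1}T_\Lambda(K_\Lambda+C)^{-t}\|<1$ for $t$ in a neighbourhood of $1/2$, and a Neumann-series (Kato--Rellich) argument then yields the asymmetric bound $\|K_\Lambda^{t}(A_\Lambda+\ui)^{-1}K_\Lambda^{1-t}\|\lesssim 1$ for $3/8<t<5/8$. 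This is what allows the resolvents to absorb the $K^{1-s}$ factors coming from the $K$-difference. Once you replace your interpolation step by this argument, the rest of your plan goes through.
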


\begin{proof}
By Lemma~\ref{lem:T_Lambda relative} and the Kato-Rellich Theorem, $K_\Lambda+T_\Lambda$ is self-adjoint on $D(K_\Lambda)$ for $\kappa$ sufficiently large.
Lemma~\ref{lem:R_Lambda bound} shows that $R_\Lambda$ is a perturbation of the quadratic form of $K_\Lambda+T_\Lambda$ with relative bound $\delta<1$. This implies existence of the self-adjoint realization by the KLMN theorem, whose resolvent is compact since the resolvent of $K_\Lambda$ is.
Using the resolvent formula and the properties of $T_\Lambda$, $R_\Lambda$ and $K_\Lambda$ (in particular, the uniformity of the relative bounds), one shows norm-resolvent convergence by the same argument as in the proof of Proposition~\ref{prop:HsingW}.
\end{proof}

We can now conclude by proving the proposition on the renormalization of the Bogoliubov-Fr\"ohlich Hamiltonian.

\subsection{Proof of Proposition~\ref{prop:renorm}}

Setting $E_\Lambda=E_\Lambda^{(1)} + E_\Lambda^{(2)}$, with the latter defined in~\eqref{eq:E^1},~\eqref{eq:E^2}.
Let $\kappa$ be sufficiently large for Proposition~\ref{prop:H_BF trafo conv} to hold.
Then, with the notation of (\ref{eq:H-G intro}) and (\ref{eq:T_Lambda-def}), we have
\begin{equation}
 (U_\kappa^\Lambda)^* H_\mathrm{BF}^\Lambda U_\kappa^\Lambda - E_\Lambda+1 = K_\Lambda + T_\Lambda + R_\Lambda.
\end{equation}
By Proposition \ref{prop:H_BF trafo conv}, this converges to $K_\infty + T_\infty+R_\infty$ in norm resolvent sense as $\Lambda\to \infty$. It remains to deal with the convergence of the transformation $U_\kappa$. Note that $f_{\kappa,x}^\infty \in \ell^2(2\pi \Z^3)$, and thus (cf.~\cite[Lem.C.2]{GrWu-18}) 
for $\Psi \in Q(K_{\Lambda'})$, $\Lambda' \in \mathbb{R}_+ \cup \{\infty\}$,
\begin{equation}\label{eq:U_kappa-conv}
 \| (U_\kappa^\Lambda- U_\kappa^\infty)\Psi\| \leq 2 \|f^\Lambda_\kappa - f^\infty_\kappa \| \|(\cN_++1)^{1/2} \Psi\|\stackrel{\Lambda\to \infty}{\rightarrow} 0.
\end{equation}
Since $\|\cN_+^{1/2}\Psi\| \lesssim \|K_{\infty}^{1/2}\Psi\| \lesssim \|(K_{\infty} + T_{\infty} + T_{\infty})^{1/2}\Psi\|$ from Lemma \ref{lem:K sa} and Proposition \ref{prop:H_BF trafo conv} (inclusion of the domains), this proves that
\begin{equation}
 \lim_{\Lambda\to \infty} \big(H_\mathrm{BF}^\Lambda -E_\Lambda +1\pm\ui\big)^{-1}
 = U_\kappa^\infty\big(K_\infty + T_\infty+R_\infty+\pm\ui \big)^{-1}(U_\kappa^\infty)^*
\end{equation}
in norm. We have thus proved Proposition~\ref{prop:renorm} with the renormalized Bogoliubov--Fröhlich Hamiltonian given by
\begin{equation}\label{eq:H_BF_Ukappa_infty}
 H_\mathrm{BF}=U_\kappa^\infty(K_\infty + T_\infty+R_\infty-1)(U_\kappa^\infty)^*.
\end{equation}
The resolvent of this operator is compact by Proposition~\ref{prop:H_BF trafo conv}.
\qed

 The proof above shows that
 \begin{equation}
  Q(H_\mathrm{BF})=U_\kappa^\infty  Q(K_\infty)= U_\kappa^\infty (1-G_\infty)^{-1}Q(-\Delta_x+\ud \Gamma(\epsilon)),
 \end{equation}
since $R_\infty$ is only a form perturbation of $K_\infty$.
More precise information on the operator domain is obtained using the slightly different formulation of~\cite{La-19,La-20}.

\section{Transformation of the Hamiltonian}\label{sect:trafo}

\subsection{The excitation Hamiltonian}\label{sect:excite}

In a first step, we transfer the Hamiltonian 
 \begin{equation}
 H_N = -\Delta_x +\sum_{i=1}^N (-\Delta_{y_i}) + \sum_{1\leq i < j \leq N} V_N(y_i-y_j) + \sum_{i=1}^N W_N(x-y_i)
\end{equation}
from the $N$-particle space $\mathscr{H}_N$ to the excitation space $\mathscr{H}_+$ using the excitation map introduced in~\cite{LewNamSerSol-15}. This map is given by
\begin{align}\label{eq:excitation-map}
\begin{aligned}
  U_X:\;&\mathscr{H}_+\to \mathscr{H}_N \\
  (U_X \Psi)(x,y_1, \dots, y_N) =& \sum_{j=0}^N  S_N  \Big(\1(y_{j+1},\dots ,y_N) \Psi^{(j)}(x,y_1, \dots, y_j)\Big)
\end{aligned}
\end{align}
where $\1$ denotes the constant function and $S_N$ the projection to symmetric functions in $y_1, \dots , y_N$.
In words, $U_X$ completes the function $\Psi^{(j)}(x,y_1, \dots, y_j)$ to a function on $\T^{3(N+1)}$ by multiplying with the constant function in the variables $y_{j+1}, \dots, y_N$ and then symmetrizes the result.
This defines a partial isometry, i.e., we have
\begin{equation}
 U_X^*U_X=\1_{\cN_+\leq N},\qquad U_X U_X^*=\1_{\mathscr{H}_N},
\end{equation}
where $\cN_+$ denotes the number operator on $\mathscr{H}_+$.

The Bose--Einstein condensation condition that we will use in the proof of our main theorems ensures that there are at most $\sqrt{N}$ excitations in a state of the bosons whose energy has the correct leading order for $N\to \infty$. Within the notations of second quantization, Condition \ref{cond:BEC} is equivalent to
 \begin{equation}\label{eq:equiv_cond}
  H_N - 4\pi \ao_V N \geq  c U_X \cN_+ U_X^*-d\sqrt{N}.
 \end{equation}

We will now evaluate the action of $U_X$ on $H_N$ in detail.

\begin{prop}\label{prop:H-ex}
In the sense of quadratic forms on $Q(-\Delta_x +\ud \Gamma(-\Delta))\subset \mathscr{H}_+$, we have the identity
 \begin{equation*}
 U_X^* H_N U_X = \1_{\cN_+ \leq N} \mathcal H_N  \1_{\cN_+ \leq N},
\end{equation*}
where
\begin{align*}
	\mathcal H_N &:= \tfrac12(N-1)\widehat V(0) + \sqrt{N } \widehat{W}(0) + \mathcal H_{\mathrm{B}} + \mathcal H_{\mathrm{I}}, \label{eq:HN Fock}
\end{align*}
	with
\begin{align*}
		 \mathcal H_{\mathrm{B}} &= \ud\Gamma(-\Delta) +(N-\cN_+)_+\ud \Gamma(\widehat V_N(-\ui\nabla))+ \mathcal L_0 + \mathcal L_2 + \mathcal L_3 + \mathcal L_4, \notag \\
		 \mathcal L_0 &= -\frac{1}{2N} \widehat V(0) \cN_+ (\cN_+-1)\mathds{1}_{\cN_+ \leq N}, \\
\mathcal L_2 &= \frac{1}{2} \sqrt{(1-\cN_+/N)_+}\sqrt{(1-(\cN_++1)/N)_+}  \int N V_N(y_1-y_2) a_{y_1} a_{y_2} + \hc, \\
\mathcal L_3 &= \frac{1}{2} \sqrt{(1-\cN_+/N)_+} \int \sqrt N V_N(y_1-y_2) a^*_{y_1} a_{y_2} a_{y_1} + \hc, \\
\mathcal L_4 &= \frac{1}{2} \int V_N(y_{1}-y_2) a^*_{y_1}a^*_{y_2} a_{y_1} a_{y_2},
\end{align*}
and
\begin{align*}
	 \mathcal H_{\mathrm{I}} &=  -\Delta_x + Q_1 +\ud \Gamma(W_{N,x}), \notag\\
	 Q_1&=\sqrt{(1-\cN_+/N)_+ }a(\sqrt{N} W_{N,x}) + a^*(  \sqrt{N} W_{N,x}) \sqrt{(1-\cN_+/N)_+ }.
\end{align*}
\end{prop}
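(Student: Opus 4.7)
The proposition is the impurity-extension of the standard excitation Hamiltonian identity of Lewin--Nam--Serfaty--Solovej, and the plan is to reduce it to that calculation in three stages. First, I would write $H_N$ in second quantization on the full Fock space $\Gamma(L^2(\T^3))$, with $\sum_i(-\Delta_{y_i}) = \ud\Gamma(-\Delta)$, $\sum_{i<j}V_N(y_i-y_j) = \tfrac12\int V_N(y-y')\,a_y^*a_{y'}^*a_{y'}a_y\,\ud y\,\ud y'$ and $\sum_i W_N(x-y_i) = \int W_{N,x}(y)\,a_y^*a_y\,\ud y$; each summand is particle-number conserving and therefore restricts on $\mathscr{H}_N$ to the correct piece of $H_N$. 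I would then split $a_y = a_0 + \tilde a_y$, with $a_0 = a(\1)$ acting on the condensate mode and $\tilde a_y = \sum_{p\neq 0}\ue^{\ui py}a_p$ on $\mathfrak{H}_+$, noting $[a_0,\tilde a_{y'}^*]=0$. Substituting into each term and grouping the resulting monomials by the number $k$ of condensate factors gives $\binom{4}{k}$ configurations for the quartic boson interaction ($k=0,\ldots,4$) and $\binom{2}{k}$ for the quadratic impurity term ($k=0,1,2$).

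Next, I would identify each configuration with a piece of $\mathcal H_N$. The $k=3$ (one-$\tilde a$, three-condensate) contributions to the boson interaction vanish: by translation invariance of $V_N(y-y')$ and $\int\tilde a_y\,\ud y = 0$, the offending integrand factors as $\widehat V_N(0)\cdot 0$. The $k=4$ monomial $\tfrac12\widehat V_N(0)a_0^{*2}a_0^2$, combined with the two $k=2$ configurations producing $\widehat V_N(0)\cN_+\,a_0^*a_0$ (tildes at coinciding positions), yields the scalar $\tfrac12(N-1)\widehat V(0)$ plus $\mathcal L_0$ after the substitution $a_0^*a_0\to(N-\cN_+)_+$; the remaining two diagonal $k=2$ configurations produce $(N-\cN_+)_+\,\ud\Gamma(\widehat V_N(-\ui\nabla))$; the pair-creation/annihilation $k=2$ contributions produce $\mathcal L_2+\hc$; the four $k=1$ (three-$\tilde a$, one-condensate) configurations collapse, by bosonic commutativity and evenness of $V_N$, to a single expression of the form $a_0^*\int V_N(y_1-y_2)\,\tilde a^*_{y_1}\tilde a_{y_2}\tilde a_{y_1}\,\ud y_1\,\ud y_2 + \hc$, which becomes $\mathcal L_3$ once one uses $\sqrt{(N-\cN_+)_+} = \sqrt N\,\sqrt{(1-\cN_+/N)_+}$ to absorb the $\sqrt N$ into the coupling $\sqrt N V_N$; and the $k=0$ configuration is directly $\mathcal L_4$. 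For the impurity term, the $k=2$ piece $\widehat W_N(0)a_0^*a_0 = N^{-1/2}\widehat W(0)(N-\cN_+)_+$ produces the scalar $\sqrt N\widehat W(0)$, the $k=1$ cross terms $a_0^*a(W_{N,x})+a^*(W_{N,x})a_0$ become $Q_1$ after the same $\sqrt N$ absorption, and the $k=0$ piece is $\ud\Gamma(W_{N,x})$ viewed as a form on $\mathscr{H}_+$.

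The final step is to apply the standard identities for the excitation map---$U_X^*U_X=\1_{\cN_+\le N}$, $U_X^* a_0^*a_0\,U_X = (N-\cN_+)_+$, and the conversion of an $a_0$ (resp.\ $a_0^*$) flanked by excitation operators into the weight $\sqrt{(N-\cN_+)_+}$ placed to the right (resp.\ left)---and to collect the resulting terms into the expression for $\mathcal H_N$. The main obstacle is careful bookkeeping: the weights $\sqrt{(N-\cN_+)_+}$ do not commute with $\tilde a^*,\tilde a$, so the symmetric placements of square-root weights prescribed in $\mathcal L_2$, $\mathcal L_3$ and $Q_1$ have to be reached exactly, not merely up to commutator errors, and the outer cutoff $\1_{\cN_+\le N}$ arising from $U_X^*U_X$ must be carried through all the manipulations.
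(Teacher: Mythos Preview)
Your proposal is correct and follows essentially the same route as the paper: both reduce the identity to the Lewin--Nam--Serfaty--Solovej rules $U_X^* a_0^* a_0 U_X = (N-\cN_+)_+$, $U_X^* a_p^* a_0 U_X = a_p^*\sqrt{(N-\cN_+)_+}$, etc., after separating condensate from excitation modes. The only superficial difference is that you split $a_y = a_0 + \tilde a_y$ in position space and count configurations by the number of condensate factors, whereas the paper writes everything in Fourier variables and separates the sums according to which of the momenta $p,q,p+k,q+k$ vanish; the resulting terms are in one-to-one correspondence, and your identification of each configuration with the corresponding $\mathcal L_j$ or $Q_1$ piece is accurate (including the vanishing of the $k=3$ boson terms by $\int\tilde a_y\,\ud y = 0$).

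One small bookkeeping point to watch when you carry this out: your $k=2$ impurity piece $\widehat W_N(0)\,a_0^*a_0 \to N^{-1/2}\widehat W(0)(N-\cN_+)_+$ contributes not only the scalar $\sqrt N\,\widehat W(0)$ but also a $-N^{-1/2}\widehat W(0)\cN_+$ correction; whether this extra term is absorbed into the statement depends on exactly how $\ud\Gamma(W_{N,x})$ is interpreted on $\mathscr H_+$ (with or without its diagonal $\widehat W_N(0)\cN_+$). This is harmless for the later analysis but worth tracking so that your exact identity matches the stated $\mathcal H_N$.
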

\begin{proof}
 This follows by applying well-known identities, going back to~\cite{LewNamSerSol-15}, hence we only sketch the calculations.
The kinetic operator is unchanged by $U_X$ since the constant function has derivative zero, thus only the potentials are transformed in a non-trivial way. For the boson-boson interaction, we can write
 \begin{equation}
  \sum_{1\leq i<j\leq N}V_N(y_i-y_j)= \frac12 \sum_{k,p,q \in 2\pi \Z^3} \widehat V_N(k)a^*_{p +k} a^*_{q} a_{q+k}a_p
 \end{equation}
by considering $\mathscr{H}_N$ as a subspace of $L^2(\mathbb T^3)\otimes \Gamma(L^2(\mathbb T^3))$.
With 
\begin{equation}
 U_X^* a_0^*a_0U_X=(N-\cN_+)_+, \qquad  \sum_{p\in 2\pi\Z^3\setminus\{0\}} U_X^* a^*_pa_p U_X =\cN_+,
\end{equation}
and
\begin{equation}
 U_X^* a_p^*a_0 U_X = a_p^* \sqrt{(N-\cN_+)_+},
\end{equation}
for $p\neq 0$, we can rewrite
\begin{multline}
\frac12 \sum_{k,p,q \in 2\pi \Z^3} \widehat V_N(k) U_X^* a^*_{p +k} a^*_{q} a_{q+k}a_p U_X = \frac{1}{2N} \widehat{V}(0) (N-\mathcal N_+)_+(N-\mathcal N_+-1)_+ \\ + (N-\cN_+)_+\ud \Gamma(\widehat V_N(-\ui \nabla) + N^{-1}\widehat{V}(0))+ \cL_2 + \cL_3 + \cL_4 \1_{\cN_+\leq N}.
\end{multline}
With
\begin{equation}
 \frac{\widehat{V}(0)}{2N}\left\{ (N-\mathcal N_+)(N-\mathcal N_+-1) + 2(N-\mathcal N_+) \mathcal N_+ \right\} = \tfrac12(N-1)\widehat V(0) + \mathcal L_0
\end{equation}
this gives the claimed form of $\mathcal{H}_B$.
For the boson-impurity interaction, we find similarly
\begin{align}
 U_X^*& \sum_{i=1}^N W_N(x-y_j) U_X = \sum_{k,p \in 2\pi \Z^3}  \widehat W_N(k) \ue^{\ui k x}  U_X^* a_p^*a_{p+k} U_X \notag \\
 &= \sqrt N \widehat W(0) + \sum_{k\in 2\pi \Z^3\setminus\{0\}}  \widehat W_N(k) \ue^{\ui k x}  \Big(\sqrt{(N-\cN_+)_+} a_k + a^*_{-k}\sqrt{(N-\cN_+)_+} \Big) \notag \\
 &\quad + \sum_{k,p \in 2\pi \Z^3\setminus\{0\}\atop p +k \neq 0}  \widehat W_N(k) \ue^{\ui k x}   a_p^*a_{p+k},
 \end{align}
where we may identify $\ud \Gamma(W_{N,x})$ in the last line and $Q_1$ in the line before. This proves the identity.
\end{proof}

\subsection{Interactions of high-energy bosons: a first quadratic transformation}\label{sect:quadratic}

We start by implementing a transformation whose generator is quadratic in the creation and annihilation operators.
This transformation makes explicit correlations among the bosons that are due to their interaction at high energies. More precisely it makes the correct leading order of the Bose gas $4\pi \ao_V (N-1)$ appear and renormalizes the quadratic term $\mathcal L_2$ into a less singular version $\widetilde{ \mathcal L}_2$.  Variants of this transformation have been used in many works on the dilute Bose gas~\cite{Sei-11,ErdSchYau-08,YauYin-09,BenOliSch-15,BreSch-19,BocBreCenSch-18,Hai-21,NamTri-23}. Throughout all the paper, we will distinguish between high and low energies using the cutoff $N^\alpha$, where the parameter $0 < \alpha <1/2$ will be chosen sufficiently small for our estimates to hold.

Let $\Vphi$, defined in~\eqref{eq:Vphi}, be the truncated scattering solution at momentum $N^\alpha$, $0< \alpha \leq 1/2$.
We set, with $Q(\cB_q)=Q(\cN_+)$,
\begin{equation}\label{eq:B_q}
 \cB_q= \frac12 \int_{\T^3} N \Vphi(y_1-y_2) a^*_{y_1}a^*_{y_2} \ud y_1\ud y_2 - \hc
\end{equation}
Using the commutation relations~\eqref{eq:CCR} together with the identity $\cN_+=\int a_y^*a_y\ud y$ and the Cauchy-Schwarz inequality, we have for $\Psi \in D(\cN_+^{1/2})$
\begin{align}
 &\langle \Psi , [\cN_+, \cB_q] \Psi \rangle
 = \mathrm{Re} \int \Big\langle \Psi,  N\Vphi(y_1-y_2) a^*_{y_1}a^*_{y_2}  \Psi\Big\rangle \ud y_1\ud y_2 \nn \\
 &=\mathrm{Re}\int \Big\langle (\cN_++1)^{-1/2} a_{y_2}a_{y_1}\Psi,  N\Vphi(y_1-y_2) (\cN_++1)^{1/2}  \Psi\Big\rangle \ud y_1\ud y_2 \nn \\
 &\leq \|N \Vphi\|_2  \bigg(\int \Big\|(\cN_++1)^{-1/2} a_{y_2}a_{y_1}\Psi\|^2\bigg)^{1/2}  \|\cN_+^{1/2}\Psi\| \nn\\
 &\leq \|N \Vphi\|_2  \|\cN_+^{1/2}\Psi\|^2.\label{eq:B_q-N commute}
\end{align}
As $ \|N\widetilde{\phi}_{\mathrm{B}}\|_2 \lesssim N^{-\alpha/2}$, see Lemma \ref{lem:phi_V}, this implies
\begin{equation}
 [\cN_+, \cB_q]= \mathcal{O}(N^{-\alpha/2} (\cN_++1)).
\end{equation}
Hence, by the commutator theorem~\cite[Thm.X.36]{ReeSim2}, there exists an anti-self-adjoint operator, denoted also by $\cB_q$, whose form is the above and for which $D(\cN_+)$ is a core. We can thus define the unitary
\begin{equation}
 U_q=e^{\cB_q},
\end{equation}
and $U_q(t)=e^{t\cB_q}$.
The effect of this transformation on $\cH_N$ is summarized in the following proposition.
\begin{prop}\label{prop:U_q}
Let $0\leq \alpha\leq 1/6$, and $\ao_{V}$ be the scattering length defined by~\eqref{eq:scatt_length-intro}.
We have, in the sense of quadratic forms on $\mathscr{H}_+$,
\begin{align*}
 U_q^* \cH_N U_q
 	&= 4 \pi \ao_V (N-1) +   e_N^{(U_q)} + \sqrt N\widehat W(0) \notag \\
 	&\quad +  \dd\Gamma(-\Delta + N \widehat{V}_N(-\ui\nabla) + \widehat{V}(0) - 8\pi \ao_{V}) + \widetilde{ \mathcal L}_2 + {\mathcal L}^{U_q}_3 + \mathcal L_4 \notag \\
	&\quad  -\Delta_x + Q_1^{U_q} + \ud \Gamma(W_{N,x}) +  \cE^{(U_q)},
 \end{align*}
with
\begin{align*}
 \widetilde{ \mathcal L}_2 &=    \int  \widetilde V_N(x-y) a^*_{x}a^*_{y} + \hc ,\qquad \widehat{\widetilde V}_N(p) = 4\pi  \ao_{V} \1_{0<|p| \leq N^{\alpha}}, \\
 Q_1^{U_q} &=\sqrt{\left(1-\frac{U_q^*\cN_+U_q}{N}\right)_+} a(\sqrt N W_{N,x}) +\hc,\\
{\mathcal L}^{U_q}_3 &= \frac{1}{2}\int \sqrt N V_N(x-y) a^*_x a_y a_x + \hc, 
\end{align*}
the scalar correction
\begin{equation*}
 e_N^{(U_q)} = 4 \pi N (\aV - \ao_V)  + \sum_{p\in 2\pi\Z^3 \atop |p|\leq N^{\alpha}} \frac{(4\pi \ao_{V})^2}{p^2}, 
\end{equation*}
as well as the error bound
\begin{equation*}
 \pm \cE^{(U_q)} \lesssim N^{-\alpha/2} (\mathcal L_4 + \mathcal N_++1 + N^{-1/2} \ud \Gamma(W_{N,x}) ) + N^{\alpha/2-1} \mathcal N_+^2  .
\end{equation*}
\end{prop}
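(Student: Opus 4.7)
My approach is a Duhamel expansion: for any summand $X$ of $\cH_N$,
\begin{equation*}
  U_q^* X U_q = X + \int_0^1 U_q(s)^* [\cB_q, X] U_q(s)\,\ud s,
\end{equation*}
iterated once more on those terms whose leading commutator is not already of the claimed form. The bound $[\cB_q, \cN_+] = \mathcal O(N^{-\alpha/2}(\cN_++1))$ of \eqref{eq:B_q-N commute}, together with a Gronwall argument, provides the a priori control $U_q(s)^*(\cN_++1) U_q(s) \lesssim \cN_++1$ uniformly in $s \in [0,1]$; this lets one freely transport the number operator through the transformation in error estimates.

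The central computation is the cancellation among three quadratic-in-$a^*$ operators produced respectively by (i) $[\cB_q, \ud\Gamma(-\Delta)]$, with kernel proportional to $-N\Delta\Vphi$, (ii) the leading part of $[\cB_q, (N-\cN_+)_+\ud\Gamma(\widehat V_N(-\ui\nabla))]$, with kernel $N\Vphi \ast V_N$, and (iii) the high-momentum component of $\cL_2$ itself, with kernel $NV_N$ restricted to $|p|>N^\alpha$. By the truncated scattering equation \eqref{eq:scattV}, these three kernels sum to zero for $|p|>N^\alpha$ and to the constant $4\pi \ao_V$ for $0<|p|\leq N^\alpha$, producing exactly $\widetilde{\cL}_2$. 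The vacuum expectation of the same combination, together with the identity \eqref{eq:aV_N}, yields the scalar $4\pi \ao_V (N-1) + 4\pi N(\aV - \ao_V)$; a second Duhamel iteration on the low-momentum block of $\cL_2$ then produces the Born-series term $\sum_{|p|\leq N^\alpha} (4\pi \ao_V)^2 / p^2$, completing $e_N^{(U_q)}$.

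The remaining terms are treated in turn. The boson terms $\cL_3$ and $\cL_4$ generate commutators cubic and quartic in creation/annihilation operators with kernels carrying a factor $\Vphi$; using the $L^\infty$ and $L^2$ bounds of Lemma \ref{lem:phi_V} together with Cauchy--Schwarz in the form \eqref{eq:AB Young}, these contribute $N^{-\alpha/2}(\cL_4 + \cN_+) + N^{\alpha/2-1}\cN_+^2$ to $\cE^{(U_q)}$. The $\sqrt{(1-\cN_+/N)_+}$ factor of $Q_1$ conjugates directly, by functional calculus and the a priori bound, to its $U_q$-version, producing $Q_1^{U_q}$. The commutator $[\cB_q, \ud\Gamma(W_{N,x})]$ is linear in $a^*,a$ with kernel $\Vphi \ast W_N$, absorbed in the error $N^{-\alpha/2-1/2}\ud\Gamma(W_{N,x})$ via $\|W_N\|_{L^2} \lesssim N^{1/2}$.

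\textbf{Main obstacle.} The most delicate bookkeeping concerns the $\sqrt{(1-\cN_+/N)_+}$ prefactors in $\cL_2,\cL_3$ and $Q_1$: these do not commute with $\cB_q$, so the scattering-equation cancellation must be performed simultaneously with corrections introducing an $N^{-\alpha/2}$-type relative error at each step. A second subtle point is the second-order Duhamel iteration used to extract the Born term of $e_N^{(U_q)}$: it requires the resolvent structure of $\Vphi$ from \eqref{eq:scattV} and the convergence of $\sum_{|p|\leq N^\alpha} p^{-2}$ as $N\to\infty$, and its remainder must be shown to fit inside the stated error budget, ultimately forcing the restriction $\alpha \leq 1/6$.
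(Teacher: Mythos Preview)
Your overall framework---Duhamel expansion plus Gr\"onwall control of $\cN_+$---is correct and matches the paper, but the central cancellation you describe is misidentified and does not work as written. The kernel $\tfrac12 N V_N\phi_{\mathrm B}$ needed to close the scattering equation is \emph{not} supplied by $[\cB_q,(N-\cN_+)_+\ud\Gamma(\widehat V_N(-\ui\nabla))]$; it is supplied by $[\cB_q,\cL_4]$. The commutator with the diagonal term $\ud\Gamma(\widehat V_N(-\ui\nabla))$ produces an $a^*a^*$ piece whose Fourier kernel is the \emph{pointwise product} $N\widehat V_N(p)\Vphih(p)$, whereas \eqref{eq:scattV} requires the \emph{convolution} $\tfrac12(\widehat V_N\ast\widehat\phi_{\mathrm B})(p)$, i.e.\ the position-space product $V_N\phi_{\mathrm B}$. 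Only the quartic term gives that structure: $[\cL_4,\cB_q^+]=\tfrac12\int NV_N\Vphi(y_1-y_2)a^*_{y_1}a^*_{y_2}+\hc+\cdots$ (Lemma~\ref{lem:L4_T1}). With your choice of terms the high-momentum sum $p^2\Vphih(p)+N\widehat V_N(p)\Vphih(p)+\tfrac12\widehat V_N(p)$ does not vanish, and the argument collapses. In the paper the cancellation is performed on the combination $\ud\Gamma(-\Delta)+\cL_2+\cL_4$; the diagonal operator $\ud\Gamma(N\widehat V_N(-\ui\nabla))$ is essentially inert, $U_q^*\ud\Gamma(N\widehat V_N(-\ui\nabla))U_q=\ud\Gamma(N\widehat V_N(-\ui\nabla))+\mathcal O(N^{-\alpha/2}\cN_+)$.

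Two smaller points. First, $[\cB_q,\ud\Gamma(W_{N,x})]$ is \emph{quadratic} in creation/annihilation operators (of the form $\int N\Vphi(y_1-y_2)W_{N,x}(y_1)a^*_{y_1}a^*_{y_2}+\hc$, cf.\ Lemma~\ref{lem:Hint_T1}), not linear; what is linear with kernel $\Vphi\ast W_{N,x}$ is the commutator $[\cB_q,a(\sqrt N W_{N,x})]$ coming from $Q_1$---you have swapped the two. Second, the Born term $\sum_{|p|\le N^\alpha}(4\pi\ao_V)^2/p^2$ is not obtained by a further Duhamel iteration but by direct computation on the scalar residue: after applying \eqref{eq:scattV}, the leftover $\tfrac12\int N^2V_N(1+\Vphi)(\Vphi-\phi_{\mathrm B})$ equals this sum up to $\mathcal O(N^{2\alpha-1})$.
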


\begin{rem}
The term $e_N^{(U_q)}$ is a lower order energy contribution, where the first term is a boundary effect of order one, see Remark \ref{rem:diff_scat}, and the second term is of order $N^{\alpha}$. This divergence is due to the fact that we solely renormalized the quadratic part with momenta higher than $N^{\alpha}$. It will later combine with a similar term coming from the diagonalization of the low momenta part of the Hamiltonian, yielding together a  contribution of order one, which is the well-known Lee-Huang-Yang correction.
\end{rem}

\begin{proof}[Proof of Proposition \ref{prop:U_q}]
The proof relies on several lemmas about the transformation of individual terms that we will prove later. We will be rather brief, as the treatment of the bosonic part of the Hamiltonian is standard, except for some estimates we need to adapt to our analysis.

The correlation energy that renormalizes $N\widehat V(0)$ into the torus scattering length $8\pi \aV$ defined in~\eqref{eq:aV_N} is hidden in $\ud \Gamma(-\Delta)$, $\cL_2$, and $\cL_4$.  We can see this by combining Lemmas \ref{lem:kin_T1}, \ref{lem:L2_T1}, \ref{lem:L4_T1} below. These give
\begin{subequations}
 \begin{align}
 &U_q^*(\ud \Gamma(-\Delta) + \cL_2 + \cL_4)U_q \notag \\
 &=\frac{1}{2} \int N^2 V_N \widetilde{\phi}_{\mathrm{B}}^2 + \int N^2 V_N \widetilde{\phi}_{\mathrm{B}}\Big(1-\frac{1}{2N}\Big) + N^2\langle -\Delta \Vphi, \Vphi\rangle \label{eq:U_q scalar} \\
 &\quad +  \int N \Big( \tfrac12 V_N\widetilde{\phi}_{\mathrm{B}}+ \tfrac12  V_N-\Delta\Vphi\Big)(y_1-y_2) a^*_{y_1} a^*_{y_2}  +\hc  \label{eq:U_q quad}\\
 & \quad + \ud \Gamma(-\Delta) - \cN_+ \int N V_N \Vphi + \cL_4 + \cE, \notag
\end{align}
\end{subequations}
with,
\begin{equation}
 \pm \cE\lesssim N^{-\alpha/2} (\mathcal N_++1) + N^{-\alpha/2} \mathcal L_4 +  N^{-1+\alpha/2}(\mathcal N_+ +1)^2.
\end{equation}
Adding to the scalar terms~\eqref{eq:U_q scalar} the contribution $\tfrac 12 (N-1)\widehat V(0)$ from $\cH_N$, we find using the definition of $\aV$ in~\eqref{eq:aV_N}, and the scattering equation~\eqref{eq:scattV},
\begin{align}
	&\frac{N-1}{2}\int N V_N + \frac{1}{2} \int N^2 V_N \widetilde{\phi}_{\mathrm{B}}^2 + \int N^2 V_N \widetilde{\phi}_{\mathrm{B}}\Big(1-\frac{1}{2N}\Big) + N^2\langle -\Delta \Vphi, \Vphi\rangle  \notag\\
 	&\quad =\begin{aligned}[t]
 	4\pi \aV (N-1)+ \Big(\frac{1}{2}-&\frac{1}{2N}\Big)\int N^2V_N (\Vphi - \phi_\mathrm{B}) \notag \\
 	&\qquad  + N^2 \Big \langle \Vphi, -\Delta \Vphi + \tfrac12 V_N \Vphi + \tfrac12 V_N\Big \rangle
 	        \end{aligned}
 	\notag \\
	&\quad = 4\pi \aV (N-1)  + \Big(\frac{1}{2}-\frac{1}{2N}\Big)\int N^2V_N ( \widetilde{\phi}_{\mathrm{B}}-\phi_\mathrm{B})  +\int N^2V_N  \widetilde{\phi}_{\mathrm{B}} ( \widetilde{\phi}_{\mathrm{B}}-\phi_\mathrm{B}).
\end{align}
Following the computation in the proof of \cite[Proposition 4]{HaiSchTri-22} and using that $\aV = \ao_{V} + \mathcal O(N^{-1})$ from Lemma \ref{lem:aM-difference} gives
\begin{equation}
 \frac{1}{2}\int N^2V_N (1+ \widetilde{\phi}_{\mathrm{B}})  ( \widetilde{\phi}_{\mathrm{B}}-\phi_\mathrm{B}) = \sum_{|p|\leq N^{\alpha}} \frac{(4\pi \ao_{V})^2}{p^2} + \mathcal O \mathcal (N^{2\alpha-1}).\label{eq:aV_N rest}
\end{equation}
The additional term with prefactor $1/(2N)$ is of order $N^{\alpha-1}$ by Lemma~\ref{lem:phi_V}.

We now turn to the quadratic terms~\eqref{eq:U_q quad} and show that they are approximated by $\widetilde \cL_2$. Again using the scattering equation~\eqref{eq:scattV}, we have
\begin{subequations}
 \begin{align}
&\int N \Big( \tfrac12 V_N\widetilde{\phi}_{\mathrm{B}}+ \tfrac12  V_N-\Delta\Vphi\Big)(y_1-y_2) a^*_{y_1} a^*_{y_2}\notag\\
 	&\quad  = \frac{1}{2} \sum_{|p|\leq N^\alpha }
 	N( \widehat V_N(p) + \widehat {V_N \phi}_\mathrm{B}(p))a_p^*a_{-p}^* \label{eq:L_2 rem-1} \\
 	%
 	&\qquad +  \frac{1}{2} \int N V_N (\Vphi- \phi_\mathrm{B})(y_1-y_2) a_{y_1}^* a_{y_2}^*  . \label{eq:L_2 rem-2}
\end{align}
\end{subequations}
The first term~\eqref{eq:L_2 rem-1} may be approximated by $\widetilde \cL_2$. Indeed, noting that $ \widehat {V}_N(0) +\widehat {V_N \phi}_\mathrm{B}(0) = 8\pi \aV$, we have 
\begin{align}
\big| N\widehat{ V_N(1+ \phi_\mathrm{B})}(p) - 8\pi \ao_{V} \big| 
	&\leq \int NV_N(1+ \phi_\mathrm{B})(y) |e^{-ip\cdot y} - 1| \nn \\
	&\quad +  |\aV - \ao_{V}| \lesssim |p| N^{-1}, \label{eq:V_N renorm error}
\end{align}
where we used that $|y| V \in L^1(\mathbb{R}^{3})$ and that $\|\phi_\mathrm{B}\|_\infty \lesssim 1$, from Lemma \ref{lem:phi_V}.
Then, estimating the operator using the $\ell^2$-norm of its kernel,
\begin{align}
 (\eqref{eq:L_2 rem-1} + \hc) - \widetilde \cL_2 &=  \frac{1}{2} \sum_{|p|\leq N^\alpha }\Big(N\widehat{ V_N(1+ \phi_\mathrm{B})}(p) - 8\pi \ao_{V} \Big) a_p^*a_{-p}^* +\hc \notag\\
 &=\mathcal{O}( N^{5\alpha/2-1} (\cN_+ + 1)).
\end{align}
To estimate the second term~\eqref{eq:L_2 rem-2}, we use the Cauchy-Schwarz inequality to get
\begin{align}
 &\bigg|\bigg\langle\Psi, \int N V_N (\Vphi- \phi_\mathrm{B})(y_1-y_2) a_{y_1}^* a_{y_2}^* \Psi \bigg\rangle\bigg| \notag \\
 & \leq \bigg(\int V_N(y_1-y_2)  \| a_{y_1} a_{y_2} \Psi\|^2\bigg)^{1/2} N\| (\Vphi- \phi_\mathrm{B}) V_N^{1/2}\|_2 \|\Psi\| \notag \\
 & \leq \delta \langle \Psi, \cL_4 \Psi \rangle + (4\delta)^{-1} \big(N\| (\Vphi- \phi_\mathrm{B}) V_N^{1/2}\|_2\big)^2 \|\Psi\|^2. \label{eq:L_2-L_4 bound}
\end{align}
With Lemma~\ref{lem:phi_V}, the choice $\delta=N^{-\alpha/2}$ and $\alpha\leq 1/6$ this gives
\begin{equation}
 \eqref{eq:L_2 rem-2} + \hc =\mathcal{O}\big(N^{-\alpha/2} \mathcal{L}_4 +  N^{-\alpha/2} \big).
\end{equation}
Of the terms coming from $\ud \Gamma(-\Delta) +\cL_2+\cL_4$, we still have the term $\cN_+ \int N V_N \Vphi$ that does not appear in the statement as such.
It can be written as
\begin{multline}
 -\cN_+\int N V_N \Vphi = \left(\widehat V(0)- 8\pi \ao_{V} \right)\cN_+ \\ + \left(8\pi(\aV - \ao_{V}) + \int N V_N (\phi_\mathrm{B}-\Vphi)\right) \cN_+.
\end{multline}
The first term now appears in the statement, and the second can be absorbed in the error by Lemma~\ref{lem:phi_V} and Lemma \ref{lem:aM-difference}.
For $\ud\Gamma(N \widehat V_N(-\ui \nabla))$, we easily find
\begin{equation}
 U_q^* \ud\Gamma(N \widehat V_N(-\ui \nabla))U_q^* = \ud\Gamma(N \widehat V_N(-\ui \nabla)) + \mathcal{O}(N^{-\alpha/2}\cN_+),
\end{equation}
using that $\|NV_N\ast N\Vphi\|_2\lesssim N^{-\alpha/2}$ by Lemma~\ref{lem:phi_V}. Writing  
\begin{equation}
 (N-\cN_+)_+\ud\Gamma( \widehat V_N(-\ui \nabla)) = \ud\Gamma(N \widehat V_N(-\ui \nabla)) + \mathcal O(\mathcal N_+^2/N),
\end{equation}
the previous estimates and Lemma~\ref{lem:N_T1} below show that
$ U_q^* (N-\cN_+)_+\ud\Gamma( \widehat V_N(-\ui \nabla))  U_q$ is equal to $\ud\Gamma(N \widehat V_N(-\ui \nabla))$ plus a term that can be absorbed in the error $\cE^{(U_q)}$.

The transformation of $\cL_0=-(\cN_++1)\cN_+/2N$ is absorbed in the error $\mathcal E_{\mathcal B_q}$ as well.
The remaining terms are treated by applying Lemmas~\ref{lem:L3_T1} and \ref{lem:Hint_T1} below, after which the proof is complete.
\end{proof}

\begin{lem}\label{lem:N_T1}
 For $\alpha>0$ we have uniformly in $|s|\leq 1$
 \begin{equation*}
  U_q(s)^*\cN_+ U_q(s) =\cN_+ + \mathcal{O}(N^{-\alpha/2} (\cN_++1)),
 \end{equation*}
and for $t\geq 0$
 \begin{equation*}
  U_q(s)^*\cN_+^t U_q(s) = \mathcal{O}(\cN_+^t+1).
 \end{equation*}
\end{lem}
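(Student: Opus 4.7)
The plan is a Duhamel/Gr\"onwall argument based on the differential identity
\begin{equation*}
\frac{\ud}{\ud s} U_q(s)^* F(\cN_+) U_q(s) = U_q(s)^* [F(\cN_+), \cB_q] U_q(s),
\end{equation*}
applied first with $F(n)=(n+1)^t$ to obtain the a priori bound in the second statement, and then with $F(n)=n$ to extract the improvement by $N^{-\alpha/2}$ in the first statement.

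For the second statement with $t=1$, the commutator estimate from \eqref{eq:B_q-N commute}, namely $\pm\ui[\cN_+,\cB_q]\lesssim N^{-\alpha/2}(\cN_+ +1)$, applied to $f(s):= \langle \Psi, U_q(s)^*(\cN_++1)U_q(s)\Psi\rangle$ gives the differential inequality $|f'(s)|\leq C N^{-\alpha/2} f(s)$, so Gr\"onwall's lemma yields $f(s)\leq \ue^{C N^{-\alpha/2}|s|} f(0) \lesssim f(0)$ uniformly in $|s|\leq 1$. For general $t\geq 0$, I use that $\cB_q$ shifts the particle number by $\pm 2$: writing $\cB_q = B_+^* - B_+$ with $B_+^* = \tfrac{1}{2}\int N\Vphi(y_1-y_2)a^*_{y_1}a^*_{y_2}$, one has the exact commutator
\begin{equation*}
[(\cN_++1)^t, \cB_q] = \bigl((\cN_++3)^t - (\cN_++1)^t\bigr) B_+^* + \hc,
\end{equation*}
and the bracket is controlled by $C_t(\cN_++1)^{t-1}$ via the mean value theorem (for $t\geq 1$) or concavity (for $t\in(0,1]$). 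Combined with the standard estimate $\|B_+^*\Psi\|\lesssim\|N\Vphi\|_2\|(\cN_++1)\Psi\|\lesssim N^{-\alpha/2}\|(\cN_++1)\Psi\|$ from Lemma~\ref{lem:phi_V} and Cauchy-Schwarz, this yields $\pm\ui[(\cN_++1)^t, \cB_q] \lesssim N^{-\alpha/2}(\cN_++1)^t$, after which Gr\"onwall closes the argument exactly as in the case $t=1$.

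For the first statement, I integrate the differential identity:
\begin{equation*}
U_q(s)^* \cN_+ U_q(s) - \cN_+ = \int_0^s U_q(r)^* [\cN_+, \cB_q]\, U_q(r)\, \ud r,
\end{equation*}
and bound the integrand in quadratic form sense using \eqref{eq:B_q-N commute} together with the second statement at $t=1$, which replaces $U_q(r)^*(\cN_++1)U_q(r)$ by $\cN_++1$ up to a multiplicative constant uniform in $|r|\leq 1$. The resulting bound is $\mathcal O(N^{-\alpha/2}(\cN_++1))$, as required.

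The main technical point is the justification of the commutator manipulations with $(\cN_++1)^t$ for non-integer $t$ on a dense domain. This is handled by first proving all bounds with $(\cN_++1)^t$ replaced by the truncated bounded function $(\cN_++1)^t\1_{\cN_+\leq M}$, for which the differentiation under the unitary flow and the pair-creation commutation are elementary, and then passing to the limit $M\to\infty$ by monotone convergence, using that the constants in the Gr\"onwall estimate are uniform in $M$.
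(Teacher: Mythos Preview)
Your approach is correct and essentially the same as the paper's: both use the commutator bound~\eqref{eq:B_q-N commute} together with Gr\"onwall. The paper treats the second statement more tersely, reducing to integer $t$ by citing~\cite[Lem.~3.1]{BreSch-19} and then interpolating via operator monotonicity of $x\mapsto x^s$ for $s\in[0,1]$; your direct argument via the shift-by-two structure of $\cB_q$ is a more self-contained alternative.

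One step deserves more care. After bounding the bracket by $C_t(\cN_++1)^{t-1}$ and invoking $\|B_+^*\Psi\|\lesssim N^{-\alpha/2}\|(\cN_++1)\Psi\|$, a naive Cauchy--Schwarz gives only
\[
|\langle\Phi,B_+^*D\Phi\rangle|\lesssim N^{-\alpha/2}\|(\cN_++1)^{t-1}\Phi\|\,\|(\cN_++1)\Phi\|,
\]
which by log-convexity of $s\mapsto\langle\Phi,(\cN_++1)^s\Phi\rangle$ is \emph{larger}, not smaller, than $\|(\cN_++1)^{t/2}\Phi\|^2$ when $t\neq 2$, so Gr\"onwall does not close directly. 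The fix is to weight before applying Cauchy--Schwarz: using $(\cN_++1)^{-t/2}B_+^*=B_+^*(\cN_++3)^{-t/2}$ and $(\cN_++3)^{-t/2}D\lesssim(\cN_++1)^{t/2-1}$, one gets
\[
|\langle\Phi,B_+^*D\Phi\rangle|
=|\langle(\cN_++1)^{t/2}\Phi,\,B_+^*(\cN_++3)^{-t/2}D\Phi\rangle|
\lesssim N^{-\alpha/2}\|(\cN_++1)^{t/2}\Phi\|^2,
\]
and similarly for the Hermitian conjugate. With this adjustment your argument goes through; the paper sidesteps this issue by interpolating from the integer case.
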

\begin{proof}
 The first claim follows directly from the bound~\eqref{eq:B_q-N commute} and Gr\"onwall's inequality. For the generalization to arbitrary exponents see~\cite[Lem.3.1]{BreSch-19} for the case $t\in \N$, which implies the general case by operator monotonicity of $x\mapsto x^s$ for $0\leq s \leq 1$.
\end{proof}

\begin{lem}\label{lem:L4_T1}
In the sense of quadratic forms on $\mathscr{H}_+$, we have for $\alpha>0$
\begin{multline*}
U_q^* \mathcal L_4 U_q
	= \mathcal L_4 + \frac{1}{2} \int N V_N\widetilde{\phi}_{\mathrm{B}}(y_1-y_2) a^*_{y_1} a^*_{y_2} + \hc
	  + \frac{1}{2} \int N^2 V_N \widetilde{\phi}_{\mathrm{B}}^2 + \mathcal E_{\mathcal L_4}^{(U_q)}, 
\end{multline*}
with
\begin{equation*}
\pm \mathcal E_{\mathcal L_4}^{(U_q)}
	\lesssim N^{-\alpha/2} \mathcal N_+ + N^{-1} (\mathcal N_+ + 1 )^2 + N^{-\alpha/2} \mathcal L_4.
\end{equation*}
Moreover, we have the a priori bound
\begin{align}
	\label{eq:aprio_L4}
U_q^*(s) \mathcal L_4 U_q(s) \lesssim \mathcal L_4 + N^{-1}(\mathcal N_+ +1)^2 + N
\end{align}
uniformly in $s\in[0,1]$.
\end{lem}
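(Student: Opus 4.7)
My approach uses the Duhamel formula iterated to second order. Writing $A(s) := U_q(s)^* \mathcal L_4 U_q(s)$ so that $A'(s) = U_q(s)^*[\mathcal L_4, \mathcal B_q]U_q(s)$, we have
\begin{equation*}
U_q^* \mathcal L_4 U_q = \mathcal L_4 + [\mathcal L_4, \mathcal B_q] + \int_0^1\!\!\int_0^s U_q(t)^*\bigl[[\mathcal L_4, \mathcal B_q], \mathcal B_q\bigr] U_q(t) \, \ud t\, \ud s.
\end{equation*}
The idea is to extract from the single commutator the advertised quadratic term, and from the double commutator the advertised scalar term; all remaining normally-ordered pieces will be absorbed into the error.

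Using the CCR and the fact that $\Vphi$ is even, the ``double-contraction'' part of $[\mathcal L_4, \mathcal B_q]$ (the part of $[a_{y_1}a_{y_2}, a^*_{z_1}a^*_{z_2}]$ in which both annihilation operators contract) reproduces exactly $Q_1 := \frac12 \int NV_N\Vphi(y_1-y_2) a^*_{y_1} a^*_{y_2} + \hc$. The single-contraction remainders have the schematic form $\int K(y_1-y_2) a^*_{y_1} a^*_{y_2} a_{y_1}$ with kernels $K$ built from $NV_N$ and $N\Vphi$, and are controlled by a Cauchy--Schwarz argument analogous to~\eqref{eq:L_2-L_4 bound}, exploiting $\|N\Vphi\|_2 \lesssim N^{-\alpha/2}$ from Lemma~\ref{lem:phi_V} to extract the $N^{-\alpha/2}(\mathcal L_4 + \cN_+)$ contribution to $\mathcal E_{\mathcal L_4}^{(U_q)}$.

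The scalar $\frac12 \int N^2 V_N \Vphi^2$ arises from the leading constant part of $[[\mathcal L_4, \mathcal B_q], \mathcal B_q]$: combining the quadratic piece $Q_1 + \hc$ with the opposite-handed component of $\mathcal B_q$, the two double contractions in $[Q_1, -\tfrac12 \int N\Vphi\, aa]$ and $[\hc(Q_1), \tfrac12 \int N\Vphi\, a^*a^*]$ each yield $\frac12 \int N^2 V_N \Vphi^2$; the time-integration factor $\int_0^1\!\int_0^s \ud t\, \ud s = \tfrac12$ then produces the stated scalar. The other pieces of the double commutator (quadratic and quartic in $a^*,a$) are again estimated through~\eqref{eq:L_2-L_4 bound} and transferred through $U_q(t)$ by the number-preservation Lemma~\ref{lem:N_T1} (using $U_q(t)^*\cN_+^k U_q(t) \lesssim (\cN_+ + 1)^k$), yielding the $N^{-1}(\cN_+ + 1)^2$ term of $\mathcal E_{\mathcal L_4}^{(U_q)}$.

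The a priori bound~\eqref{eq:aprio_L4} is obtained by Gr\"onwall's inequality applied directly to $A(s)$. After conjugating the bound through $U_q(s)$ via Lemma~\ref{lem:N_T1}, it suffices to prove $[\mathcal L_4, \mathcal B_q] \lesssim \mathcal L_4 + N^{-1}(\cN_+ + 1)^2 + N$. The $N$-term comes from estimate~\eqref{eq:L_2-L_4 bound} applied with $\delta = 1$ to kernel $K = NV_N\Vphi$: one bounds $\|N V_N^{1/2}\Vphi\|_2 \leq N\|\Vphi\|_\infty \|V_N^{1/2}\|_2 \lesssim N^{1/2}$ using $\int V_N \lesssim N^{-1}$, whose square gives the $N$. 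The main technical obstacle lies in the bookkeeping of the numerous normally-ordered remainder terms in $[\mathcal L_4, \mathcal B_q]$ and in $[[\mathcal L_4, \mathcal B_q], \mathcal B_q]$: each must be individually identified and estimated in order to verify that the $N^{-\alpha/2}$ improvement can be extracted from the $L^2$-decay of $N\Vphi$ (for the main identity), while keeping the coarse $N$-term (coming from the $L^\infty$-bound on $\Vphi$) cleanly separated in the a priori estimate.
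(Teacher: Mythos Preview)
Your overall architecture---second--order Duhamel for the main identity, first--order Duhamel plus Gr\"onwall for the a priori bound---is the right one, and your treatment of the a priori bound is essentially correct. However, there are two related problems with the main identity.

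First, a structural point: the ``single--contraction'' remainder of $[\mathcal L_4,\cB_q]$ is not cubic as you wrote, but quartic. After normal ordering, $[\mathcal L_4,\cB_q] = (X_1+\hc) + (X_2+\hc)$ with
\[
X_2 = \int N V_N(y_1-y_2)\,\Vphi(y_3-y_2)\, a^*_{y_1}a^*_{y_2}a^*_{y_3}a_{y_1}.
\]
The bare $X_2+\hc$ is indeed $\mathcal O(N^{-\alpha/2}\cL_4 + N^{-1-\alpha/2}(\cN_++1)^2)$ and fits into $\cE_{\cL_4}^{(U_q)}$.

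Second, and this is the genuine gap: in your full second--order expansion the double commutator contains $[X_2+\hc,\cB_q]$, which must be controlled \emph{after conjugation by $U_q(t)$}. The piece $[X_2,\cB_q^+]$ is a pure $(a^*)^4$ operator with kernel $NV_N(y_1-y_2)\Vphi(y_3-y_2)N\Vphi(y_1-z)$. Because of the singular $V_N$ factor linking $y_1,y_2$, any Cauchy--Schwarz estimate of this term produces $\cL_4$: one finds
\[
\pm\big([X_2,\cB_q^+]+\hc\big)\ \lesssim\ \delta\,\cL_4 + \delta^{-1} N^{-1-2\alpha}(\cN_++1)^2.
\]
After conjugation, $U_q(t)^*\cL_4 U_q(t)$ must be bounded via the a priori estimate, which costs an extra $+N$. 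The resulting constant term $\delta N$ can only be made $o(1)$ by taking $\delta\lesssim N^{-1}$, but then $\delta^{-1}N^{-1-2\alpha}\gtrsim N^{-2\alpha}$, which for small $\alpha$ is far larger than the claimed $N^{-1}(\cN_++1)^2$. No choice of $\delta$ closes the argument, so the remainder cannot be absorbed into $\cE_{\cL_4}^{(U_q)}$. Number--preservation (Lemma~\ref{lem:N_T1}) does not help here, since the obstruction is the conjugated $\cL_4$, not powers of $\cN_+$.

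The paper avoids this by a \emph{partial} Duhamel: expand only $X_1$ to second order (producing the scalar), but keep $\int_0^1 U_q^*(s)(X_2+\hc)U_q(s)\,\ud s$. For this integral, factor $X_2 = \int NV_N(y_1-y_2)\Vphi(y_3-y_2)\,(a^*_{y_1}a^*_{y_2})(a^*_{y_3}a_{y_1})$ and apply Duhamel only to the factor $U_q^*(s)a^*_{y_1}a^*_{y_2}U_q(s)$. The leading term then has a \emph{bare} $a^*_{y_1}a^*_{y_2}$, so the Cauchy--Schwarz estimate yields an \emph{unconjugated} $\cL_4$, bypassing the $+N$ from the a priori bound; the remaining commutator $[a^*_{y_1}a^*_{y_2},\cB_q]$ is controlled by number operators alone. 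This partial--conjugation trick is the missing ingredient in your plan.
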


\begin{proof}
From Duhamel's formula, we obtain
\begin{align}
U_q^* \mathcal L_4 U_q
	&= \mathcal L_4 +  \int_0^1 U_q^*(s) [\mathcal L_4, \cB_q]  U_q(s) \dd s.
\end{align}
Calculating the commutator and putting terms in normal order, we find (using that $V_N$ is even)
 \begin{align}
[\mathcal L_4, \cB_q]
	&=  \overbrace{\frac{1}{2}\int N V_N\widetilde{\phi}_{\mathrm{B}}(y_1-y_2) a^*_{y_1} a^*_{y_2}}^{X_1} + \hc  \notag \\ 
	&\quad  +  \underbrace{\int N V_N (y_1-y_2) \widetilde{\phi}_{\mathrm{B}}(y_3-y_2) a^*_{y_1} a^*_{y_2} a^*_{y_3} a_{y_1}}_{X_2}  + \hc \label{eq:L4Bq-2}
\end{align}
Using the Cauchy-Schwarz inequality as in~\eqref{eq:L_2-L_4 bound} gives
\begin{align}\label{eq:CS_L4}
 |\langle \Psi,X_1 \Psi \rangle|\leq \langle \Psi, \cL_4\Psi\rangle +
 N^2\| \widetilde \phi_B V_N^{1/2}\|_2^2 \|\Psi\|^2 \lesssim  \langle \Psi, \cL_4\Psi\rangle + N \|\Psi\|^2,
\end{align}
where we used Lemma~\ref{lem:phi_V} in the last step. With the same reasoning and another application of Cauchy-Schwarz, we obtain
\begin{align}
  &|\langle \Psi, X_2 \Psi \rangle|\nn \\
  &\leq \delta\langle \Psi, \cL_4\Psi\rangle + \frac{1}{\delta} \int N^2 V_N (y_1-y_2)  \bigg\|\int\widetilde{\phi}_{\mathrm{B}}(y_3-y_2)  a_{y_3}^* a_{y_1}\Psi \ud y_3\bigg\|^2  \ud y_1 \ud y_2\notag \\
 &\leq \delta\langle \Psi, \cL_4\Psi\rangle + \delta^{-1} N^2 \|\Vphi\|_2^2 \|V_N\|_1 \int \| a_{y_2} a_{y_1}\Psi\|^2 +\int \|a_{y_1}\Psi\|^2 \notag\\
 & \lesssim \delta\langle \Psi, \cL_4\Psi\rangle + \delta^{-1} N^{-1-\alpha} \|(\cN_+ +1)^2\Psi\|.
\label{eq:L_4-B_2 comm2}
\end{align}
Here, we used the estimates of Lemma~\ref{lem:phi_V} and the bound for $g\in L^2(\T^{3})$,
\begin{align}
& \int \Big \| \int g(y_1-y_2) a^*_{y_1}a_{y_2} \Psi \ud y_1\Big\|^2 \ud y_2\nn \\
 	&=  \int g(y_1-y_2) \overline{g(y_3-y_2)}\langle \Psi, a^*_{y_2}a_{y_1}a^*_{y_3}a_{y_2} \Psi\rangle \ud y_1\ud y_3 \ud y_2 \nn \\
	 &= \int |g(y_1-y_2)|^2 \langle \Psi ,a^*_{y_2}a_{y_2} \Psi\rangle \ud y_1 \ud y_2 \nn \\
	 &\quad + \int g(y_1-y_2) \overline{g(y_3-y_2)} \langle \Psi, a^*_{y_2}a^*_{y_3}a_{y_1}a_{y_2}\Psi\rangle \ud y_1\ud y_3 \ud y_2 \nn\\
 	&\leq \|g\|_2^2 \|(\cN_++1)\Psi\|^2,
 	\label{eq:est_CS}
\end{align}
where we used the Cauchy-Schwarz inequality to bound the second term, similarly as in (\ref{eq:B_q-N commute}). Now taking $\delta=1$ in (\ref{eq:L_4-B_2 comm2}) and using Lemma~\ref{lem:N_T1} gives
\begin{equation*}
 U_q^*(s) \mathcal L_4 U_q(s) \lesssim \cL_4 + N + N^{-1}(\cN_++1)^2 + \int_0^s U_q^*(\tau) \mathcal L_4  U_q(\tau) \dd \tau,
\end{equation*}
from which the a priori bound (\ref{eq:aprio_L4}) on $U_q^*(s) \mathcal L_4 U_q(s)$ follows by Gr\"onwall's inequality.
To get the more precise statement for $s=1$, we continue the Duhamel expansion to obtain
\begin{multline}
U_q^* \mathcal L_4 U_q
	= \mathcal L_4 + X_1 +\hc +   \int_0^1\int_0^t U_q^*(s) \big(\left[X_1,\cB_q \right] +\hc\big) U_q(s) \dd s \dd t  \\
	 + \int_0^1 U_q^*(s) (X_2 +\hc) U_q(s) \dd s.\label{eq:L_4-T_1 expand2}
\end{multline}
We may compute
\begin{align}
\left[X_1,\cB_q \right] = \frac12 \int N^2 V_N \Vphi^2 + \int N V_N \Vphi(y_1-y) N \Vphi(y_2-y) a^*_{y_1}a_{y_2}\ud y\ud y_1 \ud y_2.
\end{align}
The first term contributes to the main term in the statement (with an additional factor $1/2$ from the double integral). The second is estimated by $\|N V_N \Vphi \|_1 \|N \Vphi\|_2 \cN_+$, so its contribution can be absorbed in the error by Lemma~\ref{lem:N_T1} and Lemma~\ref{lem:phi_V}.
In order to control the final contribution to~\eqref{eq:L_4-T_1 expand2}, we write
\begin{subequations}
 \begin{align}
 & U_q^*(s) X_2  U_q(s)  \notag \\
 &=   \int N V_N (y_1-y_2) \widetilde{\phi}_{\mathrm{B}}(y_3-y_2) U_q^*(s) a^*_{y_1} a^*_{y_2}U_q(s) U_q^*(s) a^*_{y_3} a_{y_1}  U_q(s)  \notag \\
 &=   \int N V_N (y_1-y_2) \widetilde{\phi}_{\mathrm{B}}(y_3-y_2)  a^*_{y_1} a^*_{y_2} U_q^*(s) a^*_{y_3} a_{y_1}  U_q(s)   \label{eq:L_4-T_1 expand3}\\
 &\quad + \int_0^s
 \begin{aligned}[t]
  \int N V_N (y_1-y_2) \widetilde{\phi}_{\mathrm{B}}(y_3-y_2) U_q^*(t) &\big[ a^*_{y_1} a^*_{y_2}, \cB_q\big]U_q(t) \\
  &\times U_q^*(s) a^*_{y_3} a_{y_1}  U_q(s) \ud t .
 \end{aligned}\label{eq:L_4-T_1 expand4}
\end{align}
\end{subequations}
With the reasoning of~\eqref{eq:L_4-B_2 comm2} and $\delta=N^{-\alpha/2}$, we find
\begin{align}
 |\langle \Psi, \eqref{eq:L_4-T_1 expand3} \Psi\rangle| \lesssim N^{-\alpha/2} \langle\Psi,\cL_4 \Psi \rangle + N^{-1-\alpha/2} \| (\cN_+ +1)^2 U_q(s)\Psi\|^2.
\end{align}
To bound \eqref{eq:L_4-T_1 expand4}, we use the identity
\begin{align}
 [ a^*_{y_1} a^*_{y_2}, \cB_q\big]= - N\Vphi(y_1-y_2) - N a_{y_1}^* a(\widetilde{\varphi}_{B}(\cdot-y_2)) - N a_{y_2}^*a(\widetilde{\varphi}_{B}(\cdot-y_1)),
\end{align}
to obtain
\begin{align}
&\eqref{eq:L_4-T_1 expand4}
	=
-  \int_0^s \int N (V_N \Vphi)(y_1-y_2) U_q^*(s) a^*(N\widetilde{\phi}_{\mathrm{B}} (\cdot-y_2)) a_{y_1}  U_q(s) \ud t \nn \\
	&\qquad -  \int_0^s \int N V_N (y_1-y_2) U_q^*(t) \left( a_{y_1}^* a(N \widetilde{\varphi}_{B}(\cdot-y_2)) +  a_{y_2}^*a(N\widetilde{\varphi}_{B}(\cdot-y_1))\right)\nn \\
	&\qquad\qquad\qquad \times U_q(t)  U_q^*(s) a^*(\widetilde{\phi}_{\mathrm{B}} (\cdot-y_2)) a_{y_1}  U_q(s) \ud t .
\end{align}
Using the Cauchy-Schwarz inequality and the estimate (\ref{eq:est_CS}), we have
\begin{align}
 |\langle \Psi,\eqref{eq:L_4-T_1 expand4} \Psi\rangle| 
 	&\lesssim  \int_0^s N \|V_N \Vphi\|_1 \| N \widetilde{\varphi}_{B} \|_{2} \|(\cN_++1)^{1/2} U_q(s)\Psi\|^2 \ud t \notag\\
	&\quad + \int_0^s   \|V_N\|_{1} \| N \widetilde{\varphi}_{B} \|_{2}^2 \|\left(\mathcal N_++1\right) U_q(t) \Psi\|^2 \ud t\notag \\
	&\quad	+  \int_0^s \|V_N\|_{1} \| N \widetilde{\varphi}_{B} \|_{2}^2 \|(\cN_++1)U_q(s)\Psi\|^2 \ud t\notag \\
 	&\lesssim  N^{-\alpha/2} \mathcal N_+ + N^{-1} \left(\mathcal N+1\right)^2,
\end{align}
where we used that $U_q(s)^*\cN_+^k U_q(s) = \mathcal{O}(\cN_+^k+1)$, see Lemma~\ref{lem:N_T1}. This can be absorbed in $\mathcal E_{\mathcal L_4}^{(U_q)}$ and the proof is complete.
\end{proof}

\begin{lem}\label{lem:L2_T1}
In the sense of quadratic forms on $\mathscr{H}_+$, we have for $\alpha>0$
\begin{align*}
 U_q^* \mathcal{L}_2 U_q & = \frac{1}{2} \int N V_N({y_1}-{y_2}) a_{y_1} a_{y_2} + \hc \\
 &\qquad +  \left(1 - \frac{\mathcal N_+}{N} - \frac{1}{2N} \right) \int N^2 V_N \widetilde{\phi}_{\mathrm{B}} + \mathcal E_{\mathcal L_2}^{(U_q)},
\end{align*}
with, for all $0<\delta\leq N^{2\alpha} $,
\begin{align*}
\pm \mathcal E_{\mathcal L_2}^{(U_q)} \lesssim N^{-\alpha/2} (\mathcal N_++1) + \delta \mathcal L_4 + \delta^{-1} N^{-1}(\mathcal N_+ +1)^2.
\end{align*}
\end{lem}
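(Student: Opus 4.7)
Plan: The plan is to follow the Duhamel strategy of Lemma~\ref{lem:L4_T1}. With the shorthand $D = \frac{1}{2}\int NV_N(y_1-y_2) a_{y_1}a_{y_2}$ and $f(\mathcal N_+) = \sqrt{(1-\mathcal N_+/N)_+}\sqrt{(1-(\mathcal N_++1)/N)_+}$, so that $\mathcal L_2 = f(\mathcal N_+) D + D^* f(\mathcal N_+)$, and with $\mathcal B_q = B - B^*$ where $B = \frac{1}{2}\int N\widetilde\phi_{\mathrm B}(y_1-y_2) a^*_{y_1}a^*_{y_2}$, I will expand
\begin{equation*}
U_q^* \mathcal L_2 U_q = \mathcal L_2 + \int_0^1 U_q(s)^* [\mathcal L_2, \mathcal B_q] U_q(s)\,\mathrm{d}s.
\end{equation*}
The first term would then be split as $(D + D^*) + \big((f(\mathcal N_+) - 1)D + D^*(f(\mathcal N_+)-1)\big)$, where $D + D^* = \frac{1}{2}\int NV_N a_{y_1}a_{y_2} + \hc$ is the leading operator in the statement. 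Using $|f(\mathcal N_+)-1|\lesssim (\mathcal N_++1)/N$ together with the identity $\int V_N(y_1-y_2)\|a_{y_1}a_{y_2}\Psi\|^2 = 2\langle \Psi, \mathcal L_4 \Psi\rangle$ from~\eqref{eq:CS_L4}, the correction is bounded by $\delta \mathcal L_4 + \delta^{-1} N^{-1} (\mathcal N_++1)^2$ for any $\delta>0$.

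For the commutator, note that $[D,B^*] = [D^*, B] = 0$ since $D$ and $B^*$ contain only $aa$-operators while $B$ and $D^*$ only $a^*a^*$-operators, so only $f[D,B]$, $-[D^*,B^*]f$ and the $[f(\mathcal N_+), a^*a^*]$-type cross terms survive. The Wick formula for $[a_{y_1}a_{y_2}, a^*_{z_1}a^*_{z_2}]$ yields as fully contracted piece of $[D,B]$ the scalar $\tfrac{1}{2}\int N^2 V_N \widetilde\phi_{\mathrm B}$, and as partially contracted piece the normal-ordered operator $\sum_p N\widehat V_N(p) N\widehat{\widetilde\phi}_{\mathrm B}(p)\, a^*_p a_p$, bounded by $\|N\widehat V_N\|_\infty \|N\widehat{\widetilde\phi}_{\mathrm B}\|_\infty \mathcal N_+ \lesssim N^{-2\alpha}\mathcal N_+$ via Lemma~\ref{lem:phi_V}. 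After symmetrizing $f[D,B] - [D^*,B^*]f = f[D,B] + [D,B]^* f$ and using that the scalar commutes with $f(\mathcal N_+)$, the scalar contribution of the full commutator reads $f(\mathcal N_+)\int N^2 V_N \widetilde\phi_{\mathrm B}$.

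The $[f(\mathcal N_+), B]D$ and $D^*[f(\mathcal N_+), B^*]$ corrections carry a relative factor $1/N$ by the Lipschitz property of $f$ and produce $\mathcal L_4$-type quartics estimated by Cauchy--Schwarz analogously to~\eqref{eq:L_4-B_2 comm2}, absorbed into $\delta \mathcal L_4 + \delta^{-1} N^{-1}(\mathcal N_++1)^2$. Lemma~\ref{lem:N_T1} then gives $U_q(s)^* f(\mathcal N_+) U_q(s) = f(\mathcal N_+) + \mathcal O(N^{-1-\alpha/2}(\mathcal N_++1))$; multiplied by $\int N^2 V_N \widetilde\phi_{\mathrm B} = \mathcal O(N)$ (from~\eqref{eq:aV_N} and Lemma~\ref{lem:phi_V}), this contributes $\mathcal O(N^{-\alpha/2}(\mathcal N_++1))$. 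Taylor-expanding $f(\mathcal N_+) = 1 - \mathcal N_+/N - 1/(2N) + \mathcal O(\mathcal N_+^2/N^2)$ produces the announced prefactor, the $\mathcal O(\mathcal N_+^2/N)$ remainder being absorbed into $\delta^{-1}N^{-1}(\mathcal N_++1)^2$ for $\delta\leq 1$.

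The main obstacle will be the bookkeeping: since $\int N^2 V_N \widetilde\phi_{\mathrm B}$ is of order $N$, the order-one correction $-(\mathcal N_+/N + 1/(2N))\int N^2 V_N \widetilde\phi_{\mathrm B}$ has to be extracted precisely from the second-order Taylor expansion of $f$, while the various $[f,a^*a^*]$-type corrections and the normal-ordered Wick operators must all be maintained within the $\delta\mathcal L_4 + \delta^{-1}N^{-1}(\mathcal N_++1)^2 + N^{-\alpha/2}(\mathcal N_++1)$ envelope uniformly in $\delta\in(0, N^{2\alpha}]$.
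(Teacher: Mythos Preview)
Your approach is correct in spirit and uses the same ingredients as the paper (Duhamel, the commutator $[D,B]=\tfrac12\int N^2V_N\widetilde\phi_{\mathrm B}+\tfrac12\,\ud\Gamma(N\widehat V_N\,N\Vphih)$, and the $1/N$-Lipschitz property of $f$), but the paper organizes the argument differently and this matters for the bookkeeping you flag as the main obstacle. The paper \emph{first} replaces $f(\cN_+)$ by its affine approximation $\chi(\cN_+)=1-\cN_+/N-1/(2N)$, controlling $(f-\chi)D+\hc$ via $|f-\chi|\lesssim(\cN_++1)/N^2$ and then conjugating this error with the a~priori bound of Lemma~\ref{lem:L4_T1}. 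Because $\chi$ is affine, $U_q^*\chi(\cN_+)U_q=\chi(U_q^*\cN_+U_q)$ exactly, so the factor $\chi$ can be pulled outside the Duhamel and the commutator reduces to $[2D,\cB_q]$ alone --- \emph{no} $[f,\cB_q]D$ cross terms ever appear.

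In your route the cross terms $(f-f(\cdot-2))BD$, $(f(\cdot+2)-f)B^*D$ and their adjoints are present and must be controlled \emph{after} conjugation by $U_q(s)$. Your stated bound $\delta\cL_4+\delta^{-1}N^{-1}(\cN_++1)^2$ is fine before conjugation, but if you then invoke the a~priori estimate $U_q(s)^*\cL_4 U_q(s)\lesssim\cL_4+N^{-1}(\cN_++1)^2+N$, the extra~$N$ forces the auxiliary parameter to be $\lesssim N^{-1-\alpha/2}$, and then the $(\cN_++1)^2$ piece no longer fits the envelope uniformly over $0<\delta\le N^{2\alpha}$. This is fixable --- for instance by one more Duhamel iteration on $U_q(s)^*D\,U_q(s)$ exactly as in the treatment of $X_2$ in Lemma~\ref{lem:L4_T1}, or simply by doing the linearization $f\to\chi$ at the outset as the paper does --- but it is not as immediate as the reference to~\eqref{eq:L_4-B_2 comm2} suggests (that bound is for a cubic, not a conjugated quartic). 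Also note that Lemma~\ref{lem:N_T1} gives $U_q(s)^*\cN_+U_q(s)=\cN_++\mathcal O(N^{-\alpha/2}(\cN_++1))$; passing to $U_q(s)^*f(\cN_+)U_q(s)$ requires in addition the Lipschitz bound $|f(t)-f(t')|\lesssim|t-t'|/N$, which is true but not literally contained in that lemma.
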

\begin{proof}
Using that where $\mathcal N_+ \leq N$
\begin{align}
\bigg|  \sqrt{1-\frac{\cN_++1}{N}}\sqrt{1-\frac{\cN_+}{N}}  -  \left(1 - \frac{\mathcal N_+}{N} - \frac{1}{2N} \right)\bigg| \lesssim \frac{\mathcal N_+ + 1}{N^2},
\end{align}
we obtain with the argument of~\eqref{eq:L_2-L_4 bound}
\begin{equation}
\begin{aligned}
\mathcal L_2 &=  \frac{1}{2}  \left(1 - \frac{\mathcal N_+}{N} + \frac{1}{2N} \right) \int N V_N({y_1}-{y_2}) (a_{y_1} a_{y_2} + \hc)
+ \mathcal E, \\
	\pm \mathcal E &\lesssim N^{-2} \mathcal L_4 + N^{-1} (\mathcal N_++1)^2.
\end{aligned}
\end{equation}
Using the a priori estimate of Lemma~\ref{lem:L4_T1} and Lemma~\ref{lem:N_T1}, the error term $U_q^* \mathcal E U_q$ can be absorbed in $\mathcal E_{\mathcal L_2}^{(U_q)}$. Denoting $\chi(\mathcal N_+) := \left(1 - \frac{\mathcal N_+}{N} - \frac{1}{2N} \right)$ and using Duhamel's formula, we obtain
\begin{subequations}
\begin{align}
&\frac{1}{2} U_q^* \left\{ \chi(\mathcal N_+) \int  N V_N({y_1}-{y_2})  a_{y_1} a_{y_2} + \hc \right\} U_q\notag \\
 	& = \frac{1}{2} \chi(U_q^* \mathcal N_+U_q)  \int N V_N({y_1}-{y_2})  a_{y_1} a_{y_2} +\hc \label{eq:U_q-L_2-1}\\
 &\;\; + \frac12   \chi(U_q^*\mathcal N_+U_q)  \int_0^1U_q(s)^* \Big[\int  N V_N({y_1}-{y_2})  a_{y_1} a_{y_2} ,\cB_q \Big] U_q(s) \dd s + \hc \label{eq:U_q-L_2-2}
\end{align}
\end{subequations}
Let us start with the first term. Using the definition of $\chi(\mathcal N_+)$, we have by the argument of~\eqref{eq:L_2-L_4 bound}
\begin{align}
\pm\left(\eqref{eq:U_q-L_2-1} - \frac{1}{2} \int  N V_N({y_1}-{y_2})  a_{y_1} a_{y_2} +\hc\right) \lesssim \delta \mathcal L_4 + \delta^{-1} N^{-1}(\mathcal N_+ +1)^2,
\end{align}
which can be absorbed in $\mathcal E_{\mathcal L_2}^{(U_q)}$. Now for the commutator term, we have by~\eqref{eq:B_q-N commute}
\begin{align}
  \Big[\int  N V_N({y_1}-{y_2})  a_{y_1} a_{y_2} ,\cB_q \Big]
  	&= \int   N^2 V_N \widetilde{\phi}_{\mathrm{B}}+ \underbrace{\dd\Gamma( N \widehat{V}_N(-\ui\nabla) N\Vphih (-\ui\nabla) )}_{\lesssim \, \cN_+ N^{-2\alpha}},
\end{align}
where we used Lemma~\ref{lem:phi_V}.
From Lemma~\ref{lem:N_T1}, we get
\begin{equation}
 \chi(U_q^*\mathcal N_+U_q) 
 =  \chi(\mathcal N_+) + \mathcal O(N^{-1-\alpha/2} (\mathcal N_++1)),
\end{equation}
and together this yields
\begin{align}
\eqref{eq:U_q-L_2-2} =  \chi(\mathcal N_+) \int \,  N^2 V_N \widetilde{\phi}_{\mathrm{B}} + \mathcal O\left( N^{-\alpha/2}(\cN_+ + 1)+N^{-1-2\alpha} \cN_+^2\right).
\end{align}
This finishes the proof of Lemma \ref{lem:L2_T1}.
\end{proof}

\begin{lem}\label{lem:kin_T1}
In the sense of quadratic forms on $\mathscr{H}_+$, we have for $\alpha>0$
\begin{multline*}
U_q^* \dd\Gamma(-\Delta) U_q
	= \dd\Gamma(-\Delta) + N\int (-\Delta \widetilde{\phi}_{\mathrm{B}})(y_1-y_2) a^*_{y_1}a^*_{y_2} + \hc \notag \\
	  + N^2 \langle -\Delta \widetilde{\phi}_{\mathrm{B}}, \widetilde{\phi}_{\mathrm{B}} \rangle + \mathcal{O}(N^{-2\alpha} \cN_+). \label{eq:lem:kin_T1}
\end{multline*}
\end{lem}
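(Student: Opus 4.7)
The plan is to apply Duhamel's formula twice, in the spirit of the proof of Lemma~\ref{lem:L4_T1}. A first expansion gives
\begin{equation*}
U_q^* \ud\Gamma(-\Delta) U_q = \ud\Gamma(-\Delta) + \int_0^1 U_q(s)^*\,[\ud\Gamma(-\Delta),\cB_q]\, U_q(s)\,\ud s,
\end{equation*}
and a short calculation in Fourier, using $[a_q^* a_q, a_p^* a_{-p}^*] = (\delta_{q,p}+\delta_{q,-p})\,a_p^* a_{-p}^*$ and $\Vphih(-p) = \Vphih(p)$, identifies the commutator as
\begin{equation*}
[\ud\Gamma(-\Delta),\cB_q] = N\int(-\Delta\Vphi)(y_1-y_2)\,a_{y_1}^* a_{y_2}^* + \hc =: Y,
\end{equation*}
which is precisely the quadratic operator appearing in the statement.

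Next I apply Duhamel to $U_q(s)^* Y U_q(s)$ in order to separate the c-number contribution from the error:
\begin{equation*}
U_q^* \ud\Gamma(-\Delta) U_q = \ud\Gamma(-\Delta) + Y + \int_0^1\int_0^s U_q(t)^*\,[Y,\cB_q]\,U_q(t)\,\ud t\,\ud s.
\end{equation*}
Since $Y$ and $\cB_q$ both have the schematic structure $a^*a^* + aa$, the commutator $[Y,\cB_q]$ produces only an $a^*a$-part and a c-number part after Wick ordering (the $[a^*a^*, a^*a^*]$ and $[aa,aa]$ contributions vanish). Computing the fully contracted part in Fourier yields the c-number
\begin{equation*}
2\sum_{p\in 2\pi\Z^3} p^2\, N^2 |\Vphih(p)|^2 = 2N^2\langle -\Delta\Vphi,\Vphi\rangle,
\end{equation*}
which together with the factor $\int_0^1\int_0^s \ud t\,\ud s = \tfrac12$ reproduces exactly the scalar $N^2\langle -\Delta \Vphi,\Vphi\rangle$ claimed in the statement.

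It only remains to bound the $a^*a$-part of $[Y,\cB_q]$. This is a finite sum of operators of the form $\ud\Gamma(M_j)$ whose Fourier multipliers are of the schematic form $N^2 p^2\, \Vphih(p)\,\Vphih(\pm p)$, and hence bounded pointwise by $\|Np^2\Vphih\|_\infty\,\|N\Vphih\|_\infty$. By Lemma~\ref{lem:phi_V} one has $\|p^2\Vphih\|_\infty \lesssim N^{-1}$ and $\|\Vphih\|_\infty \lesssim N^{-1-2\alpha}$, so this product is $\lesssim N^{-2\alpha}$ and the $a^*a$-part of the commutator is $\mathcal{O}(N^{-2\alpha}\cN_+)$. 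Lemma~\ref{lem:N_T1} then passes this estimate through the conjugations $U_q(t)^*\cdot U_q(t)$, and integration in $s,t$ preserves it. I do not expect any genuine obstacle here; the entire argument reduces to the careful Wick-ordering bookkeeping for $[Y,\cB_q]$, with the $N^{-2\alpha}$ saving dictated exactly by the product of the two pointwise decay estimates from Lemma~\ref{lem:phi_V}.
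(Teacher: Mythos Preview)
Your proposal is correct and follows essentially the same approach as the paper: a second-order Duhamel expansion, explicit computation of $[\ud\Gamma(-\Delta),\cB_q]$ and then of $[[\ud\Gamma(-\Delta),\cB_q],\cB_q]$, with the $a^*a$-part of the double commutator bounded via $\|p^2\Vphih\|_\infty\|\Vphih\|_\infty\lesssim N^{-2\alpha}$ from Lemma~\ref{lem:phi_V} and passed through the conjugation using Lemma~\ref{lem:N_T1}. The paper organizes the two Duhamel steps into a single second-order formula but the content is identical.
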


\begin{proof}
From Duhamel's formula, we obtain
\begin{align}
U_q^* \dd \Gamma(-\Delta) U_q
	&= \dd \Gamma(-\Delta) + [\dd \Gamma(-\Delta), \cB_q]  \\
	&\quad +  \int_0^1 \int_0^t U_q^*(s) [[\dd \Gamma(-\Delta), \cB_q] ,\cB_q] U_q(s) \dd s \dd t. \label{eq:duh_kin_T1}
\end{align}
Moreover,
\begin{align}
 \Big[\dd \Gamma(-\Delta), \cB_q\Big]&=
 N\int (-\Delta \widetilde{\phi}_{\mathrm{B}})(y_1-y_2) a^*_{y_1}a^*_{y_2} +\hc,
\end{align}
and with Lemma~\ref{lem:phi_V}
\begin{align}
\Big[[\dd \Gamma(-\Delta), \cB_q], \cB_q\Big]&= 2 N^2 \langle -\Delta \widetilde{\phi}_{\mathrm{B}}, \widetilde{\phi}_{\mathrm{B}} \rangle
 + 2 \underbrace{\dd \Gamma\big(N^2 \Vphih(-\ui\nabla) (\widehat{-\Delta \widetilde{\phi}_{\mathrm{B}})}(-\ui\nabla)\big)}_{=\mathcal{O}(\cN_+ N^{-2\alpha})}, \label{eq:kin_B1B1}
\end{align}
where we used that $ \|\Vphih\|_\infty \lesssim N^{-1-2\alpha}$ and $\|p^2\Vphih\|_\infty \lesssim N^{-1}$ from Lemma \ref{lem:phi_V}. Plugging this into (\ref{eq:duh_kin_T1}) and using that $U_q^* (s) \mathcal N_+ U_q(s) =\mathcal{O}(\mathcal N_+ + 1)$ proves the claim (note that we obtain a factor $1/2$ from the double integral in~\eqref{eq:duh_kin_T1}).
\end{proof}

\begin{lem} \label{lem:Hint_T1}
As quadratic forms on $\mathscr{H}_+$, we have for $\alpha>0$
\begin{align*}
U_q^* \mathcal H_{\mathrm{I}} U_q = -\Delta_x + \sqrt{\left(1-\frac{U_q^*\cN_+U_q}{N}\right)_+} a(\sqrt{N} W_{N,x}) + \hc + \ud\Gamma(W_{N,x}) + \mathcal E_{\mathrm{I}}^{(U_q)}
\end{align*}
where
\begin{align*}
\pm \mathcal E_{\mathrm{I}}^{(U_q)} \lesssim N^{-\alpha/2}\left( N^{-1/2}\ud \Gamma(W_{N,x}) +  \left(\mathcal N_++1\right)\right).
\end{align*}
\end{lem}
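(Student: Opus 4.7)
\noindent\emph{Proof strategy.} The plan is to transform the three pieces of $\mathcal H_{\mathrm{I}}$ separately. Since $U_q$ acts trivially on the impurity variable $x$, the kinetic term $-\Delta_x$ is invariant. For $Q_1$ the functional calculus identity $U_q^* f(\cN_+) U_q = f(U_q^*\cN_+ U_q)$ lets us pull the square-root weight outside, so the task reduces to controlling $U_q^* a(\sqrt N W_{N,x}) U_q - a(\sqrt N W_{N,x})$ and $U_q^* \ud\Gamma(W_{N,x}) U_q - \ud\Gamma(W_{N,x})$. Both are analyzed via Duhamel's formula together with a bound on the relevant commutator with $\cB_q$, combined with the a priori control $U_q(s)^*\cN_+ U_q(s) \lesssim \cN_++1$ from Lemma~\ref{lem:N_T1}.

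\noindent\emph{Commutator with $a(\sqrt N W_{N,x})$.} A direct CCR computation, using the symmetry of $\Vphi$, yields
\begin{equation*}
[a(\sqrt N W_{N,x}), \cB_q] = a^*(h_x), \qquad h = \sqrt N W_N \ast N\Vphi .
\end{equation*}
Young's inequality together with Lemma~\ref{lem:phi_V} and $\|W_N\|_1 \lesssim N^{-1/2}$ give $\|h\|_2 \lesssim N^{-\alpha/2}$. Combined with $\|a^*(h_x)\Psi\| \leq \|h\|_2 \|(\cN_++1)^{1/2}\Psi\|$ and Lemma~\ref{lem:N_T1}, Duhamel's formula yields a correction of order $N^{-\alpha/2}(\cN_++1)$, of the claimed form; the bounded prefactor $\sqrt{(1-U_q^*\cN_+U_q/N)_+}\leq 1$ does not affect the order.

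\noindent\emph{Commutator with $\ud\Gamma(W_{N,x})$.} A similar computation gives
\begin{equation*}
[\ud\Gamma(W_{N,x}), \cB_q] = \int W_N(x-y_1) N\Vphi(y_1-y_2) a^*_{y_1}a^*_{y_2}\,\ud y_1 \ud y_2 + \hc .
\end{equation*}
Estimating this by the $L^2$-norm of its kernel only yields the order $N^{1/4-\alpha/2}(\cN_++1)$, which is too crude. Instead, we perform an asymmetric Cauchy--Schwarz that pairs the weight $W_N^{1/2}(x-y_1)$ with $a_{y_1}$:
\begin{align*}
&\Big|\Big\langle \Psi, \int W_N(x-y_1) N\Vphi(y_1-y_2) a^*_{y_1} a^*_{y_2} \Psi \, \ud y_1 \ud y_2\Big\rangle\Big| \\
&\qquad \leq \int W_N(x-y_1) \|a_{y_1}\Psi\|\,\|a^*(N\Vphi(y_1-\cdot))\Psi\|\, \ud y_1 \\
&\qquad \lesssim \|N\Vphi\|_2\,\|W_N\|_1^{1/2}\,\langle \Psi, \ud\Gamma(W_{N,x})\Psi\rangle^{1/2}\,\|(\cN_++1)^{1/2}\Psi\| .
\end{align*}
Inserting $\|N\Vphi\|_2 \lesssim N^{-\alpha/2}$, $\|W_N\|_1 \lesssim N^{-1/2}$ and applying AM--GM with weight $\delta = N^{-1/2-\alpha/2}$ gives the commutator bound
\begin{equation*}
\pm [\ud\Gamma(W_{N,x}), \cB_q] \lesssim N^{-1/2-\alpha/2}\ud\Gamma(W_{N,x}) + N^{-\alpha/2}(\cN_++1) .
\end{equation*}

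\noindent\emph{Closing the argument.} To upgrade this pointwise commutator bound to an estimate on the Duhamel integral we need the a priori control $U_q(s)^*\ud\Gamma(W_{N,x})U_q(s) \lesssim \ud\Gamma(W_{N,x}) + (\cN_++1)$ uniform in $s\in[0,1]$. This is obtained by a Gr\"onwall argument on $F(s) = \langle \Psi, U_q(s)^*\ud\Gamma(W_{N,x})U_q(s)\Psi\rangle$ combining the commutator estimate with Lemma~\ref{lem:N_T1}. Feeding it back into Duhamel's formula yields exactly the advertised error. The main technical point is the asymmetric Cauchy--Schwarz above: it exploits the concentration of $W_N$ around $x$ to absorb one of the $a^*$ factors into $\ud\Gamma(W_{N,x})$ rather than into the much larger $\cN_+$, in parallel with the treatment of $\mathcal L_4$ in Lemma~\ref{lem:L4_T1}.
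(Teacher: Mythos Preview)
Your proposal is correct and follows essentially the same approach as the paper: both treat $-\Delta_x$, $Q_1$, and $\ud\Gamma(W_{N,x})$ separately via Duhamel's formula, compute the same commutators with $\cB_q$, use the asymmetric Cauchy--Schwarz splitting $W_N = W_N^{1/2}\cdot W_N^{1/2}$ to absorb one creation operator into $\ud\Gamma(W_{N,x})$, and close with a Gr\"onwall a priori bound on $U_q(s)^*\ud\Gamma(W_{N,x})U_q(s)$. Your write-up is in fact slightly more explicit about this Cauchy--Schwarz step than the paper, which just refers back to the earlier estimate~\eqref{eq:L_2-L_4 bound}.
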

\begin{proof}
First note that $U_q^* (-\Delta_x) U_q = -\Delta_x$, since $U_q$ does not depend on the variable $x$. Let us now turn to $Q_1$. From Duhamel's formula we obtain
\begin{multline}
U_q^* Q_1 U_q 	= \sqrt{\Big(1-\frac{U_q^*\cN_+U_q}{N}\Big)_+} \bigg(  a(\sqrt{N} W_{N,x}) \\
 + \int_0^1 U_q^*(s)[a(\sqrt{N} W_{N,x}), \cB_q] U_q(s) \dd s  \bigg) +\hc
\end{multline}
We have (using that $V$, $\phi_B$ are even)
\begin{align}
 \big[a(\sqrt{N} W_{N, x}), \cB_q \big]
 &=\int \sqrt N W_N(x-y) \big[ a_y, \cB_q \big]\ud y \\
 &=N^{3/2} a^*(\Vphi\ast W_{N,x}) =\mathcal{O}\Big( \cN_+^{1/2} N^{3/2} \|\widetilde{\phi}_{\mathrm{B}}\|_2 \|W_N\|_1\Big).
 \nn
\end{align}
With $U_q^*(s) \mathcal N_+ U_q(s) \lesssim (\mathcal N_+ + 1)$ and Lemma~\ref{lem:phi_V} we thus obtain
\begin{align}
U_q^* Q_1 U_q =  \sqrt{\left(1-\frac{U_q^*\cN_+U_q}{N}\right)_+} a(\sqrt{N} W_{N,x}) + \hc + \mathcal{O}( \cN_+^{1/2} N^{-\alpha/2}).
\end{align}
Let us now turn to $\ud \Gamma(W_{N,x})$. Using the Duhamel formula again, we obtain for $t\in [-1,1]$
\begin{align}
	\label{eq:duh_Q2_T1}
U_q^*(t) \ud \Gamma(W_{N,x}) U_q(t) = \ud \Gamma(W_{N,x}) + \int_0^t U_q^*(s)[\ud \Gamma(W_{N,x}), \cB_q] U_q(s) \dd s.
\end{align}
Then, using the Cauchy-Schwarz inequality as in~\eqref{eq:L_2-L_4 bound},
\begin{align}
 \big[\ud \Gamma(W_{N,x}), \cB_q \big] &=  N \int W_N(x-y_1) \widetilde{\phi}_{\mathrm{B}}(y_1-y_2) a^*_{y_1} a^*_{y_2}  + \hc \notag \\
 	&=\mathcal{O}\Big( N^{-\alpha/2-1/2} \ud\Gamma(W_{N,x}) + N^{-\alpha/2}(\mathcal N_+ + 1)\Big).
\end{align}
From (\ref{eq:duh_Q2_T1}), the above estimate, Lemma~\ref{lem:N_T1} and the Grönwall lemma, we first deduce the a priori estimate $U_q^*(t) \ud \Gamma(W_{N,x}) U_q(t) \lesssim (\ud \Gamma(W_{N,x}) + \mathcal N_+ + 1)$, for all $t\in [-1,1]$. Taking $t=1$ in \eqref{eq:duh_Q2_T1}, we then obtain
\begin{align}\label{eq:Q_2-T_1-aprio}
U_q^* \ud \Gamma(W_{N,x}) U_q &= \ud \Gamma(W_{N,x}) + \mathcal{O}\Big( N^{-\alpha/2-1/2} \ud\Gamma(W_{N,x}) + N^{-\alpha/2}(\mathcal N_+ + 1)\Big).
\end{align}
This proves the claim.
\end{proof}

\begin{lem}\label{lem:L3_T1}
For $\alpha>0$, we have as quadratic forms on $\mathscr{H}_+$
\begin{equation*}
 U_q^* \mathcal L_3 U_q
	=  \frac{1}{2}\int \sqrt{N} V_N(y_1-y_2) a^*_{y_1} a_{y_2} a_{y_1} + \hc + \mathcal E_{\mathcal L_3}^{(U_q)},
\end{equation*}
with, for all $N^{-\alpha/2}\leq \delta<1$,
\begin{equation*}
\pm \mathcal E_{\mathcal L_3}^{(U_q)}
	\lesssim N^{-\alpha/2} \mathcal N_+ + \delta^{-1} N^{-1} (\mathcal N_+ + 1 )^2 + \delta \mathcal L_4.
\end{equation*}
\end{lem}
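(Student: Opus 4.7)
The proof will follow the Duhamel-expansion strategy used for Lemmas~\ref{lem:L2_T1}, \ref{lem:L4_T1} and \ref{lem:Hint_T1}. First, I would dispose of the prefactor $\sqrt{(1-\cN_+/N)_+}$ by writing $\sqrt{(1-\cN_+/N)_+}=1-r(\cN_+/N)$ with $|r(x)|\leq x$, and then estimating the resulting contribution by a Cauchy--Schwarz argument of the type used in~\eqref{eq:L_2-L_4 bound}:
\begin{equation*}
 \Big|\big\langle \Psi, r(\cN_+/N)\textstyle\int \sqrt N V_N(y_1-y_2) a^*_{y_1}a_{y_2}a_{y_1}\Psi\big\rangle\Big| \lesssim \delta \langle \Psi,\cL_4\Psi\rangle + \delta^{-1} N^{-1}\|(\cN_++1)\Psi\|^2,
\end{equation*}
the factor $N^{-1}$ being gained from $r(\cN_+/N)$. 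This reduces the lemma to analyzing $U_q^*\cL_3^{(+)}U_q$, where $\cL_3^{(+)}=\frac{1}{2}\int\sqrt N V_N(y_1-y_2)a^*_{y_1}a_{y_2}a_{y_1}+\hc$ is already the main term on the right-hand side.

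Next, I would apply Duhamel's formula:
\begin{equation*}
 U_q^*\cL_3^{(+)}U_q = \cL_3^{(+)} + \int_0^1 U_q^*(s)[\cL_3^{(+)},\cB_q]U_q(s)\,\ud s,
\end{equation*}
and compute the commutator using the canonical commutation relations~\eqref{eq:CCR}. Because $\cB_q$ is quadratic and $\cL_3^{(+)}$ cubic, the commutator produces (a) ``partially contracted'' one-operator terms of the form $a^*(g)+\hc$, where $g$ is a pointwise product or convolution involving $V_N$ and $\Vphi$, and (b) three- and five-operator normal-ordered terms with combined kernels. Since $\Vphi$ has Fourier support in $|p|>N^\alpha$, every contraction against $V_N$ produces a factor $\|\Vphi\|_2$, $\|\Vphi\|_\infty$ or $\|V_N\ast\Vphi\|_2$, each of which comes with a power of $N^{-\alpha/2}$ by Lemma~\ref{lem:phi_V}. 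The one-operator contributions will therefore be bounded by $N^{-\alpha/2}\cN_+$ after a Cauchy--Schwarz step.

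The remaining normal-ordered terms would be estimated along the lines of \eqref{eq:L_4-B_2 comm2} and \eqref{eq:est_CS}: one pairs the $V_N$ kernel with two annihilation operators to produce $\cL_4^{1/2}$, and uses $\|\Vphi\|_2\lesssim N^{-1-\alpha/2}$ together with the number operator to bound the remaining factors. A Young inequality with parameter $\delta$ then yields contributions of the form $\delta\cL_4+\delta^{-1}N^{-1-\alpha}(\cN_++1)^2$, which is absorbed in $\delta\cL_4+\delta^{-1}N^{-1}(\cN_++1)^2$. To transfer these bounds through the conjugation by $U_q(s)$, I would appeal to the a priori estimates Lemma~\ref{lem:N_T1} and~\eqref{eq:aprio_L4}, using that the relative bound by $\cL_4$ is preserved under $U_q(s)$ up to an $N^{-\alpha/2}$ correction, which is the origin of the restriction $\delta\geq N^{-\alpha/2}$.

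The main obstacle is the bookkeeping of the extra $\sqrt N$ carried by $\cL_3$ compared to $\cL_2$: several commutator terms come with an apparent prefactor $N^{3/2}$, and one must exploit the $N^{-\alpha/2}$ smallness from the truncation of $\Vphi$ (and occasionally an additional $N^{-1/2}$ gain from integration against $V_N$) in exactly the right way to obtain the claimed combination $N^{-\alpha/2}\cN_++\delta^{-1}N^{-1}(\cN_++1)^2+\delta\cL_4$. Summing all commutator terms naively would yield a much worse bound, so the estimates must be carried out term by term, distinguishing those that contract a single $V_N$--$\Vphi$ pair from those producing genuinely cubic expressions bounded by $\cL_4$.
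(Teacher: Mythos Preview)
Your overall plan matches the paper's, but there is an ordering issue that breaks the argument as written. Writing $A=\tfrac12\int\sqrt{N}V_N(y_1-y_2)a^*_{y_1}a_{y_2}a_{y_1}$, you bound the prefactor error $r(\cN_+/N)A+\hc$ by $\delta\cL_4+\delta^{-1}N^{-1}(\cN_++1)^2$ \emph{before} conjugating, and say this ``reduces the lemma to analyzing $U_q^*\cL_3^{(+)}U_q$''. But what you actually need to control is $U_q^*(rA+\hc)U_q$, and transporting your form bound through $U_q$ via the a~priori estimate~\eqref{eq:aprio_L4} introduces an additive $+N$, producing a term $\delta N\geq N^{1-\alpha/2}$ that cannot be absorbed. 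The same obstruction recurs for the Duhamel integral: your claim that ``the relative bound by $\cL_4$ is preserved under $U_q(s)$ up to an $N^{-\alpha/2}$ correction'' is not what~\eqref{eq:aprio_L4} says. The paper avoids both problems by reversing the order: one writes $U_q^*\cL_3U_q=\chi(U_q^*\cN_+U_q)\,U_q^*AU_q+\hc$ and applies Duhamel only to the factor $U_q^*AU_q$. The main term is then $\chi(U_q^*\cN_+U_q)A$ with the \emph{un-conjugated} cubic operator, so removing $\chi$ at this stage (cf.~\eqref{eq:bound_L3}) produces the bare $\cL_4$ and no $+N$. For the commutator integral one likewise does not transport a ready-made $\cL_4$ bound through $U_q(s)$; instead one inserts $U_q^*(s)U_q(s)$ between pairs of operators and applies Duhamel once more---exactly the treatment of $X_2$ in Lemma~\ref{lem:L4_T1}, see~\eqref{eq:L_4-T_1 expand3}--\eqref{eq:L_4-T_1 expand4}---so that the $V_N$-carrying pair is undressed and Cauchy--Schwarz hits the bare $\cL_4$.

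Two minor corrections: the commutator $[A+A^*,\cB_q]$ produces only one- and three-operator terms, not five (a cubic against a quadratic loses at least one pair of operators); and the one-operator terms are proportional to $a^*(\1)$ and therefore vanish identically on $\mathscr{H}_+$, so no estimate is needed for them.
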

\begin{proof}
Let us denote $\chi(\mathcal N_+) = \sqrt{(1-\frac{\cN_+}{N})_+}$. From Duhamel's formula, we obtain
\begin{align}
&U_q^* \mathcal L_3 U_q
	= \chi(U_q^* \mathcal N_+ U_q) \int \sqrt{N} V_N(y_1-y_2) a^*_{y_1} a_{y_2} a_{y_1} + \hc \\
	&\quad  +  \chi(U_q^* \mathcal N_+ U_q)  \int_0^1 U_q^*(s) \int \sqrt{N} V_N(y_1-y_2) [ a^*_{y_1} a_{y_2} a_{y_1} , \cB_q]  U_q(s) \dd s + \hc \notag
\end{align}
To obtain $\mathcal L_3^{(U_q)}$, we need to extract the main contribution from the first term.
With $|\sqrt{1-x} - 1| \leq  x$ and $U_q^* \mathcal N_+ U_q \lesssim (\mathcal N_++1)$, we find
\begin{equation}
 |1-\chi(U_q^* \mathcal N_+ U_q)|\lesssim N^{-1}(\mathcal N_++1).
\end{equation}
This allows us to bound for $\delta>0$, as in~\eqref{eq:L_2-L_4 bound},
\begin{align}
\pm &\left\{ \left(1 - \chi(U_q^* \mathcal N_+ U_q)\right) \int \sqrt{N} V_N(y_1-y_2) a^*_{y_1} a_{y_2} a_{y_1} + \hc \right\} \nn \\
	&\lesssim \delta \mathcal L_4 + \delta^{-1} \left(1 - \chi(U_q^* \mathcal N_+ U_q)\right)^2 \|V\|_{1} \mathcal N_+ \nn \\
	&\lesssim \delta \mathcal L_4 + \delta^{-1} N^{-1} \mathcal N_+^2. \label{eq:bound_L3}
\end{align}
To bound the commutator terms, one proceeds as in the proof of Lemma~\ref{lem:L4_T1}.
That is, one calculates the commutator, noticing that the terms linear in $a^*$, $a$ vanish on $\mathscr{H}_+$, and then inserts an additional factor $U^*_q(s)U_q(s)$ after a pair of creation/annihilation operators to apply Duhamel's formula once more. The details, with slightly different conventions, can be found in~\cite[Lem. 6]{HaiSchTri-22}.
\end{proof}

\subsection{High-energy boson-impurity interactions: a Weyl transformation}\label{sect:Weyl}

As a next step we apply an $x$-dependent Weyl transformation to the Hamiltonian. Its role is very similar to the one of $U_q$ used in the previous section. This transformation corrects the scalar $N \widehat W(0)$ from the excitation Hamiltonian to make the scattering length $\ao_W$ appear. Additionally, it renormalizes the high momentum part of the interaction term $Q_1$ by transforming it into a less singular interaction. 

Let $\Wphi$ be the solution to the approximate impurity-boson scattering equation~\eqref{eq:scattW} cut-off at momentum $N^\alpha$, for the same $\alpha\geq 0$ as in the previous section (cf. Equation~\eqref{eq:Wphi}).
Define the unitary
\begin{equation}
 U_W(t)=\exp\Big(t\int  \sqrt{N} \Wphi(x-y)a_y^* \dd y- \text{h.c}\Big),\qquad U_W=U_W(1),
\end{equation}
whose generator is clearly (essentially) anti-self-adjoint on $D(\cN_+^{1/2})$.
This transformation is very similar to the transformation $U_\kappa^\Lambda$ from Section~\ref{sect:scattering} with $\kappa=N^\alpha$, except that the definition of $\Wphi$ also takes the interaction term $\ud \Gamma(W_{N,x})$ into account.
Note that the generator is linear in $a$, $a^*$, but if we were to include creation and annihilation operators $b^*$, $b$, for the impurities, the kernel would be $\sqrt N\Wphi(x-y)b_x^*b_x (a_y^*-a_y)$, which is cubic in total. The transformation satisfies the identities
\begin{equation}\label{eq:U_Wa_y}
  U_W^* a_y U_W= a_y + \sqrt{N} \Wphi(y-x),
\end{equation}
and
\begin{align}
 U_W^* \cN_+ U_W &= \cN_+ + \big(a(\sqrt{N} \widetilde{\phi}_{\mathrm{I},x}) + \hc\big) + \|\sqrt N \Wphi\|^2_2 \notag\\
 &= \cN_+ + \mathcal{O}\big(N^{-\alpha/2}(\cN_+ +1)\big). \label{eq:N-U_W}
\end{align}
The result of transforming the Hamiltonian with $U_W$ is given below.
\begin{prop}\label{prop:U_W}
 Let $0<\alpha\leq 1/6$, $\ao_W$ be the scattering length of $W$ defined by~\eqref{eq:scatt_length-intro}, and recall the Hamiltonian $U_q^*\mathcal{H}_N U_q$ from Proposition~\ref{prop:U_q}. Then, we have as quadratic forms on $\mathscr{H}_+$
 \begin{align*}
  U_W^* U_q^* \mathcal{H}_N U_q U_W
  & = 4\pi  \ao_V (N-1)  + 8\pi  \ao_W \sqrt N + e_N^{(U_W)} \\
  &\quad  + \mathcal H_{\mathrm{BB}}^{(U_W)} + \mathcal H_{\mathrm{IB}}^{(U_W)} + \mathcal{E}^{(U_W)}, \notag
  \end{align*}
  where
  \begin{align}
  \mathcal H_{\mathrm{BB}}^{(U_W)}
  	&=  \dd \Gamma(-\Delta + 2\widehat{V}(0) - 8\pi \ao_{V})  +\widetilde{\mathcal{L}}_2 + \mathcal{L}_3^{U_q} +
  \mathcal{L}_4, \notag 
  \end{align}
  and (with the notation $f_x(y)=f(x-y)$)
  \begin{align*}
   \mathcal H_{\mathrm{IB}}^{(U_W)}
	&=  -\Delta_x +\Big(a^*(\widetilde W_{N,x} ) +\hc \Big) +\Big(a^*\big(\sqrt N \ui\nabla \Wphi)_x\big)^2 +\hc \Big) + \ud \Gamma(W_{N,x}) \nn \\
  &\quad   -2 a^*\big(\sqrt N  (\ui\nabla \Wphi)_x\big)\ui\nabla_x  + \hc + 2 a^*\big(\sqrt N (\ui\nabla \Wphi)_x\big)a\big(\sqrt N (\ui\nabla \Wphi)_x\big), \notag \\
  \widehat{\widetilde W}_N(p) &= 8\pi \ao_W \1_{0<|p|\leq  N^{\alpha}},\\
  e_N^{(U_W)} &= 4 \pi N (\aV - \ao_V) + 8 \pi \sqrt{N} (\aW - \ao_W) + \hspace{-6pt}\sum_{p\in 2\pi\Z^3 \atop 0<|p|\leq N^{\alpha}} \hspace{-6pt}\frac{(4\pi  \ao_{V})^2+2(4\pi \ao_W)^2}{p^2}, \notag
  \end{align*}
and the error term $\mathcal{E}^{(U_W)}$ satisfies
\begin{align*}
	 \pm \mathcal{E}^{(U_W)} &\lesssim N^{-\alpha/2}(1+\cN_+ +\mathcal{L}_4+\ud \Gamma(W_{N,x}))  + N^{-1+\alpha/2}\cN_+^2 + N^{-1/4} \ud\Gamma(|\ui \nabla|).
\end{align*}
\end{prop}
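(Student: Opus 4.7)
The plan is to apply the Weyl transformation $U_W^*(\cdot) U_W$ term by term to the Hamiltonian $U_q^* \mathcal H_N U_q$ from Proposition~\ref{prop:U_q}, using the shift identity $U_W^* a_y U_W = a_y + \sqrt{N}\,\Wphi(x-y)$ and the analogs of~\eqref{eq:Gross_1}--\eqref{eq:Gross_2} for the impurity momentum $\ui\nabla_x$. After expansion I collect the resulting operators into pure scalars, purely linear-in-$a,a^*$ contributions, mixed cross terms involving $\ui\nabla_x$, and higher-order contributions in the creation and annihilation operators.

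The central algebraic step is the identification of the renormalized interaction $a^*(\widetilde W_{N,x})+\hc$ together with the scalar $8\pi\ao_W\sqrt N$ via the scattering equation~\eqref{eq:scattW}. The purely-linear-in-$a^*$ contributions come from (i) $\ud\Gamma(-\Delta)$, (ii) $\ud\Gamma(W_{N,x})$, (iii) $Q_1^{U_q}$, and (iv) the diagonal linear piece of $U_W^*(-\Delta_x)U_W$ produced by the commutator of $\ui\nabla_x$ with the shift. Their coefficients sum, for $|p|\leq N^\alpha$, to the Fourier projection of $\sqrt N W_N(1+\Wphi)$, whose low-momentum coefficients are approximated by $8\pi\aW$ by~\eqref{eq:aW_N} and then by $8\pi\ao_W$ up to a boundary term (Lemma~\ref{lem:aM-difference}); the remaining discrepancy is controlled as in~\eqref{eq:V_N renorm error}. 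The accompanying scalar corrections $\sqrt N\widehat W(0)$, $2\sqrt{(\cdots)}N\!\int W_N\Wphi$, $N\langle\Wphi,-\Delta\Wphi\rangle$, $N\!\int W_N\Wphi^2$, and $N\|\nabla\Wphi\|^2$ collapse, again via~\eqref{eq:scattW}, to $8\pi\aW\sqrt N$, producing together with Lemma~\ref{lem:aM-difference} the main scalar $8\pi\ao_W\sqrt N$ and the boundary correction $8\pi\sqrt N(\aW-\ao_W)$ in $e_N^{(U_W)}$. The divergent sum $\sum_p 2(4\pi\ao_W)^2/p^2$ in $e_N^{(U_W)}$ then arises from the second-order Fourier correction between $\sqrt N W_N(1+\Wphi)$ and its constant approximation on $|p|\leq N^\alpha$, exactly as in~\eqref{eq:aV_N rest}.

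Expanding the square $(\ui\nabla_x + (a+a^*)(\sqrt N(\ui\nabla\Wphi)_x))^2$ inside $U_W^*(-\Delta_x)U_W$ produces directly the quadratic terms and the cross term $-2a^*(\sqrt N(\ui\nabla\Wphi)_x)\ui\nabla_x+\hc$ listed in $\mathcal H_{\mathrm{IB}}^{(U_W)}$, while the double-$x$-derivative pure-linear contribution has already been absorbed in the previous step. The shifts of the remaining bosonic operators $\widetilde{\mathcal L}_2$, $\mathcal L_3^{U_q}$, $\mathcal L_4$, and $\ud\Gamma(N\widehat V_N(-\ui\nabla)+\widehat V(0)-8\pi\ao_V)$ only generate subleading contributions, bounded using the Cauchy--Schwarz estimate~\eqref{eq:est_CS} and the scattering-solution bounds of Lemma~\ref{lem:phi_W} (in particular $\|\sqrt N\Wphi\|_2\lesssim N^{-\alpha/2}$); they combine into the error $\mathcal E^{(U_W)}$.

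The main obstacle is twofold. First, I must keep careful track of which cross term ends up where when expanding the square, ensuring that the diagonal pure-linear piece produced by $[\ui\nabla_x,(a+a^*)(\sqrt N(\ui\nabla\Wphi)_x)]$ is exactly the piece needed to close the scattering-length identification of the previous paragraph; any miscount would destroy the cancellation that leads to the coefficient $8\pi\ao_W$. Second, controlling the interplay between the $x$-dependent Weyl shift and the nonlinear factor $\sqrt{(1-U_q^*\cN_+U_q/N)_+}$ inside $Q_1^{U_q}$ requires a Duhamel expansion analogous to the one used in the proof of Lemma~\ref{lem:Hint_T1}, together with the bound $U_W^*\cN_+U_W=\cN_++\mathcal O(N^{-\alpha/2}(\cN_++1))$ from~\eqref{eq:N-U_W}, to propagate error estimates through the operator-valued square root.
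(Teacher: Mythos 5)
Your proposal follows the paper's proof in its main structure: applying the exact Weyl shift~\eqref{eq:U_Wa_y} and the $x$-derivative identity (analog of~\eqref{eq:Gross_2}) term by term, collecting scalar, linear and higher-order contributions, and closing the identification of $8\pi\ao_W\sqrt N$ and $a^*(\widetilde W_{N,x})+\hc$ through the scattering equation~\eqref{eq:scattW}, the definition~\eqref{eq:aW_N}, and Lemma~\ref{lem:aM-difference}. Your enumeration of the four sources of linear terms (the two Laplacians, $\ud\Gamma(W_{N,x})$, and $Q_1^{U_q}$), summing to the cut-off Fourier projection of $\sqrt N\big(W_N(1+\Wphi)-2\Delta\Wphi\big)$, and of the divergent sum as the analog of~\eqref{eq:aV_N rest}, is the right algebraic step.

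Two gaps. First, the Duhamel expansion you call for when handling $Q_1^{U_q}$ is not needed and suggests a more complicated route than the paper takes: unlike the quadratic generator of $U_q$, the Weyl generator is linear in $a,a^*$, so the shift property and~\eqref{eq:N-U_W} are exact identities, not a commutator series. The paper conjugates $Q_1^{U_q}$ directly by the shift and replaces the transformed square root by $1$ at the cost of $\mathcal O(N^{-1}(\cN_++1))$ acting on the remaining bounded pieces.

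Second, and more substantively, the error control you invoke for the shifted $\mathcal L_4$ and $\mathcal L_3^{U_q}$ cannot produce the stated bound, in particular the $N^{-1/4}\ud\Gamma(|\ui\nabla|)$ term. The Weyl shift of $\mathcal L_4$ generates the second-quantized multiplication operator $\ud\Gamma(NV_N\ast|\Wphix|^2)$ (see~\eqref{eq:Weyl-L4-dGamma-lin}); the Cauchy--Schwarz estimate~\eqref{eq:est_CS} and the $L^2$-bound $\|\sqrt N\Wphi\|_2\lesssim N^{-\alpha/2}$ that you cite cannot control this, since a naive estimate of the potential, such as $\|NV_N\ast|\Wphi|^2\|_\infty\lesssim\|NV_N\|_1\|\Wphi\|_\infty^2\lesssim 1$ (and likewise the $\ell^1$-norm of its Fourier coefficients is $\mathcal O(1)$), gives only $\mathcal O(\cN_+)$ with no small prefactor. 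The paper's Lemma~\ref{lem:Weyl-BB} resolves this via the Sobolev embedding $H^{1/2}(\T^3)\hookrightarrow L^3$, bounding the multiplication operator as a form from $H^{1/2}$ to $H^{-1/2}$ by its $L^3$-norm, which yields~\eqref{eq:Weyl-error-dGamma}: $\ud\Gamma(NV_N\ast|\Wphix|^2)\lesssim N^{-(1+\alpha)/3}\ud\Gamma(|\ui\nabla|)$. An analogous $L^3$-based estimate is required for $\ud\Gamma(NV_N\ast\Wphix)$ arising from $\mathcal L_3^{U_q}$. Note also that a cruder bound by $\ud\Gamma(-\Delta)$ would not serve downstream, since $Q(H_\mathrm{BF})\nsubseteq Q(\ud\Gamma(-\Delta))$; the specific operator $\ud\Gamma(|\ui\nabla|)$ is essential. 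Without the Sobolev ingredient your estimate of $\mathcal E^{(U_W)}$ does not close.
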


\begin{proof}
Let us rewrite the result of Proposition~\ref{prop:U_q} as
\begin{align}
 U_q^* \cH_N U_q
 	&= 4 \pi \ao_V (N-1) +e_N^{(U_q)}+   \mathcal{H}_{\mathrm{IB}} +  \mathcal{H}_{\mathrm{BB}}  +  \cE^{(U_q)},
\end{align}
 with
  \begin{align}
     \mathcal{H}_{\mathrm{IB}} &= \sqrt N \widehat W(0) -\Delta_x + \ud \Gamma(-\Delta) +Q_1^{U_q}+\ud \Gamma(W_{N,x}), \notag \\
    \mathcal{H}_{\mathrm{BB}} &= \ud\Gamma(N\widehat{V}_N(-\ui\nabla) + \widehat V(0)-8\pi \ao_{V}) + \widetilde{\mathcal{L}}_2+\mathcal{L}_3^{U_q} + \mathcal{L}_4.
 \end{align}
In Lemma~\ref{lem:Weyl-BB} below, we prove that $\mathcal{H}_{\mathrm{BB}}$ remains essentially unchanged by the transformation $U_W$, except that we will be able to simplify $ N\widehat{V}_N(-\ui\nabla)$ to $\widehat V(0)$.  The relevant changes thus occur in the part of the Hamiltonian describing non-interacting bosons coupled to the impurity, that is in $\mathcal{H}_{\mathrm{IB}}$. Let us therefore consider $U_W^*\mathcal{H}_{\mathrm{IB}}U_W$.

\noindent\textit{Estimating $U_W^*  Q_1^{U_q} U_W$}. The action of $U_W$ on $ Q_1^{U_q}$ is easily evaluated using the identity~\eqref{eq:U_Wa_y}:
\begin{align}
  U_W^* Q_1^{U_q} U_W &= \sqrt{\left(1-\frac{U_W^* U_q^*\cN_+U_q U_W}{N}\right)_+} \Big( a(\sqrt N W_{N,x}) + N \langle W_N, \Wphi \rangle \Big) +\hc\label{eq:Weyl-Q1-lin}
  \end{align}
 By Equation~\eqref{eq:N-U_W} and Lemma~\ref{lem:N_T1}, we have for all $\Psi$
 \begin{multline}
\bigg\| \bigg(1-\sqrt{\Big(1-\frac{U_W^* U_q^*\cN_+U_q U_W}{N}\Big)_+}\bigg) \Psi\bigg\| \\
\lesssim \| N^{-1} U_W^* U_X^*\cN_+U_q U_W \Psi\| \lesssim \| N^{-1} (\cN_+ +1)\Psi\|.
 \end{multline}
By the Cauchy-Schwarz inequality,
\begin{equation}
 2  \Big|\big\langle \Phi, a(\sqrt N W_{N,x}) \Psi\big\rangle\Big|  \leq \bigg(\underbrace{\int W_{N}(x-y)\| a_y \Psi\|^2}_{=\langle \Psi,\ud \Gamma(W_{N,x}) \Psi\rangle}\bigg)^{1/2}  \bigg(\underbrace{\int N W_{N}}_{=\sqrt{N} \widehat W(0)}\bigg)^{1/2}\|\Phi\|. \label{eq:a-Q_2 bound}
\end{equation}
Thus Lemma~\ref{lem:phi_W} with~\eqref{eq:AB Young} and using that $N \langle W_N, \Wphi \rangle = \mathcal O(N^{1/2})$ by Lemma \ref{lem:phi_W} yield
\begin{align}
 U_W^* Q_1^{U_q} U_W &= 2N \Re \langle W_N, \Wphi \rangle +   (a(\sqrt N W_{N,x})+\hc) \notag\\
 &\quad + \mathcal{O}\big(N^{-1/2}\ud \Gamma(W_{N,x}) + N^{-1/2}(\cN_++1) + N^{-1}(\cN_++1)^2\big), \label{eq:Weyl Q_1}
\end{align}
where the last line can be absorbed in $\mathcal{E}^{(U_W)} $.

\noindent\textit{Estimating $U_W^* (\dd\Gamma(-\Delta)+ \ud \Gamma(W_{N,x})) U_W$}.  With the shift property~\eqref{eq:U_Wa_y}, we find
\begin{subequations}
  \begin{align}
   U_W^* \ud \Gamma(W_{N,x}) U_W&=  \ud \Gamma(W_{N,x}) +\big(a\big( \sqrt N (W_N \Wphi)_x\big) +\hc\big) + N \langle \Wphi, W_N \Wphi \rangle, \label{eq:Weyl-Q2} \\
U_W^* \dd\Gamma(- \Delta) U_W&= \ud \Gamma(-\Delta) -\big(a\big(\sqrt{N} (\Delta \Wphi)_x\big) +\hc\big) -  N \langle \Wphi, \Delta \Wphi\rangle \label{eq:Weyl-DeltaN}.
 \end{align}
\end{subequations}
We keep these terms to combine them with those coming from $Q_1$ and $-\Delta_x$ later. We also note that, applying the Cauchy-Schwarz inequality as in~\eqref{eq:a-Q_2 bound}, we have the a priori estimate
\begin{align}
U_W^* \ud \Gamma(W_{N,x}) U_W
	&\lesssim \ud \Gamma(W_{N,x}) + N \langle \Wphi, W_N \Wphi \rangle \lesssim \ud \Gamma(W_{N,x}) + \sqrt N. \label{eq:apriori_Q2_T2}
\end{align}

\noindent\textit{Estimating $U_W^* \Delta_x U_W$}.
As the generator of the transformation $U_W$ depends on the position $x$ of the particle, $-\Delta_x$ also transforms non-trivially.
Using that $\nabla_x \widetilde\phi_{\textrm{I},x}=- (\nabla \Wphi)_x$, the shift property~\eqref{eq:U_Wa_y}, and that $\Wphi$ is real, we find
\begin{align}
 U_W^* (-\ui \nabla_x) U_W &=-\ui \nabla_x + \int_0^1 \ud s\, U_W^*(s) \big[-\ui \nabla_x, a^*(\sqrt{N} \Wphix)-\hc\big] U_W(s)\notag  \\
 &=-\ui \nabla_x - \big(  a^*( \sqrt N  (-\ui \nabla \Wphi)_x)+\hc\big), \label{eq:Weyl-x-shift}
\end{align}
and thus
\begin{align}
 &U_W^* (-\Delta_x) U_W  = \Big(-\ui \nabla_x -\big( a^*\big(\sqrt{N}  (-\ui \nabla \Wphi)_x\big)+\hc\big)\Big)^2 \notag \\
 &=-\Delta -a^*\big(\sqrt N (\ui \nabla \Wphi)_x\big)2\ui \nabla_x  +\hc  \notag\\
  &\quad + 2 a^*\big(\sqrt N (-\ui \nabla \Wphi)_x\big) a\big(\sqrt N (-\ui \nabla \Wphi)_x\big)
  + a^*\big(\sqrt N (-\ui \nabla \Wphi)_x\big)^2 +\hc \notag \\
  &\quad  + a^*\big(\sqrt{N} (-\Delta \Wphi)_x\big) +\hc + N\langle \Wphi, (-\Delta) \Wphi).\label{eq:Weyl-Delta_y}
 \end{align}

\noindent\textit{Combining the scalar and linear terms}.
We now combine the scalars and terms linear in $a, a^*$ from these equations, which add up in a non-trivial way. The remaining errors will be estimated at the end.
Together with $\sqrt{N}\widehat W(0)$, the scalar terms from~\eqref{eq:Weyl Q_1},~\eqref{eq:Weyl-Q2},~\eqref{eq:Weyl-DeltaN} add up to
 \begin{align}
  &\sqrt{N}\widehat W(0)+2N  \Re\langle W_N, \Wphi\rangle + N \langle (-2\Delta + W_N) \Wphi,  \Wphi \rangle.
    %
 \end{align}
Similarly as in~\eqref{eq:aV_N rest}, using the scattering equation~\eqref{eq:scattW}, the definition of $\aW$ in \eqref{eq:aW_N} and that $\aW = \ao_W + \mathcal O(N^{-1/2})$ from Lemma \ref{lem:aM-difference}, we obtain
 \begin{align}
  &\sqrt{N}  8\pi \aW + \langle W_N(1+\Wphi),\Wphi-\phi_\mathrm{I}\rangle \notag\\
  \quad&= \sqrt{N}  8\pi \aW + 2\sum_{|p|<N^{\alpha}} \frac{(4\pi  \ao_W)^2}{p^2} + \mathcal{O}(N^{2\alpha-1/2}).
 \end{align}

Gathering the terms linear in $a$, $a^*$ from~\eqref{eq:Weyl Q_1},~\eqref{eq:Weyl-Q2},~\eqref{eq:Weyl-Delta_y} and~\eqref{eq:Weyl-DeltaN} yields
 \begin{align}
  a\Big( \sqrt{N} \big(\underbrace{ (-2 \Delta+ W_N) \Wphi  + W_N }_{=:f}\big)_x\Big)+\hc
 \end{align}
Using the scattering equation~\eqref{eq:scattW} for $\phi_{\textrm{I}}$, we see that the Fourier transform of the function $f$ above equals
\begin{align}
 \widehat f (p)&= - p^2 \widehat \phi_\mathrm{I}(p)\1_{0<|p|\leq N^{\alpha}} -   \widehat{W_N(\phi_\mathrm{I}- \Wphi)} (p) \notag \\
 &=\widehat{W_N(1+\phi_{\textrm{I}})}(p) \1_{0<|p|\leq N^{\alpha}} -  \widehat{W_N(\phi_\mathrm{I}- \Wphi)} (p).
\end{align}
By the same reasoning as~\eqref{eq:V_N renorm error} and using that $\aW = \ao_W + \mathcal O(N^{-1/2})$ from Lemma \ref{lem:aM-difference}, we have with $\alpha\leq 1/6$
\begin{align}
 %
 \|(\widehat{W_N(1+\phi_{\textrm{I}})}(p) -8\pi \ao_W)\1_{|p|\leq N^{\alpha}}\|_2 &\lesssim N^{5\alpha/2-1} + N^{3\alpha/2 -1/2} \leq N^{-\alpha/2}.\label{eq:W_N renorm error}
\end{align}
Moreover, using the Cauchy-Schwarz inequality as in~\eqref{eq:a-Q_2 bound} and Lemma~\ref{lem:phi_W}, we have
\begin{align}
 \pm \big (a(\sqrt N W_{N,x}(\phi_\mathrm{I}- \Wphi)_x) + \hc) \lesssim \delta \ud \Gamma(W_{N,x}) +\delta^{-1} N^{-1/2+2\alpha}.
\end{align}
Choosing $\delta=N^{-\alpha/2}$ and using that $\alpha<1/4$, this yields
\begin{equation}
 a(\sqrt N f_x)+\text{h.c} = a(\widetilde W_{N,x}) +\text{h.c} + \mathcal{O}(N^{-\alpha/2} (1+\ud \Gamma(W_{N,x}))),
\end{equation}
with the modified interaction $\widetilde W_N$.
This proves that
\begin{multline}
 U_W^* \cH_{\mathrm{IB}} U_W =8\pi  \aW \sqrt{N}+ 2\sum_{|p|\leq N^\alpha} \frac{(4\pi  \ao_W)^2 }{p^2}+ \ud \Gamma(-\Delta) + \cH_{\mathrm{I}}^{(U_W)}  \\
  + \mathcal{O}\big(N^{-\alpha/2} (\ud \Gamma(W_{N,x})+ \cN_+ +1) + N^{-1}(\cN_++1)^2 \big).
\end{multline}

\noindent\textit{Estimating $U_W^* \cE^{(U_q)}  U_W $}.
It remains to track the behavior of the error terms $\mathcal{E}^{(U_q)}$ under $U_W$.
Using the a priori estimates from Lemma \ref{lem:Weyl-BB} together with \eqref{eq:apriori_Q2_T2} (note that $\ud \Gamma(W_{N,x})$ appears with an additional factor $N^{-1/2}$ in the error $\cE^{(U_q)}$), we obtain that $U_W^* \mathcal{E}^{(U_q)} U_W $ can be absorbed in $\mathcal{E}^{(U_W)}$.
This completes the proof.
\end{proof}

\begin{lem}\label{lem:Weyl-BB}
 For $\alpha>0$, we have as quadratic forms on $\mathscr{H}_+$
 \begin{equation*}
  U_W^*\big(\widetilde{\mathcal{L}}_2+\ud\Gamma(N\widehat{V}_N(-\ui\nabla)) +\mathcal{L}_3^{U_q} + \mathcal{L}_4\big)U_W= \widetilde{\mathcal{L}}_2 +\widehat{V}(0)\cN_++\mathcal{L}_3^{U_q} + \mathcal{L}_4 +\mathcal{E}_\mathrm{BB}^{(U_W)},
 \end{equation*}
with
\begin{equation*}
 \pm\mathcal{E}_\mathrm{BB}^{(U_W)} \lesssim N^{-\alpha/2}(1+\cN_++\mathcal{L}_4) +N^{-1/3+\alpha/6}\ud\Gamma(|\ui \nabla|).
 \end{equation*}
Moreover, we have for $t \geq 0$
 \begin{align*}
  U_W^* (\mathcal N_+ + 1)^t U_W &\lesssim (\mathcal N_+ +1)^t, 
\end{align*}
and
\begin{align*}
 U_W^* \mathcal L_4 U_W &\lesssim \mathcal L_4 + N^{-1/3+\alpha/6} \ud\Gamma(|\ui \nabla|) + N^{-\alpha/2}(\cN_++1). 
\end{align*}
\end{lem}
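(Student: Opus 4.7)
The Weyl transformation $U_W$ is shift-generating and acts trivially on the impurity variable, so the shift property~\eqref{eq:U_Wa_y} reduces every conjugation to an algebraic expansion: conjugating a monomial with $k$ creation/annihilation operators produces $2^k$ terms classified by how many of the operators have been replaced by the c-number $\sqrt N \widetilde\phi_{\mathrm{I}}(y-x)$. By Lemma~\ref{lem:phi_W}, $\|\sqrt N \Wphi\|_2\lesssim N^{-\alpha/2}$, so every shift should cost a factor $N^{-\alpha/2}$; the plan is to make this heuristic rigorous while controlling the operator-valued coefficients that dress each shift.

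\textbf{Term-by-term analysis.} I expand the four operators in turn. For $\ud\Gamma(N\widehat V_N(-\ui\nabla))$, the shift produces $\ud\Gamma(N\widehat V_N(-\ui\nabla))$ plus a linear-in-$a^\#$ term with coefficient $N V_N \ast \sqrt N\Wphi$ (bounded by $\|NV_N\|_1\|\sqrt N \Wphi\|_2 \lesssim N^{-\alpha/2}$) plus a scalar $\langle \sqrt N\Wphi, N\widehat V_N\ast \sqrt N\Wphi\rangle=\mathcal O(N^{-\alpha})$. I then use $|N\widehat V_N(p)-\widehat V(0)|=|\widehat V(p/N)-\widehat V(0)|\lesssim |p|/N$ (from $|y|V\in L^1$ by compact support and $V\in L^2$) to replace $\ud\Gamma(N\widehat V_N(-\ui\nabla))$ by $\widehat V(0)\cN_+$ at the cost of $\mathcal O(N^{-1}\ud\Gamma(|\ui\nabla|))$. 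For $\widetilde{\mathcal L}_2$ the expansion gives back $\widetilde{\mathcal L}_2$ plus linear and scalar remainders, each absorbed using $\|\widetilde V_N\|_1\lesssim 1$ and the smallness of $\sqrt N\Wphi$. For $\mathcal L_3^{U_q}$ and $\mathcal L_4$ the number of produced terms is larger: I separate pieces that contain at least one shift $\sqrt N\Wphi(y_i-x)$ paired against a factor $V_N(y_1-y_2)$ and bound them by Cauchy--Schwarz of the form $|\langle\Psi,X\Psi\rangle|\leq \delta\langle\Psi,\mathcal L_4\Psi\rangle+\delta^{-1}\|K_\Psi\|^2$, where $K_\Psi$ is estimated using the identity~\eqref{eq:est_CS} together with $\|V_N\|_1\lesssim 1$ and $\|\sqrt N\Wphi\|_2\lesssim N^{-\alpha/2}$.

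\textbf{Main obstacle.} The nontrivial point is to produce the $N^{-1/3+\alpha/6}\ud\Gamma(|\ui\nabla|)$ contribution. This term arises from the pieces of $U_W^*\mathcal L_4 U_W$ (and analogously $U_W^*\mathcal L_3^{U_q}U_W$) in which a shift $\sqrt N\Wphi(y_i-x)$ is paired to $V_N$ in such a way that the product $V_N(y_1-y_2)\sqrt N\Wphi(y_j-x)$ cannot be $L^1$-bounded uniformly in $N$ due to the $N^2$ scaling of $V_N$ against the $N^{1/2}$-scaling of $\Wphi$. To control it, I interpolate: splitting the kernel into a bounded part controlled by $\mathcal L_4$ and a singular part controlled by $\ud\Gamma(|\ui\nabla|)$ via the Hardy-type estimate $\|V_N^{1/2} a_y\Psi\|^2\lesssim \langle\Psi,\mathcal L_4\Psi\rangle$ on one side and $\int|\Wphi|^2|a_y\Psi|^2\lesssim \||\nabla|^{1/2}\Wphi\|_2^2\langle\Psi,\ud\Gamma(|\ui\nabla|)\Psi\rangle$ on the other (using Lemma~\ref{lem:phi_W}). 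Optimizing the interpolation parameter $\delta\sim N^{-1/3+\alpha/6}$ yields the stated exponent.

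\textbf{A priori bounds.} For the moment bound $U_W^*(\cN_++1)^tU_W\lesssim (\cN_++1)^t$, I use the identity $U_W^*\cN_+U_W=\cN_++a(\sqrt N\Wphi_x)+a^*(\sqrt N\Wphi_x)+N\|\Wphi\|_2^2$, Cauchy--Schwarz $|a(\sqrt N\Wphi_x)+\hc|\leq\cN_++N\|\Wphi\|_2^2$, and operator monotonicity of $x\mapsto x^s$ for $0\leq s\leq 1$ extended to integer $t$ as in Lemma~\ref{lem:N_T1}. For $U_W^*\mathcal L_4 U_W$ the same Duhamel/expansion scheme as in the main proof, combined with Grönwall, gives the stated bound, the $\ud\Gamma(|\ui\nabla|)$ term being inherited from the interpolation above.
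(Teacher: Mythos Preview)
Your overall architecture matches the paper exactly: the shift identity~\eqref{eq:U_Wa_y} reduces each conjugation to a finite algebraic expansion, the $\ud\Gamma(N\widehat V_N(-\ui\nabla))$ and $\widetilde{\mathcal L}_2$ analyses are correct, and the $(\cN_++1)^t$ bound is handled as in the paper. The gap is in your ``Main obstacle'' paragraph, where the mechanism producing the $N^{-1/3+\alpha/6}\ud\Gamma(|\ui\nabla|)$ term is misidentified.

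Concretely: after the Cauchy--Schwarz split of the cubic error in $U_W^*\mathcal L_4 U_W$, the operator that appears is $\ud\Gamma\big(N V_N\ast|\widetilde\phi_{\mathrm I,x}|^2\big)$, not $\ud\Gamma(|\Wphi|^2)$; the convolution with $N V_N$ is essential. Your proposed bound $\ud\Gamma(|\Wphi|^2)\lesssim \||\nabla|^{1/2}\Wphi\|_2^2\,\ud\Gamma(|\ui\nabla|)$ is not a valid operator inequality: multiplication by $|\Wphi|^2$ is controlled from $H^{1/2}$ to $H^{-1/2}$ by the $L^3$ norm of the multiplier (Sobolev $H^{1/2}\hookrightarrow L^3$), i.e.\ by $\|\Wphi\|_6^2$, not by $\||\nabla|^{1/2}\Wphi\|_2^2$. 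The paper therefore bounds $\ud\Gamma(N V_N\ast|\Wphi|^2)\lesssim \|N V_N\ast|\Wphi|^2\|_3\,\ud\Gamma(|\ui\nabla|)$ and uses Young plus Lemma~\ref{lem:phi_W} to get $\|N V_N\ast|\Wphi|^2\|_3\lesssim N^{-(1+\alpha)/3}$. The exponent $N^{-1/3+\alpha/6}$ then arises not from optimizing $\delta$ but from the fixed choice $\delta=N^{-\alpha/2}$ in the Cauchy--Schwarz step, which yields the product $N^{\alpha/2}\cdot N^{-(1+\alpha)/3}=N^{-1/3+\alpha/6}$. With your claimed constant $\||\nabla|^{1/2}\Wphi\|_2^2\sim N^{-1}\log N$ the arithmetic would not reproduce this exponent, which is another sign the estimate is off.

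A minor point: for the a priori bound on $U_W^*\mathcal L_4 U_W$ there is no need for Duhamel or Gr\"onwall, since the Weyl shift gives an exact finite expansion; the paper simply bounds each of the resulting terms directly.
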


\begin{proof}
\textit{Estimating $U_W^*(\mathcal N_+ + 1)^k U_W$.} For integer $t$, this bound follows directly from the identity~\eqref{eq:N-U_W} together with~\eqref{eq:AB Young}. The general case then follows by interpolation~\cite{ReeSim2}.

\noindent\textit{Estimating $ U_W^* \widetilde{\mathcal{L}}_2 U_W $.} For the quadratic term $\widetilde{\mathcal{L}}_2$ we calculate using the shift property~\eqref{eq:U_Wa_y}
\begin{align}
 U_W^* \widetilde{\mathcal{L}}_2 U_W
 & =   \widetilde{\mathcal{L}}_2 +2  \int \sqrt N \widetilde{V}_N(y_1-y_2) \Wphi(x-y_1) a_{y_2} +\hc\notag \\
 &\quad  + \underbrace{\int N \widetilde{V}_N(y_1-y_2)\Wphi(y_1)\Wphi(y_2)}_{\lesssim \|\widehat{\widetilde V}_N\|_\infty \|\sqrt N\Wphi\|_2^2 \lesssim N^{-\alpha}} .
\end{align}
With $\| \widetilde{V}_N\ast \sqrt N \Wphi\|_2 \lesssim \|\widehat{\widetilde{V}}_N\|_\infty \|\sqrt N \Wphi\|_2 \lesssim N^{-\alpha/2}$ from Lemma~\ref{lem:phi_W}, we thus find
\begin{equation}
 U_W^* \widetilde{\mathcal{L}}_2 U_W=  \widetilde{\mathcal{L}}_2 + \mathcal{O}\big(N^{-\alpha/2}(\cN_++1)\big).
\end{equation}

\noindent\textit{Estimating $ U_W^* \ud\Gamma(N\widehat{V}_N(-\ui\nabla)) U_W $.} Similarly as above, we find
\begin{align}
 U_W^*\dd \Gamma( N \widehat{V}_N(-\ui\nabla))U_W&= \dd \Gamma(N\widehat {V}_N(-\ui\nabla)) +\Big( a\big( \underbrace{N^{3/2} (\widehat {V}_N(-\ui\nabla)\Wphi)_x}_{\|\cdot\|_2 \lesssim N^{-\alpha/2}}\big) +\hc\Big) \notag \\
 &\qquad + \underbrace{N^2\langle \widetilde{\phi}_{\textrm{I}} , \widehat{V}_N(-\ui\nabla) \widetilde{\phi}_{\textrm{I}}\rangle}_{\lesssim N^{-\alpha}},
\end{align}
and 
\begin{multline}
\dd \Gamma(N\widehat{V}_N(-\ui\nabla))\\ = \sum_{0\neq p\in 2\pi\Z^3} \widehat{V} (p/N)a^*_p a_p = \widehat{V}(0) \mathcal N_+ + \underbrace{\sum_p \left(\widehat{V} (p/N) - \widehat{V}(0)\right)  a^*_p a_p}_{\lesssim N^{-1} \dd \Gamma(|\ui \nabla|)},
\end{multline}
where it is used that $|y| V \in L^1(\mathbb{R}^{3})$, as in (\ref{eq:V_N renorm error}).

\noindent\textit{Estimating $ U_W^* \widetilde{\mathcal{L}}_4 U_W $.} For the quartic term we have
\begin{subequations}
 \begin{align}
 & U_W^*\mathcal{L}_4 U_W - \mathcal{L}_4=\frac{1}{2} \int V_N(y_1-y_2) U_W^* a^*_{y_1}a^*_{y_2} a_{y_1} a_{y_2}U_W - \mathcal{L}_4 \nn\\
 &\quad=   \int    \sqrt NV_N(y_1-y_2) \Wphi(x-y_1) a^*_{y_1}a^*_{y_2} a_{y_1} +\hc\label{eq:Weyl-cubic-error} \\
 &\qquad +\frac12 \int   NV_N(y_1-y_2) \Wphi(x-y_1) \Wphi(x-y_2) a^*_{y_1}a^*_{y_2} +\hc \label{eq:Weyl-L_4-quad} \\
 &\qquad +\int    NV_N(y_1-y_2) \Wphi(x-y_1) \Wphi(x-y_2) a^*_{y_1}a_{y_2}
 \label{eq:Weyl-L_4-qmixed}\\
 &\qquad + \ud \Gamma(NV_N \ast |\widetilde{\phi}_{\mathrm{I},x}|^2) +\Big( N^{3/2} a\big( ( V_N \ast |\widetilde{\phi}_{\mathrm{I},x}|^2)  \widetilde{\phi}_{\mathrm{I},x}\big)+ \hc\Big) \label{eq:Weyl-L4-dGamma-lin} \\
 &\qquad + \frac12 {\int  N^2|\widetilde{\phi}_{\mathrm{I},x}(y_1)|^2|\widetilde{\phi}_{\mathrm{I},x}(y_2)|^2 V_N(y_1-y_2)} \label{eq:Weyl-L_4-scal}
\end{align}
\end{subequations}
To start with, the Hölder inequality yields for the scalar in the last line
\begin{align}
0 \leq \eqref{eq:Weyl-L_4-scal} \lesssim \|N V_N\|_1 \|\sqrt{N}\Wphi\|_2^2 \|\Wphi\|_\infty^2 \lesssim  N^{-\alpha}.
\end{align}
The second term in~\eqref{eq:Weyl-L4-dGamma-lin} can easily be estimated using Lemma~\ref{lem:phi_W} by
\begin{multline}
 \pm \Big(N^{3/2} a\big( ( V_N \ast |\widetilde{\phi}_{\mathrm{I},x}|^2)  \widetilde{\phi}_{\mathrm{I},x}\big)+ \hc\Big)\\
 \lesssim \|N V_N\|_{1} \|\Wphi\|_\infty^2 \|\sqrt{N}\Wphi\|_2(\mathcal N_+ +1 )^{1/2} \lesssim N^{-\alpha/2} (\mathcal N_+ +1)^{1/2}.
\end{multline}
The first term in~\eqref{eq:Weyl-L4-dGamma-lin} is non-negative and will also be used as a reference to bound other error terms later. By the Sobolev embedding of $H^{1/2}(\T^3)$ into $L^3$, the norm of a multiplication operator from $H^{1/2}$ to $H^{-1/2} $ is controlled by the $L^3$-norm of the multiplier. By Lemma~\ref{lem:phi_W} and Young's inequality we thus have the bound
\begin{align}
 |\langle \Psi, \ud \Gamma(NV_N \ast |\widetilde{\phi}_{\mathrm{I},x}|^2) \Psi \rangle|
 &\lesssim \|NV_N \ast |\Wphi|^2\|_3 \langle  \Psi, \ud\Gamma(|\ui \nabla|) \Psi\rangle \notag\\
 &\lesssim N^{-(1+\alpha)/3} \langle  \Psi, \ud\Gamma(|\ui \nabla|) \Psi\rangle. \label{eq:Weyl-error-dGamma}
\end{align}

We can now estimate the other error terms, starting with the cubic term \eqref{eq:Weyl-cubic-error}. Using the Cauchy-Schwarz inequality as in~\eqref{eq:L_2-L_4 bound} we obtain
\begin{align}
\pm \eqref{eq:Weyl-cubic-error}
	&\lesssim N^{-\alpha/2} \mathcal L_4 + N^{\alpha/2}  \ud \Gamma(NV_N \ast |\widetilde{\phi}_{\mathrm{I},x}|^2).
\end{align}

For the quadratic term \eqref{eq:Weyl-L_4-quad}, we have similarly by the Cauchy-Schwarz inequality
\begin{align}
\pm \eqref{eq:Weyl-L_4-quad} &\lesssim N^{-\alpha/2} \mathcal{L}_4 + N^{\alpha/2} \eqref{eq:Weyl-L_4-scal} \lesssim N^{-\alpha/2} \left(\mathcal L_4 + 1\right).
\end{align}
A similar reasoning gives
\begin{align}
  \pm\eqref{eq:Weyl-L_4-qmixed} \lesssim  \ud \Gamma(NV_N \ast |\widetilde{\phi}_{\mathrm{I},x}|^2) \lesssim  N^{-(1+\alpha)/3} \ud\Gamma(|\ui \nabla|). \label{eq:Weyl-L_4-qmixed-error}
\end{align}
We have thus obtained that
\begin{equation}
 U_W^*\mathcal{L}_4 U_W = \mathcal{L}_4 +\mathcal{O}\Big( N^{-\alpha/2}\mathcal{L}_4  + N^{-\alpha/2}(\cN_++1) + N^{-1/3+\alpha/6 } \ud\Gamma(|\ui \nabla|)  \Big).
\end{equation}
In particular, this proves the claim on $U_W^*\mathcal{L}_4 U_W$.

\noindent\textit{Estimating $ U_W^* \mathcal{L}_3^{U_q} U_W $.}
For the cubic term $\mathcal{L}_3$ we have
\begin{subequations}
 \begin{align}
& U_W^*\mathcal{L}_3^{U_q} U_W -\mathcal{L}_3^{U_q}  = \frac{1}{2} \int \sqrt N V_N(y_1-y_2) U_W^* a^*_{y_1} a^*_{y_2} a_{y_1}  U_W +\hc  - \mathcal{L}_3^{U_q} \notag \\
 &\quad= \dd \Gamma(NV_N \ast \Wphix) +\frac12\int  N V_N(y_1-y_2) \Wphix(y_1) a^*_{y_2}a_{y_1} +\hc\label{eq:Weyl-L3-dGamma} \\
 %
   %
 & \qquad +  \frac12\int  N V_N(y_1-y_2) \Wphi(x-y_1) a^*_{y_1} a^*_{y_2} +\hc \label{eq:Weyl-L_3-quad}\\
 &\qquad + \frac32 a\big( (N^{3/2} V_N \ast \Wphix) \Wphix\big) +\hc \label{eq:Weyl-L_3-lin} \\
&\qquad + \frac12 \int  N^{2} V_N(y_1-y_2) \Wphix(y_2)|\Wphix(y_1)|^2 .\label{eq:Weyl-L_3-scal}
 \end{align}
\end{subequations}
These terms are estimated by similar means as for $\cL_4$.
Similarly to~\eqref{eq:Weyl-error-dGamma} and~\eqref{eq:Weyl-L_4-qmixed-error}, we obtain
\begin{align}
\pm\eqref{eq:Weyl-L3-dGamma}&\lesssim  (\|N V_N \ast |\Wphi|\|_3 + \|N V_N\|_1\|\Wphi\|_3)  \ud\Gamma(|\ui \nabla|) \lesssim N^{-1/3-\alpha/3} \ud\Gamma(|\ui \nabla|).
\end{align}
For the quadratic terms in \eqref{eq:Weyl-L_3-quad}, we have by the Cauchy-Schwarz inequality
\begin{align}
\pm \eqref{eq:Weyl-L_3-quad}
	&\lesssim N^{-\alpha/2} \mathcal{L}_4 +  N^{\alpha/2} N^2 \int  V_N(y_1-y_2)  |\Wphi(y_2)|^2 \notag \\
	&\lesssim N^{-\alpha/2} \left( \mathcal{L}_4 + 1\right).
\end{align}
The linear terms in \eqref{eq:Weyl-L_3-lin} are estimated by
\begin{align}
\pm \eqref{eq:Weyl-L_3-lin} \leq 2\|(V_N\ast \Wphi)\Wphi\|_2 \cN_+^{1/2} \lesssim N^{-\alpha/2} (\mathcal N_+ + 1).
\end{align}
Finally, the scalar term (\ref{eq:Weyl-L_3-scal}) is bounded by the Hölder inequality and Lemma~\ref{lem:phi_W}
\begin{align}
| \eqref{eq:Weyl-L_3-scal} | \lesssim \|N V_N\|_1  \| \Wphi\|_2  N \|  \Wphi\|_4^2 \lesssim N^{-\alpha}.
\end{align}
Together, these bounds show that
\begin{equation}
U_W^*\mathcal{L}_3^{U_q} U_W =  \mathcal{L}_3^{U_q}+\mathcal{O}\Big( N^{-\alpha/2} \left( \mathcal{L}_4 +  \cN_+ + 1\right) + N^{-1/3 + \alpha/6} \ud\Gamma(|\ui \nabla|) \Big).
\end{equation}
This completes the proof of the Lemma.
\end{proof}

\subsection{Soft boson pairs: a cubic transformation}\label{sect:cubic}

We now apply a transformation in the boson variables whose generator is cubic in creation and annihilation operators.
This transformation eliminates the cubic term $\cL_3^{U_q}$ while turning $\ud \Gamma(2\widehat V (0) - 8 \pi \ao_{V})$ into $\ud \Gamma(8 \pi \ao_{V})$.

Such a renormalization has its roots in \cite{YauYin-09}, where a cubic non-unitary transformation was used to derive the upper bound to the Lee-Huang-Yang formula. Later in \cite{BocBreCenSch-20,BocBreCenSch-18}, a unitary version was implemented to prove optimal rate of condensation and derive the low-lying excitation spectrum in the Gross--Pitaevskii regime.

Its role is to account for the interactions of the so-called ``soft pairs'' of high energy modes, which, in contrast to hard pairs that annihilate to yield two particles in the condensate, only annihilate to give one particle in the condensate and one low energy mode. As before, we will distinguish between low and high energy modes with the cutoff $N^\alpha$.

In \cite{NamTri-23}, a cutoff in the number of excitations $\theta_M(\mathcal N_+) \simeq \mathds{1}_{\mathcal N_+ \leq M}$ was introduced in the kernel of the cubic transformation. The idea is that if we know a priori a condensation rate on the low-lying eigenvectors $\braket{\Psi,\mathcal N_+\Psi} \lesssim M_0 \ll N$, it is enough to renormalize the sectors with $\mathcal N_+ \leq M$ for $M \gg M_0$. 
The rate of condensation provided by Condition~\ref{cond:BEC} means that  we can work on sectors with $\mathcal N_+ \leq M$ for some $M \gg \sqrt N $.

Our implementation of the cubic transformation essentially follows \cite{HaiSchTri-22}, although an important difference is that we have to use $\dd\Gamma(|\ui \nabla|)$ instead of $\ud \Gamma(-\Delta)$ in the error estimates, as the latter would interfere with the renormalization of the impurity-boson interaction explained in Section~\ref{sect:renorm}, because $Q(H_\mathrm{BF})\nsubseteq Q(\ud \Gamma(-\Delta))$.

Let $\Vphi$ be the truncated scattering solution for the bosons~\eqref{eq:Vphi}, as in Section~\ref{sect:quadratic}. Let $\theta:\R_+\to [0,1]$ be a smooth function satisfying $\theta(x) = 1$ for $|x| \leq 1/2$ and $\theta(x) = 0$ for $|x|\geq 1$, and set $\theta_M(x)=\theta(x/M)$.
We define for $M\geq 1$
\begin{equation}
 U_c(t)=\ue^{t\cB_c}, \quad U_c=U_c(1),
\end{equation}
with the generator
\begin{subequations}
 \begin{align}
 \mathcal B_c &= \theta_M(\cN_+) \mathcal B_c^+ - \mathcal B_c^{-} \theta_M(\cN_+), \\
 \label{eq:Bc}
\mathcal B_c^+ &= \sum_{p,q \in 2\pi \Z^3\setminus\{0\}} \sqrt{N} \Vphih(p) \1_{|q|\leq  N^{\alpha}} a^*_{p+q}a^*_{-p} a_{q}, \quad \quad  \mathcal B_c^- = (\mathcal B_c^+)^*.
\end{align}
\end{subequations}
Note that the sum is only over a finite set since $\Vphi$ contains a cutoff. Due to the cutoff $\theta_M$ in $\cN_+$, $\cB_c$ is a bounded operator, since
\begin{align}
 |\langle \Phi, \cB_c^- \Psi\rangle| & \leq \bigg(\sum_{p,q} N \Vphih(p)^2 \|a_q (\cN_++1)^{-1/2} \Phi\|^2\bigg)^{1/2} \notag \\
 &\qquad \times \bigg(\sum_{p,q}  \|a_{p+q}a_{-p}(\cN_++2)^{1/2} \Psi\|^2\bigg)^{1/2} \notag\\
 &\leq N^{-1/2} \|N \Vphi\|_2  \|\Phi\| \|(\cN_++2)^{3/2}\Psi\|. \label{eq:B^- bound}
\end{align}
Denoting by $\chi\in L^2(\T^3)$  the (even) function with Fourier coefficients $\widehat \chi(p)=\1_{|p|\leq N^\alpha}$, we can also write
\begin{align}\label{eq:B^+-position}
 \cB_c^+ = \sqrt{N} \int \chi(y_1-y_3)\Vphi(y_2-y_3) a^*_{y_3}a^*_{y_2} a_{y_1}.
\end{align}

The main result of this section is the following proposition.
\begin{prop}\label{prop:U_c}
 Let $0<\alpha\leq 1/8$ and $M=N^{1/2+\alpha/2}$.
 Let $\cH_N$ be given by Proposition~\ref{prop:H-ex} and recall its transformation by $U_qU_W$ given in Proposition~\ref{prop:U_W}.  We have as quadratic forms on $\mathscr{H}_+$
 \begin{align*}
 U_c^*& U_W^* U_q^* \mathcal{H}_N U_q U_W U_c \\
  & = 4\pi  \ao_V (N-1)  + 8\pi  \ao_W \sqrt{N} +  e_N^{(U_W)} \\
  %
  &\quad -\Delta_x +\Big(a(\widetilde W_{N,x}) +\hc \Big) +\Big(a^*\big(\sqrt{N}  (-\ui \nabla \Wphi)_x\big)^2 +\hc \Big) + \ud \Gamma(W_{N,x}) \\
  &\quad   -a^*\big(\sqrt{N} (\ui \nabla \Wphi)_x\big)2\ui \nabla_x  + \hc   + 2 a^*\big(\sqrt{N} (\ui \nabla \Wphi)_x\big)a\big(\sqrt{N} (\ui \nabla \Wphi)_x\big)\\
    &\quad +  \dd \Gamma(-\Delta + 8\pi \ao_{V})  + \widetilde{\mathcal{L}}_2 +
  \mathcal{L}_4 + \mathcal{E}^{(U_c)},
 \end{align*}
where $\mathcal{E}^{(U_c)}$ satisfies
\begin{multline*}
 \pm \mathcal{E}^{(U_c)}
 	\lesssim  N^{-\alpha/2}  \ud \Gamma(W_{N,x})  +N^{-\alpha/4}\mathcal{L}_4 + N^{-1/2-\alpha/4}\cN_+^2 \\
	+   N^{-\alpha/4}\big((\log N) \dd\Gamma(|\ui \nabla|)  +1 +(\mathcal N_++1)^{1/2} |\nabla_x|\big).
\end{multline*}
\end{prop}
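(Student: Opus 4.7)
My approach would follow the standard pattern for cubic renormalization in the Gross--Pitaevskii regime (cf.~\cite{BocBreCenSch-20, HaiSchTri-22, NamTri-23}), adapted to include the impurity and to use $\ud\Gamma(|\ui\nabla|)$ rather than $\ud\Gamma(-\Delta)$ as the bounding form for errors, since the latter would be incompatible with the Bogoliubov--Fr\"ohlich renormalization performed later in Section~\ref{sect:renorm}. As a preparation, I would first establish a priori bounds $U_c(s)^* \cN_+^t U_c(s) \lesssim (\cN_++1)^t$ and $U_c(s)^* \cL_4 U_c(s) \lesssim \cL_4 + \text{(errors)}$ via Gr\"onwall's inequality, using that $\cB_c$ is bounded on the sector $\cN_+ \leq M$ for the choice $M = N^{1/2+\alpha/2}$, along the lines of~\eqref{eq:B^- bound}. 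For each term $A$ in the Hamiltonian from Proposition~\ref{prop:U_W}, I would then apply Duhamel's formula $U_c^* A U_c = A + \int_0^1 U_c^*(s)[A,\cB_c] U_c(s)\,ds$, iterated once more for the terms where a second-order contribution is required.

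The central computation is the cancellation of $\cL_3^{U_q}$ against $[\ud\Gamma(-\Delta),\cB_c]$. In momentum space, $[\ud\Gamma(-\Delta),\cB_c^+]$ has kernel $(2p^2 + 2p\cdot q)\sqrt N \Vphih(p)\mathbf{1}_{|q|\leq N^\alpha}$ acting on $a^*_{p+q}a^*_{-p}a_q$. The scattering equation~\eqref{eq:scattV} gives $2p^2\Vphih(p) = -\widehat{V_N(1+\phi_\mathrm{B})}(p)$ for $|p|>N^\alpha$, and its leading part $-\sqrt N \widehat V_N(p)$ cancels the corresponding piece of $\cL_3^{U_q}$. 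The residual cubic term $-\sqrt N \widehat{V_N\phi_\mathrm{B}}(p)$, the $p\cdot q$-linear correction, and the portion of $\cL_3^{U_q}$ with $|p|\leq N^\alpha$ are all absorbed using Cauchy--Schwarz against $\cL_4$ as in~\eqref{eq:L_2-L_4 bound}, combined with the estimates of Lemma~\ref{lem:phi_V}. The transformation of $\dd\Gamma(-\Delta + 2\widehat V(0) - 8\pi\ao_V)$ into $\dd\Gamma(-\Delta + 8\pi\ao_V)$ requires producing the extra shift $(16\pi\ao_V - 2\widehat V(0))\cN_+ \approx 2\int NV_N\phi_\mathrm{B}\cdot\cN_+$; this arises from the contracted part of the second-order Duhamel term $\tfrac12[[\ud\Gamma(-\Delta),\cB_c],\cB_c]$, where one $a^*$ in the cubic commutator pairs with one $a$ of $\cB_c$, yielding a quadratic operator whose coefficient matches $\int NV_N\phi_\mathrm{B} \approx 8\pi\ao_V - \widehat V(0)$ via~\eqref{eq:aV_N} and Lemma~\ref{lem:aM-difference}. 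The auxiliary commutator $[\widetilde{\cL}_2, \cB_c]$ produces only small cubic contributions, since the momentum supports of $\widehat{\widetilde V}_N$ and $\Vphih$ are disjoint and $\|\widehat{\widetilde V}_N\|_\infty \|\Vphih\|_\infty$ is small.

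The impurity-coupled terms in $\mathcal H_{\mathrm{IB}}^{(U_W)}$ are either invariant under $\cB_c$ (as for $-\Delta_x$, which commutes with $\cB_c$ since the latter involves only boson variables) or produce commutators small in $N$: those with $a^*(\sqrt N\nabla\Wphi_x)$ are controlled by $\|\sqrt N \nabla\Wphi\|_2 \lesssim \sqrt{\log N}$ from Lemma~\ref{lem:phi_W}, those with $\ud\Gamma(W_{N,x})$ by the Sobolev-type bound $\|NV_N\ast|\Wphi|^2\|_3\lesssim N^{-(1+\alpha)/3}$ used in~\eqref{eq:Weyl-error-dGamma}, and the cross term $a^*(\sqrt N\nabla\Wphi_x)\ui\nabla_x$ yields by Cauchy--Schwarz the $(\cN_++1)^{1/2}|\nabla_x|$ contribution to $\mathcal E^{(U_c)}$. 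The main obstacle would be the bookkeeping of Wick contractions in the second-order Duhamel expansion of $\ud\Gamma(-\Delta)$, together with showing that the non-scalar residuals are either negligible (using the rapid decay of $\Vphih$ for $|p|>N^\alpha$ established in Lemma~\ref{lem:phi_V}) or absorbable into $\cL_4 + (\log N)\ud\Gamma(|\ui\nabla|) + N^{-1/2}\cN_+^2$; a secondary obstacle is expressing all error contributions in terms of $\ud\Gamma(|\ui\nabla|)$ rather than the naturally arising $\ud\Gamma(-\Delta)$, so that $\mathcal E^{(U_c)}$ remains compatible with the Bogoliubov transformation of Section~\ref{sect:Bog}.
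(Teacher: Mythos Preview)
Your overall strategy is right, but there is a genuine gap in the cancellation mechanism. You claim that the residual cubic term with kernel $\sqrt N\,\widehat{V_N\phi_\mathrm{B}}(p)\,\1_{|q|\le N^\alpha}$ can be ``absorbed using Cauchy--Schwarz against $\cL_4$''. It cannot. The best Cauchy--Schwarz bound one obtains (using $\|\phi_\mathrm{B}\|_\infty\lesssim 1$ and $N\|V_N\|_1\lesssim 1$) is
\[
\pm\Big(\sum_{p,q}\sqrt N\,\widehat{V_N\phi_\mathrm{B}}(p)\,\1_{|q|\le N^\alpha}a^*_{p+q}a^*_{-p}a_q+\hc\Big)\lesssim \delta\,\cL_4+\delta^{-1}\cN_+,
\]
and no choice of $\delta$ places both pieces inside the stated error $\cE^{(U_c)}$ (one would need $N^{\alpha/2}\lesssim\log N$). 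In the paper this residual is not absorbed: it is cancelled algebraically against the principal term of $[\cL_4,\cB_c]$, which equals $\sum\sqrt N\,(\widehat V_N\ast\Vphih)(p)\,\1_{|q|\le N^\alpha}a^*_{p+q}a^*_{-p}a_q+\hc$ (Lemma~\ref{lem:Uc_L_4}). Since you only keep an a priori bound on $U_c^*\cL_4U_c$ rather than expanding its commutator, this cancellation is missing from your scheme, and the resulting cubic leftover is genuinely of order one, not an error.

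This also explains why the quadratic shift is misattributed. The correct organization (Lemma~\ref{lem:cancel_Q3}) is that $[\ud\Gamma(-\Delta)+\cL_4,\cB_c]+\cL_3^{U_q}$ is the small first-order term; the coefficient $2(8\pi\ao_V-\widehat V(0))\cN_+$ then arises at second order from $[\cL_3^{U_q},\cB_c]$ (Lemma~\ref{lem:L3_Bc}), whose quadratic part is $4N\,(\widehat V_N\ast\Vphih)(0)\,\cN_+\approx 4(8\pi\ao_V-\widehat V(0))\cN_+$, halved by the double time integral. Your proposed source $\tfrac12[[\ud\Gamma(-\Delta),\cB_c],\cB_c]$ would instead involve the kernel $V_N(1+\phi_\mathrm{B})$ and produce both the wrong sign and an extra $O(1)$ contribution from $N\int V_N\phi_\mathrm{B}^2$. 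Once you include the Duhamel expansion of $\cL_4$ and track the second-order term of $\cL_3^{U_q}$, the rest of your outline (treatment of $\cH_{\mathrm{IB}}^{(U_W)}$, use of $\ud\Gamma(|\ui\nabla|)$ instead of $\ud\Gamma(-\Delta)$, and the $(\cN_++1)^{1/2}|\nabla_x|$ term from the cross term with $\ui\nabla_x$) matches the paper.
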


We first give the proof of Proposition \ref{prop:U_c} using the commutator bounds and the a priori estimates of Lemmas \ref{lem:Uc_Delta}, \ref{lem:Uc_L_4}, \ref{lem:cancel_Q3}, \ref{lem:L2_Bc}, \ref{lem:L3_Bc}, \ref{lem:com_a_Bc}, \ref{lem:com_Q2_Bc}, \ref{lem:quad_part_Bc}, \ref{lem:Uc_N} and \ref{lem:Uc_p} below, which we then prove in the remaining part of this section.
\begin{proof}[Proof of Proposition \ref{prop:U_c}]
 From Proposition \ref{prop:U_W}, we have (with the definitions given there)
\begin{multline}
U_c^*U_W^* U_q^* \mathcal{H}_N U_q U_W U_c
   = 4\pi  \mathfrak{a}_{V_N} (N-1) +e_N^{(U_W)} \\
   + U_c^* \mathcal H_{\mathrm{BB}}^{(U_W)} U_c + U_c^* \mathcal H_{\mathrm{IB}}^{(U_W)} U_c + U_c^* \mathcal E^{(U_W)} U_c.
\end{multline}
 Let us estimate the three terms above separately.

\noindent
\emph{Estimate of $ U_c^* \mathcal H_{\mathrm{BB}}^{(U_W)} U_c$}.
Recall that
\begin{align*}
  \mathcal H_{\mathrm{BB}}^{(U_W)} 
  	&=  \dd \Gamma(-\Delta + 2\widehat{V}(0) - 8\pi \ao_{V})  +\widetilde{\mathcal{L}}_2 + \mathcal{L}_3^{U_q} +
  \mathcal{L}_4.
\end{align*}
We first deal with the term $\dd\Gamma(-\Delta) +\mathcal{L}_3^{U_q} + \mathcal{L}_4$. 
With 
\begin{align}
 U_c^* \mathcal{L}_3^{U_q} U_c - \int_0^1 U_c^*(s) \mathcal{L}_3^{U_q} U_c(s) \ud s =  \int_0^1 \int_s^1 U_c^*(t) [\mathcal{L}_3^{U_q}, \mathcal B_c] U_c(t) \ud t \ud s
\end{align}
and  Duhamel's formula, we may write
\begin{subequations}
 \begin{align}
& U_c^*\left( \dd\Gamma(-\Delta) + \mathcal{L}_3^{U_q} + \mathcal{L}_4\right) U_c
\notag\\
&\quad = \dd\Gamma(-\Delta)  + \mathcal{L}_4 + \int_0^1 U_c^*(s) \left\{ [\dd\Gamma(-\Delta)  + \mathcal{L}_4, \mathcal B_c]  + \mathcal{L}_3^{U_q} \right\} U_c(s) \dd s \label{eq:U_c-L_3-cancel}\\
&\qquad + \int_0^1 \int_{s}^{1} U_c^*(t) [\mathcal{L}_3^{U_q}, \mathcal B_c] U_c(t) \ud t \ud s. \label{eq:U_c-L_3-remainder}
\end{align}
\end{subequations}
With the cancellation showed in Lemma~\ref{lem:cancel_Q3} below, the error bounds from Lemmas~\ref{lem:Uc_Delta}, \ref{lem:Uc_L_4} and the conservation estimates of Lemmas~\ref{lem:Uc_N} and~\ref{lem:Uc_p} we find,
\begin{equation}
 \eqref{eq:U_c-L_3-cancel} = \dd\Gamma(-\Delta)  + \mathcal{L}_4 + \cE
\end{equation}
with, using that $M= N^{1/2 + \alpha/2}$, $\alpha\leq 1/8$ and $\ud \Gamma(|\ui \nabla|)\geq \cN_+$ on $\mathscr{H}_+$,
\begin{equation}
 \pm \cE \lesssim N^{-\alpha/4}( \mathcal{L}_4+ \ud\Gamma(|\ui \nabla|)+1) + N^{-1/2-\alpha/4} \cN_+^2   .
\end{equation}
Using Lemma~\ref{lem:L3_Bc}  for the commutator with $\cL_3^{U_q}$ yields with $M=N^{1/2+\alpha/2}$ (note that the double integral gives a factor of $1/2$)
\begin{align}
 &\eqref{eq:U_c-L_3-remainder} =\int_0^1 \int_{s}^{1} U_c^*(t) (4 \ud \Gamma(8\pi \ao_{V} - \widehat V(0)) +\cE_{[\cL_3,\cB_c]}) U_c(t)\ud t\ud s   \\
  &= 2 \ud \Gamma(8\pi \ao_{V} - \widehat V(0)) + \mathcal{O}\big(N^{-\alpha/2} \mathcal{L}_4+ N^{-\alpha/4}( \ud \Gamma(|\ui \nabla|)+1) +  N^{-1/2-\alpha/2} \mathcal N_+^2\big), \notag
\end{align}
by the conservation estimates of Lemmas~\ref{lem:Uc_N} and~\ref{lem:Uc_p}.

Together with the estimate on the transformation of $\widetilde \cL_2$ from Lemma~\ref{lem:L2_Bc},
this shows that 
\begin{equation}
 U_c^* \mathcal H_{\mathrm{BB}}^{(U_W)} U_c = \dd \Gamma(-\Delta + 8\pi \ao_{V})  +\widetilde{\mathcal{L}}_2 +   \mathcal{L}_4 + \mathcal{E}^{(U_c)}_\mathrm{B},
\end{equation}
where $\mathcal{E}^{(U_c)}_\mathrm{B}$ satisfies the error bounds claimed for $\mathcal{E}^{(U_c)}$.

\noindent\emph{Estimate of $ U_c^* \mathcal H_{\mathrm{IB}}^{(U_W)}U_c$}.
Recall that
\begin{multline}
 \mathcal H_{\mathrm{IB}}^{(U_W)}=  -\Delta_x +\Big(a^*(\widetilde W_{N,x} ) +\hc \Big) +\Big(a^*\big(\sqrt N (-\ui\nabla \widetilde{\phi}_{\textrm{I}})_x\big)^2 +\hc \Big) + \ud \Gamma(W_{N,x})  \\
    -2 a^*\big(\sqrt N  (\ui\nabla \widetilde{\phi}_{\textrm{I}})_x\big)\ui\nabla_x  + \hc + 2 a^*\big(\sqrt N (\ui\nabla \widetilde{\phi}_{\textrm{I}})_x\big)a\big(\sqrt N (\ui\nabla \widetilde{\phi}_{\textrm{I}})_x\big),
\end{multline}
and note  that $U_c^*\Delta_x U_c=\Delta_x$ since $\mathcal B_c$ does not depend on $x$.
By the Duhamel formula, we then have
\begin{align}
U_c^* \mathcal H_{\mathrm{IB}}^{(U_W)} U_c =  \mathcal H_{\mathrm{IB}}^{(U_W)} + \int_0^1 U_c^*(t) [\mathcal H_{\mathrm{I}}^{(U_W)}+\Delta_x, \mathcal B_c] U_c(t) \dd t
\end{align}
The commutators of individual terms are estimated in Lemmas~\ref{lem:com_a_Bc},~\ref{lem:com_Q2_Bc},~\ref{lem:quad_part_Bc} below. Combining these bounds with the conservation estimates of Lemmas \ref{lem:Uc_N} and \ref{lem:Uc_p}, we obtain, setting $M=N^{1/2+\alpha/2}$ and using that $\alpha\leq 1/8$
\begin{align}
\pm &\int_0^1U_c^*(t) [\mathcal H_{\mathrm{I}}^{(U_W)}+\Delta_x, \mathcal B_c] U_c(t) \dd t \\
	&\lesssim N^{-\alpha/2}  \ud \Gamma(W_{N,x})  +    N^{-\alpha/4}    (\log N\dd\Gamma(|\ui \nabla|) + (\mathcal N_++1)^{1/2} |\nabla_x| +1),\notag
\end{align}
which can be absorbed in the final error term $\mathcal{E}^{(U_c)}$.

\noindent\emph{Estimate of $ U_c^* \mathcal E^{(U_W)} U_c$}.
Recalling the bound on the error $\mathcal E^{(U_W)}$ from Proposition~\ref{prop:U_W}, the expression $U_c^* \mathcal E^{(U_W)} U_c$ can be absorbed in the error $U^{(U_c)}$ by Lemmas \ref{lem:Uc_N} and \ref{lem:Uc_p}.
\end{proof}

\paragraph{Commutator estimates: Boson Hamiltonian.}
Here, we treat the commutators of $\cB_c$ with the different terms of the boson-Hamiltonian.
To remove the cutoff $\theta_M(\cN_+)$ from the leading terms, we will make use of the inequality
\begin{equation}\label{eq:theta-diff}
 | \theta_M(t-1) - \theta_M(t)|\lesssim M^{-1} \1_{t\leq M+1},
\end{equation}
which follows from Taylor's theorem.

\begin{lem}\label{lem:Uc_Delta}
For all $\alpha>0$, $M \geq 1$ we have
\begin{multline*}
 [\dd\Gamma(-\Delta), \mathcal B_c] 
	= - \hspace{-12pt}\sum_{p ,q \in 2\pi \Z^3\setminus\{0\}\atop  |q| \leq N^{\alpha}} \hspace{-12pt}\sqrt N \big(\widehat{V}_N(p) + \widehat{V}_N \ast \widehat \phi_\mathrm{B} (p)\big) a^*_{p+q} a^*_{-p} a_q  + \hc + \mathcal{E}_{[\dd\Gamma(-\Delta),\B_c]},
\end{multline*}
where the error satisfies uniformly in $M$
\begin{multline*}
 \pm \mathcal{E}_{[\dd\Gamma(-\Delta),\B_c]} 
		\lesssim N^{-\alpha/4} \mathcal{L}_4 + N^{\alpha/4}M^{-1} (\mathcal N_++1)^2 \\+\sqrt M N^{\alpha/2-1/2} \log N \ud \Gamma(|\ui \nabla|) + \sqrt M N^{3\alpha/2-1/2}  \mathcal (N_++1).
\end{multline*}
\end{lem}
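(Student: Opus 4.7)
The plan is to compute $[\dd\Gamma(-\Delta),\mathcal B_c]$ mode by mode, extract the stated leading term by invoking the scattering equation for $\widetilde\phi_{\mathrm B}$, and bound all remaining pieces using the a priori properties collected in Lemma~\ref{lem:phi_V}. Since $\dd\Gamma(-\Delta)$ commutes with $\cN_+$, hence with $\theta_M(\cN_+)$, the commutator splits as
\begin{equation*}
 [\dd\Gamma(-\Delta),\mathcal B_c]=\theta_M(\cN_+)\,[\dd\Gamma(-\Delta),\mathcal B_c^+]+\hc,
\end{equation*}
while the CCR~\eqref{eq:CCR} yield $[\dd\Gamma(-\Delta),a^*_{p+q}a^*_{-p}a_q]=\bigl((p+q)^2+p^2-q^2\bigr)a^*_{p+q}a^*_{-p}a_q=(2p^2+2p\cdot q)\,a^*_{p+q}a^*_{-p}a_q$.

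Next I would apply the scattering equation~\eqref{eq:scattV}, which for $|p|>N^\alpha$ reads $2p^2\widehat{\widetilde\phi}_{\mathrm B}(p)=-\widehat V_N(p)-(\widehat V_N\ast\widehat\phi_{\mathrm B})(p)$ (and $\widehat{\widetilde\phi}_{\mathrm B}(p)=0$ for $|p|\le N^\alpha$). Substituting this into the $2p^2$ contribution reproduces the lemma's main term, but decorated with an indicator $\1_{|p|>N^\alpha}$; relaxing this to the sum over $p\ne 0$ stated in the lemma leaves a low-momentum remnant supported on $0<|p|\le N^\alpha$ which I will place into the error. The leftover $\theta_M(\cN_+)$ in front of the main term is then pushed past the cubic monomial via the identity $\theta_M(\cN_+)a^*_{p+q}a^*_{-p}a_q=a^*_{p+q}a^*_{-p}a_q\theta_M(\cN_+-1)$ together with the bound~\eqref{eq:theta-diff}; the resulting residue is supported on $\cN_+\simeq M$ with an extra factor $M^{-1}$ and, after a standard Cauchy--Schwarz bound on the cubic kernel (using $\|\sqrt N(\widehat V_N+\widehat V_N\ast\widehat\phi_{\mathrm B})\|_\infty\lesssim \sqrt N$), produces the announced $N^{\alpha/4}M^{-1}(\cN_++1)^2$ term.

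The remaining errors to control are (i) the $2p\cdot q$ contribution to the commutator, and (ii) the low-momentum remnant described above. For (i), I would transfer the factor $q$ onto $a_q$, producing a $\dd\Gamma(|\ui\nabla|)^{1/2}$ after Cauchy--Schwarz in position space, while the factor $p\,\widehat{\widetilde\phi}_{\mathrm B}(p)$ has $\ell^2$-norm controlled by $\||\nabla|^{1/2}\widetilde\phi_{\mathrm B}\|_2\lesssim N^{-1}\sqrt{\log N}$ from Lemma~\ref{lem:phi_V}; combining this with the $\theta_M(\cN_+)$-confinement (which converts a residual $(\cN_++1)^{3/2}$ factor into $\sqrt M(\cN_++1)$) yields the $\sqrt M\,N^{\alpha/2-1/2}\log N\,\dd\Gamma(|\ui\nabla|)$ term. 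For (ii), I would rewrite the remnant in position space as $\int \sqrt N\,\tilde h(y_2-y_3)\chi(y_1-y_3)\,a^*_{y_3}a^*_{y_2}a_{y_1}$ with $\widehat{\tilde h}=\widehat{V_N(1+\phi_{\mathrm B})}\,\1_{|p|\le N^\alpha}$ and $\widehat\chi=\1_{|q|\le N^\alpha}$, and pair it against $\cL_4$ via Cauchy--Schwarz using the positivity of $V_N$ (in the spirit of~\eqref{eq:L_2-L_4 bound}), the $L^1$-bound $\|V_N(1+\phi_{\mathrm B})\|_1\lesssim N^{-1}$, and the $\theta_M$-cutoff. Choosing the Cauchy--Schwarz parameter optimally then produces the $N^{-\alpha/4}\cL_4+\sqrt M\,N^{3\alpha/2-1/2}(\cN_++1)$ contributions.

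The main obstacle is estimate (ii): the Hilbert--Schmidt norm of the low-momentum cubic kernel is not small in its own right (it grows with $N$ once one accounts for the size of $V_N$), so a direct bound against $(\cN_++1)^{3/2}$ would be far too crude. One must combine the positivity of $V_N$ (to compare the cubic expression against $\cL_4$) with the number-cutoff $\theta_M$ (replacing the spare $(\cN_++1)^{1/2}$ by $\sqrt M$) in order to close the estimate; the peculiar exponents in the stated error are dictated by the balance between these two cuts against the size $N^{3\alpha}$ of the low-momentum support.
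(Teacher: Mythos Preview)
Your overall plan matches the paper's: compute the commutator, invoke the scattering equation, and control three error pieces (removal of $\theta_M$ from the main term, the low-momentum remnant $|p|\le N^\alpha$, and the $2p\cdot q$ contribution). Your treatment of the $2p\cdot q$ piece is essentially correct. However, you have swapped the roles of the other two errors, and neither of your proposed bounds for them closes as written.

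For the $\theta_M$-removal you propose to use only an $\ell^\infty$ bound on the kernel (and note the slip: $\|\sqrt N(\widehat V_N+\widehat V_N\ast\widehat\phi_{\mathrm B})\|_\infty\lesssim N^{-1/2}$, not $\sqrt N$). This cannot work: the $p$-sum in the main term is unrestricted, and $\|V_N(1+\phi_{\mathrm B})\|_2\sim N^{1/2}$, so any kernel-norm bound on the cubic operator is too large by a full power of $N$. Also, the reference to~\eqref{eq:theta-diff} is off---that bound is for the commutator $[\theta_M,X]$, whereas here you need $1-\theta_M\lesssim M^{-1}\cN_+$. This is precisely where the paper uses the $\cL_4$-comparison: writing the cubic term in position space as $\int \chi(y_1-y_3)\sqrt N\,V_N(1+\phi_{\mathrm B})(y_2-y_3)\,a^*_{y_3}a^*_{y_2}a_{y_1}$ and using $|V_N(1+\phi_{\mathrm B})|\lesssim V_N$ (since $\|\phi_{\mathrm B}\|_\infty\lesssim 1$) together with $1-\theta_M\lesssim M^{-1}\cN_+$ yields $\delta\cL_4+\delta^{-1}M^{-1}\cN_+^2$; the choice $\delta=N^{-\alpha/4}$ produces both the $N^{-\alpha/4}\cL_4$ and the $N^{\alpha/4}M^{-1}(\cN_++1)^2$ terms simultaneously.

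Conversely, the low-momentum remnant (your item (ii)) is not the obstacle you claim. Restricted to $|p|\le N^\alpha$, the kernel has $\ell^2$-norm $\lesssim N^{-1+3\alpha/2}$ (each entry is $\sim N^{-1}$, over $\sim N^{3\alpha}$ modes), so the direct Cauchy--Schwarz of type~\eqref{eq:B^- bound}, combined with $\|\theta_M\cN_+\Psi\|\le\sqrt M\,\|\cN_+^{1/2}\Psi\|$, already gives the stated $\sqrt M\,N^{3\alpha/2-1/2}(\cN_++1)$ bound. No $\cL_4$ is needed here, and indeed your proposed pairing would not be available: the low-pass filtered kernel $\tilde h$ is a smooth, delocalized function and is not pointwise comparable to $V_N$.
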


\begin{proof}
 A simple computation gives
\begin{multline}
\theta_M(\cN_+)[\dd\Gamma(-\Delta), \mathcal B_c^+] \\
=  \theta_M \sum_{p,q} \sqrt N (2p^2+2pq) \Vphih(p) \1_{|q| \leq N^\alpha} a^*_{p+q} a^*_{-p} a_q.\label{eq:Delta_Bc}
\end{multline}
By the scattering equation~\eqref{eq:scattV} (note that $\widehat \phi_\mathrm{B}$ has no cutoff)
\begin{equation}
 2p^2 \Vphih(p) = - \big(\widehat{V}_N(p) + \widehat{V}_N \ast \widehat \phi_\mathrm{B} (p)\big)\1_{|p|>N^\alpha},
\end{equation}
so we need to remove the cutoff $\1_{|p|>N^\alpha}$ and $\theta_M$ from the term with $p^2$ in order to obtain the leading term from the claim.
Using the Cauchy-Schwarz inequality as in~\eqref{eq:B^- bound}, we find
\begin{align}
&\bigg| \Big\langle \Psi, \theta_M(\cN_+) \sum_{|p|,|q|\leq N^\alpha} \sqrt N \big(\widehat{V}_N(p) + \widehat{V}_N \ast \widehat \phi_\mathrm{B} (p)\big) a^*_{p+q} a^*_{-p} a_q \Psi \Big\rangle \bigg| \notag \\
&\quad \lesssim \sqrt N\|(\widehat{V}_N + \widehat{V}_N \ast \widehat \phi_\mathrm{B} )\1_{|p|\leq N^{\alpha}}\|_2 \|\cN_+^{1/2} \Psi\| \| \theta_M(\cN_+)\cN_+ \Psi\| \notag \\
&\quad \lesssim  \sqrt M N^{-1/2+3\alpha/2}   \|\cN_+^{1/2} \Psi\|^2, \label{eq:bound_l2_linfty}
\end{align}
which can be absorbed in the error and allows us to drop $\1_{|p|>N^\alpha}$.
To remove $\theta_M$, we first write
\begin{align}
 &\sum_{p,q} \sqrt N \big(\widehat{V}_N(p) + \widehat{V}_N \ast \widehat \phi_\mathrm{B} (p)\big) \1_{|q|\leq N^\alpha} a^*_{p+q} a^*_{-p} a_q \notag \\
 &\quad = \int \chi(y_3-y_1) \sqrt N V_N(1+\phi_\mathrm{B})(y_2-y_1) a^*_{y_1}a^*_{y_2}a_{y_3},
\end{align}
where $\chi\in L^2(\T^3)$ denotes the function with Fourier coefficients $\widehat \chi(p)=\1_{|p|\leq N^\alpha}$. We then use that $1-\theta_M(\cN_+)\lesssim M^{-1}\cN_+$ together with the Cauchy-Schwarz inequality (as in~\eqref{eq:L_2-L_4 bound},~\eqref{eq:L_4-B_2 comm2}), to obtain for $\delta>0$
\begin{align}
&\pm \Big( (1-\theta_M(\cN_+)) \sum_{p,q \atop |q|\leq N^\alpha} \sqrt N \big(\widehat{V}_N(p) + \widehat{V}_N \ast \widehat \phi_\mathrm{B} (p)\big) a^*_{p+q} a^*_{-p} a_q + \hc \Big)\notag \\
&\quad \lesssim \delta \int V_N(1+\phi_\mathrm{B})(y_2-y_1)a^*_{y_1}a^*_{y_2}a_{y_1}a_{y_2}  +  \delta^{-1}  M^{-1}  \cN_+ \int \chi\ast \chi(y-y')a^*_{y'}a_{y} \notag \\
&\quad \lesssim \delta \cL_4 + \delta^{-1}  M^{-1} \cN_+^2. \label{eq:theta_M-remove}
\end{align} 
Choosing $\delta=N^{-\alpha/4}$ here gives the required bound on this term.
It remains to bound the term coming from the $pq$-part in~\eqref{eq:Delta_Bc}.
This satisfies, for all $\delta>0$,
\begin{align}
&\pm\Big(2 \theta_M(\cN_+) \sum_{p,q}   p\cdot q \sqrt N \Vphih(p) \1_{|q| \leq N^\alpha} a^*_{p+q} a^*_{-p} a_q + \hc\Big) \notag \\
	&\lesssim 
\delta \theta_M^2 \sum_{p,|q| \leq N^{\alpha}} |p| |q|  a^*_{p+q} a^*_{-p} a_{-p} a_{p+q} + \delta^{-1} N \sum_{p,|q| \leq N^{\alpha}} |p| |q|  | \Vphih(p) |^2 \, a^*_{q} a_{q} \notag\\
	&\lesssim \delta N^{\alpha}  \theta_M^2  \dd\Gamma(|\ui \nabla|)\mathcal N_+ + \delta^{-1} N \||\nabla|^{1/2} \Vphi\|_2^2 \dd\Gamma(|\ui \nabla|) .
\end{align}
Using that $\cN_+ \theta_M^2\leq M$, Lemma~\ref{lem:phi_V}, and choosing $\delta = M^{-1/2}N^{-\alpha/2-1/2}\log N$ then completes the proof.
\end{proof}

\begin{lem}\label{lem:Uc_L_4}
For all $\alpha>0$, $M \geq 1$ we have as quadratic forms on $\mathscr{H}_+$
\begin{equation*}
 [\cL_4, \mathcal B_c] = \sum_{p ,q \in 2\pi \Z^3\setminus\{0\}\atop  |q| \leq N^{\alpha}} \sqrt N \widehat{V}_N \ast \Vphih(p) a^*_{p+q} a^*_{-p} a_q + \hc + \mathcal{E}_{[\mathcal{L}_4,\B_c]}
\end{equation*}
where the error satisfies uniformly in $M$ and $\delta>0$
\begin{equation*}
 \pm \mathcal{E}_{[\mathcal{L}_4,\B_c]}
	\lesssim \delta \mathcal{L}_4 + \delta^{-1}(MN^{-1}+1) M N^{-1 +3\alpha}(\mathcal N_+ +1) + M^{-1}\delta^{-1} \mathcal N_+^2.
\end{equation*}
\end{lem}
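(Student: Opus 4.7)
Since $\cL_4$ commutes with $\cN_+$, we have $[\cL_4, \theta_M(\cN_+)] = 0$, which reduces the computation to
\[
[\cL_4, \cB_c] = \theta_M(\cN_+) [\cL_4, \cB_c^+] \;-\; [\cL_4, \cB_c^-] \, \theta_M(\cN_+).
\]
Since $\cB_c^- = (\cB_c^+)^*$, by taking h.c. it suffices to analyse $[\cL_4, \cB_c^+]$. I would work in momentum representation, writing $\cL_4 = \tfrac12 \sum_{k,p,q} \widehat V_N(k)\, a^*_{p+k} a^*_{q-k} a_q a_p$ and $\cB_c^+$ as in \eqref{eq:Bc}, and expand $\cL_4 \cB_c^+ - \cB_c^+ \cL_4$ via the CCR as a sum over non-trivial Wick contractions pairing annihilation operators of one factor with creation operators of the other.

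The commutator then naturally splits into a cubic part ($a^*a^*a$, from double contractions) and a quintic part ($a^*a^*a^*aa$, from single contractions). For the cubic part, each of $a_q, a_p$ from $\cL_4$ is contracted with one of $a^*_{p'+q'}, a^*_{-p'}$ from $\cB_c^+$; summing the resulting pairings and using the symmetries $\Vphih(-p)=\Vphih(p)$, $\widehat V_N(-k)=\widehat V_N(k)$, the inner sum over the transferred internal momentum collapses to the convolution $\widehat V_N \ast \Vphih(p)$. The combinatorial $\tfrac12$ in $\cL_4$ is compensated by the equivalent arrangements of the contractions, producing the main term $\sum_{p,q:|q|\leq N^\alpha} \sqrt N (\widehat V_N \ast \Vphih)(p)\, a^*_{p+q} a^*_{-p} a_q$ exactly as claimed. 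A finite number of boundary corrections (e.g.\ those from the restriction to non-zero momenta in $\mathscr{H}_+$) produce subleading cubic terms that are absorbed into the error.

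The quintic remainder is estimated by Cauchy--Schwarz in the same spirit as in Lemma \ref{lem:Uc_Delta}: one distributes the weight $V_N^{1/2}$ across a pair $a^*_{\cdot}a_{\cdot}$ to obtain a contribution bounded by $\delta\, \cL_4$, and controls the remaining operators using $\theta_M$ and the estimate $\|\sqrt{N}\,\Vphih\|_2 \lesssim N^{-\alpha/2}$ from Lemma \ref{lem:phi_V}. The presence of $\theta_M(\cN_+)$ allows one extra creation/annihilation operator in the quintic block to be absorbed by $\sqrt{M}$, producing the factor $MN^{-1+3\alpha}(\cN_+ + 1)$; the additional $(1 + MN^{-1})$ keeps track of whether the cutoff $\theta_M$ sits before or after the extra creation operator in $\cB_c^+$, which shifts the effective number by one.

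Finally, the residual factor $\theta_M(\cN_+)$ in front of the cubic main term is replaced by the identity via the inequality $|1-\theta_M(t)| \lesssim M^{-1} t$ together with Cauchy--Schwarz, exactly as in the derivation of \eqref{eq:theta_M-remove}; this produces the last error term $M^{-1}\delta^{-1}\cN_+^2$. The main obstacle is the careful combinatorial bookkeeping of the Wick contractions: one must check that the eight double-contraction patterns indeed assemble into the clean kernel $\widehat V_N \ast \Vphih$ (and not, say, $\widehat V_N \ast \widehat\phi_{\mathrm{B}}$), and that the splitting of the quintic term by Cauchy--Schwarz produces the precise weights $MN^{-1+3\alpha}$ and $1+MN^{-1}$ predicted by the statement, which depend on the interplay between the momentum cutoff in $\cB_c^+$ and the number cutoff $\theta_M$.
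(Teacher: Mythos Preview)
Your proposal is correct and follows essentially the same route as the paper, which in fact defers the core computation to \cite[Lemma~12]{HaiSchTri-22}: compute $[\cL_4,\cB_c^+]$ via Wick contractions, extract the cubic main term $\sum \sqrt N(\widehat V_N\ast\Vphih)(p)\,a^*_{p+q}a^*_{-p}a_q$, bound the quintic remainder by Cauchy--Schwarz against $\delta\cL_4+\delta^{-1}(MN^{-1}+1)MN^{-1+3\alpha}(\cN_++1)$ using $\theta_M$, and finally remove $\theta_M$ from the main term exactly via~\eqref{eq:theta_M-remove}. Your heuristic for the factor $(1+MN^{-1})$ is slightly imprecise---it arises from the two distinct quintic structures (differing in whether the uncontracted annihilation in $\cB_c^+$ sits inside or outside), one contributing $MN^{-1+3\alpha}$ and the other $M^2N^{-2+3\alpha}$---but this does not affect the argument.
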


\begin{proof}
 Following the proof of \cite[Lemma 12]{HaiSchTri-22} (with $a_0$ replaced by $\sqrt{N}$ and using $\theta_M$ to bound any powers of $\cN_+$ in excess of one), we arrive at
\begin{multline}
\theta_M (\cN_+)[\mathcal L_4,\cB_c^+]\\
=  \theta_M (\cN_+) \sum_{r,p,q} \sqrt N \widehat{V}_N(p-r)\Vphih(r)  \1_{|q| \leq N^\alpha} a^*_{p+q} a^*_{-p} a_q + \mathcal{E}
\end{multline}
with
\begin{align}
\pm \mathcal E + \hc \lesssim \delta \cL_4 + \delta^{-1} (MN^{-1} +1) M N^{-1 +3\alpha}\mathcal (\cN_+ +1).
\end{align}
By the bound~\eqref{eq:theta_M-remove}, the error due to the removal of $\theta_M$ from the main term can be absorbed in $\mathcal{E}_{[\mathcal{L}_4,\B_c]}$.
\end{proof}

\begin{lem}\label{lem:cancel_Q3}
For all $\alpha>0$, $M \geq 1$ and with the definitions from Lemma~\ref{lem:Uc_Delta}, Lemma~\ref{lem:Uc_L_4}, we have as quadratic forms on $\mathscr{H}_+$
\begin{align*}
\big[ \dd\Gamma(-\Delta) + \mathcal L_4 , \mathcal B_c \big] + \mathcal L_3^{U_q} = \mathcal{E}_{[\dd\Gamma(-\Delta),\B_c]} +  \mathcal{E}_{[\cL_4,\B_c]} + \mathcal{O}\big(N^{-\alpha/2} \left( \mathcal L_4 + \dd\Gamma(|\ui \nabla|)\right)\big).
\end{align*}
\end{lem}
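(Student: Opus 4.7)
The plan is to add the main terms from Lemmas~\ref{lem:Uc_Delta} and~\ref{lem:Uc_L_4}, use the identity $\widehat V_N*\widehat\phi_\mathrm{B}-\widehat V_N*\Vphih = \widehat V_N*(\widehat\phi_\mathrm{B}\mathds{1}_{|\cdot|\leq N^\alpha})$ (which follows from $\Vphih(k)=\widehat\phi_\mathrm{B}(k)\mathds{1}_{|k|>N^\alpha}$) and observe that the sum of the two clean main terms equals
\begin{equation*}
-\!\!\sum_{\substack{p,q\in 2\pi\Z^3\setminus\{0\}\\|q|\leq N^\alpha}}\!\!\sqrt N\bigl[\widehat V_N(p)+\widehat V_N*(\widehat\phi_\mathrm{B}\mathds{1}_{|\cdot|\leq N^\alpha})(p)\bigr]\,a^*_{p+q}a^*_{-p}a_q+\hc.
\end{equation*}

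Next I would rewrite $\mathcal L_3^{U_q}$ in Fourier as $\tfrac12\sqrt N\sum_{p,q}\widehat V_N(q)a^*_{p+q}a_qa_p+\hc$ and put its $2$-$1$ component into the Bogoliubov form $a^*_{p+q}a^*_{-p}a_q$ using the relabeling $(k,l)\mapsto(-q,p+q)$ together with the symmetry $p\to -p-q$ and the evenness of $\widehat V_N$. When the restriction $|q|\leq N^\alpha$ combines with the symmetrization, the piece of $\mathcal L_3^{U_q}$ supported in the "high-$p$, low-$q$" sector $\{|q|\leq N^\alpha,\;|p|>N^\alpha\}$ cancels the $\widehat V_N(p)$-part of the displayed expression. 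What is left over is a cubic operator whose kernel is supported either on (a) the "high-$q$" sector $\{|q|>N^\alpha\}$ coming from the unrestricted sum in $\mathcal L_3^{U_q}$, or (b) the "low-low" sector $\{|p|,|q|\leq N^\alpha\}$ with kernel $\widehat V_N*(\widehat\phi_\mathrm{B}\mathds{1}_{|\cdot|\leq N^\alpha})$.

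Each leftover is then estimated by Cauchy--Schwarz. For (a), I use the position-space rewriting $\sum_p\widehat V_N(p)a^*_{p+q}a^*_{-p}=\int V_N(y_1-y_2)e^{iqy_1}a^*_{y_1}a^*_{y_2}$ and the bound $\sum_{|q|>N^\alpha}\|a_q\Psi\|^2\leq N^{-\alpha}\langle\Psi,\dd\Gamma(|\ui\nabla|)\Psi\rangle$, together with $\|NV_N\|_1\lesssim 1$, which after Young's inequality with weight yields the desired $N^{-\alpha/2}(\mathcal L_4+\dd\Gamma(|\ui\nabla|))$ bound. For (b), I use $\|\widehat V_N*(\widehat\phi_\mathrm{B}\mathds{1}_{|\cdot|\leq N^\alpha})\|_\infty\lesssim\|V_N\phi_\mathrm{B}\|_1\lesssim N^{-1}$ (from Lemma~\ref{lem:phi_V}) combined with the volume bound $|\{|p|\leq N^\alpha\}|^{1/2}\lesssim N^{3\alpha/2}$ and the $\ell^2$--$\ell^\infty$ Cauchy--Schwarz argument of~\eqref{eq:bound_l2_linfty}, which is easily absorbed in $N^{-\alpha/2}\dd\Gamma(|\ui\nabla|)$ for $\alpha\leq 1/8$, using $\cN_+\leq\dd\Gamma(|\ui\nabla|)$ on $\mathscr H_+$. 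The $1$-$2$ component is obtained by taking Hermitian conjugates, and the errors $\mathcal E_{[\dd\Gamma(-\Delta),\cB_c]}$ and $\mathcal E_{[\mathcal L_4,\cB_c]}$ carry through unchanged.

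The main obstacle I expect is Step~2: verifying the algebraic cancellation at the level of the Bogoliubov-form coefficients. The Fourier version of $\mathcal L_3^{U_q}$ naturally has coefficient $\widehat V_N(q)$ of the "transferred" momentum, while the commutators produce coefficient $\widehat V_N(p)$ of the "created" high momentum; matching these up requires careful use of the symmetry $\sum_p\widehat V_N(p)a^*_{p+q}a^*_{-p}a_q=\sum_p\widehat V_N(p+q)a^*_{p+q}a^*_{-p}a_q$ (from $p\mapsto -p-q$ and evenness of $\widehat V_N$) and the symmetrization identity $\sum_{p,q}\widehat V_N(q)a^*_{p+q}a_qa_p=\sum_{p,q}\widehat V_N(p)a^*_{p+q}a_qa_p$, which both rely on $\widehat V_N$ being real and even. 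Once the combinatorics is correctly accounted for, the remaining estimates (a) and (b) follow the pattern of the Cauchy--Schwarz estimates already used in the proofs of Lemmas~\ref{lem:Uc_Delta} and~\ref{lem:Uc_L_4}.
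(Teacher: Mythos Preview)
Your overall strategy is exactly the paper's: combine the main terms from Lemmas~\ref{lem:Uc_Delta} and~\ref{lem:Uc_L_4}, add the Fourier representation of $\cL_3^{U_q}$, and bound the leftover by Cauchy--Schwarz. Your treatment of leftover (a) --- the $|q|>N^\alpha$ piece with kernel $\widehat V_N(p)$ --- is precisely what the paper does: split $V_N=V_N^{1/2}\cdot V_N^{1/2}$, control the $a^*a^*$-pair by $\cL_4$, and use $\sum_{|q|>N^\alpha}a_q^*a_q\leq N^{-\alpha}\dd\Gamma(|\ui\nabla|)$, then optimize $\delta=N^{-\alpha/2}$. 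In fact the paper's short proof only writes out this term; the convolution remainder you call (b) is not discussed explicitly there, so you are being more careful in isolating it.

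The gap is in your handling of (b). After you add the two commutator main terms, the convolution piece
\[
-\sqrt N\sum_{\substack{p,q\neq 0\\|q|\leq N^\alpha}}\bigl[\widehat V_N\ast(\widehat\phi_\mathrm{B}\1_{|\cdot|\leq N^\alpha})\bigr](p)\,a^*_{p+q}a^*_{-p}a_q+\hc
\]
carries \emph{no} restriction on $p$; only the inner convolution variable is cut off. Your ``low--low'' description and the volume count $|\{|p|\leq N^\alpha\}|^{1/2}$ therefore do not apply, and the $\ell^2$--$\ell^\infty$ estimate of~\eqref{eq:bound_l2_linfty} (which in any case relies on the cutoff $\theta_M$ that has already been removed from the main terms) does not close. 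Also, the inverse Fourier transform of the kernel is $V_N(\phi_\mathrm{B}-\Vphi)$, not $V_N\phi_\mathrm{B}$, so the right $\ell^\infty$ bound is $\|V_N(\phi_\mathrm{B}-\Vphi)\|_1\lesssim N^{-2+\alpha}$ rather than $N^{-1}$.

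The fix is to treat (b) exactly like (a): pass to position space and run the Cauchy--Schwarz argument of~\eqref{eq:theta_M-remove}, writing the kernel as $V_N\cdot(\phi_\mathrm{B}-\Vphi)$ and splitting off $V_N^{1/2}$ to produce $\cL_4$. Using $\|\phi_\mathrm{B}-\Vphi\|_\infty\lesssim N^{-1+\alpha}$ from Lemma~\ref{lem:phi_V} and $\|\chi\|_2^2\lesssim N^{3\alpha}$, one gets
\[
\pm(b)\lesssim N^{-\alpha/2}\cL_4+N^{11\alpha/2-2}\cN_+,
\]
and for $\alpha\leq 1/8$ the second term is easily absorbed into $N^{-\alpha/2}\dd\Gamma(|\ui\nabla|)$. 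With this correction your proof goes through and matches the paper's. Your ``main obstacle'' --- the algebraic matching of $\cL_3^{U_q}$ with the commutator kernels --- is resolved by the relabeling you describe; once $\cL_3^{U_q}$ is written as $\sqrt N\sum_{p,q}\widehat V_N(p)a^*_{p+q}a^*_{-p}a_q+\hc$, the cancellation on $\{|q|\leq N^\alpha\}$ is immediate for all $p$, not just the high-$p$ sector.
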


\begin{proof}
 Combining Lemma~\ref{lem:Uc_Delta}, Lemma~\ref{lem:Uc_L_4} and the Fourier representation of $\mathcal{L}_3^{U_q}$, we obtain
\begin{multline}
 [\dd\Gamma(-\Delta)  + \mathcal{L}_4, \mathcal B_c] + \mathcal{L}_3^{U_q}= \sum_{p ,q} \sqrt N \widehat{V}_N(p)\1_{|q|>N^\alpha}  a^*_{p+q} a^*_{-p} a_q  + \hc \\ +  \mathcal{E}_{[\dd\Gamma(-\Delta),\B_c]} +  \mathcal{E}_{[\cL_4,\B_c]}.
\end{multline}
We have by the reasoning of~\eqref{eq:theta_M-remove}
\begin{align}
&\sum_{p ,q} \sqrt N \widehat{V}_N(p)\1_{|q|>N^\alpha}  a^*_{p+q} a^*_{-p} a_q  + \hc  \lesssim \delta \cL_4+ \delta^{-1} \sum_{|p|>N^\alpha} a^*_pa_p.
\end{align}
Clearly, the last term is bounded by $N^{-\alpha}\ud \Gamma(|\ui \nabla|)$, so choosing $\delta=N^{-\alpha/2}$ proves the claim.
\end{proof}

\begin{lem}
	\label{lem:L2_Bc}
For all $\alpha>0$ and uniformly in $M \geq 1$ we have as quadratic forms on $\mathscr{H}_+$
\begin{align}
 [\widetilde{\mathcal L}_2,\mathcal B_c]  
	&=\mathcal{O}\big( \sqrt M  N^{\alpha-1/2} (\mathcal N_++1)\big).\notag
\end{align}
\end{lem}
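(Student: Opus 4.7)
The plan is to compute $[\widetilde{\mathcal L}_2, \mathcal B_c]$ directly via the canonical commutation relations, isolate a pure triple-creation principal part, and control it by a Cauchy--Schwarz argument in the particle-number decomposition that exploits the cutoff $\theta_M(\cN_+)$.

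I would begin by splitting $\widetilde{\mathcal L}_2 = \widetilde L_2^+ + \widetilde L_2^-$ with $\widetilde L_2^+ = \sum_k \widehat{\widetilde V}_N(k)\,a^*_k a^*_{-k}$, and recalling $\mathcal B_c = \theta_M(\cN_+)\mathcal B_c^+ - \mathcal B_c^-\theta_M(\cN_+)$. Because $[\widetilde{\mathcal L}_2,\mathcal B_c]$ is self-adjoint and $[\widetilde L_2^-,\mathcal B_c]=[\widetilde L_2^+,\mathcal B_c]^*$, it suffices to bound $\Re\langle\Psi,[\widetilde L_2^+,\mathcal B_c]\Psi\rangle$. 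The inner commutator $[\widetilde L_2^+,\mathcal B_c^+]$ is then computed from $[a^*_k a^*_{-k},a_q]=-\delta_{q,k}a^*_{-k}-\delta_{q,-k}a^*_k$; after using the evenness $\widehat{\widetilde V}_N(-k)=\widehat{\widetilde V}_N(k)$ to merge the two delta contributions, it becomes the pure triple-creation operator
\begin{equation*}
[\widetilde L_2^+,\mathcal B_c^+] = -2\sqrt N\sum_{k,p\in 2\pi\Z^3}\widehat{\widetilde V}_N(k)\,\widehat{\widetilde\phi}_\mathrm{B}(p)\,a^*_{p+k}a^*_{-p}a^*_{-k}.
\end{equation*}

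Such a cubic creation operator, with kernel of $\ell^2$ norm comparable to $\|\widehat{\widetilde V}_N\|_{\ell^2}\|\widehat{\widetilde\phi}_\mathrm{B}\|_{\ell^2}$, sends an $(n-3)$-particle state to an $n$-particle state with bosonic normalization factor of order $n^{3/2}$. The $\theta_M(\cN_+)$ factor in $\mathcal B_c$ localizes the relevant $n$ to $n\leq M$, so I use the elementary bound $n^{3/2}\leq\sqrt M\,n$ and a Cauchy--Schwarz on the particle-number decomposition (analogous in spirit to \eqref{eq:est_CS}) to obtain
\begin{equation*}
|\langle\Psi,\theta_M(\cN_+)[\widetilde L_2^+,\mathcal B_c^+]\Psi\rangle| \lesssim \sqrt M\sqrt N\,\|\widehat{\widetilde V}_N\|_{\ell^2}\|\widehat{\widetilde\phi}_\mathrm{B}\|_{\ell^2}\,\langle\Psi,(\cN_++1)\Psi\rangle.
\end{equation*}
Plugging in $\|\widehat{\widetilde V}_N\|_{\ell^2}\lesssim N^{3\alpha/2}$, which follows from $\widehat{\widetilde V}_N(p)=4\pi\ao_V\mathds{1}_{0<|p|\leq N^\alpha}$, and $\|\widehat{\widetilde\phi}_\mathrm{B}\|_{\ell^2}\lesssim N^{-1-\alpha/2}$ from Lemma~\ref{lem:phi_V}, produces precisely the stated bound $\sqrt M N^{\alpha-1/2}(\cN_++1)$.

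The remaining pieces are then handled by the same method. The mixed commutator $[\widetilde L_2^+,\mathcal B_c^-]$ yields, after full normal ordering, a normal-ordered $a^*a^*a$ operator (of net particle change $+1$) together with single-creation corrections from double contractions; each of these contributes at the same order or better once the analogous particle-number Cauchy--Schwarz is applied, using that the kernel of $\widetilde L_2^+$ is $\ell^1$-integrable in its momentum. The commutators of $\widetilde L_2^\pm$ with $\theta_M(\cN_+)$, produced by the identity $\theta_M(\cN_+)\widetilde L_2^+ = \widetilde L_2^+\theta_M(\cN_++2)$, pick up an additional factor $M^{-1}$ thanks to~\eqref{eq:theta-diff}, so they are strictly subdominant. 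The main obstacle is purely the combinatorial bookkeeping of all these contractions and of the correct placement of $\theta_M(\cN_+)$ on either side of the commutators, but once the triple-creation principal part is identified and controlled as above, no other contribution exceeds the leading estimate.
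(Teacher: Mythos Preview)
Your decomposition and your treatment of the principal triple-creation term $\theta_M[\widetilde L_2^+,\cB_c^+]$ are correct and essentially equivalent to the paper's argument: the paper computes $[\widetilde{\mathcal L}_2,\theta_M]\cB_c^+$ and $\theta_M[\widetilde{\mathcal L}_2,\cB_c^+]$, which is just a different way of grouping the same four pieces you list. Your Cauchy--Schwarz estimate with the $\ell^2$-kernel $\|\widehat{\widetilde V}_N\|_2\|\widehat{\widetilde\phi}_\mathrm{B}\|_2$ and the sector-wise bound $n^{3/2}\le\sqrt M\,n$ reproduces the stated order $\sqrt M\,N^{\alpha-1/2}(\cN_++1)$. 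A minor remark: in $[\widetilde L_2^+,\cB_c^-]$ the ``single-creation corrections from double contractions'' actually vanish, because the supports $|k|\le N^\alpha$ (from $\widehat{\widetilde V}_N$) and $|p|>N^\alpha$ (from $\widehat{\widetilde\phi}_\mathrm{B}$) are disjoint, so the relevant Kronecker deltas are zero; this is harmless but worth noting.

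The genuine gap is in your treatment of the commutators with $\theta_M$. The claim that $[\widetilde L_2^+,\theta_M]\cB_c^+$ and $\cB_c^-[\widetilde L_2^+,\theta_M]$ are ``strictly subdominant'' because of the extra $M^{-1}$ from~\eqref{eq:theta-diff} is not correct: these terms are of the \emph{same} order $\sqrt M\,N^{\alpha-1/2}(\cN_++1)$, not smaller. The point is that the $\theta$-difference localizes to $\cN_+\sim M$, and on that window the product $\widetilde L_2^\pm\cB_c^\pm$ carries $\cN_+^{5/2}$ worth of growth (one power from $\widetilde L_2$, three-halves from $\cB_c$); combined with the $M^{-1}$ you get $M^{-1}\cdot M^{1/2}\cdot M=M^{1/2}$ relative to $(\cN_++1)$, exactly the leading order. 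The paper makes this precise: using~\eqref{eq:theta-diff} together with the $\cB_c^-$-bound~\eqref{eq:B^- bound} and the elementary estimate $\|\widetilde L_2^\pm\Phi\|\lesssim\|\widehat{\widetilde V}_N\|_2\|(\cN_++1)\Phi\|$, one finds
\[
\big|\langle\Psi,[\widetilde{\mathcal L}_2,\theta_M]\cB_c^+\Psi\rangle\big|\;\lesssim\; N^{\alpha-1/2}\,M^{-1}\,\|\cN_+\1_{\cN_+\le 2M}\Psi\|\,\|(\cN_++1)^{3/2}\1_{\cN_+\le 2M}\Psi\|\;\lesssim\;\sqrt M\,N^{\alpha-1/2}\|(\cN_++1)^{1/2}\Psi\|^2,
\]
where the cutoff $\1_{\cN_+\le 2M}$ on \emph{both} factors is essential. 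So your approach works, but this step requires an actual estimate of the same type as the main term, not the heuristic ``gain $M^{-1}$''. Once you supply that, the proof is complete.
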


\begin{proof}
 From the definition of $\widetilde \cL_2$ in Proposition~\ref{prop:U_q} and using that $a_p \mathcal N_+ = (\mathcal N_++1) a_p$, we have
\begin{equation}
 [\widetilde{\mathcal L}_2,\theta_M(\cN_+)]=  \big(\theta_M(\cN_+)-\theta_M(\cN_++2)\big)\sum_{|p|\leq N^\alpha} 4\pi \ao_{V} a_pa_{p} +\hc
\end{equation}
Using the bound~\eqref{eq:theta-diff} on the difference for $\theta_M$ together
 with the bound~\eqref{eq:B^- bound} on $\cB_c^-$, gives
\begin{align}
 |\langle \Psi, [\widetilde{\mathcal L}_2,\theta_M(\cN_+)] \cB_c^+ \Psi\rangle| & \lesssim N^{\alpha -\frac12} M^{-1}\| \cN_+ \1_{\cN_+\leq 2M} \Psi\| \|(\cN_++1)^{3/2} \1_{\cN_+\leq 2M} \Psi\| \notag \\
 &\lesssim \sqrt M  N^{\alpha-\frac12} \|(\cN_++1)^{1/2}  \Psi\|^2.
\end{align}
Due to the cut-off $\1_{|p|> N^{\alpha}}$ in the definitions of $\Vphih$ and $\mathcal B_c^+$, we have
\begin{align}
 [\widetilde{\mathcal L}_2,\cB_c^+]&= 8\pi \ao_{V} \sum_{p,q} \sqrt N \Vphih(p)\1_{|q| \leq N^{\alpha}}   a^*_{-p}a^*_{p+q} a^*_q  \notag \\
 &\quad -8\pi \ao_{V} \sum_{p,q} \sqrt N \Vphih(p)\1_{|p+q| \leq N^{\alpha}}\1_{|q| \leq N^{\alpha}}a^*_{-p} a_{-(p+q)} a_q.
\end{align}
With the reasoning of~\eqref{eq:bound_l2_linfty} we get from this
\begin{align}
 |\langle \Psi, \theta_M(\cN_+)[\widetilde{\mathcal L}_2,\cB_c^+]\Psi\rangle| 
 &\lesssim \sqrt{N} \|\Vphih\|_\infty N^{3\alpha/2} \|\cN_+^{1/2}\Psi \| \| 1_{\cN_+\leq 2M} \cN_+\Psi\| \notag \\
 &\lesssim \sqrt M N^{-1/2-\alpha/2} \|\cN_+^{1/2}\Psi \|^2.
\end{align}
Together with the bound on the commutator with $\theta_M$ this proves the claim.
\end{proof}

\begin{lem}\label{lem:L3_Bc}
 For all $\alpha>0$, $M \geq 1$ we have as quadratic forms on $\mathscr{H}_+$
 \begin{align*}
  [\cL_3^{U_q}, \cB_c] = 4 ( 8\pi \ao_{V}-\widehat V(0)) \cN_+ + \cE_{[\cL_3, \cB_c]},
 \end{align*}
where the error satisfies uniformly in $M$
\begin{equation*}
\pm \cE_{[\cL_3, \cB_c]} \lesssim N^{-\alpha/2} \mathcal{L}_4+ N^{-\alpha} \ud \Gamma(|\ui \nabla|) + MN^{-1 + 5\alpha/2} (\mathcal N_++1) + M^{-1} \mathcal N_+^2.
\end{equation*}
\end{lem}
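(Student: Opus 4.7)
The plan is to compute $[\cL_3^{U_q}, \cB_c]$ by explicit normal ordering in Fourier space, isolate the leading scalar-$\cN_+$ contribution via the scattering equation, and control the remaining normal-ordered terms by Cauchy--Schwarz as in the companion Lemmas~\ref{lem:Uc_Delta} and~\ref{lem:Uc_L_4}. Writing $\cL_3^{U_q} = L + L^*$ with $L = \frac{\sqrt N}{2}\sum_{k,r}\widehat V_N(k)\,a^*_{k+r}a_k a_r$ and splitting $\cB_c = \theta_M(\cN_+)\cB_c^+ - \cB_c^-\theta_M(\cN_+)$, I would first isolate the $\theta_M$-commutator contribution: since $L$ shifts $\cN_+$ by $-1$ and $L^*$ by $+1$, the bound~\eqref{eq:theta-diff} together with the operator estimate~\eqref{eq:B^- bound} on $\cB_c^\pm$ yields a contribution of order $M^{-1}\cN_+^2$, which is absorbed in $\cE_{[\cL_3, \cB_c]}$.

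For the leading piece $\theta_M([L,\cB_c^+]-[L,\cB_c^-]) + ([L^*,\cB_c^+]-[L^*,\cB_c^-])\theta_M$, I would compute each commutator using the CCR. Only $[L,\cB_c^+]$ and $[L^*,\cB_c^-]$ admit a double contraction producing a scalar-times-$a^*_q a_q$ term: the two annihilations $a_k, a_r$ in $L$ contract with the two creations $a^*_{p+q}, a^*_{-p}$ in $\cB_c^+$ in two inequivalent ways ($k=p+q,\,r=-p$ and $k=-p,\,r=p+q$), each leaving $a^*_q a_q$ and producing coefficients $\widehat V_N(p+q)\Vphih(p)$ and $\widehat V_N(p)\Vphih(p)$. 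For $|q|\leq N^\alpha$, the difference $\widehat V_N(p+q)-\widehat V_N(p)$ is $\mathcal O(|q|/N)$ uniformly (using $|y|V\in L^1$), so after summing over $p$ and applying Parseval,
\begin{equation*}
\sum_p \widehat V_N(p)\Vphih(p) = \int V_N\Vphi,
\end{equation*}
and further using $N\int V_N\Vphi = N\int V_N\phi_{\mathrm B} + \mathcal O(N^{-1+\alpha}) = 8\pi\aV - \widehat V(0) + o(1)$ together with Lemma~\ref{lem:aM-difference}, one recovers the claimed leading coefficient $4(8\pi \ao_V - \widehat V(0))\cN_+$ (with the factor arising from the two contraction patterns doubled by the Hermitian-conjugate pair $[L^*,\cB_c^-]$).

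The remaining non-contracted and single-contracted normal-ordered pieces (the $a^*a^*a$ and $a^*a^*a^*aaa$ terms) are estimated by the same Cauchy--Schwarz bounds used in Lemma~\ref{lem:Uc_Delta} and Lemma~\ref{lem:Uc_L_4}: each expression is split as $\delta\cL_4 + \delta^{-1}(\cdots)$ and optimized. Using $\|\Vphih\|_\infty \lesssim N^{-1-2\alpha}$, $\|{|\nabla|}^{1/2}\Vphi\|_2 \lesssim N^{-1}\sqrt{\log N}$ from Lemma~\ref{lem:phi_V}, and the momentum-cutoff identity $\sum_{|q|\leq N^\alpha} a^*_q a_q \leq \cN_+$ (with error $N^{-\alpha}\ud\Gamma(|\ui\nabla|)$ in the other direction), one arrives at the advertised error $N^{-\alpha/2}\cL_4 + N^{-\alpha}\ud\Gamma(|\ui\nabla|) + MN^{-1+5\alpha/2}(\cN_++1)$; the last term reflects the $\theta_M$-restriction via $\cN_+\theta_M^2 \leq M$.

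The main obstacle is the bookkeeping: several cross-contractions produce operators that are not yet normal-ordered, and each resulting term needs the right Cauchy--Schwarz splitting to avoid a net positive exponent of $N$ in the error. Tracking the $\theta_M$ cutoff carefully is essential, since na\"\i ve estimates would give $\cN_+^2$ without the $M^{-1}$; this is handled by commuting $\theta_M$ across the contraction steps so that it ultimately multiplies a number operator rather than an $\cN_+^2$. This follows precisely the strategy of~\cite[Lem.~12]{HaiSchTri-22}, to which we refer for the analogous explicit computations.
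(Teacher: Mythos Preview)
Your strategy is the same as the paper's: isolate the $\theta_M$-commutator, compute $[\cL_3^\pm,\cB_c^+]$ in Fourier space, extract the main $a^*_qa_q$ term via the scattering relation $N\int V_N\phi_{\mathrm B}=8\pi\aV-\widehat V(0)$, and absorb the rest by Cauchy--Schwarz. Two corrections are worth noting. First, the $\theta_M$-commutator contribution $[\cL_3^{U_q},\theta_M(\cN_+)]\cB_c^+ +\hc$ cannot be bounded by $M^{-1}\cN_+^2$ alone, because $\cL_3^\pm$ is not controlled by powers of $\cN_+$: one needs a Cauchy--Schwarz splitting against $\cL_4$ (the paper invokes~\cite[Lem.~26]{NamTri-23}), which produces $N^{-\alpha/2}\cL_4 + MN^{-1-\alpha/2}(\cN_++1)$. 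The $M^{-1}\cN_+^2$ term in the final error instead arises at the very end, when the cutoff $(1-\theta_M(\cN_++1))$ is removed from the already-extracted main term $\sim(8\pi\ao_V-\widehat V(0))\cN_+$. Second, the relevant computation in~\cite{HaiSchTri-22} is Lemma~15, not Lemma~12 (the latter underlies Lemma~\ref{lem:Uc_L_4}).
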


\begin{proof}
 The proof follows the computations carried out in \cite[Lem.15]{HaiSchTri-22}, with the only difference  being the cut-off $\theta_M$ that does not entirely commute with $\mathcal L_3^{U_q}$, which we deal with in the same way as in \cite{NamTri-23}.
 First, write $\mathcal L_3^{U_q}=\cL_3^+ + \cL_3^-$ with
 \begin{equation}
  \cL_3^+ = \frac{1}{2}\int \sqrt{N} V_N(y_1-y_2) a^*_{y_1} a_{y_2}^* a_{y_1}
 \end{equation}
and $\cL_3^-=(\cL_3^+)^*$.
With this, we find
\begin{multline}
 [\cL_3^{U_q}  ,\theta_M(\cN_+)] \cB_c^+=[\cL_3^+ ,\theta_M(\cN_+)] \cB_c^+\\
 +  (\theta_M(\cN_+ +1)-\theta_M(\cN_+))(\cB_c^+\cL_3^- + [\cL_3^-,\cB_c^+]).\label{eq:L_3-theta-comm}
\end{multline}
Using the bounds~\eqref{eq:theta-diff} on $\theta_M$, \eqref{eq:B^- bound} on $\cB_c^\pm$ and the Cauchy-Schwarz inequality (as in~\cite[Lem.26]{NamTri-23}), gives
\begin{multline}
 \big(\cL_3^+ [\theta_M(\cN_+), \cB_c^+] +  (\theta_N(\cN_+ +1)-\theta_M(\cN_+))\cB_c^+\cL_3^- +\hc \big)\\
 \lesssim N^{-\alpha/2} \mathcal{L}_4 + MN^{-1 - \alpha/2}  (\mathcal N_+ +1).
\end{multline}
In view of~\eqref{eq:L_3-theta-comm}, we thus have
\begin{multline}
 [\cL_3^{U_q}  ,\cB_c] = \theta_M(\cN_+ )[\cL_3^+,\cB_c^+] + \theta_M(\cN_+ +1)[\cL_3^-,\cB_c^+]   + \hc  \\
 + \mathcal{O}\big(N^{-\alpha/2} \mathcal{L}_4 + MN^{-1 - \alpha/2}  (\mathcal N_+ +1)\big).
\end{multline}
The remaining relevant terms correspond to those denoted (I) and (II) in~\cite[Eq.(77)]{HaiSchTri-22}.
Following the proof there yields (replacing $a_0^\dagger$, $a_0$ by $\sqrt {N}$ and thus ignoring their commutators)
\begin{equation}
 \theta_M(\cN_+ )[\cL_3^+,\cB_c^+] = \mathcal{O}\big( N^{-\alpha/2} \mathcal{L}_4 + MN^{-1+\alpha}  (\mathcal N_+ +1) \big).
\end{equation}
Now following the proof from~\cite[Eq.(79)]{HaiSchTri-22} on, we find
(note that the term $\text{(II)}_{b}$ in our case is just equal to $\text{(II)}_{b_1}$ with $a_0^\dagger a_0$ is replaced by $N$)
\begin{multline}
 \theta_M(\cN_+ +1) [\mathcal L_3^-,\mathcal B_c^+] = \theta_M(\cN_+ +1)  \sum_{q} N \big(\widehat{V}_N \ast \Vphih(0) + \widehat{V}_N \ast \Vphih(q)\big)  \1_{|q| \leq N^\alpha} a_q^* a_q \\
 +  \mathcal{O}\big( N^{-\alpha/2} \mathcal{L}_4 + M N^{-1+5\alpha/2}  (\mathcal N_+ +1) \big).
\end{multline}
As in (\ref{eq:V_N renorm error}), we have 
\begin{align}
 N |\widehat{V}_N \ast \Vphih(q) - \widehat{V}_N \ast \Vphih(0)| \lesssim |q| N^{-1},
\end{align}
and together $ \|\widetilde{\phi}_{\mathrm{B}}-\phi_{\mathrm{B}}\|_\infty \lesssim N^{-1+\alpha}$ from Lemma \ref{lem:phi_V}, we obtain for $|q| \leq N^{\alpha}$
\begin{align}
 N |\widehat{V}_N \ast \Vphih(q) -  N \widehat{V}_N \ast \widehat{\varphi}_B(0)| \lesssim N^{-1+\alpha}.
\end{align}
Using this, we have
\begin{multline}
 \theta_M(\cN_+ +1)  \sum_{q} N  \widehat{V}_N \ast \Vphih(q)  \1_{|q| \leq N^\alpha} a_q^* a_q \\
 = \theta_M(\cN_+ +1)  \sum_{q} N  \widehat{V}_N \ast \varphi_B(0)  \1_{|q| \leq N^\alpha}a_q^* a_q + \mathcal{O}(N^{-1+\alpha}\cN_+).
\end{multline}
By definition~\eqref{eq:aV_N}, $N \widehat{V}_N \ast \widehat{\varphi}_B(0) = 8\pi \aV - \widehat V(0)$, which is bounded in $N$. To remove the cutoffs $\theta_M$ and $\1_{|q|\leq N^\alpha}$, we use that
\begin{equation}
 \sum_{q} N \widehat{V}_N \ast \widehat{\varphi}_B(0)\1_{|q| > N^\alpha} a_q^* a_q = \mathcal{O}\big(N^{-\alpha} \ud \Gamma(|\ui \nabla|)),
\end{equation}
and 
\begin{equation}
 (1-\theta_M(\cN_+ +1)) \sum_{q} N \widehat{V}_N \ast \Vphih(0) a_q^* a_q =\mathcal{O}(M^{-1}\cN_+^2).
\end{equation}
Since $\aV = \ao_V + \mathcal O(N^{-1})$ by Lemma \ref{lem:aM-difference} this proves the claim.
\end{proof}

\paragraph{Commutator estimates: Interaction Hamiltonian.}
We now prove bounds on the commutators of $\mathcal B_c$ with the terms in the interaction Hamiltonian.
\begin{lem}	\label{lem:com_a_Bc}
For all $\alpha>0$ and uniformly in $M\geq 1$ we have as quadratic forms on $\mathscr{H}_+$
 \begin{align*}
  [a^*( \widetilde W_{N,x}), \mathcal B_c] + \hc =\mathcal{O}\big( N^{\alpha-1/2} (\mathcal N_+ +1)\big).
\end{align*}
\end{lem}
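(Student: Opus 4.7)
The plan is to expand the commutator using $\mathcal B_c=\theta_M(\cN_+)\mathcal B_c^+-\mathcal B_c^-\theta_M(\cN_+)$, which yields four pieces:
\begin{align*}
[a^*(\widetilde W_{N,x}),\mathcal B_c]=\theta_M[a^*(\widetilde W_{N,x}),\mathcal B_c^+]+[a^*(\widetilde W_{N,x}),\theta_M]\mathcal B_c^+\\
-[a^*(\widetilde W_{N,x}),\mathcal B_c^-]\theta_M-\mathcal B_c^-[a^*(\widetilde W_{N,x}),\theta_M],
\end{align*}
plus the Hermitian conjugates. I would first compute the two ``inner'' commutators in closed form. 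By the CCR and the identity $\widetilde W_N\ast\chi=\widetilde W_N$ (since $\widehat\chi\equiv 1$ on $\{|p|\leq N^\alpha\}\supset\mathrm{supp}\,\widehat{\widetilde W_N}$), the computation yields
\[
[a^*(\widetilde W_{N,x}),\mathcal B_c^+]=-\sqrt N\int \widetilde W_{N,x}(y_3)\,\widetilde\phi_{\mathrm B}(y_2-y_3)\,a^*_{y_3}a^*_{y_2}\,dy_2\,dy_3,
\]
a two-boson creation with kernel $K$. For $[a^*(\widetilde W_{N,x}),\mathcal B_c^-]$, one of the two contractions vanishes because $\widehat{\widetilde W_N}$ lives on $|p|\leq N^\alpha$ while $\widehat{\widetilde\phi}_{\mathrm B}$ on $|p|>N^\alpha$; what remains is a particle-conserving operator $\sum M(q,p)\,a_q^*a_{-p}$ supported on $N^\alpha<|p|\leq 2N^\alpha$, $|q|,|p+q|\leq N^\alpha$.

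For the main pieces I would use the two standard form bounds: for a two-particle kernel,
\[
|\langle\Psi,(\theta_Ma^*(K)+a(K)\theta_M)\Psi\rangle|\leq C\|K\|_2\,\|(\cN_++1)^{1/2}\Psi\|^2,
\]
proved by writing the form as $2\Re\langle\theta_M\Psi,a^*(K)\Psi\rangle$ and applying Cauchy--Schwarz level by level in $\cN_+$; and for the quadratic part, $|\langle\Psi,(T\theta_M+\theta_MT^*)\Psi\rangle|\leq 2\|M\|_{\mathrm{op}}\langle\Psi,\cN_+\Psi\rangle$. Schur's test together with $\|\widehat{\widetilde\phi}_{\mathrm B}\|_\infty\lesssim N^{-1-2\alpha}$ (Lemma~\ref{lem:phi_V}) and the momentum-support volume $\lesssim N^{3\alpha}$ gives $\|M\|_{\mathrm{op}}\lesssim N^{-1/2+\alpha}$, while $\|K\|_2\lesssim\sqrt N\,\|\widetilde W_{N,x}\|_2\|\widetilde\phi_{\mathrm B}\|_2\lesssim\sqrt N\cdot N^{3\alpha/2}\cdot N^{-1-\alpha/2}=N^{-1/2+\alpha}$. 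Both main terms thus contribute $\mathcal O(N^{\alpha-1/2}(\cN_++1))$.

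The main obstacle, and the point requiring care, is the $\theta_M$-derivative terms, for which the naive bound using $\|\mathcal B_c^\pm\Psi_n\|\lesssim N^{-1/2-\alpha/2}(n+2)^{3/2}\|\Psi_n\|$ from~\eqref{eq:B^- bound} is too crude if one factors out $(n+2)^2\leq M^2$. Instead, using $|\theta_M(n+1)-\theta_M(n+2)|\lesssim M^{-1}\mathbf 1_{n+2\leq M+1}$, I would keep the weight $(n+2)^2$ inside a Cauchy--Schwarz over $n$ to obtain
\[
|\langle\Psi,[a^*(\widetilde W_{N,x}),\theta_M]\mathcal B_c^+\Psi\rangle|\lesssim M^{-1}\|\widetilde W_{N,x}\|_2\,N^{-1/2-\alpha/2}\,\|(\cN_++1)\Psi'\|^2,
\]
where $\Psi'=\mathbf 1_{\cN_+\leq M+1}\Psi$ captures the effective support. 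Exploiting $(\cN_++1)^2\Psi'\leq (M+2)(\cN_++1)\Psi'$ on this support and inserting $M=N^{1/2+\alpha/2}$ yields exactly the required $\mathcal O(N^{\alpha-1/2}\langle\Psi,(\cN_++1)\Psi\rangle)$; the term $\mathcal B_c^-[a^*(\widetilde W_{N,x}),\theta_M]$ is handled identically via the adjoint. Summing the four contributions then gives the claim.
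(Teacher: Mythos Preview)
Your argument is correct and follows essentially the same route as the paper's proof: decompose $[a^*(\widetilde W_{N,x}),\mathcal B_c]$ into commutators with $\theta_M$ and with $\mathcal B_c^\pm$, then use the standard form bound on the two--boson creation kernel $K$ (with $\|K\|_2\lesssim \sqrt N\,\|\widetilde W_{N}\|_2\|\widetilde\phi_{\mathrm B}\|_2\lesssim N^{\alpha-1/2}$), the bound~\eqref{eq:B^- bound} on $\mathcal B_c^\pm$, and~\eqref{eq:theta-diff} for the $\theta_M$--difference. Your explicit treatment of $[a^*(\widetilde W_{N,x}),\mathcal B_c^-]$ via the momentum--support observation and Schur's test is a nice elaboration; the paper simply asserts ``a similar bound holds'' there.

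One phrasing issue: you write ``inserting $M=N^{1/2+\alpha/2}$'' in the last step, but the lemma asks for a bound \emph{uniform in $M\ge 1$}, and your own computation already delivers this: the factor $M^{-1}$ from~\eqref{eq:theta-diff} cancels against the $M$ you extract from $\|(\cN_++1)^{3/2}\mathbf 1_{\cN_+\le M}\Psi\|\le M\|(\cN_++1)^{1/2}\Psi\|$, leaving $N^{3\alpha/2}\cdot N^{-1/2-\alpha/2}=N^{\alpha-1/2}$ with no residual $M$--dependence. So no specific value of $M$ needs to be invoked. (Relatedly, the ``naive'' bound you dismiss is in fact adequate if one distributes powers of $(\cN_++1)$ asymmetrically as the paper does, rather than factoring out $(n+2)^2\le M^2$.)
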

\begin{proof}
 Recall that $\widehat{\widetilde W}_N(p)=8\pi \ao_W\1_{|p|<N^\alpha}$.
 Since
 \begin{align}
  [a^*( \widetilde W_{N,x}), \theta_M(\cN_+)] = \big(\theta_M(\cN_+-1) - \theta_M(\cN_+)\big) a^*( \widetilde W_{N,x})
 \end{align}
and with~\eqref{eq:theta-diff} as well as~\eqref{eq:B^- bound} we find
\begin{align}
&| \langle \Psi, [a^*( \widetilde W_{N,x}), \theta_M]  \mathcal B_c^+  \Psi\rangle| \lesssim \|\widetilde W_{N}\|_2 \|\cN_+^{1/2}\Psi\|    M^{-1} \|\1_{\cN_+\leq 2M}\cB_c^+ \Psi\| \notag\\
&\quad \lesssim  N^{3\alpha/2} M^{-1} \|\sqrt{N}\Vphi \|_2  \|\cN_+^{1/2}\Psi\|  \|\1_{\cN_+\leq 2M}(\cN_++1)^{3/2} \Psi\| \notag\\
&\quad  \lesssim N^{-1/2+\alpha} \|(\cN_++1)^{1/2} \Psi\|^2.
\end{align}
For the commutators with $\cB_c^\pm$, we have
\begin{align}
   \theta_M [a^*( \widetilde W_{N,x}),   B_c^+] + \hc &=   \sum_{p,q} \sqrt N \Vphih(p)  \theta_M a^*_{p+q}a^*_{-p} \widehat{\widetilde W}_{N,x}(q) + \hc \notag \\
	&=\mathcal{O}\big( N^{-1/2} \| N\Vphi\|_2 \|\widetilde W_{N,x}\|_{2} (\mathcal N_+ + 1)\big) \nn\\
	&=\mathcal O(N^{\alpha-1/2} (\mathcal N_+ +1)).
\end{align}
A similar bounds holds for $\theta_M[a^*( \widetilde W_{N,x}),   B_c^-]$. Together, these inequalities imply the claim.
\end{proof}

\begin{lem}
	\label{lem:com_Q2_Bc}
We have as quadratic forms on $\mathscr{H}_+$ for all $\alpha>0$ and uniformly in $M \geq 1$
\begin{align*}
 \pm [\ud \Gamma(W_{N,x}),\mathcal B_c] \lesssim  (M/N)^{1/2} N^{-1/4+\alpha} (\ud \Gamma(W_{N,x}) + \mathcal N_+).
\end{align*}
\end{lem}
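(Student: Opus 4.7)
The plan is to reduce the problem to a commutator computation on $\cB_c^\pm$ alone, then bound each resulting three-body expression by Cauchy--Schwarz using the $L^p$-estimates of Lemma~\ref{lem:phi_V} and the definition of $\ud \Gamma(W_{N,x})$. Since $\ud \Gamma(W_{N,x})$ is number-conserving, it commutes with $\cN_+$ and hence with $\theta_M(\cN_+)$, so
\[
 [\ud \Gamma(W_{N,x}),\cB_c] = \theta_M(\cN_+)[\ud\Gamma(W_{N,x}),\cB_c^+] - [\ud\Gamma(W_{N,x}),\cB_c^-]\,\theta_M(\cN_+).
\]
From the position-space form~\eqref{eq:B^+-position} and the CCR, a direct calculation gives
\[
 [\ud\Gamma(W_{N,x}),\cB_c^+] = \sqrt N \int \chi(y_1-y_3)\Vphi(y_2-y_3)\bigl[W_N(x-y_2)+W_N(x-y_3)-W_N(x-y_1)\bigr] a^*_{y_3}a^*_{y_2}a_{y_1},
\]
so I reduce to bounding three operators $A_j$, each of the same structure as $\cB_c^+$ but with an additional weight $W_N(x-y_j)$ inserted.

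For each $A_j$, I would split $W_N=W_N^{1/2}\cdot W_N^{1/2}$ and apply Cauchy--Schwarz to $\langle\theta_M\Phi,A_j\Psi\rangle$, distributing the two $W_N^{1/2}$ factors so that on one side we reconstruct $\int W_N(x-y)\|a_y\Psi\|^2=\langle\Psi,\ud\Gamma(W_{N,x})\Psi\rangle$ (pairing $W_N$ with an annihilation operator), while the other side produces $\langle\Phi,\cN_+\Phi\rangle$ multiplied by a scalar prefactor built from the norms
\[
 \|\Vphi\|_2\lesssim N^{-1-\alpha/2}, \qquad \|W_N\|_1 \lesssim N^{-1/2}, \qquad \|\chi\|_2^2\lesssim N^{3\alpha},\qquad \|\chi\|_\infty\lesssim N^{3\alpha}.
\]
The cutoff $\theta_M$ is then used to absorb residual factors of $\cN_+$ that arise from integrating $\|a_{y_2}a_{y_3}\Phi\|^2$, via $\theta_M(\cN_+)\cN_+(\cN_+-1)\theta_M(\cN_+) \leq M\cN_+$. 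Combining the two sides by AM--GM yields a bound of the form $M^{1/2}N^{-3/4+O(\alpha)}(\cN_++\ud\Gamma(W_{N,x}))$, which is the claimed estimate.

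The main obstacle is choosing the splitting correctly for each of the three terms. For the term with $W_N(x-y_1)$, the factor $W_N$ is naturally paired with the annihilation $a_{y_1}$, and the estimate is symmetric; the scalar prefactor on the other side involves $|\chi|^2\ast W_N$, which must be bounded in $L^\infty$ by $\|\chi\|_\infty^2\|W_N\|_1$ rather than the much worse $\|W_N\|_\infty\|\chi\|_2^2$. For the term with $W_N(x-y_2)$ (and similarly $W_N(x-y_3)$), $W_N$ is paired with a creation operator on the opposite side of the inner product, and the clean split is to use the Fubini decomposition $\int|\Vphi(y_2-y_3)|^2 W_N(x-y_2)\,\ud y_2\,\ud y_3 = \|\Vphi\|_2^2\|W_N\|_1$ so that $W_N$ migrates across to the annihilation side while $|\chi|^2$ contributes only $\|\chi\|_2^2$. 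Once these splittings are fixed, the remaining task is purely arithmetic, and a standard continuity/Duhamel argument extends the bound along the flow $U_c(t)$ as needed elsewhere in the paper.
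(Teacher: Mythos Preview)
Your strategy---reduce to $\theta_M[\ud\Gamma(W_{N,x}),\cB_c^+]+\hc$, compute the commutator in position space, split each of the three terms by Cauchy--Schwarz, then absorb the extra $\cN_+$ via $\theta_M$---is exactly what the paper does. The commutator formula you wrote is correct, and the overall bookkeeping with $\|\Vphi\|_2$, $\|W_N\|_1$, $\|\chi\|_2$ is the right currency.

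There is, however, one inefficiency that prevents you from reaching the stated exponent~$\alpha$. For the $W_{N}(x-y_1)$ term you say the prefactor ``involves $|\chi|^2\ast W_N$, which must be bounded in $L^\infty$ by $\|\chi\|_\infty^2\|W_N\|_1$.'' Since $\|\chi\|_\infty^2\sim N^{6\alpha}$ while $\|\chi\|_2^2\sim N^{3\alpha}$, this costs you an extra $N^{3\alpha}$, and after optimizing $\delta$ you land at $(M/N)^{1/2}N^{-1/4+5\alpha/2}$ rather than $(M/N)^{1/2}N^{-1/4+\alpha}$. The $L^\infty$ bound is not needed: do Cauchy--Schwarz in two stages. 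First in $y_1$ (splitting $W_N^{1/2}W_N^{1/2}$) to produce $\langle\Psi,\ud\Gamma(W_{N,x})\Psi\rangle^{1/2}$ and the weight $(|\chi|^2*W_{N,x})(y_3)^{1/2}$; then in $(y_2,y_3)$ against $\|a_{y_2}a_{y_3}\Phi\|$, which leaves the \emph{integral}
\[
\int |\Vphi(y_2-y_3)|^2\,(|\chi|^2*W_{N,x})(y_3)\,\ud y_2\,\ud y_3=\|\Vphi\|_2^2\,\|\chi\|_2^2\,\|W_N\|_1,
\]
exactly the combination the paper records. With that single adjustment your argument yields the lemma as stated; for the $W_N(x-y_2)$ and $W_N(x-y_3)$ terms your Fubini observation is adequate (there the $\ud\Gamma(W_{N,x})$ contribution can simply be replaced by $\cN_+$, which is dominated by the right-hand side anyway).
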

\begin{proof}
First, note that $\ud \Gamma(W_{N,x})$ commutes with $\theta_M(\cN_+)$.
Using the representation~\eqref{eq:B^+-position} of $\cB^+_c$, we compute
\begin{align}
&[\ud \Gamma(W_{N,x}),\mathcal B_c^+] \label{eq:[Q_2,B^+]}\\
&	= 	\int \sqrt N \Vphi(y_2-y_3)(\Wx (y_3) + \Wx (y_2)-\Wx(y_1))\chi(y_1-y_3) a^*_{y_3}a^*_{y_2} a_{y_1} \notag.
\end{align}
Using the Cauchy Schwarz inequality similarly to~\eqref{eq:a-Q_2 bound}, we get for any $\delta>0$ the bound
\begin{align}
 &\pm\bigg(\int \sqrt N \Vphi(y_2-y_3)\chi(y_1-y_3)\Wx(y_1) a^*_{y_3}a^*_{y_2} a_{y_1} + \hc\bigg)\notag \\
 &\quad \leq \delta \ud \Gamma(W_{N,x}) + \delta^{-1} N \|W_{N}\|_1 \|\chi\|_2^2 \|\Vphi\|_2^2 \cN_+ (\cN_+ -1).
\end{align}
The other terms in~\eqref{eq:[Q_2,B^+]} satisfy the same bound with $\cN_+^2$ replacing $\cN_+(\cN_+-1)$.
With $N \|W_{N}\|_1 \|\chi\|_2^2 \|\Vphi\|^2\lesssim N^{-3/2+2\alpha}$ and choosing $\delta=\sqrt{M} N^{-3/4+\alpha}$, this yields
\begin{multline}
\pm \big(\theta_M(\cN_+)[\ud \Gamma(W_{N,x}),\mathcal B_c^+] + \hc \big) \\\lesssim (M/N)^{1/2} N^{-1/4+\alpha} (\ud \Gamma(W_{N,x}) + \cN_+)
\end{multline}
and proves the claim.
\end{proof}

\begin{lem}	\label{lem:quad_part_Bc}
Denoting $g = \sqrt N \ui \nabla \Wphi$, we have that for all $0<\alpha\leq 1/6$ and uniformly in $M \geq 1$, as quadratic forms on $\mathscr{H}_+$,
\begin{subequations}
 \begin{align} \label{eq:lem:quad_part_Bc_1}
 \pm \big([a^*(g_x)^2, \mathcal B_c] + \hc \big)  & \lesssim  \sqrt M  N^{-\frac14-\frac\alpha2}\big( \log N   \dd\Gamma(|\ui \nabla|) +1), \\
 \label{eq:lem:quad_part_Bc_2}
 \pm \big([a^*(g_x)a(g_x), \mathcal B_c] + \hc \big)
 &\lesssim  \sqrt M  N^{-\frac14-\frac\alpha2}\big( \log N   \dd\Gamma(|\ui \nabla|) +1), \\
  \label{eq:lem:quad_part_Bc_3}
 \pm \big([a^*(g_x)\ui \nabla_x, \mathcal B_c] + \hc \big)  & \lesssim  \sqrt M  N^{-\frac14-\frac\alpha2}\left( \dd\Gamma(|\ui \nabla|) +  (\cN_++1)^{1/2} |\ui \nabla_x| + 1\right).
\end{align}
\end{subequations}
\end{lem}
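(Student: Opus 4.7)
The plan is to treat each of the three bounds by the same two-step scheme: first compute the commutator with $\mathcal B_c^{\pm}$ explicitly using the CCR, then bound the resulting operator by Cauchy–Schwarz, exploiting the norm estimates for $\widetilde\phi_\mathrm{B}$ and $\widetilde\phi_\mathrm{I}$ from Lemmas~\ref{lem:phi_V}--\ref{lem:phi_W}. The cutoff $\theta_M(\cN_+)$ is handled via~\eqref{eq:theta-diff} and the observation that $\|(\cN_++1)^{1/2}\theta_M(\cN_+)\Psi\|\lesssim \sqrt{M}\,\|\Psi\|$, which is what produces the $\sqrt M$ prefactor in the conclusions.

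For \eqref{eq:lem:quad_part_Bc_1}, writing $\mathcal B_c = \theta_M\mathcal B_c^{+}-\mathcal B_c^{-}\theta_M$, we split
\begin{equation*}
[a^*(g_x)^2,\mathcal B_c] = [a^*(g_x)^2,\theta_M]\mathcal B_c^{+} + \theta_M[a^*(g_x)^2,\mathcal B_c^{+}] - [a^*(g_x)^2,\mathcal B_c^{-}]\theta_M - \mathcal B_c^{-}[a^*(g_x)^2,\theta_M].
\end{equation*}
A direct use of $[a^*(g_x)^2,a_q]= -2\widehat{g_x}(q)\,a^*(g_x)$ gives
\begin{equation*}
\theta_M[a^*(g_x)^2,\mathcal B_c^{+}] = -2\theta_M\!\!\sum_{p,q\in 2\pi\Z^3\setminus\{0\}\atop |q|\leq N^{\alpha}}\!\!\sqrt N\,\widehat{\widetilde\phi}_\mathrm{B}(p)\,\widehat{g_x}(q)\,a^*_{p+q}a^*_{-p}a^*(g_x),
\end{equation*}
which is purely creation, while $[a^*(g_x)^2,\mathcal B_c^-]$ yields, after normal ordering, terms of degrees $4$ and $2$ in the creation/annihilation operators. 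Each term is estimated by Cauchy–Schwarz in the form $|\langle \Phi,(\cdots)\Psi\rangle|\leq (\cdots)\|\Phi\|\cdot(\cdots)\|\Psi\|$. The crucial trick, to produce the $\log N$ factor, is to split the weight on $q$ as $\widehat g(-q)=|q|^{1/2}\cdot |q|^{-1/2}\widehat g(-q)$: the $|q|^{-1/2}\widehat g$ part is square-summable on $|q|\leq N^{\alpha}$ with norm $\||\nabla|^{-1/2}g\|_2^2=N\||\nabla|^{1/2}\widetilde\phi_\mathrm{I}\|_2^2\lesssim \log N$ by Lemma~\ref{lem:phi_W}, while the $|q|^{1/2}$ absorbs into $a_q^*a_q$ to make $\ud\Gamma(|\ui\nabla|)$ appear. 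The remaining factors $\sqrt N\|\widehat{\widetilde\phi}_\mathrm{B}\|_2\lesssim N^{-1/2-\alpha/2}$ and $\sqrt{M}$ from $\theta_M$ combine to give the announced prefactor $\sqrt M\, N^{-1/4-\alpha/2}$. The commutator with $\theta_M$, via~\eqref{eq:theta-diff}, yields $M^{-1}\1_{\cN_+\le 2M+O(1)}$ which together with the \emph{a priori} bound~\eqref{eq:B^- bound} on $\mathcal B_c^{-}$ generates the same size and is absorbed in the error.

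For \eqref{eq:lem:quad_part_Bc_2}, the operator $a^*(g_x)a(g_x)$ is number-preserving. Its commutator with $\mathcal B_c^{\pm}$ is computed by applying $[a_q,a^*(g_x)]=\widehat{g_x}(q)$ and $[a^*(g_x),a^*_k]=0$ to the appropriate factors; this produces, schematically, terms of the form $\sqrt N\,\widehat{\widetilde\phi}_\mathrm{B}(p)\widehat{g_x}(q)\, a^*_{p+q}a^*_{-p}a(g_x)$ and its adjoint (with an extra singleton term from the $\ud\Gamma$-pairing that is bounded using $\|g\|_2^2\lesssim \sqrt N$ from Lemma~\ref{lem:phi_W} at $s=1/2$). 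The same splitting of $\widehat g(-q)$ into $|q|^{1/2}\cdot |q|^{-1/2}\widehat g(-q)$ yields the $\log N\,\ud\Gamma(|\ui\nabla|)$ bound with the same prefactor $\sqrt M\,N^{-1/4-\alpha/2}$.

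The third estimate \eqref{eq:lem:quad_part_Bc_3} is easier because $\mathcal B_c$ is $x$-independent, so $[a^*(g_x)\ui\nabla_x,\mathcal B_c]=[a^*(g_x),\mathcal B_c]\ui\nabla_x$; the commutator on the right has only two creation operators after the pairing and no sum over a second $q$-index weighted by $\widehat g$, so no $\log N$ is needed — we pay $|\ui\nabla_x|$ on the right through the product with $\ui\nabla_x$ and absorb the rest with $\ud\Gamma(|\ui\nabla|)$ and $(\cN_++1)^{1/2}$. The main obstacle throughout is the careful bookkeeping of the $\theta_M$ cutoff in the various commutators so that the $\sqrt{M}$ factor is extracted exactly once per term and never compounded; this requires repeatedly using~\eqref{eq:theta-diff} together with the bound $\|\mathcal B_c^{\pm}\theta_M(\cN_+)\Psi\|\lesssim N^{-1/2}\|N\widetilde\phi_\mathrm{B}\|_2\,\sqrt{M}\,\|(\cN_++1)^{1/2}\Psi\|$ that follows from~\eqref{eq:B^- bound}.
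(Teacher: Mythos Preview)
Your overall scheme (compute commutators, Cauchy--Schwarz, use the norm estimates on $\widetilde\phi_{\mathrm B}$, $\widetilde\phi_{\mathrm I}$, extract $\sqrt M$ from $\theta_M$) matches the paper's. But two concrete steps go wrong.

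\textbf{The momentum supports.} Recall that $\widehat g(q)=\sqrt N\,\ui q\,\widehat{\widetilde\phi}_{\mathrm I}(q)$ is supported on $|q|>N^\alpha$, while the annihilation index in $\mathcal B_c^+$ carries $\1_{|q|\le N^\alpha}$. Hence $[a^*(g_x),\mathcal B_c^+]=0$, and your displayed expression for $\theta_M[a^*(g_x)^2,\mathcal B_c^+]$ is identically zero. Consequently your ``crucial trick'' of splitting $\widehat g(-q)=|q|^{1/2}\,|q|^{-1/2}\widehat g(-q)$ on $|q|\le N^\alpha$ is applied to a vanishing quantity. The entire content of~\eqref{eq:lem:quad_part_Bc_1} comes from $[\mathcal B_c^-,a^*(g_x)^2]$, where the pairings hit the \emph{high}-momentum indices $-p$ and $p+q$, producing three terms with factors $\widehat g_x(p+q)$ and $\widehat g_x(-p)$. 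The $\log N$ then arises from the high-momentum bound $N\||\nabla|^{1/2}\widetilde\phi_{\mathrm I}\|_2^2\lesssim \log N$ through the inequality
\[
\|a(\nabla f)\Psi\|\le \||\nabla|^{1/2}f\|_2\,\|\ud\Gamma(|\ui\nabla|)^{1/2}\Psi\|,
\]
not from any sum over $|q|\le N^\alpha$. The same comment applies to~\eqref{eq:lem:quad_part_Bc_2}: there is no ``extra singleton term from the $\ud\Gamma$-pairing'', because the low-momentum pairing with $a_q$ vanishes; only the $\mathcal B_c^-$ commutator contributes.

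\textbf{Distributing $|\nabla_x|$ in~\eqref{eq:lem:quad_part_Bc_3}.} Your reduction $[a^*(g_x)\ui\nabla_x,\mathcal B_c]=[a^*(g_x),\mathcal B_c]\ui\nabla_x$ is correct, but ``pay $|\ui\nabla_x|$ on the right'' does not suffice: a naive Cauchy--Schwarz of the form $|\langle\Psi,A\,\ui\nabla_x\Psi\rangle|\le \|A^*\Psi\|\,\|\nabla_x\Psi\|$ yields $-\Delta_x$ in the error, not the $(\cN_++1)^{1/2}|\nabla_x|$ claimed in the statement. The paper addresses this by writing $a^*(g_x)|\nabla_x|=|\nabla_x|^{1/2}a^*(g_x)|\nabla_x|^{1/2}+[a^*(g_x),|\nabla_x|^{1/2}]|\nabla_x|^{1/2}$ and controlling the commutator via $\big||p-k|^{1/2}-|p|^{1/2}\big|\le |k|^{1/2}$, so that both factors carry only $|\nabla_x|^{1/2}$. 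Without this step the argument does not close.
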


\begin{proof}
Note that $\|g\|_2\lesssim N^{1/4}$ by Lemma~\ref{lem:phi_W}. We prove successively the three inequalities.

\emph{Proof of (\ref{eq:lem:quad_part_Bc_1}).}
We start by observing that $[a^*(g_x), \cB^+_c]=0$ due to the cut-off $\1_{|q|\leq N^{\alpha}}$ in the definition of $\mathcal B_c^+$ and $\1_{|q|\leq N^{\alpha}} \widehat g(q)=0$.
For the commutator with $\cB_c^-$ we find
\begin{subequations}
\begin{align}
 [\cB_c^-,a^*( g_x)^2] 
 	&= 2 \sum_{p,q} \sqrt N \Vphih(p) \1_{|q|\leq N^{\alpha}} \widehat{g}_x(p+q)\widehat{g}_x(-p) a^*_{q}  \label{eq:B_c,a^2-1}\\
	&\quad  +2  \sum_{p,q} \sqrt N \Vphih(p) \1_{|q|\leq N^{\alpha}} \widehat{g}_x(p+q) a^*(g_x) a^*_{q}  a_{-p}\label{eq:B_c,a^2-2} \\
	&\quad + 2 \sum_{p,q} \sqrt N \Vphih(p) \1_{|q|\leq N^{\alpha}} \widehat{g}_x(-p)a^*(g_x) a^*_{q}a_{p+q} \label{eq:B_c,a^2-3}.
\end{align}
\end{subequations}
 With $\| \widehat g \ast \widehat g \Vphih\|_2 \leq \|\widehat g\|^2_2 \| \Vphih\|_2  \lesssim  N^{-1/2-\alpha/2}$ by Lemmas~\ref{lem:phi_V},~\ref{lem:phi_W},
 we find
 \begin{equation}
  \eqref{eq:B_c,a^2-1} = \mathcal{O}(N^{-\alpha/2} (\cN_++1)^{1/2}).
 \end{equation}
To bound the other terms, we will use the inequality
\begin{equation}
 \|a(\nabla f) \Psi \| \leq \||\nabla|^{1/2}f \|_2 \|(\ud \Gamma(|\ui \nabla|)+1)^{1/2} \Psi\| \label{eq:a(nablaf)-bound}.
\end{equation}
For the first one, this yields with $\delta>0$
\begin{equation}
\pm( \eqref{eq:B_c,a^2-2} + \hc) \lesssim \delta \| |\nabla|^{1/2} \Wphi\|_2^2 (\ud \Gamma(|\ui \nabla|)+1)+ \delta^{-1} \|g\|_2^2 \|\sqrt N \Vphi\|_2^2 (\cN_++1)^2.
\end{equation}
For the second one, we have similarly
\begin{equation}
\pm( \eqref{eq:B_c,a^2-3} + \hc) \lesssim \delta \| |\nabla|^{1/2} \Wphi\|_2^2 \ud \Gamma(|\ui \nabla|)+ \delta^{-1} N^{3\alpha} \|\widehat g\|_\infty^2 \|\sqrt N \Vphi\|_2^2 (\cN_++1)^2.
\end{equation}
Using Lemma~\ref{lem:phi_W} and $\alpha\leq 1/6$, combining with $\theta_M(\cN_+)$ and then choosing $\delta=\sqrt M N^{-1/4-\alpha/2}$, we find
\begin{multline}
\pm\big( [\cB_c^-,a^*( g_x)^2] \theta_M(\cN_+) + \hc\big) \\
\lesssim  \sqrt M N^{-1/4-\alpha/2} \big( \log N   \dd\Gamma(|\ui \nabla|) + \cN_++1\big).
\end{multline}
It remains to estimate 
\begin{equation}
 [a^*( g_x)^2, \theta_M(\cN_+)]\cB_c^+ + \hc = (\theta_M(\cN_+)- \theta_M(\cN_++2))a^*( g_x)^2\cB_c^+ + \hc
\end{equation}
With the bound~\eqref{eq:B^- bound} on $\cB_c^-$ and~\eqref{eq:a(nablaf)-bound} we obtain 
\begin{align}
& | \langle \Psi,  [\cB_c^-,a^*( g_x)^2] \theta_M(\cN_+) \Psi\rangle| \notag \\
&\quad \lesssim \|g\|_2 \| |\nabla|^{1/2} \sqrt N \Wphi\|_2 \|\cN_+^{1/2} \Psi\|  M^{-1} \|\sqrt N \Vphi\|_2 \|\1_{\cN_+ \leq 2M} (\cN_+ + 2)^{2} \Psi\| \notag \\
&\quad \lesssim  \sqrt M N^{-1/4-\alpha/2} \|(\cN_++1)^{1/2} \Psi\| ,
\end{align}
by Lemmas~\ref{lem:phi_V},~\ref{lem:phi_W}. This proves the claim on the commutator with $a^*(g_x)^2$. 

\emph{Proof of (\ref{eq:lem:quad_part_Bc_2}).} 
The argument for the commutator with $a^*(g_x)a(g_x)$ is similar to the bounds on~\eqref{eq:B_c,a^2-2},~\eqref{eq:B_c,a^2-3}, as all other terms vanish in this case. 

\emph{Proof of (\ref{eq:lem:quad_part_Bc_3}).}
Let us now deal with the term involving $a^*(g_x) \ui \nabla_x$. We have
\begin{subequations}
 \begin{align}
&[a^*(g_x) \ui \nabla_x, \mathcal B_c]= [a^*(g_x)\ui \nabla_x, \theta_M(\mathcal N_+) \mathcal B_c^+ - \mathcal B_c^- \theta_{M}(\mathcal N_+)] \notag \\
	&= (\theta_{M}(\mathcal N_+ -1) - \theta_{M}(\mathcal N_+)) a^*(g_x) \ui \nabla_x \mathcal B_c^+\label{eq:B_c-nabla-a_commute_1} \\
	&\quad + (\theta_{M}(\mathcal N_++1)- \theta_{M}(\mathcal N_+)) a^*(g_x)\ui \nabla_x \mathcal B_c^- \label{eq:B_c-nabla-a_commute_2}\\
	&\quad + \theta_{M}(\mathcal N_++1) \hspace{-12pt}\sum_{p,q \in 2\pi \Z^3\setminus\{0\}}\hspace{-12pt} \sqrt{N} \Vphih(p) \1_{|q|\leq  N^{\alpha}} \left(a^*_{-p}\widehat{g}_x (p+q)  + a^*_{p+q} \widehat{g}_x (p)\right) a_{q} \ui \nabla_x .\label{eq:B_c-nabla-a_commute_3}
\end{align}
\end{subequations}
We used here that $\nabla_x$ commutes with $\cB_c$, and that $[a^*(g_x) \ui \nabla_x, \mathcal B_c^+] =0$ because of the cutoffs in the definitions of $g$ and $\mathcal B_c^+$. To obtain the desired bound, we need distribute $\nabla_x$ to the left and right argument of the quadratic form, since otherwise we would obtain a bound by $-\Delta_x$ and not $|\nabla_x|$. To do this, we use that
\begin{align}
a^*(g_x) | \nabla_x|=  |\nabla_x|^{1/2} a^*(g_x) |\nabla_x|^{1/2}  + [a^*(g_x), | \nabla_x|^{1/2}] | \nabla_x|^{1/2}.
\end{align}
With $\ui \nabla_x \ue^{-\ui kx}=\ue^{- \ui kx} (\ui \nabla_x + k)$ and ${||p-k|^{1/2}-|p|^{1/2} |\leq |k|^{1/2}}$, we get that
\begin{align}
 \| [| \nabla_x|^{1/2}, a(g_x)]\Psi\| &\lesssim \sum_k |k|^{1/2} |\hat{g}(k)| \| a_k \Psi\| \lesssim \|g\|_{2} \| \ud \Gamma (|\ui \nabla|)^{1/2}\Psi\|.
\end{align}
Using this and the estimate on $\cB_c^-$ and $\cB_c^+$ (\ref{eq:B^- bound}), we obtain
\begin{align*}
&|\braket{\Psi,\eqref{eq:B_c-nabla-a_commute_1} \Psi} |\\
	&\lesssim M^{1/2} N^{-1/2} \|N \Vphi\|_2 \|g\|_2 \| (\mathcal N_+ +1)^{1/4} |\nabla_x|^{1/2} \Psi\|^2 \notag\\
	&\qquad + M^{1/4} N^{-1/2} \|N \Vphi\|_2 \|g\|_2  \| \ud \Gamma (|\ui \nabla|)^{1/2} \Psi\|  \| (\mathcal N_+ +1)^{1/4} |\nabla_x|^{1/2} \Psi\| \notag \\
	&\lesssim M^{1/2} N^{-1/4 -\alpha/2} \| (\mathcal N_+ +1)^{1/4} |\nabla_x|^{1/2} \Psi\|^2\notag \\
	&\qquad + M^{1/4} N^{-1/4 -\alpha/2}  \| \ud \Gamma (|\ui \nabla|)^{1/2} \Psi\|  \| (\mathcal N_+ +1)^{1/4} |\nabla_x|^{1/2} \Psi\|,
\end{align*} 
where we used that $|(\theta_{M}(\mathcal N_+-1)- \theta_{M}(\mathcal N_+))| \lesssim M^{-1} \mathds{1}_{\mathcal N_+\leq 2M}$ and ${\|N \Vphi\|_2 \lesssim N^{-\alpha/2}}$. Similarly, one shows that~\eqref{eq:B_c-nabla-a_commute_2} satisfies the same bound. For the third term, we commute again $|\nabla_x|^{1/2}$, and a similar reasoning leads to
\begin{align}
|\braket{\Psi,\eqref{eq:B_c-nabla-a_commute_3} \Psi}| &\lesssim  M^{1/2} N^{-1/2} \|N \Vphi\|_2  \|g\|_2 \| (\mathcal N_+ +1)^{1/4} |\nabla_x|^{1/2} \Psi\|^2 \\
	&\quad + M^{1/4} N^{-1/2} \|N \Vphi\|_2 \|\1_{|\cdot| \leq N^{\alpha}}\|_{2} \||\widehat{\nabla|^{1/2}  g}\|_{\infty} \notag \\
	&\qquad \times \|(\mathcal N_++1)^{1/2}\Psi\| \| (\mathcal N_+ +1)^{1/4} |\nabla_x|^{1/2} \Psi\| \notag \\
	&\lesssim  M^{1/2} N^{-1/4-\alpha/2} \| (\mathcal N_+ +1)^{1/4} |\nabla_x|^{1/2} \Psi\|^2  \notag\\
	&\quad + M^{1/4} N^{-1/2 +\alpha/2}  \|(\mathcal N_++1)^{1/2}\Psi\|  \|(\mathcal N_+ +1)^{1/4} |\nabla_x|^{1/2} \Psi\|, \notag
\end{align}
where we used that $\|\widehat{|\nabla|^{1/2} g}\|_{\infty} \lesssim \sqrt{N} \|p^{3/2} \Wphih\|_\infty \lesssim \sqrt{N} \|p^2 \Wphih\|_{\infty}^{3/4} \|\Wphih\|_{\infty}^{1/4} \lesssim N^{-\alpha/2}$ from Lemma \ref{lem:phi_W}. The above can be  absorbed in (\ref{eq:lem:quad_part_Bc_3}) as $\cN_+\leq \ud \Gamma(|\ui \nabla)$. This concludes the proof.
\end{proof}

\paragraph{Conservation estimates.}
Here, we prove that the operators appearing in the bound on the error term remain of the same order when transformed with $U_c(t)=\ue^{t\cB_c}$.

\begin{lem}
\label{lem:Uc_N}
For all $s\geq 0$, we have uniformly in $M \geq 1$ and $t\in [-1,1]$
\begin{align*}
 U_c(t)^* \mathcal N_+ U_c(t)  &=  \mathcal N_+ +\mathcal{O}\big( N^{-\alpha/2} (\mathcal N_+ +1)\big),\\
U_c(t)^*  (\mathcal N_+ +1)^{s} U_c(t)& =\mathcal{O}\big( (\mathcal N_+ +1)^{s}\big),
\end{align*}
as quadratic forms on $\mathscr{H}_+$.
\end{lem}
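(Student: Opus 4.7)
The plan is to apply Duhamel's formula followed by Gr\"onwall's inequality, exactly as in the proof of Lemma~\ref{lem:N_T1}. Since $\mathcal B_c^\pm$ change $\mathcal N_+$ by $\pm 1$ and $\theta_M(\mathcal N_+)$ commutes with $\mathcal N_+$, a direct calculation gives
\[
[\mathcal N_+, \mathcal B_c] = \theta_M(\mathcal N_+)\,\mathcal B_c^+ + \mathcal B_c^-\,\theta_M(\mathcal N_+),
\]
so that the first estimate reduces to bounding $\langle \Psi, \theta_M(\mathcal N_+)\mathcal B_c^+\Psi\rangle$ (and its adjoint) by $N^{-\alpha/2}\langle \Psi, (\mathcal N_++1)\Psi\rangle$.

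To establish such a bound I would mimic the Cauchy--Schwarz argument of~\eqref{eq:B_q-N commute} and~\eqref{eq:B^- bound}. Using the identity $\theta_M(\mathcal N_+-2)\,a_{-p}a_{p+q} = a_{-p}a_{p+q}\,\theta_M(\mathcal N_+)$ to commute the cutoff past the annihilation pair, Cauchy--Schwarz in $(p,q)$ and the operator inequality $\mathcal N_+(\mathcal N_+-1)\theta_M^2(\mathcal N_+)\leq M\,\mathcal N_+$ give
\[
|\langle \Psi, \theta_M(\mathcal N_+)\mathcal B_c^+ \Psi\rangle|
\lesssim \sqrt{N M}\,\|\Vphi\|_2\,\|\mathcal N_+^{1/2}\Psi\|^2.
\]
With Lemma~\ref{lem:phi_V} we have $\sqrt{N}\|\Vphi\|_2\lesssim N^{-1/2-\alpha/2}$, so the prefactor is $\sqrt{M/N}\,N^{-\alpha/2}$; since the lemma is only applied in the range $M\leq N$ (in particular for the choice $M=N^{1/2+\alpha/2}$ used in Proposition~\ref{prop:U_c}), this is $\lesssim N^{-\alpha/2}$, and the first claim follows by Gr\"onwall applied to $f(t)=\langle U_c(t)\Psi,(\mathcal N_++1)U_c(t)\Psi\rangle$.

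For the a priori bound on higher powers the same strategy works. For integer $k$, the identity $[(\mathcal N_++1)^k,\mathcal B_c^+]=\mathcal B_c^+\bigl((\mathcal N_++2)^k-(\mathcal N_++1)^k\bigr)$ shows that the commutator is $\mathcal B_c^+$ times a polynomial of degree $k-1$ in $\mathcal N_+$; combined with the same Cauchy--Schwarz argument and $(\mathcal N_++1)^{k-1}\theta_M(\mathcal N_+)\leq (M+1)^{k-1}$, this yields $\pm[(\mathcal N_++1)^k,\mathcal B_c]\lesssim (\mathcal N_++1)^k$ uniformly in $M\leq N$, so Gr\"onwall gives the bound for integer $s$. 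The general case $s\geq 0$ then follows by interpolation using the operator monotonicity of $x\mapsto x^\theta$ for $\theta\in[0,1]$, exactly as in \cite[Lem.~3.1]{BreSch-19}. The main obstacle is in the first step: unlike the quadratic transformation, for which $[\mathcal N_+,\mathcal B_q]$ is controlled directly by $N^{-\alpha/2}\mathcal N_+$ without any $\mathcal N_+$-cutoff, the three-operator content of $\mathcal B_c$ forces one to pay an extra factor $\sqrt M$ in the Cauchy--Schwarz split, and the smallness of the overall prefactor hinges on the relation $M\lesssim N$ enforced by the cutoff $\theta_M$ that was built into the definition~\eqref{eq:Bc} of $\mathcal B_c$ for precisely this reason.
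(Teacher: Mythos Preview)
Your argument for the first estimate is correct and is exactly the commutator-plus-Gr\"onwall strategy the paper invokes (it defers to \cite[Lem.~10]{HaiSchTri-22}). Two minor remarks: the shift in the cutoff goes the other way, $a_{-p}a_{p+q}\theta_M(\cN_+)=\theta_M(\cN_++2)a_{-p}a_{p+q}$, but this is immaterial for the bound; and the restriction $M\le N$ you flag is indeed what makes the prefactor $\sqrt{M/N}\,N^{-\alpha/2}$ small---the paper only ever uses $M=N^{1/2+\alpha/2}$, so this is fine.

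The sketch for integer $k\ge 2$, however, does not close as written. Writing $[(\cN_++1)^k,\cB_c]=\theta_M\cB_c^+P(\cN_+)+\hc$ with $P(\cN_+)=(\cN_++2)^k-(\cN_++1)^k$ and then bounding $P(\cN_+)\theta_M\le k(M+1)^{k-1}$ before applying the $k=1$ Cauchy--Schwarz produces a right-hand side of order $M^{k-1}\sqrt{M/N}\,N^{-\alpha/2}\langle\Psi,(\cN_++1)\Psi\rangle$, which for $k\ge 2$ is \emph{not} dominated by $\langle\Psi,(\cN_++1)^k\Psi\rangle$ (the inequality $\|(\cN_++1)^{1/2}\Psi\|\,\|(\cN_++1)^{k-1/2}\Psi\|\le\|(\cN_++1)^{k/2}\Psi\|^2$ goes the wrong way). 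The fix is to distribute the polynomial symmetrically \emph{before} the Cauchy--Schwarz: since $\cB_c^+f(\cN_+)=f(\cN_+-1)\cB_c^+$, one has
\[
\theta_M\cB_c^+P(\cN_+)=P(\cN_+-1)^{1/2}\,\theta_M\cB_c^+\,P(\cN_+)^{1/2},
\]
and applying the bilinear version of your $k=1$ bound with $\Phi=P(\cN_+-1)^{1/2}\Psi$, $\Psi'=P(\cN_+)^{1/2}\Psi$ gives directly $\pm[(\cN_++1)^k,\cB_c]\lesssim_k \sqrt{M/N}\,N^{-\alpha/2}(\cN_++1)^k$. After this Gr\"onwall and interpolation proceed exactly as you say; this is the content of \cite[Lem.~3.1]{BreSch-19}.
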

\begin{proof}
 This follows from a simple calculation of the commutator $[\cN_+,\cB_c]$ and Gr\"onwall's Lemma, see~\cite[Lem.10]{HaiSchTri-22}, and interpolation for non-integer $s$.
\end{proof}

\begin{lem}	\label{lem:Uc_p}
Let $0<\alpha\leq 1/6$, then we have as quadratic forms on $\mathscr{H}_+$
\begin{subequations}
 \begin{align}
U_c(t)^* \dd\Gamma(|\ui \nabla|) U_c(t) &=\mathcal{O}\big( \dd\Gamma(|\ui \nabla|)\big), \label{eq:Tc_p} \\
U_c(t)^* \mathcal{L}_4 U_c(t) &=\mathcal{O}\big( \mathcal{L}_4 + \mathcal N_+ + M^{-1}\cN_+^2 +1\big), \label{eq:Tc_L4} \\
U_c(t)^* \ud \Gamma(W_{N,x}) U_c(t) &=\mathcal{O}\big( \ud \Gamma(W_{N,x}) + \mathcal N_+ \big), \label{eq:Tc_Q2}
\end{align}
\end{subequations}
 uniformly in $s\in [-1,1]$  and  $ M \leq N^{1-3\alpha}$.
\end{lem}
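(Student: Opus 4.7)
The proof of all three bounds proceeds by Duhamel and Grönwall: for any symmetric operator $A$ and $\Psi\in Q(A)$,
\[\partial_t\langle\Psi, U_c(t)^*AU_c(t)\Psi\rangle = \langle\Psi, U_c(t)^*[A,\cB_c]U_c(t)\Psi\rangle,\]
so it suffices in each case to establish an inequality of the form $\pm[A,\cB_c]\lesssim A + (\text{auxiliary})$, where the auxiliary term itself propagates under $U_c$ in a controlled way. For $\cN_+$ the relevant a priori bound is Lemma~\ref{lem:Uc_N}; the other auxiliary quantities encountered will be dominated either by $A$ itself or by $\cN_+$.

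The estimate \eqref{eq:Tc_Q2} is essentially immediate from Lemma~\ref{lem:com_Q2_Bc}, which already gives $\pm[\ud\Gamma(W_{N,x}),\cB_c]\lesssim (M/N)^{1/2}N^{-1/4+\alpha}(\ud\Gamma(W_{N,x})+\cN_+)$. Under the hypothesis $M\leq N^{1-3\alpha}$ the prefactor is bounded by $N^{-1/4-\alpha/2}$, so Grönwall combined with Lemma~\ref{lem:Uc_N} closes the loop.

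For \eqref{eq:Tc_p}, the plan is to compute in Fourier
\[[\ud\Gamma(|\ui\nabla|),\cB_c^+] = \sum_{p,q}\sqrt N\big(|p+q|+|p|-|q|\big)\,\Vphih(p)\,\1_{|q|\leq N^\alpha}\,a^*_{p+q}a^*_{-p}a_q,\]
and use the triangle-inequality bound $|p+q|+|p|-|q|\leq 2|p|$ to absorb the momentum factor into the kernel. Cauchy--Schwarz with appropriate weights, combined with the Lemma~\ref{lem:phi_V} estimates $\|p^2\Vphih\|_\infty\lesssim N^{-1}$, $\||\nabla|^{1/2}\Vphi\|_2\lesssim N^{-1}\sqrt{\log N}$ and $\|\sqrt N\Vphi\|_2\lesssim N^{-\alpha/2}$, together with the standard treatment of the commutator with $\theta_M(\cN_+)$ via \eqref{eq:theta-diff} used in Lemma~\ref{lem:L2_Bc}, should yield $\pm[\ud\Gamma(|\ui\nabla|),\cB_c]\lesssim\ud\Gamma(|\ui\nabla|)+\cN_+$. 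Since nonzero modes satisfy $|k|\geq 2\pi$, one has $\cN_+\leq(2\pi)^{-1}\ud\Gamma(|\ui\nabla|)$ on $\mathscr{H}_+$, so the bound closes in $\ud\Gamma(|\ui\nabla|)$ alone and Grönwall gives \eqref{eq:Tc_p}.

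For \eqref{eq:Tc_L4}, since $\cL_4$ conserves particle number, $[\cL_4,\theta_M(\cN_+)]=0$, and hence
\[[\cL_4,\cB_c]=\theta_M(\cN_+)[\cL_4,\cB_c^+]-[\cL_4,\cB_c^-]\theta_M(\cN_+).\]
The inner commutator $[\cL_4,\cB_c^+]$ is computed along the lines of $[\cL_4,\cB_q]$ in Lemma~\ref{lem:L4_T1}, using the position-space representation~\eqref{eq:B^+-position}; the only structural changes are that one creation operator in $\cB_q$ is replaced by an annihilation operator and a cutoff kernel $\chi(y_1-y_3)$ is inserted. After Wick ordering, the resulting quintic, cubic and linear operators are controlled by $\cL_4+\cN_++1$ via Cauchy--Schwarz as in~\eqref{eq:L_4-B_2 comm2}--\eqref{eq:est_CS} and the $L^2$-bounds of Lemma~\ref{lem:phi_V}. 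The $M^{-1}\cN_+^2$ correction arises from closing one of these Cauchy--Schwarz bounds on the range of $\theta_M(\cN_+)$ via $\theta_M(\cN_+)\cN_+\lesssim M$. Grönwall together with Lemma~\ref{lem:Uc_N} then yields \eqref{eq:Tc_L4}. The main obstacle is the book-keeping of $[\cL_4,\cB_c^+]$: the many contracted terms of different degrees must each be estimated sharply, so that the prefactors remain uniformly bounded in $N$ and no spurious quadratic-in-$\cL_4$ contributions appear.
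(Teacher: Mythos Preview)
Your proposal is correct and follows the same Duhamel--Gr\"onwall strategy as the paper. Two minor simplifications: since $\dd\Gamma(|\ui\nabla|)$ commutes with $\cN_+$, there is no commutator with $\theta_M(\cN_+)$ to treat in \eqref{eq:Tc_p}; and for \eqref{eq:Tc_L4} the paper simply reuses the commutator formula already established in Lemma~\ref{lem:Uc_L_4} (bounding its principal term by $\cL_4+\cN_+$ via Cauchy--Schwarz as in~\eqref{eq:theta_M-remove} and invoking its error bound with $\delta=1$, which is where the constraint $M\leq N^{1-3\alpha}$ enters), rather than recomputing $[\cL_4,\cB_c^+]$ from scratch.
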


\begin{proof}
We start with the inequality~\eqref{eq:Tc_p} and compute
\begin{align} \label{eq:dGamma(p)-B_c_commute}
 [\dd\Gamma(|\ui \nabla|),\mathcal B_c^+]
 	=  \sum_{p,q} \sqrt N \Vphih(p) \1_{|q|\leq N^\alpha} (|p+q| + |p| - |q|)  a^*_{p+q}a^*_{-p} a_{q}.
\end{align}
Using that $||p+q| - |q| | \leq |p|$, we obtain by the Cauchy-Schwarz inequality, for $\delta>0$
\begin{align}
\pm &\theta_M(\cN_+) [\dd\Gamma(|\ui \nabla|),\mathcal B_c^+] + \hc 
\notag \\
 & \lesssim \delta \theta_M^2 \sum_{p,q} |p| a^*_{p+q}a^*_{p} a_p a_{p+q} + \delta^{-1}\sum_{p,q}  N|p| \Vphih(p)^2 \1_{|q|\leq N^\alpha}    a_{q}^* a_q\notag \\
  &  \lesssim \delta M \dd\Gamma(|\ui \nabla|) + \delta^{-1} N^{-1}\log N \cN_+,
\end{align}
since $\||\ui\nabla|^{1/2} \Vphi\|_2^2 \lesssim N^{-2}\log N$ by Lemma~\ref{lem:phi_V}.
Taking $\delta = \sqrt{N/(M \log N)} $ this gives
\begin{equation}
\pm  [\dd\Gamma(|\ui \nabla|),\mathcal B_c] \lesssim \sqrt {M \log N /N} (\cN_+ + \dd\Gamma(|\ui \nabla|)).
\end{equation}
Let $f(t)=U_c(t)^* \dd\Gamma(|\ui \nabla|) U_c(t)$. Then, as $\sqrt {M \log N /N} \lesssim 1$ by assumption, we obtain with $\cN_+\leq \dd\Gamma(|\ui \nabla|)$ on $\mathscr{H}_+$.
\begin{equation}
 \tfrac{\ud}{\ud t} f(t) = U_c(t)^* [\dd\Gamma(|\ui \nabla|),\cB_c^+]  U_c(t) \lesssim f(t).
\end{equation}
The bound~\eqref{eq:Tc_p} now follows from Gr\"onwall's inequality.

To prove the inequality~\eqref{eq:Tc_L4}, we use the calculation of the commutator $[\cL_4,\cB_c]$ from Lemma~\ref{lem:Uc_L_4}.
The principal term satisfies, using the Cauchy-Schwarz inequality as in~\eqref{eq:theta_M-remove},
\begin{align}
 \sum_{p ,q \in 2\pi \Z^3\setminus\{0\}\atop  |q| \leq N^{\alpha}} \widehat{V}_N \ast \Vphi(p) a^*_{p+q} a^*_{-p} a_q + \hc =\mathcal{O}( \cL_4 + \cN_+).
\end{align}
The difference of the principal term and the full commutator is bounded by using Lemma~\ref{lem:Uc_L_4} with $\delta=1$ and the constraint $M\leq N^{1-3\alpha}$. The claim then follows from Lemma~\ref{lem:Uc_N} and Gr\"onwall's inequality, as above.

The inequality~\eqref{eq:Tc_Q2} follows from Lemma~\ref{lem:com_Q2_Bc} and  Gr\"onwalls inequality.
\end{proof}

\subsection{Low-energy bosons:  Bogoliubov's transformation}\label{sect:Bog}

In this section we apply the well known Bogoliubov transformation $U_\mathrm{B}$ (defined by its generator in~\eqref{eq:Bog-generator}) that diagonalizes the boson Hamiltonian for momenta below $N^\alpha$.
This goes back to the original work of Bogoliubov~\cite{Bogoliubov-47}, and has been employed in all subsequent works on the topic. In addition to the standard computations, that we will omit, we will also need to transform the boson-impurity interaction terms. 
The article~\cite{christensen2015} found a term in the expansion of the ground state energy $H_N$ in powers of the scattering lengths that does not come from the Bogoliubov--Fröhlich model. This arises from interaction terms that create or annihilate two excitations from the condensate, which are generated by acting on $\ud \Gamma(W_{N,x})$ with $U_\mathrm{B}$. In Lemma~\ref{lem:U_B-Q_2} below, we show that these terms are small relative to $\ud \Gamma(W_{N,x})$. Since the latter contributes to the energy at order one, this shows that the contribution of these interaction terms is small in $N$, in our setting.

The generator of the transformation is
\begin{equation}\label{eq:Bog-generator}
 \cB_\mathrm{B} = \frac{1}{8} \sum_{p\in 2\pi \Z^3 \setminus\{0\} \atop |p|\leq N^{\alpha}} \log\Big(1+\frac{16\pi \ao_{V}}{p^2}\Big)(a_p a_{-p} - \hc) .
\end{equation}
The action of the unitary $U_\mathrm{B}=\ue^{\cB_\mathrm{B}}$ on the creation and annihilation operators can be expressed by the explicit formulas
\begin{equation}\label{eq:Bogtrafo}
 U_\mathrm{B}^* a^*(f) U_\mathrm{B} = a^*(Cf) + a(\overline{Sf}), \qquad U_\mathrm{B}^* a(f) U_\mathrm{B} = a( Cf) + a^*(\overline{Sf}),
\end{equation}
where $C,S$ are the Fourier multipliers with coefficients
\begin{subequations}
 \begin{align}
 C(p)&=\frac12 \bigg(\frac{|p|}{(p^4+ 16\pi \ao_{V} p^2)^{1/4}}+ \frac{(p^4+ 16\pi \ao_{V} p^2)^{1/4}} {|p|}\bigg)\1_{|p|\leq N^\alpha}+\1_{|p|>N^\alpha} \\
 S(p)&=\frac12 \bigg( \frac{|p|}{(p^4+ 16\pi \ao_{V} p^2)^{1/4}}-\frac{(p^4+ 16\pi \ao_{V} p^2)^{1/4}} {|p|}\bigg)\1_{|p|\leq N^\alpha}.
\end{align}
Notice that $C(p)=\cosh(-\tfrac14 \log (1+16\pi \ao_{V}/p^2))$, and $S(p)=\sinh(-\tfrac14 \log (1+16\pi \ao_{V}/p^2))$.
\end{subequations}

The main result of this section is the following proposition, which states that for momenta below $N^\alpha$ the Hamiltonian acts almost like the Bogoliubov-Fr\"ohlich Hamiltonian~\eqref{eq:H_BF cutoff} with cutoff $\Lambda=N^\alpha$, except for the continued presence of the positive terms $\ud \Gamma(W_{N,x})$ and $\cL_4$.

\begin{prop}\label{prop:U_B}
Let $0<\alpha\leq 1/8$ and $U_c^* U_W^* U_q^* \mathcal{H}_N U_q U_W U_c$ be given by Proposition~\ref{prop:U_c}.
Then as quadratic forms on $\mathscr{H}_+$
\begin{align*}
 U_\mathrm{B}^* &U_c^* U_W^* U_q^* \mathcal{H}_N U_q U_W U_c U_\mathrm{B}  \\
  & = 4\pi  \mathfrak{a}_{V} (N-1)  + 8\pi  \ao_W \sqrt{N} + e_N^{(U_\mathrm{B})}  \\
    &\quad +\ud \Gamma(\epsilon) -\Delta_x + a(w_{N,x}) +a^*(w_{N,x})   +\mathcal{H}_{\mathrm{Int},>} +\ud \Gamma(W_{N,x}) + \mathcal{L}_4 + \mathcal{E}^{(U_\mathrm{B})},
 \end{align*}
 with
 \begin{equation*}
\widehat w_{N,x} (p)= \widehat{w}_x(p) \1_{0<|p|< N^\alpha}, \qquad \widehat{w}_x(p) = \frac{8\pi \ao_W |p| \ue^{-\ui px}}{(p^4+ 16\pi \ao_{V} p^2)^{1/4}},
 \end{equation*}
the high momentum interaction
 \begin{align*}
 \mathcal{H}_{\mathrm{Int},>} &=\Big(a^*\big(\sqrt{N}  (-\ui \nabla \Wphi)_x\big)^2 +\hc \Big)+ 2 a^*\big(\sqrt{N} (\ui \nabla \Wphi)_x\big)a\big(\sqrt{N} (\ui \nabla \Wphi)_x\big) \\
  &\quad   -2 a^*\big(\sqrt{N} (\ui \nabla \Wphi)_x\big)\ui \nabla_x  + \hc,
 \end{align*}
 the scalar,
 \begin{align*}
  e_N^{(U_\mathrm{B})}&=  4 \pi N (\aV - \ao_V) + 8 \pi \sqrt{N} (\aW - \ao_W) \\
  	&\quad
   +\sum_{0\neq p\in 2\pi\Z^3 \atop |p|\leq N^{\alpha}} \bigg(\frac12 \sqrt{p^4+16\pi \ao_{V} p^2}-\frac12p^2 -4\pi \ao_{V} +\frac{(4\pi  \ao_{V})^2}{p^2}+ \frac{2(4\pi \ao_W)^2}{p^2}\bigg),
 \end{align*}
and where the error satisfies
\begin{align*}
 \pm \mathcal{E}^{(U_\mathrm{B})} &\lesssim  N^{-\alpha/2}  \ud \Gamma(W_{N,x})  +N^{-\alpha/4}\mathcal{L}_4 + N^{-1/2-\alpha/4}\cN_+^2 \\
	& \quad +   N^{-\alpha/4}\big((\log N) \dd\Gamma(|\ui \nabla|) +(\log N)^2  +(\mathcal N_++1)^{1/2} |\nabla_x|\big).
\end{align*}
\end{prop}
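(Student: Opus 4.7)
The plan is to apply the Bogoliubov transformation $U_\mathrm{B}$ term by term to the Hamiltonian given by Proposition~\ref{prop:U_c}, relying on the fact that the generator $\cB_\mathrm{B}$ is independent of $x$ and supported on modes $|p|\leq N^\alpha$. Consequently $-\Delta_x$ is trivially invariant; the high-momentum interaction $\mathcal H_{\mathrm{Int},>}$ is also invariant because $\widehat{\nabla\widetilde{\phi}}_{\mathrm{I}}$ vanishes on $|p|\leq N^\alpha$, so $U_\mathrm{B}^* a^\#(\sqrt N (\nabla\widetilde\phi_\mathrm{I})_x) U_\mathrm{B} = a^\#(\sqrt N (\nabla\widetilde\phi_\mathrm{I})_x)$ and similarly for its quadratic combinations with $\ui\nabla_x$. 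The high-momentum part $\sum_{|p|>N^\alpha}(p^2+8\pi\ao_V)a_p^*a_p$ of $\dd\Gamma(-\Delta+8\pi\ao_V)$ is unchanged but differs from $\dd\Gamma(\epsilon)|_{|p|>N^\alpha}$ by $\sum_{|p|>N^\alpha}(\epsilon(p)-p^2-8\pi\ao_V)a_p^*a_p = \mathcal{O}(N^{-2\alpha}\cN_+)$ via the expansion $\epsilon(p)-p^2-8\pi\ao_V = \mathcal{O}(|p|^{-2})$, absorbable in $\cE^{(U_\mathrm{B})}$.

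For the low-momentum quadratic block $\widetilde{\mathcal L}_2 + \dd\Gamma(-\Delta+8\pi\ao_V)\big|_{|p|\leq N^\alpha}$, the standard Bogoliubov diagonalization (using~\eqref{eq:Bogtrafo} with our choice of $C,S$) converts it into $\dd\Gamma(\epsilon)\big|_{|p|\leq N^\alpha}$ plus the scalar shift $\tfrac12\sum_{0<|p|\leq N^\alpha}(\epsilon(p)-p^2-8\pi\ao_V)$, which is precisely the new term in $e_N^{(U_\mathrm{B})}-e_N^{(U_W)}$. For the linear term, the identity $C(p)+S(p)=|p|/\epsilon(p)^{1/2}$ gives by direct computation
\begin{equation*}
U_\mathrm{B}^*(a(\widetilde W_{N,x})+\mathrm{h.c.})U_\mathrm{B} = 8\pi\ao_W\!\!\sum_{0<|p|\leq N^\alpha}\!\!(C(p)+S(p))(\ue^{\ui px}a_p+\ue^{-\ui px}a_p^*) = a(w_{N,x})+a^*(w_{N,x}),
\end{equation*}
matching the form factor~\eqref{eq:w-def} restricted to $|p|<N^\alpha$.

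The delicate term is $\dd\Gamma(W_{N,x})$, which upon inserting $U_\mathrm{B}^* a_p^* a_q U_\mathrm{B}$ produces diagonal corrections $(C(p)C(q)-1)a_p^*a_q$ and $S(p)S(q)a_{-q}^*a_{-p}$, a scalar $\widehat W_N(0)\sum_{|p|\leq N^\alpha} S(p)^2 = \mathcal O(N^{-1/2})$, and—most importantly—the off-diagonal pair terms $C(p)S(q)a_p^*a_{-q}^*$, $S(p)C(q)a_{-p}a_q$ discussed in Remark~\ref{rem:christensen}. The key input, established in a dedicated lemma (referred to as Lemma~\ref{lem:U_B-Q_2} in the outline), is that $|S(p)|\lesssim (1+|p|)^{-2}$ so that, after inserting the momentum expansion $W_{N,x}=\sum_k\ue^{\ui k\cdot x}\widehat W_N(k)\ue^{-\ui k\cdot y}$ and applying Cauchy--Schwarz as in~\eqref{eq:a-Q_2 bound}, both the diagonal correction and the pair terms satisfy $\pm\,(\cdot)\lesssim \delta\,\dd\Gamma(W_{N,x}) + \delta^{-1}(\cN_++1)\|S\,\widehat W_N\|^2_{\ell^2}$. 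Taking $\delta=N^{-\alpha/2}$ absorbs these contributions into $\cE^{(U_\mathrm{B})}$. This is the main obstacle of the section, as one must simultaneously exploit the decay of $S$ and avoid generating logarithmic divergences from the infrared behavior of the Bogoliubov kernel. The quartic term $\mathcal L_4$ transforms into $\mathcal L_4$ plus terms of degree $\leq 3$ with norm bounds involving $\|S\|_2$ and $\|S\|_\infty$, all absorbable in $\cE^{(U_\mathrm{B})}$ by the same mechanism.

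Finally, to transport the error bound $\cE^{(U_c)}$ of Proposition~\ref{prop:U_c} through $U_\mathrm{B}$, I would prove (or import from the literature, cf.~\cite{BocBreCenSch-18,NamTri-23}) uniform conservation estimates: $U_\mathrm{B}^* \cN_+ U_\mathrm{B} \lesssim \cN_+ + \log N$, $U_\mathrm{B}^*|\nabla_x|U_\mathrm{B} = |\nabla_x|$, $U_\mathrm{B}^*\cL_4 U_\mathrm{B} \lesssim \cL_4 + (\cN_++\log N)^2$, $U_\mathrm{B}^*\dd\Gamma(|\ui\nabla|)U_\mathrm{B} \lesssim \dd\Gamma(|\ui\nabla|) + (\log N)^2$, and $U_\mathrm{B}^*\dd\Gamma(W_{N,x})U_\mathrm{B} \lesssim \dd\Gamma(W_{N,x}) + \log N$. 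These follow from Gr\"onwall-type arguments applied to Duhamel expansions with $\cB_\mathrm{B}$, using $\|S\|_2^2 \lesssim \log N$. The $(\log N)^2$ factor in the error of Proposition~\ref{prop:U_B} arises precisely from transporting the $N^{-\alpha/4}(\log N)\,\dd\Gamma(|\ui\nabla|)$ term of $\cE^{(U_c)}$ through $U_\mathrm{B}$. Collecting all contributions produces exactly the expression and error estimate stated.
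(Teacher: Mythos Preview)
Your overall strategy matches the paper's exactly: invariance of $-\Delta_x$ and $\mathcal H_{\mathrm{Int},>}$, exact Bogoliubov diagonalization of the low-momentum quadratic block, transformation of the linear term via $C+S=|p|/\epsilon(p)^{1/2}$, and separate lemmas for the transformation of $\ud\Gamma(W_{N,x})$, $\cL_4$, and the conservation estimates. The identification of the off-diagonal pair terms in $U_\mathrm{B}^*\ud\Gamma(W_{N,x})U_\mathrm{B}$ as the delicate point is also correct.

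There is, however, a quantitative gap in your conservation estimates that would prevent the argument from closing. You correctly note that $|S(p)|\lesssim(1+|p|)^{-2}$, but then assert $\|S\|_2^2\lesssim\log N$; in fact $\|S\|_2^2=O(1)$ uniformly in $N$ (it is $\||p|^{1/2}S\|_2^2$ that is $\lesssim\log N$). Consequently the paper obtains sharper conservation bounds: $U_\mathrm{B}^*\cN_+^t U_\mathrm{B}\lesssim\cN_+^t+1$ with no logarithm, $U_\mathrm{B}^*\dd\Gamma(|\ui\nabla|)U_\mathrm{B}\lesssim\dd\Gamma(|\ui\nabla|)+\log N$ rather than $(\log N)^2$, and most importantly
\[
U_\mathrm{B}^*\cL_4 U_\mathrm{B}=\cL_4+\mathcal O\big(N^{-1+2\alpha}+N^{-1}\cN_+^2\big),
\]
where the $N^{-1}$ comes from $\|V_N\|_1$ together with $\|S\|_2=O(1)$. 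Your stated bound $U_\mathrm{B}^*\cL_4 U_\mathrm{B}\lesssim\cL_4+(\cN_++\log N)^2$ is too weak: transporting the $N^{-\alpha/4}\cL_4$ piece of $\cE^{(U_c)}$ through it would produce $N^{-\alpha/4}\cN_+^2$, which is not dominated by the target error term $N^{-1/2-\alpha/4}\cN_+^2$. The $(\log N)^2$ in the final error actually arises as $(\log N)\cdot(\log N)$ from transporting $N^{-\alpha/4}(\log N)\dd\Gamma(|\ui\nabla|)$ through the correct bound $\dd\Gamma(|\ui\nabla|)+\log N$, not from a $(\log N)^2$ in the conservation estimate itself.
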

As usual, we will first prove Proposition \ref{prop:U_B} assuming known the bounds given by Lemmas \ref{lem:U_B-Q_2} and \ref{prop:U_G}, which we will discuss in the remainder of this section.
\begin{proof}[Proof of Proposition \ref{prop:U_B}]
 First, note that $-\Delta_x$ and $\mathcal{H}_{\mathrm{Int},>}$ (which already appears in Proposition~\ref{prop:U_c}) are left unchanged by $U_\mathrm{B}$ since the generator does not depend on $x$ and acts only on momenta $|p|\leq N^\alpha$.
 By standard calculations (see e.g.~\cite{HaiSchTri-22}) using the relations~\eqref{eq:Bogtrafo}, one finds that
\begin{align}
U_\mathrm{B}^* &\big(\dd \Gamma(-\Delta + 8\pi \ao_{V})  + \widetilde{\mathcal{L}}_2\big)U_\mathrm{B} \notag \\
&= \sum_{p} \Big(\sqrt{p^4+16\pi \ao_{V} p^2} \1_{|p| \leq N^\alpha} +(p^2 +8\pi \ao_{V})\1_{|p|> N^\alpha} \Big) a^*_p a_p \notag \\
&\quad  + \sum_{ |p|\leq N^{\alpha}} \Big(\frac12 \sqrt{p^4+16\pi \ao_{V} p^2}-\frac12p^2 -4\pi \ao_{V} \Big).
\end{align}
The second term contributes to $e_N^{(U_\mathrm{B})}$ and the first equals $\ud \Gamma(\epsilon) + \mathcal O(N^{-2\alpha}\mathcal N_+)$ using that
\begin{align}
 |\epsilon(p) - (p^2 +8\pi \ao_{V})| \1_{|p|> N^\alpha} \lesssim N^{-2\alpha}.
\end{align}
For the interaction terms on momenta below $N^\alpha$, one finds with~\eqref{eq:Bogtrafo}
\begin{align}
 U_\mathrm{B}^*\big(a(\widetilde W_{N,x}) + a^*(\widetilde W_{N,x})\big)U_\mathrm{B}
 &= \sum_{p}(C(p)+S(p)) \ue^{-\ui px} \widehat{\widetilde W}_{N}(p) (a_p^*+a_{-p}) \notag \\
 &=\sum_{|p|\leq N^\alpha} \frac{\ue^{-\ui px}|p| 8\pi \ao_W}{(p^4+16\pi \ao_{V} p^2)^{1/4}}(a_p^*+a_{-p}),
\end{align}
which gives $w_{N,x}$.

The terms $\ud \Gamma(W_{N,x})$ and $\cL_4$ are preserved by $U_\mathrm{B}$ up to errors that can be absorbed in $\cE^{(U_\mathrm{B})}$ by Lemmas~\ref{lem:U_B-Q_2},~\ref{lem:U_B-conserve} below. Moreover, Lemma \ref{lem:U_B-conserve} also shows that $U_\mathrm{B}^* \cE^{(U_\mathrm{c})} U_\mathrm{B}$ can also be absorbed in the error, where in the case of the error term $(\cN_+ +1)^{1/2}|\nabla_x|$ we additionally make use of the fact that $\nabla_x$ commutes with both $U_\mathrm{B}$, and $\cN_+$, which gives
\begin{align}
U_\mathrm{B}^* (\cN_+ +1)^{1/2}|\nabla_x|U_\mathrm{B} &= |\nabla_x|^{1/2} U_\mathrm{B}^*(\cN_+ +1)^{1/2} U_\mathrm{B}|\nabla_x|^{1/2} \notag \\
&= \mathcal{O}\big( (\cN_+ +1)^{1/2}|\nabla_x|\big),
\end{align}
by Lemma~\ref{lem:U_B-conserve}.
This proves the claim.
\end{proof}

\begin{lem}\label{lem:U_B-Q_2}
We have for $\alpha>0$ as quadratic forms on $\mathscr{H}_+$
 \begin{equation*}
  U_\mathrm{B}^* \ud \Gamma(W_{N,x})  U_\mathrm{B} =\ud \Gamma(W_{N,x}) + \mathcal{O}\big(N^{-1/4}\ud \Gamma(W_{N,x}) + N^{-1/4}(\cN_+ +1)\big).
 \end{equation*}
 \end{lem}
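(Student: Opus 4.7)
The plan is to expand $U_\mathrm{B}^* \ud\Gamma(W_{N,x}) U_\mathrm{B}$ using the Bogoliubov identities \eqref{eq:Bogtrafo} and isolate the main contribution $\ud\Gamma(W_{N,x})$ together with three residual terms. Writing $\ud\Gamma(W_{N,x}) = \sum_{p,q} \widehat{W_{N,x}}(p-q) a_p^* a_q$ with $\widehat{W_{N,x}}(k) = \ue^{-\ui kx}\widehat{W}_N(k)$ (using that $W_N$ is even), substituting the action of $U_\mathrm{B}$ on each $a_p^*,a_q$, and restoring normal order in the $SS\, a_{-p}a_{-q}^*$ term produces: a diagonal correction with coefficient $D(p,q) := C(p)C(q)+S(p)S(q)-1$, a pairing term with coefficients $C(p)S(q)$ and $S(p)C(q)$, and a scalar $\widehat{W}_N(0)\sum_p S(p)^2$ from the normal-ordering commutator. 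After the change of variable $q\mapsto -q$ and using that $S$ and $\widehat{W}_N$ are even, both the diagonal and pairing pieces can be written in position space as integrals against $W_N(x-y)$.

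The smallness of the corrections rests on two facts. First, the scaling $W_N(y) = N W(\sqrt N y)$ gives $\widehat{W}_N(0) = N^{-1/2}\widehat{W}(0) = \mathcal{O}(N^{-1/2})$. Second, setting $c := C-1$, the explicit formulas for $C$ and $S$ combined with $|p|\geq 2\pi$ on $2\pi\Z^3 \setminus \{0\}$ yield $|c(p)| \lesssim |p|^{-4}$ and $|S(p)|\lesssim |p|^{-2}$ as $|p|\to\infty$, so the inverse Fourier series $\check c, \check S \in L^2(\T^3)$ have norms $\mathcal{O}(1)$ uniformly in $N$ (the cutoff $\mathds{1}_{|p|\leq N^\alpha}$ only improves this). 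In particular the scalar contribution is already $\mathcal{O}(N^{-1/2})$, which absorbs into the $N^{-1/4}(\cN_++1)$ error.

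For the remaining terms I use $D(p,q) = c(p) + c(q) + c(p)c(q) + S(p)S(q)$ to split the diagonal correction into sandwich parts $\ud\Gamma(c(-\ui\nabla)W_{N,x}c(-\ui\nabla))$, $\ud\Gamma(S(-\ui\nabla)W_{N,x}S(-\ui\nabla))$, and a cross part $\ud\Gamma(c(-\ui\nabla)W_{N,x}+W_{N,x}c(-\ui\nabla))$. In position space the sandwich part equals $\int W_N(x-z) a^*(\check c_z) a(\check c_z) \ud z$ with $\check c_z(y):=\check c(y-z)$, and using $a^*(v)a(v) \leq \|v\|^2 \cN_+$ this is bounded by $\|\check c\|_2^2\, \widehat{W}_N(0)\,\cN_+ \lesssim N^{-1/2}\cN_+$, and similarly for the $S$-sandwich. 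For the cross part $2\Re\int W_N(x-y) a^*(\check c_y) a_y \ud y$, I split $W_N = W_N^{1/2}\cdot W_N^{1/2}$ and Cauchy-Schwarz in $y$ to get
\begin{equation*}
 |\langle \Psi, \ud\Gamma(cW_{N,x})\Psi\rangle| \leq \|\check c\|_2 \|\cN_+^{1/2}\Psi\|\sqrt{\widehat{W}_N(0)}\,\|\ud\Gamma(W_{N,x})^{1/2}\Psi\| \lesssim N^{-1/4}\langle\Psi,(\cN_++\ud\Gamma(W_{N,x}))\Psi\rangle.
\end{equation*}
The pairing term becomes $\int W_N(x-y)[a_y^* a^*(\check S_y) + a^*(\check c_y) a^*(\check S_y)] \ud y + \hc$, and exactly the same Cauchy-Schwarz strategy delivers $\mathcal{O}(N^{-1/4}(\cN_++1+\ud\Gamma(W_{N,x})))$ for the first piece (using $\|a^*(\check S_y)\Psi\|\leq\|\check S\|_2\|(\cN_++1)^{1/2}\Psi\|$) and the smaller $\mathcal{O}(N^{-1/2}(\cN_++1))$ for the second.

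The only real obstacle is conceptual: the Bogoliubov coefficient $C(p)$ is $\mathcal{O}(1)$ and not small, so a naive term-by-term estimate of $U_\mathrm{B}^* \ud\Gamma(W_{N,x}) U_\mathrm{B} - \ud\Gamma(W_{N,x})$ fails. The smallness must be extracted by subtracting the leading value $1$ from $C(p)$ and recognizing that each application of Cauchy-Schwarz against $\ud\Gamma(W_{N,x})$ produces a factor $\sqrt{\widehat{W}_N(0)} = \mathcal{O}(N^{-1/4})$ characteristic of the present scaling of $W$.
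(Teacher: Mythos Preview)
Your proof is correct and follows essentially the same route as the paper: expand via the Bogoliubov identities, write $C=1+c$ with $c=H=\sqrt{1+S^2}-1\in\ell^2$, and estimate each residual piece by Cauchy--Schwarz against the splitting $W_N=W_N^{1/2}W_N^{1/2}$, so that the factor $\|W_N\|_1^{1/2}=\widehat W_N(0)^{1/2}\sim N^{-1/4}$ emerges. Your organization via $D(p,q)=c(p)+c(q)+c(p)c(q)+S(p)S(q)$ and the position-space sandwich representation $\int W_N(x-z)\,a^*(\check c_z)a(\check c_z)\,\ud z$ is slightly more explicit than the paper's treatment of the $C(p)C(q)$ and $S(p)S(q)$ blocks, but the substance is identical.
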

\begin{proof}
 Writing $\ud \Gamma(W_{N,x})$ as
 \begin{equation}
  \ud \Gamma(W_{N,x}) = \sum_{p,q}  \ue^{-\ui x(p-q)}\widehat W_N(p-q) a^*_pa_q
 \end{equation}
we find from~\eqref{eq:Bogtrafo}
\begin{align}
 U_\mathrm{B}^* \ud \Gamma(W_{N,x})  U_\mathrm{B} &= \sum_{p,q} (C(p)C(q)+S(p)S(q)) \ue^{-\ui x(p-q)}\widehat W_N(p-q) a^*_pa_q  \notag \\
 &\quad + \sum_{p,q}C(p)S(q) \ue^{-\ui x(p-q)}\widehat W_N(p-q) a^*_pa_q^* + \hc \notag \\
 &\quad + \sum_p S(p)^2 \widehat W_N(0). \label{eq:Bogtrafo:2}
\end{align}
We have
\begin{equation}\label{eq:S-HS}
 S(p)^2 = \frac{\big(1-\sqrt{1+16\pi \ao_{V}/p^2}\big)^2}{4\sqrt{1+16\pi \ao_{V}/p^2}}\1_{|p|\leq N^\alpha}\leq \frac{(4\pi \ao_V)^2}{p^4}\in \ell^1(2\pi \Z^3).
\end{equation}
From this the last term in (\ref{eq:Bogtrafo:2}) is a $\mathcal O(N^{-1/2})$. To bound the $a^*a^*$ term, we use that $C(p) = 1 + H(p)$ with $H(p) = \sqrt{S(p)^2+1} -1$.
By the Cauchy-Schwarz inequality, we obtain
\begin{align}
 &\pm\sum_{p,q}S(q) \ue^{-\ui x(p-q)}\widehat W_N(p-q) a^*_pa_q^* + \hc = \pm \int W_{N,x}(y) \check S(y-y') a_y^* a_{y'}^* +\hc \notag\\
 &\quad \lesssim \delta \ud \Gamma(W_{N,x}) + \delta^{-1} \int W_{N,x} (y) \check S(y-y_1) \check S(y-y_2) a_{y_1} a^*_{y_2} \notag \\
 &\quad \lesssim N^{-1/4} \ud \Gamma(W_{N,x}) + N^{1/4} \|W_N\|_1 \|S\|_2^2 (\cN_+ +1 ),
\end{align}
where we took $\delta=N^{-1/4}$ and which can be absorbed it in the error term. The remaining term is estimated as
\begin{align}
 &\pm\sum_{p,q}H(p)S(q) \ue^{-\ui x(p-q)}\widehat W_N(p-q) a^*_pa_q^* + \hc \nn \\
 &\quad \lesssim \|\widehat W_N\|_{\infty} \|S\|_{2} \|H\|_2 (\mathcal N_+ +1)  \lesssim N^{-1/2} (\mathcal N_++1).
\end{align}
From~\eqref{eq:S-HS} we also immediately obtain 
\begin{equation}
 \sum_{p,q} S(p)S(q) \ue^{-\ui x(p-q)}\widehat W_N(p-q) a^*_pa_q = \mathcal{O}(N^{-1/2} \cN_+),
\end{equation}
and $\sum S(p)^2 \widehat  W(0) \lesssim N^{-1/2}$.
Writing again $C(p) = 1 + H(p)$ and arguing as above gives
\begin{align}
 &\sum_{p,q} C(p)C(q) \ue^{-\ui x(p-q)}\widehat W_N(p-q) a^*_pa_q \notag\\
 &\quad = \ud \Gamma(W_{N,x}) + \mathcal{O}\big(N^{-1/4}\ud \Gamma(W_{N,x}) + N^{-1/4}(\cN_+ +1)\big).
\end{align}
This proves the claim.
\end{proof}

 \begin{lem}\label{lem:U_B-conserve}
 For all $t\geq 0$ and $0\leq \alpha < 1/2$ we have as quadratic forms on $\mathscr{H}_+$
  \begin{align*}
   U_\mathrm{B}^* \cN_+^{t}  U_\mathrm{B}&= \mathcal{O}\big(\cN_+^{t} +1\big), \\
   U_\mathrm{B} \cN_+^{t}  U_\mathrm{B}^*&= \mathcal{O}\big(\cN_+^{t} +1\big),\\
    U_\mathrm{B}^* \cL_4  U_\mathrm{B}&= \cL_4 + \mathcal{O}\big(N^{-1+2\alpha}+N^{-1}\cN_+^2 \big), \\
    U_\mathrm{B}^* \ud \Gamma(|\ui \nabla|)  U_\mathrm{B}&=   \mathcal{O}\big(\ud \Gamma(|\ui \nabla|) + \log N  \big).
  \end{align*}
 \end{lem}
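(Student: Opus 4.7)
The conjugation identities will all be derived by writing, for a symmetric operator $A$, the Duhamel expansion
\begin{equation*}
U_\mathrm{B}(t)^* A\, U_\mathrm{B}(t) = A + \int_0^t U_\mathrm{B}(s)^*[A,\cB_\mathrm{B}]U_\mathrm{B}(s)\,\ud s,
\end{equation*}
controlling the commutator $[A,\cB_\mathrm{B}]$ by a suitable power of $\cN_+$, $\cL_4$ or $\ud\Gamma(|\ui\nabla|)$, and closing by Gr\"onwall's inequality. The key structural input is that $\cB_\mathrm{B}$ is supported on momenta $|p|\leq N^\alpha$ with coefficients $c_p=\tfrac14\log(1+16\pi\ao_V/p^2)$ whose $\ell^2$ norm is bounded uniformly in $N$ (since $c_p^2\lesssim |\log|p||^2$ near $p=0$ on the lattice and $c_p^2\lesssim p^{-4}$ for $|p|\to\infty$), and whose weighted norms $\sum_p p^2 c_p^2$ are likewise uniformly bounded, and similarly for products of $C$ and $S$ through the identity $C^2-S^2=1$.

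For the two bounds on $\cN_+^t$, I would first treat $t=1$ by the explicit identities~\eqref{eq:Bogtrafo}. A direct calculation gives
\begin{equation*}
U_\mathrm{B}^* \cN_+ U_\mathrm{B} = \sum_p (C(p)^2+S(p)^2)\,a_p^*a_p + \sum_p C(p)S(p)(a_p^*a_{-p}^*+\hc) + \|S\|_{\ell^2}^2,
\end{equation*}
from which $U_\mathrm{B}^* \cN_+ U_\mathrm{B}\leq C(\cN_+ + 1)$ follows using $C^2+S^2=1+2S^2$, $\|S\|_{\ell^2}^2=\mathcal{O}(1)$, $\|CS\|_{\ell^2}=\mathcal{O}(1)$, and Cauchy--Schwarz on the pairing term as in~\eqref{eq:B^- bound}. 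For integer $t\geq 2$, iterate this via the commutator bound $\pm[\cN_+^t,\cB_\mathrm{B}]\lesssim (\cN_++1)^t$ and Gr\"onwall; the non-integer case follows by operator monotonicity of $x\mapsto x^{\theta}$ for $\theta\in(0,1)$ and interpolation, as in~\cite[Lem.3.1]{BreSch-19}. The second conservation estimate is obtained by the same reasoning applied to $-\cB_\mathrm{B}$ (i.e.\ $U_\mathrm{B}^*=\ue^{-\cB_\mathrm{B}}$), which merely flips signs in the commutator bound without changing its magnitude.

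For the estimate on $\cL_4$, I would compute $[\cL_4,\cB_\mathrm{B}]$ by normal-ordering. Writing $\cB_\mathrm{B}=\tfrac12\int\eta(y_1-y_2)a_{y_1}a_{y_2}-\hc$ with $\widehat\eta(p)=\tfrac14\log(1+16\pi\ao_V/p^2)\1_{|p|\leq N^\alpha}$, repeated use of the CCR produces: a scalar term bounded by $\|V_N\|_1\|\check\eta\|_\infty^2\lesssim N^{-1}$ (sharpened to $\lesssim N^{-1+2\alpha}$ after keeping track of the momentum cutoff when one of the sums runs up to $|p|\leq N^\alpha$); a quadratic term controlled by Cauchy--Schwarz against $\cN_+$; and the quartic remainder bounded, as in~\eqref{eq:L_4-B_2 comm2}, by $\delta \cL_4 + \delta^{-1}\|V_N\|_1\|\eta\|_{\ell^2}^2(\cN_++1)^2 \lesssim \delta \cL_4 + \delta^{-1} N^{-1}(\cN_++1)^2$; optimizing $\delta$ and integrating in Duhamel yields the stated bound. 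The first two of the four conservation bounds feed into the Gr\"onwall argument to absorb powers of $U_\mathrm{B}(s)^*\cN_+^2 U_\mathrm{B}(s)$ by $(\cN_++1)^2$, closing the estimate.

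For $U_\mathrm{B}^*\ud\Gamma(|\ui\nabla|)U_\mathrm{B}$, the commutator computation gives
\begin{equation*}
[\ud\Gamma(|\ui\nabla|),\cB_\mathrm{B}] = -\tfrac12\sum_{0<|p|\leq N^\alpha}|p|\log(1+16\pi\ao_V/p^2)(a_pa_{-p}+a_p^*a_{-p}^*).
\end{equation*}
Splitting $|p|^{1/2}$ onto each creation/annihilation operator and estimating by Cauchy--Schwarz yields a bound by $\sum_{p}p^2 c_p^2\cdot (\cN_++1)$; this sum is uniformly $\mathcal{O}(1)$, but a more robust estimate splitting at $|p|\sim 1$ (where $c_p\sim \log p^{-2}$ gives the logarithmic contribution) allows us to bound the pairing term by $\dd\Gamma(|\ui\nabla|) + \log N$, which together with Gr\"onwall yields the claim. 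The main obstacle in the entire lemma is the bookkeeping in Claim (3): carefully identifying that only the cross-contraction terms survive with the correct $N^{-1+2\alpha}$ scalar and $N^{-1}\cN_+^2$ operator bounds, and ensuring that the unitarity of $U_\mathrm{B}(s)$ on intermediate states does not inflate the constants by $s$-dependent factors.
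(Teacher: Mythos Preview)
Your overall strategy is sound and close to the paper's, which cites~\cite[Lem.~18]{HaiSchTri-22} for the first three estimates and handles $\ud\Gamma(|\ui\nabla|)$ via the explicit Bogoliubov identities~\eqref{eq:Bogtrafo} (as in Lemma~\ref{lem:U_B-Q_2}) rather than Duhamel. The $\cN_+^t$ and $\cL_4$ parts of your plan are essentially correct.

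There is, however, a concrete error in your $\ud\Gamma(|\ui\nabla|)$ argument. You assert (twice) that $\sum_p p^2 c_p^2$ is uniformly $\mathcal O(1)$, but this is false: since $c_p\sim 4\pi\ao_V/p^2$ for large $|p|$, one has $p^2c_p^2\sim p^{-2}$, and in three dimensions $\sum_{0<|p|\leq N^\alpha}p^{-2}\sim N^\alpha$. The fix is to split the weight asymmetrically so that the $\ell^2$ quantity is $\sum_p |p|\,c_p^2\sim\sum_p p^{-3}=\mathcal O(1)$. Concretely,
\[
\Big|\sum_p |p|c_p\langle a_p\Psi,a_{-p}^*\Psi\rangle\Big|
\le\Big(\sum_p|p|\,\|a_p\Psi\|^2\Big)^{1/2}\Big(\sum_p|p|c_p^2\,\|a_{-p}^*\Psi\|^2\Big)^{1/2},
\]
and $\|a_{-p}^*\Psi\|^2=\|a_{-p}\Psi\|^2+\|\Psi\|^2$ together with $\||p|c_p^2\|_{\ell^\infty},\||p|c_p^2\|_{\ell^1}=\mathcal O(1)$ bounds the second factor by $C\langle\Psi,(\cN_++1)\Psi\rangle\le C\langle\Psi,(\ud\Gamma(|\ui\nabla|)+1)\Psi\rangle$. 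Gr\"onwall then closes with $\mathcal O(\ud\Gamma(|\ui\nabla|)+1)$, which is even stronger than the stated bound. Your fallback remark about a ``logarithmic contribution'' near $|p|\sim 1$ where $c_p\sim\log p^{-2}$ is also off: on the lattice $2\pi\Z^3\setminus\{0\}$ one has $|p|\ge 2\pi$, so $c_p$ is uniformly bounded there and no small-$p$ divergence occurs. The paper's route avoids this bookkeeping entirely by writing $U_\mathrm{B}^*\ud\Gamma(|\ui\nabla|)U_\mathrm{B}$ out exactly via~\eqref{eq:Bogtrafo} and bounding the diagonal, scalar, and pairing pieces with $\||p|^{1/2}S\|_2,\ |p|S(p)^2,\ C(p)=\mathcal O(1)$; this is marginally cleaner, but your Duhamel approach works once the weighted norm is corrected.
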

\begin{proof}
The first three estimates follow from the same arguments as in~\cite[Lem.18]{HaiSchTri-22} (first, for integer $t$, and then for all $t$ by interpolation).
The final inequality follows from the fact that $\| |p|^{1/2}S \|_2 \lesssim \log N$  and $|p|S(p)^2\lesssim 1$, $C(p)\lesssim 1$, by an argument similar to Lemma~\ref{lem:U_B-Q_2}.
\end{proof}

\subsection{Excitation-impurity scattering: a second Weyl transformation}\label{sect:Gross}

In this section we apply another Weyl transformation, similar to the one from Section~\ref{sect:Weyl}. However, it acts now on small momenta and the form factor of the transformation is obtained from an equation involving the Bogoliubov dispersion relation $\epsilon$, instead of simply $-\Delta$, and the interaction $w_N$ instead of the potential $W_N$.
The transformation is closely related to the first step of the renormalization in Section~\ref{sect:renorm}.

Let $\kappa>0$, to be chosen later, and let $\alpha>0$ be the parameter of the previous sections. We define the function $f\in L^2(\T^3)$ by
\begin{equation}
\widehat f_x(p) = -\frac{\widehat w_{N}(p)\ue^{\ui px}}{p^2 + \epsilon(p)} \1_{\kappa < |p| }
= \frac{-8\pi \ao_W |p| \ue^{\ui px} \1_{\kappa < |p| \leq  N^\alpha}}{(p^2 + \sqrt{p^4+ 16\pi \ao_{V} p^2})(p^4+16\pi \ao_{V} p^2)^{1/4}}.
\end{equation}
Note that this corresponds to $f_\kappa^\Lambda$ with $\Lambda = N^{\alpha}$ in (\ref{eq:def_f_sec_renorm}). One easily checks that $\|f_x\|_2\lesssim 1$, uniformly in $\kappa$ and $N$.
We then define the following unitary transformation, which is similar to the Gross transformation used in the renormalization of some polaron models~\cite{Nelson-64,GrWu-18},
\begin{equation}
 U_\mathrm{G}(t) = \exp\big( t(a^*(f_x) -a(f_x))\big), \qquad U_\mathrm{G}=U_\mathrm{G}(1).
\end{equation}
After this transformation, the Hamiltonian from Proposition~\ref{prop:U_B} resembles closely the operator~\eqref{eq:H_BF Weyl} appearing in the renormalization procedure.

\begin{prop}\label{prop:U_G}
Let $\kappa>0$, $0<\alpha \leq 1/10$ and $M = N^{1/2+\alpha/2}$, denote $U=U_\mathrm{G} U_q U_W U_c U_\mathrm{B}$ and recall the transformation of $\cH_N$ from Proposition~\ref{prop:U_B}. Then, for $N^\alpha>\kappa$
 \begin{align}
  U^* \cH_N U  &=   4\pi  \mathfrak{a}_{V} (N-1)  + 8\pi  \ao_W \sqrt{N}   + e^{(U)}_N \nn \\
   &\quad  -\Delta_x +\ud\Gamma(\epsilon)  + a(w^\kappa_x) +a^*(w^\kappa_x) + a^*( \ui \nabla v_{N,x})^2 +a(  \ui \nabla v_{N,x})^2    \nn \\
  &\quad + 2 a^*( \ui \nabla v_{N,x}) a( \ui \nabla v_{N,x}) + 2 a^*( \ui \nabla_x v_{N,x}) \ui \nabla_x  + 2 \ui  \nabla_x a( \ui \nabla_x v_{N,x}) \nn \\
  &\qquad + \ud \Gamma(W_{N,x}) +\cL_4 +\cE^{(U)}, \label{eq:prop:U_G}
 \end{align}
where
\begin{align*}
 \widehat w^\kappa_x (p) &= 8 \pi a(W)|p| \epsilon(p)^{-1/2} \ue^{\ui p x} \1_{0<|p|\leq \kappa}, \\
 \widehat v_{N,x} &=\left\{
\begin{aligned}
 \frac{8 \pi \ao_W}{p^2 + \epsilon(p)}  \frac{ |p|}{ \sqrt{\epsilon(p)}} \ue^{\ui p x} &\qquad \kappa<|p|\leq N^\alpha, \\
  \sqrt N \Wphih(p)  \ue^{\ui p x} &\qquad N^\alpha<|p|,
\end{aligned}\right.
\end{align*}
the scalar term has the value
\begin{align*}
 e^{(U)}_N &=  4 \pi N (\aV - \ao_V) + 8 \pi \sqrt{N} (\aW - \ao_W)  \\
 &\quad + \frac12\sum_{0\neq p\in 2\pi\Z^3 \atop |p|\leq N^{\alpha}} \bigg( \sqrt{p^4+16\pi \ao_{V} p^2}- p^2 - 8\pi \ao_{V} +2 \frac{(4\pi  \ao_{V})^2}{p^2}\bigg) \\
  &\quad + (8\pi \ao_W)^2 \sum_{0\neq p\in 2\pi\Z^3 \atop |p|<N^{\alpha}} \bigg(\frac{1}{2p^2} - \frac{ p^2 \1_{|p|>\kappa}}{(p^2+\epsilon(p)) \epsilon(p)}\bigg),
  \end{align*}
and the error term satisfies
\begin{multline}
 \pm \cE^{(U)} \lesssim N^{-\alpha/2}  \ud \Gamma(W_{N,x})  +N^{-\alpha/4}\mathcal{L}_4 + N^{-1/2-\alpha/4}\cN_+^2  \\
	 +   N^{-\alpha/4}\big((\log N) \dd\Gamma(|\ui \nabla|) +(\log N)^2 +(\mathcal N_++1)^{1/2} |\nabla_x|\big). \label{eq:est_E_U}
\end{multline}
\end{prop}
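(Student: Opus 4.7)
The plan is to apply $U_\mathrm{G}^*(\cdot)U_\mathrm{G}$ term-by-term to the decomposition in Proposition~\ref{prop:U_B}, using the Weyl-shift identities
\[
U_\mathrm{G}^* a(g) U_\mathrm{G}= a(g)+\langle g,f_x\rangle,\qquad U_\mathrm{G}^* \ui\nabla_x U_\mathrm{G}= \ui\nabla_x + a(\ui\nabla_x f_x)+a^*(\ui\nabla_x f_x),
\]
which are the exact analogues of~\eqref{eq:Gross_1}--\eqref{eq:Gross_2}. The key structural fact I would exploit throughout is that $\widehat f_x$ is supported in the medium-momentum window $\kappa<|p|\leq N^\alpha$, which is disjoint both from the support $|p|\leq\kappa$ of $w^\kappa_x$ and from the high-momentum support $|p|>N^\alpha$ of the form factors appearing in $\mathcal H_{\mathrm{Int},>}$. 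This disjointness kills many shift terms a priori, since $\langle f_x, \sqrt N(\ui\nabla\Wphi)_x\rangle = 0$.

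First, I would transform the Bogoliubov--Fr\"ohlich-like piece $-\Delta_x + \ud\Gamma(\epsilon) + a(w_{N,x}) + a^*(w_{N,x})$ by reproducing the computation leading to~\eqref{eq:H_BF Weyl}. By the defining equation for $f_x$, the linear-in-$a,a^*$ terms produced by $\ud\Gamma(\epsilon)$ and $-\Delta_x$ exactly cancel the $|p|>\kappa$ part of the linear interaction, leaving only $a(w^\kappa_x)+a^*(w^\kappa_x)$. The scalar $\langle w_{N,x}^{>\kappa},f_x\rangle$ produced is the analogue of $E^{(1)}_\Lambda$ from~\eqref{eq:E^1} at $\Lambda=N^\alpha$ and combines with $e_N^{(U_\mathrm{B})}$ to yield $e_N^{(U)}$. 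This step also produces medium-momentum quadratic and cross-with-$\ui\nabla_x$ terms involving $\ui\nabla_x f_x$. Second, for $\mathcal H_{\mathrm{Int},>}$, the shifts of $a,a^*$ leave the pure bosonic quadratic terms unchanged by the disjoint-support argument, and the only non-trivial modification comes from the shift of $\ui\nabla_x$ in $a^*(\sqrt N(\ui\nabla\Wphi)_x)\ui\nabla_x + \hc$, which generates mixed medium--high momentum terms. Collecting all quadratic and crossing contributions from these two steps and using the decomposition of $v_{N,x}$ (equal to $\sqrt N\Wphi_x$ at high momenta and $-f_x$ in the medium window), the total combines algebraically into the target expression $a^*(\ui\nabla v_{N,x})^2 + a(\ui\nabla v_{N,x})^2 + 2a^*(\ui\nabla v_{N,x}) a(\ui\nabla v_{N,x}) + 2a^*(\ui\nabla_x v_{N,x})\ui\nabla_x + \hc$.

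The positive terms $\ud\Gamma(W_{N,x})$ and $\cL_4$ must be shown to be preserved up to errors absorbable in $\cE^{(U)}$. For $\ud\Gamma(W_{N,x})$ this follows from the shift identity
\[
U_\mathrm{G}^* \ud\Gamma(W_{N,x}) U_\mathrm{G} = \ud\Gamma(W_{N,x}) + a^*(W_{N,x}f_x)+a(W_{N,x}f_x) + \langle f_x, W_{N,x}f_x\rangle,
\]
with the correction terms bounded by $N^{-\alpha/2}(\ud\Gamma(W_{N,x})+\cN_++1)$ via Cauchy--Schwarz as in~\eqref{eq:a-Q_2 bound}, using $\|f_x\|_2=\mathcal O(1)$ uniformly in $\kappa,N$. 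For $\cL_4$ one uses Duhamel's formula and commutator computations mirroring Lemma~\ref{lem:L4_T1}. Finally, to transport $\cE^{(U_\mathrm{B})}$ through $U_\mathrm{G}$ one needs conservation estimates $U_\mathrm{G}^* \mathcal{A} U_\mathrm{G}\lesssim \mathcal{A}+1$ for $\mathcal{A}\in\{\cN_+,\cL_4,\ud\Gamma(|\ui\nabla|),\ud\Gamma(W_{N,x}),(\cN_++1)^{1/2}|\nabla_x|\}$, proven by Gr\"onwall arguments on the relevant commutators, entirely analogous to Lemmas~\ref{lem:N_T1},~\ref{lem:Weyl-BB},~\ref{lem:Uc_N} and~\ref{lem:Uc_p}.

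The main technical obstacle is the conservation estimate for $(\cN_++1)^{1/2}|\nabla_x|$: since $U_\mathrm{G}$ depends on $x$, the commutator $[|\nabla_x|,\,a(f_x)-a^*(f_x)]$ is non-trivial and must be controlled using $\||\nabla|^{1/2}f_x\|_2$, with the same splitting trick $|\nabla_x|=|\nabla_x|^{1/2}\cdot|\nabla_x|^{1/2}$ as in Lemma~\ref{lem:quad_part_Bc} in order to avoid spuriously introducing $-\Delta_x$ on the right-hand side (which would be incompatible with the renormalization step in Section~\ref{sect:conv}). A secondary difficulty is the transport of the $\ud\Gamma(W_{N,x})$ term in $\cE^{(U_\mathrm{B})}$, for which the shift identity produces an expression requiring an $L^3$-bound on $W_N\ast|f_x|^2$ uniform in $\kappa$; this is analogous to the estimate~\eqref{eq:Weyl-error-dGamma} in Lemma~\ref{lem:Weyl-BB} and relies on the decay of $\widehat f_x(p)$ at large $|p|$ inherited from the resolvent factor $(p^2+\epsilon(p))^{-1}$.
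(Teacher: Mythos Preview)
Your proposal is correct and follows essentially the same approach as the paper: apply the Weyl shift identities term by term, exploit the disjoint momentum supports of $f_x$ and $\sqrt N\Wphi_x$ so that only the $\ui\nabla_x$-cross terms in $\mathcal H_{\mathrm{Int},>}$ are modified, then recombine everything into the $v_{N,x}$-form by completing the square (the paper writes this simply as $v_{N,x}=f_x+\sqrt N\Wphi_x$). Two minor remarks: for $\cL_4$ the paper does the direct shift computation~\eqref{eq:U_Wa_y} rather than Duhamel (a Weyl transform acts explicitly, so Duhamel is unnecessary), and your anticipated ``secondary difficulty'' with an $L^3$-bound on $W_N\ast|f_x|^2$ does not arise---the scalar $\langle f_x,W_{N,x}f_x\rangle$ is handled directly via $\|W_N\|_1\|f_x\|_\infty^2\lesssim N^{-1/2+2\alpha}$ using $\|f_x\|_\infty\leq\|\widehat f_x\|_1\lesssim N^\alpha$.
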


\begin{proof}
Recall from Proposition \ref{prop:U_B} that
\begin{align}
 U^* \cH_N U &= 4\pi  \mathfrak{a}_{V} (N-1)  + 8\pi  \ao_W \sqrt{N} + e_N^{(U_\mathrm{B})}+\ud \Gamma(\epsilon) -\Delta_x \nn\\
  &\quad  + a(w_{N,x}) +a^*(w_{N,x})   +\mathcal{H}_{\mathrm{Int},>} +\ud \Gamma(W_{N,x}) + \mathcal{L}_4 + \mathcal{E}^{(U_\mathrm{B})}.
\end{align}
We have the identities (compare~\eqref{eq:Gross_1},~\eqref{eq:Gross_2} and~\eqref{eq:Weyl-x-shift})
 \begin{subequations}
\begin{align}
 U_\mathrm{G}^* a_y U_\mathrm{G} &= a_y + f_x(y), \label{eq:U_G-shift} \\
 U_\mathrm{G}^* (\ui \nabla_x) U_\mathrm{G} &= \ui \nabla_x + a^*(\ui \nabla_x f_x) + a(\ui \nabla_x f_x),
\end{align}
 \end{subequations}
which gives with $\nabla_x f_x=-(\nabla f)_x$
 \begin{align}
  &U^*_\mathrm{G} \big(-\Delta_x + \ud \Gamma(\epsilon) + a(w_{N,x})+a^*(w_{N,x})\big) U_\mathrm{G} \notag \\
  &\quad = -\Delta_x + \ud \Gamma(\epsilon) + a(w_{x}^\kappa)+a^*(w_{x}^\kappa)
   + a(\ui \nabla f_x)^2 + a^*(\ui \nabla f_x)^2 \notag\\
  &\qquad + 2 a(\ui \nabla f_x)a^*(\ui \nabla f_x) - 2a^*(\ui \nabla f_x) \ui \nabla_x - 2 \ui \nabla_x a(\ui \nabla f_x) \notag \\
  &\qquad - \sum_{\kappa<|p|\leq N^\alpha}\frac{\widehat w(p)^2}{(p^2+\epsilon(p))}.
  \label{eq:U_G-main}
 \end{align}
We now pass to $\mathcal{H}_{\mathrm{Int},>}$, which we recall is defined in Proposition \ref{prop:U_B}. Note that the transformation $U^*_\mathrm{G}$ leaves invariant the terms involving momenta greater than $N^\alpha$. Therefore, all the terms in $\mathcal{H}_{\mathrm{Int},>}$ remain unchanged except the ones containing the gradient $\nabla_x$. This gives, with $\nabla_x f_x=-(\nabla f)_x$ 
\begin{align}
  U_\mathrm{G}^*\mathcal{H}_{\mathrm{Int},>} U_\mathrm{G} &=
  \mathcal{H}_{\mathrm{Int},>}
  + 2 a^*\big(\sqrt{N} \ui \nabla \Wphix\big)(a(\ui \nabla f_x) + a^*(\ui \nabla f_x))  + \hc \label{eq:U_G-H>}
\end{align}
Introducing $v_{N,x} =  f_x + \sqrt{N}\Wphix$ and noting that the additional terms in~\eqref{eq:U_G-H>} complete the square with (\ref{eq:U_G-main}), we find
\begin{align}
 \eqref{eq:U_G-main}+\eqref{eq:U_G-H>} &=-\Delta_x + \ud \Gamma(\epsilon) + a(w_{x}^\kappa)+a^*(w_{x}^\kappa)- \sum_{\kappa<|p|\leq N^\alpha}\frac{\widehat w(p)^2}{(p^2+\epsilon(p))} \notag \\
  &\quad + a^*( \ui \nabla v_{N,x})^2 +a( \ui \nabla v_{N,x})^2 + 2 a^*(  \ui \nabla v_{N,x}) a(  \ui \nabla v_{N,x}) \notag \\
  &\quad - 2 a^*( \ui \nabla v_{N,x}) \ui \nabla - 2 \ui  \nabla a( \ui \nabla v_{N,x})
.
\end{align}
It remains to consider $\ud \Gamma(W_{N,x})$, $\cL_4$ and $\cE^{(U_\mathrm{B})}$.
With the shift property~\eqref{eq:U_G-shift}, we obtain
\begin{equation}
 U_\mathrm{G}^*\ud \Gamma(W_{N,x})  U_\mathrm{G}  = \ud \Gamma(W_{N,x}) + a^*((W_N f)_x) + a((W_N f)_x) + \langle f, W_N f\rangle.
\end{equation}
Next, one easily checks that
\begin{equation}
 \langle f, W_N f\rangle \lesssim \|W_N\|_{1} \|f\|_\infty^2 \lesssim N^{-1/2+ 2\alpha},
\end{equation}
and, using the Cauchy-Schwarz inequality, one obtains
\begin{align}
  a^*((W_N f)_x) + a((W_N f)_x) 
  	&\lesssim N^{-\alpha/2}  \ud \Gamma(W_{N,x}) +  N^{\alpha/2} \langle f, W_N f\rangle \nn \\
  	&\lesssim  N^{-\alpha/2}  \ud \Gamma(W_{N,x}) + N^{-\alpha/2}, \label{eq:UG_Q2}
\end{align}
for $\alpha \leq 1/6$. This gives
\begin{equation}\label{eq:Q_2-U_G}
 U_\mathrm{G}^*\ud \Gamma(W_{N,x})  U_\mathrm{G}  = \ud \Gamma(W_{N,x}) + \mathcal{O}(N^{-\alpha/2} \left(\ud \Gamma(W_{N,x}) + 1\right)).
\end{equation}
We now estimate the transformation of the quartic term $\cL_4$, which gives
\begin{subequations}
 \begin{align}
 & U_\mathrm{G}^*\mathcal{L}_4 U_\mathrm{G} -\cL_4=   \int  V_N(y_1-y_2) f_x(y_1) a^*_{y_1}a^*_{y_2} a_{y_2} +\hc\label{eq:U_G-cubic-error} \\
 &\quad +\frac12 \int   V_N(y_1-y_2) f_x(y_1) f_x(y_2) a^*_{y_1}a^*_{y_2} +\hc \label{eq:U_G-L_4-quad} \\
 &\quad +\int    V_N(y_1-y_2) f_x(y_1) f_x(y_2) a^*_{y_1}a_{y_2}
 \label{eq:U_G-L_4-qmixed}\\
 &\quad + \ud \Gamma(V_N \ast f_x^2) +\Big( a\big( ( V_N \ast f_x^2)  f_x\big)+ \hc\Big) \label{eq:U_G-L4-dGamma-lin} \\
 &\quad + \frac12 {\int  f_x(y_1)^2f_x(y_2)^2 V_N(y_1-y_2)}. \label{eq:U_G-L_4-scal}
\end{align}
\end{subequations}
Since $\|f_x\|_\infty \lesssim \|\widehat f_x\|_1 \lesssim N^\alpha$ and $\|f_x\|_2 \lesssim 1$ we have
\begin{align}
 \eqref{eq:U_G-L_4-scal} \lesssim \|V_N\|_1 \|f_x\|_\infty^2 \|f_x\|_2^2 \lesssim  N^{-1+2\alpha}.
\end{align}
Similarly, we obtain
\begin{align}
 \eqref{eq:U_G-L4-dGamma-lin} = \mathcal{O}(N^{-1+2\alpha} (\cN_+ +1)).
\end{align}
Using the  Cauchy-Schwarz inequality as in~\eqref{eq:L_2-L_4 bound} and~\eqref{eq:bound_L3} yields
\begin{subequations}
\begin{align}
 %
 \eqref{eq:U_G-L_4-quad} &= \mathcal{O}\big(N^{-\alpha/2}\cL_4 + N^{-1+5\alpha/2}\big) \\
 \eqref{eq:U_G-cubic-error}&=\mathcal{O}\big(N^{-\alpha/2} \cL_4 + N^{\alpha/2} \ud \Gamma(V_N \ast f_x^2)\big)=\mathcal{O}\big(N^{-\alpha/2} \cL_4 + N^{-1+5\alpha/2}\cN_+\big).
\end{align}
\end{subequations}
These estimates, together with  $\alpha\leq 1/10$, show that
\begin{equation}\label{eq:L_4-U_G}
 U_\mathrm{G}^*\mathcal{L}_4 U_\mathrm{G} = \cL_4 + \mathcal{O}\big( N^{-\alpha/2}( \mathcal L_4 + \cN_+ +1)\big).
\end{equation}

Finally, we estimate the action of $U_G$ on $\cE^{(U_\mathrm{B})}$. Similarly as in (\ref{eq:UG_Q2}), $\||p|^{1/2} \widehat f \|_2^2 \lesssim \log N$ yields
\begin{equation}\label{eq:U_G-p}
 U_\mathrm{G}^* \ud \Gamma(|\ui \nabla|)U_\mathrm{G} = \mathcal{O}(\ud \Gamma(|\ui \nabla|) + \log N)
\end{equation}
With $\|f\|_2\lesssim 1$, one easily obtains
\begin{equation}\label{eq:U_G-N}
 U_\mathrm{G}^*(t)\cN_+^{k} U_\mathrm{G}(t) = \mathcal{O}(\cN_+^{k}+1)
\end{equation}
for $k\in \N$.
Moreover,
\begin{align}
 &U_\mathrm{G}^*(\cN_+ + 1)^{1/2}|\nabla_x| U_\mathrm{G} \\
 &\quad = (\cN_+ + 1)^{1/2}|\nabla_x| +\int_0^1 U_\mathrm{G}(t)^*\big( [(\cN_+ + 1)^{1/2}|\nabla_x|,a^*(f_x)] +\hc\big) U_\mathrm{G}(t). \notag
\end{align}
Let us estimate the commutator
	\begin{align}
& [(\cN_+ + 1)^{1/2}|\nabla_x|,a^*(f_x)] \nn \\ 
 	&=  (\cN_+ + 1)^{1/2} [|\nabla_x|,a^*(f_x)]  +  [(\cN_+ + 1)^{1/2},a^*(f_x)] |\nabla_x|. \label{eq:U_G-comm}
 	\end{align}
To estimate the first term, we use that $\ui \nabla_x \ue^{-\ui kx}=\ue^{- \ui kx} (\ui \nabla_x + k)$ and the triangle inequality $ | |\ui \nabla_x+ p| - | \ui \nabla_x| | \leq |p|$ for all $p$, to obtain that
\begin{align}
 &|\langle \Psi, [|\ui  \nabla_x|,a^*(f_x)] \Psi\rangle|= \Big|\Big\langle\Psi, \sum_{p}  (|\ui  \nabla_x|-|\ui \nabla_x- p|) \widehat f_x(p) a^*_p \Psi\Big|\Big\rangle \notag\\
 &\quad\leq \| \ud \Gamma(|\ui \nabla|)^{1/2} \Psi\| \bigg(\sum_{p}\Big\|\frac{(|\ui \nabla_x|-|\ui \nabla_x- \ui p|)^2 |\widehat f_x(p)|^2}{|p|} \Psi \Big\|^2\bigg)^{1/2} \notag \\
 &\quad \lesssim \sqrt{\log N} \| \ud \Gamma(|\ui \nabla|)^{1/2} \Psi\| \|\Psi\|,
\end{align}
where we used that $\||p|^{1/2} \widehat{f}\|_2^2 \lesssim \log N$. Hence, the first term in (\ref{eq:U_G-comm}) is bounded by
\begin{equation}
 (\cN_+ + 1)^{1/2} [|\nabla_x|,a^*(f_x)] + \hc = \mathcal{O}\big((\ud \Gamma(|\ui \nabla|) +\log N (\cN_+ +1)) \big).
\end{equation}
To bound the second term in (\ref{eq:U_G-comm}), we write
\begin{align}
 a^*(f_x)|\nabla_x| &= |\nabla_x|^{1/2} a^*(f_x)|\nabla_x|^{1/2} -[|\nabla_x|^{1/2}, a^*(f_x)]  |\nabla_x|^{1/2}
\end{align}
and use again that $\ui \nabla_x \ue^{-\ui kx}=\ue^{- \ui kx} (\ui \nabla_x + k)$ and that $ | |\ui \nabla_x+ p|^{1/2} - | \ui \nabla_x|^{1/2} | \leq |p|^{1/2}$ from which we obtain
\begin{align}
  [(\cN_+ + 1)^{1/2},a^*(f_x)]|\nabla_x| 
  	&= ((\cN_+ + 1)^{1/2} - \cN_+^{1/2})a^*(f_x)|\nabla_x| \notag \\
	 & = \mathcal{O}\big(|\nabla_x|  + \log N  \big), \label{eq:comm_a_nabla}
\end{align}
where we used that $\||p|^{1/2} \widehat{f}\|_2^2 \lesssim \log N$ and that $(\cN_+ + 1)^{1/2} - \cN_+^{1/2} \leq (\cN_+ + 1)^{-1/2}$. 

It thus follows from (\ref{eq:U_G-p}),~\eqref{eq:U_G-N} and Gr\"onwall's inequality that
\begin{equation}
 U_\mathrm{G}^* (\cN_++1)^{1/2}|\nabla_x|U_\mathrm{G}= \mathcal{O}\Big((\cN_++1)^{1/2}|\nabla_x| + \ud \Gamma(|\ui \nabla|) + \log N (\cN_++1)  \Big).
\end{equation}
This completes the proof of the proposition as $\cN_+\leq \dd\Gamma(|i\nabla|)$ on $\mathscr{H}_+$.
\end{proof}

\section{Convergence of the transformed Hamiltonian}\label{sect:conv}

In the section, we study the main part of the Hamiltonian appearing at the end of the previous section in Proposition \ref{prop:U_G}.
The remaining singular contribution, which will give rise to the $\log N$-term in the energy, is the interaction term $a^*(\ui \nabla_x v_{N,x})^2 +\hc$ from \eqref{eq:prop:U_G} (compare Section~\ref{sect:renorm}). For technical reasons, we will cut this term off at particle numbers $\cN_+> M=N^{1-\alpha}$ for the same $\alpha>0$ as before.

We can rephrase the result of Proposition \ref{prop:U_G} as
\begin{multline}\label{eq:H_NU-error}
U^*\cH_N U = \mathcal H_{\mathrm{Sing}}  + R_N+ 4\pi  \mathfrak{a}_{V} (N-1)  + 8\pi  \ao_W \sqrt{N} + e_N^{(U)} \\ +\cL_4  + \cE^{(U)} + \mathcal{O}(N^{-3/2+2\alpha} (\cN_++1)^3).
\end{multline}
Here,
\begin{align}
 \mathcal H_{\mathrm{Sing}} &= -\Delta_x +\ud\Gamma(\epsilon) +\ud\Gamma(W_{N,x}) \nn\\
 &\qquad + \1_{\cN_+\leq N^{1-\alpha}} a^*(\ui \nabla v_{N,x})^2 +a(\ui \nabla v_{N,x})^2 \1_{\cN_+\leq N^{1-\alpha}}\label{eq:Hsing_W}
\end{align}
generalizes the singular Hamiltonian from Section~\ref{sect:renorm}, and
\begin{align}\label{eq:R_N:sec5}
  R_N &=2 a^*(\ui \nabla v_{N,x}) a(\ui \nabla v_{N,x})-2 a^*(\ui \nabla v_{N,x}) \ui \nabla_x  - 2 \ui  \nabla_x a(\ui \nabla v_{N,x}) \notag\\
	 	&\quad +  a^*(w^\kappa_x) +a(w^\kappa_x),
\end{align}
collects  the remaining terms from the transformed operator of Proposition \ref{prop:U_G}, except $\cL_4$, the constants, and the error terms, and
we recall that for $\alpha>0$
\begin{align}\label{eq:w,v-def}
\begin{aligned}
  \widehat w^\kappa_x (p) &= 8 \pi \ao_W|p| \epsilon(p)^{-1/2} \ue^{\ui p x} \1_{0<|p|\leq \kappa} \\
 \widehat v_{N,x}(p) &=\left\{
\begin{aligned}
 \frac{8 \pi \ao_W}{p^2 + \epsilon(p)} \frac{ |p| }{\sqrt{\epsilon(p)}} \ue^{\ui p x} &\qquad \kappa<|p|\leq N^\alpha \\
  \sqrt N \Wphih(p)  \ue^{\ui p x} &\qquad N^\alpha<|p|.
\end{aligned}\right.
\end{aligned}
\end{align}
Note that, since $ \sqrt N \Wphih(p) \lesssim |p|^{-2}$ by Lemma~\ref{lem:phi_W},
\begin{equation}\label{eq:v_N-bound}
 \widehat  v_N(p) \lesssim |p|^{-2},
\end{equation}
and that $ \widehat v_{\infty}$ is also a well-defined function. Also by Lemma~\ref{lem:phi_W} with $s=1/2$, we have $\|\nabla v_{N,x}\|_2 \lesssim N^{1/4}$, so, by the argument of~\eqref{eq:B_q-N commute},
\begin{align}\label{eq:a^2>M-bound}
 \1_{ \1_{\cN_+> M}} a^*(\ui \nabla_x v_{N,x})^2 + \hc =\mathcal{O}\big(M^{-2} \sqrt{N} (\cN_++1)^3\big).
\end{align}
Cutting off the interaction in~\eqref{eq:Hsing_W} thus indeed gives rise to errors of the order stated in~\eqref{eq:H_NU-error}.

For later use, we introduce the notation for the relevant part of the transformed Hamiltonian
\begin{equation}\label{eq:H_N-U}
 \cH_N^U:=\mathcal H_{\mathrm{Sing}}  + R_N.
\end{equation}
Without the term $\dd\Gamma(W_{N,x})$ in $\mathcal H_{\mathrm{Sing}}$, one should think of $\cH_N^U$ as the Bogoliubov-Fr\"ohlich Hamiltonian $H_\mathrm{BF}^\Lambda$ defined in (\ref{eq:def_BF}) for $\Lambda =N^\alpha$ dressed with the appropriate Weyl transformation (compare with the expression given in (\ref{eq:H_BF Weyl})).
This will be discussed in detail in Section~\ref{sect:W=0}.

We will then show in Section~\ref{sec:H_W} that the extra term $\dd\Gamma(W_{N,x})$ only shifts the limiting operator by a constant of order one. Indeed, the divergence in $\cH_{\mathrm{sing}}$ is expressed by the scalar (compare~\eqref{eq:E^2})
\begin{align}
&E_{N,W}:= -2 \Big\langle \nabla  v_N \otimes \nabla v_N , \label{eq:E_NW}\\
&  \Big((\nabla_{y_1}+ \nabla_{y_2})^2 + \epsilon(\ui\nabla_{y_1})+\epsilon(\ui\nabla_{y_2})+ W_N(y_1)+W_N(y_2)+1\Big)^{-1} \nabla v_N \otimes \nabla v_N\Big \rangle,  \nn
\end{align}
which differs from the same quantity with $W_N$ omitted from the resolvent at order one, as shown in the proof of Theorem~\ref{thm:asymptotic}.
The results of this Section are summarized by:
\begin{prop} \label{prop:CV_transf_Ham}
Recall the unitary operator $U_{\kappa}^\infty$ defined in (\ref{eq:def_U_kappa_Lambda}). For $\kappa$ large enough and $0<\alpha\leq 1/2$ we have
\begin{align*}
\lim_{N\to \infty}\big(\cH_N^U - E_{N,W}\big)= (U_\kappa^\infty)^* H_\mathrm{BF}U_\kappa^\infty
\end{align*}
in norm resolvent sense.
\end{prop}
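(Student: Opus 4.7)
The approach is to adapt the renormalization scheme of Section~\ref{sect:renorm} to $\cH_N^U$. Up to the extra interaction $\ud\Gamma(W_{N,x})$ and the particle-number cutoff $\1_{\cN_+\leq N^{1-\alpha}}$, $\cH_N^U$ shares the structure of the Weyl-transformed Bogoliubov--Fr\"ohlich Hamiltonian~\eqref{eq:H_BF Weyl}, with form factor $\ui\nabla v_{N,x}$ in place of $\ui\nabla f_{\kappa,x}^{N^\alpha}$. Comparing~\eqref{eq:w,v-def} with~\eqref{eq:def_f_sec_renorm}, one sees that $v_N$ and $f_\kappa^\infty$ coincide on the sector $\kappa<|p|\leq N^\alpha$ (up to sign and conventions in the $x$-dependence), while the high-momentum tail $\sqrt{N}\widetilde\phi_{\mathrm{I}}$ of $v_N$ converges to the corresponding tail of $f_\kappa^\infty$ by Lemma~\ref{lem:phi_W}, yielding $v_N \to v_\infty$ with the uniform decay $|\widehat v_N(p)|\lesssim |p|^{-2}$.

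Following~\eqref{eq:H-G intro}, I would introduce
\begin{align*}
G_N &:= -(A_{N,x}+1)^{-1}\, a^*(\ui\nabla v_{N,x})^2\, \1_{\cN_+\leq N^{1-\alpha}-2},\\
K_N &:= (1-G_N^*)(A_{N,x}+1)(1-G_N),\\
T_N &:= -G_N^*(A_{N,x}+1)G_N - E_{N,W},
\end{align*}
with $A_{N,x}:=-\Delta_x+\ud\Gamma(\epsilon)+\ud\Gamma(W_{N,x})$, and verify the algebraic identity
\begin{equation*}
\cH_N^U - E_{N,W} + 1 = K_N + T_N + R_N,
\end{equation*}
with $R_N$ as in~\eqref{eq:R_N:sec5}. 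The scalar $E_{N,W}$ in~\eqref{eq:E_NW} is precisely the vacuum expectation $\langle\1\otimes\varnothing, G_N^*(A_{N,x}+1)G_N\,\1\otimes\varnothing\rangle$, which identifies the divergent contribution to be subtracted.

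One then establishes $N$-uniform analogs of Lemmas~\ref{lem:G_Lambda bound}--\ref{lem:R_Lambda bound}. Since $\ud\Gamma(W_{N,x})\geq 0$, the operator inequality $(A_{N,x}+1)^{-1}\leq (-\Delta_x+\ud\Gamma(\epsilon)+1)^{-1}$ allows the regularity bounds of Lemma~\ref{lem:G_Lambda bound} to be inherited: $\|(-\Delta_x+\ud\Gamma(\epsilon))^s G_N\Psi\|\leq C_\kappa \|(\cN_++1)^s\Psi\|$ holds uniformly in $N$ with $C_\kappa \to 0$ as $\kappa\to\infty$. For $\kappa$ large enough, $1-G_N$ is then boundedly invertible uniformly in $N$, $K_N$ is self-adjoint with compact resolvent, and Wick-ordering $T_N$ as in~\eqref{eq:T_Lambda-Theta} gives a scalar piece whose divergent part is cancelled by $-E_{N,W}$, plus normal-ordered quadratic and quartic pieces $\Theta_{N,j}$ satisfying uniform relative bounds analogous to Lemma~\ref{lem:T_Lambda relative}. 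The remainder $R_N$ is controlled via the analog of Lemma~\ref{lem:R_Lambda bound}, the new linear term $a^*(w_x^\kappa)+\hc$ being handled using $\|\widehat w\|_2\lesssim 1$.

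The main obstacle is showing that $K_N, T_N, R_N$ converge as $N\to\infty$ to the operators $K_\infty, T_\infty, R_\infty$ appearing in Proposition~\ref{prop:H_BF trafo conv} (which use $A_\infty = -\Delta_x+\ud\Gamma(\epsilon)$). The form-factor convergence $v_N\to v_\infty$ handles the replacement of the external data; the delicate point is that $\ud\Gamma(W_{N,x})$ in $A_{N,x}$ must vanish in the limit, contributing only through the scalar shift $E_{N,W}$. The mechanism is the bound $|\widehat W_N(p)| = N^{-1/2}|\widehat W(p/\sqrt N)| \leq N^{-1/2}\widehat W(0)$, which combined with the decay of $v_N$ and Sobolev-type estimates for $\ud\Gamma(W_{N,x})$ against $\ud\Gamma(|\ui\nabla|^s)$ shows, via the resolvent identity $A_{N,x}^{-1}-A_\infty^{-1} = -A_{N,x}^{-1}\ud\Gamma(W_{N,x})A_\infty^{-1}$, that the difference between the Wick-ordered kernels $\theta_{N,j}$ and $\theta_{\infty,j}$ vanishes as $N\to\infty$ when sandwiched against $(\cN_++1)^{1-s}(-\Delta_x+\ud\Gamma(\epsilon))^s$-type factors. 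The particle-number cutoff $\1_{\cN_+\leq N^{1-\alpha}-2}$ contributes only negligible errors via~\eqref{eq:a^2>M-bound}, since $N^{1-\alpha}\to\infty$. Applying the resolvent identity as at the end of the proof of Proposition~\ref{prop:H_BF trafo conv} then yields the claimed norm-resolvent convergence.
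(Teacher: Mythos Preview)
Your proposal follows the paper's route: dress by the $W$-dependent $G$-operator (the paper's $G_{N,W}$), decompose $\cH_N^U-E_{N,W}+1$ as $K_{N,W}+T_{N,W}+R_N$, establish uniform relative bounds, and conclude by the resolvent identity. The paper organizes this in two layers rather than one: Section~\ref{sect:W=0} first treats the simplified $W$-free operators $G_N,K_N,T_N$ built with the resolvent $(-\Delta_x+\ud\Gamma(\epsilon)+1)^{-1}$, where the Fourier-space Wick ordering of Section~\ref{sect:renorm} applies verbatim, and Section~\ref{sec:H_W} then bounds the differences $K_{N,W}-K_N$ and $T_{N,W}-T_N$ separately before assembling everything in Proposition~\ref{prop:HsingW}.

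The gap in your sketch is the claim that one can ``Wick-order $T_N$ as in~\eqref{eq:T_Lambda-Theta}'' to obtain kernels $\theta_{N,j}$. With the $x$-dependent multiplication operator $\ud\Gamma(W_{N,x})$ sitting inside $A_{N,x}^{-1}$, the commutation rules~\eqref{eq:comm_rules} no longer reduce the normal-ordered pieces to functions of the commuting pair $(\ui\nabla_x,\ud\Gamma(\epsilon))$, so there are no Fourier-multiplier kernels to compare directly with $\theta_{\infty,j}$. The paper instead handles this comparison (Lemma~\ref{lem:T_W-diff}, proved in Section~\ref{sect:T_NW}) in position space on each $n$-boson sector: it expands the resolvent identity to peel off the individual $W_N(y_j)$ factors and then exploits the permutation symmetry of $\Psi^{(n)}$ to absorb the combinatorial factors of $n$ that arise. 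The paper explicitly flags this step as ``rather subtle'' because $\ud\Gamma(W_{N,x})$ contributes at order \emph{one} to the scalar $E_{N,W}-E_N$; the Fourier bound $|\widehat W_N|\leq N^{-1/2}\widehat W(0)$ you invoke is therefore not sufficient on its own, and one genuinely needs the interplay of the Sobolev estimate~\eqref{eq:bound_WN_onebody}, the decay of $v_N$, and the $n$-body symmetry to see that the operator part of $T_{N,W}-T_N$ vanishes while the scalar part survives.
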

\begin{proof}[Proof of Proposition \ref{prop:CV_transf_Ham}]
In Proposition \ref{prop:H_BF trafo conv}, we defined a self-adjoint operator associated to the quadratic form $K_{\infty} +T_{\infty} + R_\infty $, whose domain is contained in $Q(K_{\infty})=(1-G_{\infty})^{-1}Q(-\Delta_x + \ud \Gamma(\epsilon))$.
In (\ref{eq:H_BF_Ukappa_infty}) we proved that
\begin{align}
 H_\mathrm{BF}=U_\kappa^\infty(K_{\infty} + T_{\infty}+R_\infty-1)(U_\kappa^\infty)^*.
\end{align}
On the other side of the equation, we will show in Proposition \ref{prop:HsingW} that $\cH_N^U-E_{N,W}$ converges to $K_{\infty} + T_{\infty} + R_{\infty}-1$ in norm resolvent sense.
Together, these prove the claim.
\end{proof}

\subsection{Analysis of the simplified singular Hamiltonian} \label{sect:W=0}
Let us denote
\begin{align}
\mathcal H_{\mathrm{Sing},0} = -\Delta_x +\ud\Gamma(\epsilon) + a^*( \ui \nabla v_{N,x})^2 +a(  \ui \nabla v_{N,x})^2.
\end{align}
Note that here we do not cut off the interaction at $\cN_+\leq N^{1-\alpha}$, as this is unnecessary when the interaction term $\ud \Gamma(W_{N,x})$ is absent, and such a cutoff is not used in Section~\ref{sect:renorm}.
Following the procedure from Section~\ref{sect:renorm}, we introduce
\begin{equation}
G_{N}:=-  (-\Delta_x+ \ud \Gamma(\epsilon) +1 )^{-1}a^*(  \ui \nabla v_{N,x} )^2,
\end{equation}
and rewrite
\begin{align}\label{eq:Hsing_0}
 \mathcal H_{\mathrm{Sing},0}  = K_N + T_N + E_N-1
\end{align}
with
\begin{equation}
\begin{aligned}
K_{N } &:= (1-G^*_{N })(-\Delta_x+ \ud \Gamma(\epsilon)  +1)(1- G_{N }), \\
T_{N } &:=  -a(\ui \nabla v_{N,x} )^2(-\Delta_x+ \ud \Gamma(\epsilon)+1)^{-1} a^*(\ui \nabla v_{N,x} )^2 - E_{N } \\
\end{aligned}
\end{equation}
and
\begin{equation}\label{eq:E_N}
 E_{N}:= -2 \Big\langle \nabla  v_N \otimes \nabla v_N ,
 \Big((\nabla_{y_1}+ \nabla_{y_2})^2 + \epsilon(\ui\nabla_{y_1})+\epsilon(\ui\nabla_{y_2})+1\Big)^{-1} \nabla v_N \otimes \nabla v_N\Big \rangle.
\end{equation}
We also use the definitions of $G_{\infty}$, $K_{\infty}$, $T_{\infty}$ from Section~\ref{sect:renorm}.

We now show regularity estimates for $G_N$, and its convergence rate to $G_{\infty}$. We then use these to infer comparison estimates between the dressed $K_{N}$ and the bare Hamiltonian $-\Delta_x+\ud \Gamma(\epsilon)$, as well as its convergence rate towards $K_{\infty}$.

\begin{lem}[Properties of $G_{N}$]\label{lem:G_0_bound}
 For all $0\leq s<1/2$ there exists $(C_\kappa)_{\kappa>0}$, $C$ with $\lim_{\kappa\to \infty} C_\kappa=0$ so that for all $\kappa>0$, $N\in \N\cup\{\infty\}$
\begin{subequations}
  \begin{equation}
  \|G_N^*(-\Delta_x+\ud \Gamma(\epsilon))^s (\cN_++1)^{-s} \| \leq C_\kappa,\label{eq:lem:G_0_bound_1}
 \end{equation}
and for all $\alpha>0$
\begin{equation}
 \|(G_N^*-G_{\infty}^*) (-\Delta_x+\ud \Gamma(\epsilon))^s(\cN_++1)^{-s}\|\leq C N^{(2s-1)\alpha}.
 \label{eq:lem:G_0_bound_diffN}
\end{equation}
Moreover, for $1/2\leq s\leq 1$, $\eps>0$ there is $C$ so that for all $\kappa>0$, $N\in \N$
\begin{equation}
 \|G_N^*(-\Delta_x+\ud \Gamma(\epsilon))^{s}(\cN_++1)^{-s}\|  \leq C N^{(s-\frac12)_++\eps}.\label{eq:lem:G_0_bound_2}
\end{equation}
\end{subequations}
\end{lem}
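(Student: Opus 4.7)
The plan is to reduce the three operator bounds to a single weighted $\ell^2$ estimate on $\widehat v_N$, which can then be evaluated using the explicit formula~\eqref{eq:w,v-def} and the regularity estimates of Lemma~\ref{lem:phi_W}. Setting $K_0 = -\Delta_x + \ud\Gamma(\epsilon) + 1$ and $g = \ui\nabla v_N$, one has $G_N^* = -a(g_x)^2 K_0^{-1}$. Since $K_0 \geq 1$ and $s-1 \leq 0$ throughout, the functional-calculus identity $a_p a_q K_0^{s-1} = (K_0 + \epsilon(p)+\epsilon(q))^{s-1} a_p a_q$ (itself a consequence of $a_p K_0 = (K_0+\epsilon(p))a_p$) combined with the Fourier expansion $a(g_x)^2 = \sum_{p,q}\overline{\widehat g(p)\widehat g(q)}\,\ue^{\ui(p+q)x}a_p a_q$ gives
\begin{equation*}
G_N^* K_0^s (\cN_++1)^{-s} = -\sum_{p,q} \ue^{\ui(p+q)x}\,\overline{\widehat g(p)\widehat g(q)}\,\bigl(K_0+\epsilon(p)+\epsilon(q)\bigr)^{s-1}\bigl(\cN_++3\bigr)^{-s}a_p a_q.
\end{equation*}

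I would then estimate the operator norm of the right-hand side by Cauchy--Schwarz in the momentum pair $(p,q)$ with weight $w_{p,q} = (\epsilon(p)+1)^{1-s}(\epsilon(q)+1)^{1-s}$. The coefficient factor evaluates to $I_{N,s,\kappa}^{2}$, where
\begin{equation*}
I_{N,s,\kappa} := \sum_{0\neq p \in 2\pi\Z^3} (\epsilon(p)+1)^{s-1}\,|p|^2\,|\widehat v_N(p)|^2.
\end{equation*}
For the remaining state factor I would use $(\epsilon(p)+1)(\epsilon(q)+1)\leq (\epsilon(p)+\epsilon(q)+1)^{2}$ together with the scalar operator bound $(K_0+\epsilon(p)+\epsilon(q))^{s-1}\leq (\epsilon(p)+\epsilon(q)+1)^{s-1}$, and the identity $\sum_{p,q}\|a_p a_q\Psi\|^2 = \langle\Psi,\cN_+(\cN_+-1)\Psi\rangle$. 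The residual $(\cN_++1)^{1-s}$ factor arising from this last identity is absorbed via the spectral inequality $K_0 \gtrsim 1+\cN_+$ (valid because $\epsilon(p)\geq \epsilon_0>0$ on $2\pi\Z^3\setminus\{0\}$), which allows one to trade $\cN_+^{1-s}$ against the $K_0^{s-1}$ already present. The outcome is the clean bound $\|G_N^*K_0^s(\cN_++1)^{-s}\|_{\mathrm{op}} \lesssim I_{N,s,\kappa}^{1/2}$.

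All three claims then reduce to estimating $I_{N,s,\kappa}$, which I would split at $|p|=N^\alpha$. In the regime $\kappa<|p|\leq N^\alpha$, inserting the first line of~\eqref{eq:w,v-def} and using $\epsilon(p)\sim|p|^2$ and $\widehat v_N(p)\sim|p|^{-2}$ at large $|p|$, the summand decays as $|p|^{2s-4}$; this is summable in $\Z^3$ precisely when $s<1/2$, with tail vanishing as $\kappa\to\infty$, giving~\eqref{eq:lem:G_0_bound_1}. For $|p|>N^\alpha$ one uses $\widehat v_N(p)=\sqrt N\,\widehat{\widetilde\phi}_{\mathrm I}(p)$ and writes $(\epsilon(p)+1)^{s-1}|p|^2 \sim |p|^{2s-1}\cdot|p|\leq N^{(2s-1)\alpha}\cdot |p|$ for $s<1/2$; combined with the estimate $\||\nabla|^{1/2}\widetilde\phi_{\mathrm I}\|_2^2 \lesssim N^{-1}\log N$ from Lemma~\ref{lem:phi_W}, this yields a contribution of order $N^{(2s-1)\alpha}\log N$, which controls the tail for both $v_N$ and $v_\infty$. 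Since $v_N$ and $v_\infty$ agree on $|p|\leq N^\alpha$, the difference $G_N^*-G_\infty^*$ reduces to exactly this tail, giving~\eqref{eq:lem:G_0_bound_diffN} (after slightly decreasing $s$ to absorb the logarithm). For $1/2\leq s\leq 1$, the low-momentum sum diverges polynomially as $I_{N,s,\kappa}\lesssim N^{2\alpha(s-1/2)}$, yielding~\eqref{eq:lem:G_0_bound_2} with the $N^\eps$ factor absorbing the $\log N$ at the endpoint $s=1/2$.

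The main obstacle is the operator step: the elementary identity $\sum_{p,q}\|a_p a_q\Psi\|^2 = \langle\Psi,\cN_+(\cN_+-1)\Psi\rangle$ naively leaves an unbounded $(\cN_++1)^{1-s}$ factor for $s<1$, and it is only the spectral inequality $K_0 \gtrsim 1+\cN_+$ that makes it possible to absorb this factor into $K_0^{s-1}$. Carefully balancing the scalar weights $w_{p,q}$ against the operator-level $K_0$-dependence of $(K_0+\epsilon(p)+\epsilon(q))^{s-1}$---rather than using only the crude scalar bound---is what closes the estimate uniformly in $N$ and ties the operator norm directly to the $\ell^2$ quantity $I_{N,s,\kappa}$.
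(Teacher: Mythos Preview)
Your reduction of everything to the weighted moment $I_{N,s,\kappa}=\sum_p (\epsilon(p)+1)^{s-1}|p|^2|\widehat v_N(p)|^2$ is exactly right, and your asymptotic analysis of this quantity (splitting at $N^\alpha$, using Lemma~\ref{lem:phi_W}) reproduces the paper's constants. The problem is in the operator step, where the sequence of bounds you describe does not close.

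Concretely: after the pull-through and Cauchy--Schwarz with the product weight $w_{p,q}=(\epsilon(p)+1)^{1-s}(\epsilon(q)+1)^{1-s}$, you propose to cancel $w_{p,q}$ against $(K_0+\epsilon(p)+\epsilon(q))^{2(s-1)}$ using the scalar bound $(K_0+\epsilon(p)+\epsilon(q))^{s-1}\le(\epsilon(p)+\epsilon(q)+1)^{s-1}$, and then to absorb the remaining $(\cN_++1)^{1-s}$ from $\sum_{p,q}\|a_pa_q\Psi\|^2=\langle\Psi,\cN_+(\cN_+-1)\Psi\rangle$ by trading against ``the $K_0^{s-1}$ already present''. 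But once you apply the scalar bound, the $K_0$-dependence is gone---there is no $K_0^{s-1}$ left. If instead you try to keep one of the two factors $(K_0+\epsilon(p)+\epsilon(q))^{s-1}$ and bound it by $(\cN_++1)^{s-1}$ via $K_0\gtrsim\cN_+$, the other single factor $(\epsilon(p)+\epsilon(q)+1)^{s-1}$ is not enough to cancel the \emph{product} weight $w_{p,q}$: the ratio $(\epsilon(p)+1)^{1-s}(\epsilon(q)+1)^{1-s}(\epsilon(p)+\epsilon(q)+1)^{s-1}$ is unbounded (take $\epsilon(p)\sim\epsilon(q)\to\infty$). So neither reading of your paragraph closes the estimate.

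The missing mechanism is the sector-wise H\"older inequality. The paper does not pull through and estimate term-by-term; it bounds the operator $a^*(\nabla v_{N,x})^2a(\nabla v_{N,x})^2$ directly by
\[
a^*(\nabla v_{N,x})^2a(\nabla v_{N,x})^2 \;\le\; I_{N,s,\kappa}^2\,\ud\Gamma(\epsilon^{1-s})^2,
\]
and then uses $\ud\Gamma(\epsilon^{1-s})\le \cN_+^{s}\,\ud\Gamma(\epsilon)^{1-s}$ (H\"older on each $n$-particle sector, $\sum_j\epsilon_j^{1-s}\le n^s(\sum_j\epsilon_j)^{1-s}$). This single inequality is what correctly distributes the exponent $1-s$ between $\cN_+$ and $\ud\Gamma(\epsilon)$; sandwiching with $\mathbb H_0^{s-1}(\cN_++1)^{-s}$ then collapses to $\cN_+^{2s}(\cN_++1)^{-2s}\le 1$. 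If you want to keep your Cauchy--Schwarz framework, the fix is to recognise that your state factor $\sum_{p,q}\epsilon(p)^{1-s}\epsilon(q)^{1-s}\|a_pa_q\Phi\|^2$ equals $\langle\Phi,\ud\Gamma(\epsilon^{1-s})^2\Phi\rangle$ (up to the diagonal $p=q$), apply H\"older there, and \emph{not} pull $K_0^{s-1}$ through the annihilation operators. (Also, the bound that comes out is $\|G_N^*K_0^s(\cN_++1)^{-s}\|\lesssim I_{N,s,\kappa}$, not $I_{N,s,\kappa}^{1/2}$.)
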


\begin{proof}
By the Cauchy-Schwarz inequality, we have
  \begin{align}
&a^*(\ui \nabla v_{N,x})^2a(\ui \nabla v_{N,x})^2 \\
&= \sum_{p,q,k,r} (k\cdot r) (p\cdot q) \widehat v_{N,x}(k) \widehat v_{N,x}(r) \widehat v_{N,x}(p) \widehat v_{N,x}(q) a^*_k a^*_r a_p a_q \nn \\
&\leq \sum_{p,q,k,r} (p\cdot q)^2 \frac{|\widehat v_N(p)|^2}{\epsilon(p)^{1-s}} \frac{|\widehat v_N(q)|^2}{\epsilon(q)^{1-s}}  \epsilon(k)^{1-s} \epsilon(r)^{1-s}a^*_k a^*_r a_r a_k
\leq \bigg\| |k| \frac{\widehat v_N}{\epsilon^{\frac{1-s}{2}}}\bigg\|_2^4 \dd\Gamma(\epsilon^{1-s})^2 \nn.
 \end{align}
  for all $s\geq 0$, $N\in \N\cup\{\infty\}$ for which the right hand side is finite.
 Using H\"older's inequality on each particle sector $\sum_{j=1}^k \epsilon_j^{1-s} \leq k^{s} \left(\sum_{j=1}^k\epsilon_j \right)^{1-s}$, we obtain
 \begin{equation}
 a^*(\ui \nabla v_{N,x})^2a(\ui \nabla v_{N,x})^2
  \leq \bigg\| |k| \frac{\widehat v_N}{\epsilon^{\frac{1-s}{2}}}\bigg\|_2^4 \left(- \Delta_x +\dd\Gamma(\epsilon)\right)^{2(1-s)} \cN_+^{2s},
 \label{eq:lem:G_N_bound_5}
 \end{equation}
  since $0\leq -\Delta_x$ and $\dd\Gamma(\epsilon)$ commute.
Using~\eqref{eq:v_N-bound}, we have for $0 \leq s <1/2$ that
\begin{equation}
C_\kappa:=\bigg\| |k| \frac{\widehat v_N}{\epsilon^{\frac{1-s}{2}}}\bigg\|_2^2 \lesssim \sum_{ |k|>\kappa} \frac{1}{|k|^2\epsilon(k)^{1-s}} \lesssim \kappa^{2s-1}.
\end{equation}
Thus, denoting $\mathbb{H}_0:=-\Delta_x+\ud \Gamma(\epsilon)+1$,
\begin{equation}
  (\cN_++1)^{-s}\mathbb{H}_0^{s} G_{N}  G_{N}^*\mathbb{H}_0^{s} (\cN_++1)^{-s} \leq C_\kappa^2,
\end{equation}
which proves~\eqref{eq:lem:G_0_bound_1}.

To prove the statement for the difference~\eqref{eq:lem:G_0_bound_diffN}, note that $v_N(k)-v_\infty(k)$ is supported on the set where $|k|>N^\alpha$, so this follows from the same argument where now $\kappa=N^\alpha$.

By the same reasoning, we have for $s\geq 1/2$ and $N<\infty$
\begin{equation}\label{eq:G_N0-bound-asympt}
(\cN_++1)^{-s} \mathbb{H}_0^{s}  G_{N}  G_{N}^*\mathbb{H}_0^{s} (\cN_++1)^{-s}\lesssim C_N^2,
\end{equation}
with, using Lemma~\ref{lem:phi_W},
\begin{align} \label{eq:moment_vN diverge}
C_N=\Big\| |k| \frac{\widehat v_N}{\epsilon^{\frac{1-s}{2}}}\Big\|_2^2 \lesssim \sum_{ |k| \leq N^{\alpha}} \frac{1}{|k|^{4-2s}} + N \| |\nabla|^{s} \Wphih\|_2^2 \lesssim N^{(s-\frac12)_+ +\eps }.
\end{align}
\end{proof}

\begin{lem}[Properties of $K_{N}$]\label{lem:K_0-props}
 For $\kappa$ sufficiently large, the following hold. The operators $K_{N}$ with domain
  \begin{equation*}
  D(K_{N}):=(1-G_{N})^{-1}D(-\Delta_x+\ud \Gamma(\epsilon)),
 \end{equation*}
 are self-adjoint for any $N \in \mathbb{N}\cup\{\infty\}$.
Moreover, for $0\leq s\leq 1$  there is $C>0$, such that for all  $N \in \mathbb{N}\cup\{\infty\}$ we have  \begin{subequations}
\begin{align}\label{eq:K_0 control}
 \|(-\Delta_x+\ud \Gamma(\epsilon))^{s}(1-G_{N})K_{N}^{-s}\|\leq C, \\
 \|\cN_+^s K_{N}^{-s}\|\leq C. \label{eq:bound_cN-K_0}
\end{align}
Furthermore, $K_{N}$ converges to $K_{\infty}$ in norm resolvent sense, and  for $0\leq s,\alpha<1/2$, there exists $C>0$ so that for all $N\in \N$
 \begin{align}\label{eq:K_conv}
\Big\| K_{N,0}^s \big(K_{N,0}^{-1} - K_{\infty}^{-1}\big)K_{\infty}^s\Big\| &\leq C N^{(2s-1) \alpha}.
\end{align}
\end{subequations}
\end{lem}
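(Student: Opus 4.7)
The plan is to establish the four claims in sequence. For self-adjointness, I will apply Lemma~\ref{lem:G_0_bound} at $s=0$ to conclude $\|G_N\|_{\mathrm{op}}\leq C_\kappa$ with $\lim_{\kappa\to\infty}C_\kappa=0$; for $\kappa$ sufficiently large, a Neumann series then makes $(1-G_N)$ invertible, so $K_N=(1-G_N^*)\Ho(1-G_N)$ (with $\Ho=-\Delta_x+\dd\Gamma(\epsilon)+1$) is a product of three bijective, symmetric factors. Hence $K_N$ is a symmetric bijection from $D(K_N)$ onto $\mathscr{H}_+$ with bounded inverse $K_N^{-1}=(1-G_N)^{-1}\Ho^{-1}(1-G_N^*)^{-1}$, and therefore self-adjoint.

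Next, I will check the estimate~\eqref{eq:K_0 control} at the two endpoints and interpolate. At $s=0$ the bound is immediate. At $s=1$, inserting the formula for $K_N^{-1}$ yields the algebraic identity
\begin{equation*}
(-\Delta_x+\dd\Gamma(\epsilon))(1-G_N)K_N^{-1}=(1-G_N^*)^{-1}-\Ho^{-1}(1-G_N^*)^{-1},
\end{equation*}
which is uniformly bounded. Complex interpolation applied to the analytic family $z\mapsto \Ho^z(1-G_N)K_N^{-z}$ on the strip $\{0\leq\Re z\leq 1\}$ handles intermediate $s$; since $\Ho^{\ui t}$ and $K_N^{\ui t}$ are unitary, the norm of this family depends only on $\Re z$, and the Hadamard three-lines theorem applies directly. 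For~\eqref{eq:bound_cN-K_0}, the key observation is the form inequality $K_N\geq c\Ho$, which I will deduce from Lemma~\ref{lem:G_0_bound} at $s=1/2$: combined with $\cN_+\leq C\Ho$ (valid since $\epsilon(p)\geq c>0$ on $2\pi\Z^3\setminus\{0\}$), the lemma gives $\|\Ho^{1/2}G_N\Psi\|\leq C_\kappa\|\Ho^{1/2}\Psi\|$, hence $\|\Ho^{1/2}(1-G_N)\Psi\|\geq(1-C_\kappa)\|\Ho^{1/2}\Psi\|$ for $\kappa$ large. Löwner's operator monotonicity of $x\mapsto x^{-s}$ then yields $K_N^{-s}\lesssim\Ho^{-s}$ and hence~\eqref{eq:bound_cN-K_0} for $s\leq 1/2$; the range $s\in(1/2,1]$ is obtained by interpolation with the $s=1$ case, itself handled via the shift relation $\cN_+G_N=G_N(\cN_++2)$ to rewrite $\cN_+K_N^{-1}=(1-G_N)^{-1}\cN_+\Ho^{-1}(1-G_N^*)^{-1}+2G_N(1-G_N)^{-2}\Ho^{-1}(1-G_N^*)^{-1}$ as a sum of bounded operators.

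For the norm-resolvent convergence and the rate~\eqref{eq:K_conv}, I will start from the second resolvent identity
\begin{equation*}
K_N^{-1}-K_\infty^{-1}=K_N^{-1}(G_\infty^*-G_N^*)\Ho(1-G_\infty)K_\infty^{-1}+K_N^{-1}(1-G_N^*)\Ho(G_\infty-G_N)K_\infty^{-1},
\end{equation*}
sandwich with $K_N^s$ on the left and $K_\infty^s$ on the right, and estimate each factor: $K_N^{s-1}\lesssim \Ho^{s-1}$ by the form inequality, $\Ho(1-G_\infty)K_\infty^{-(1-s)}$ by~\eqref{eq:K_0 control} at exponent $1-s$, and $G_\infty-G_N$ by the difference bound~\eqref{eq:lem:G_0_bound_diffN} at exponent $s$, which supplies the factor $N^{(2s-1)\alpha}$. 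The symmetric second term is handled analogously, and norm-resolvent convergence follows from taking $s=0$ since then $-\alpha<0$. The main obstacle will be the careful bookkeeping of fractional powers of $\Ho$, $\cN_+$, and $K_N$: Lemma~\ref{lem:G_0_bound} is only uniform in $N$ for $s<1/2$, and the non-commutativity of $\Ho$ and $K_N$ forces the interpolation arguments to be executed at the operator-analytic level rather than by elementary monotonicity.
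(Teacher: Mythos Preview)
Your treatment of self-adjointness and of \eqref{eq:K_0 control} is essentially the paper's argument: invertibility of $1-G_N$ from Lemma~\ref{lem:G_0_bound} at $s=0$, the endpoint identities, and complex interpolation. No issues there.

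The genuine gap is in your route to \eqref{eq:bound_cN-K_0}. You want the form inequality $K_N\geq c\Ho$ uniformly in $N$, and you propose to extract it from Lemma~\ref{lem:G_0_bound} at $s=1/2$. But that lemma gives a $\kappa$-small constant $C_\kappa$ only for $s<1/2$; for $s\geq 1/2$ the stated bound is $CN^{(s-1/2)_++\eps}$, and the borderline case $s=1/2$ fails precisely because the relevant moment $\|\epsilon^{-1/4}|k|\widehat v_N\|_2^2$ diverges logarithmically in $N$ (cf.\ the computation \eqref{eq:moment_vN diverge} in the proof). Hence you cannot conclude $\|\Ho^{1/2}G_N\Ho^{-1/2}\|<1$ uniformly in $N$, and the form inequality $K_N\geq c\Ho$ is not available (and is in fact not expected to hold uniformly). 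This also undermines your later use of ``$K_N^{s-1}\lesssim\Ho^{s-1}$'' in the convergence step.

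The paper avoids this entirely. For \eqref{eq:bound_cN-K_0} it inserts $\pm G_N$ and uses the shift relation $\cN_+G_N=G_N(\cN_++2)$ (which you only invoke at $s=1$) together with the already-proven \eqref{eq:K_0 control} to get
\[
\|\cN_+^sK_N^{-s}\|\leq \|\cN_+^s(1-G_N)K_N^{-s}\|+C_\kappa\|(\cN_++2)^sK_N^{-s}\|,
\]
and then absorbs the last term using sub-additivity of $x\mapsto x^s$ and $C_\kappa<1$. For \eqref{eq:K_conv}, rather than $K_N^{s-1}\lesssim\Ho^{s-1}$, the paper splits $\Ho=\Ho^s\Ho^{1-s}$ in the resolvent identity and uses \eqref{eq:K_0 control} at exponent $1-s$ together with \eqref{eq:lem:G_0_bound_diffN} at exponent $s$, with the shift relation converting the residual $(\cN_++1)^s$ factor into $(\cN_++2)^sK_N^{s-1}$, bounded by \eqref{eq:bound_cN-K_0}.
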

\begin{proof}
 The operator $\mathbb{H}_0=-\Delta+\ud \Gamma(\epsilon)+1$ is invertible and, for $\kappa$ sufficiently large, so is $1-G_{N}$ (by Lemma~\ref{lem:G_0_bound}), for all $N\in \N\cup\{\infty\}$. From this, one easily checks that $K_{N}$ is invertible on its domain. Since it is manifestly symmetric, $K_N$ must be self-adjoint.

 We now prove the estimate (\ref{eq:K_0 control}). We will argue by interpolation, so first let $s=1$.
By Lemma~\ref{lem:G_0_bound}, for $N\in \N\cup\{\infty\}$, inserting $(1-G_{N}^*)$ and its inverse yields
\begin{equation}
 \|(-\Delta_x +\ud \Gamma(\epsilon) +1)(1-G_{N})K_{N}^{-1}\| = \|(1-G_{N}^*)^{-1}\|\leq (1-C_\kappa)^{-1}.
\end{equation}
For $s=0$, we simply have, also by Lemma~\ref{lem:G_0_bound},
\begin{equation}
 \|(-\Delta_x +\ud \Gamma(\epsilon) + 1)^{0}(1-G_{N})K_{N}^{0}\| \leq 1+C_\kappa.
\end{equation}
The estimate (\ref{eq:K_0 control}) now follows by complex interpolation (cf.~\cite[Theorem IX.20]{ReeSim2}).
To prove the bound~\eqref{eq:bound_cN-K_0} on $\cN_+$, we use that $ \cN_+ G_{N } =  G_{N }(\cN_++2)$. Adding and subtracting $G_N$, we obtain
\begin{equation}
\|\cN_+^s K_{N }^{-s}\| \leq \|\cN_+^s(1-G_{N }) K_{N }^{-s}\| + C_\kappa \|(\cN_++2)^s K_{N }^{-s}\|,
\end{equation}
where we bounded $G_{N }$ using Lemma~\ref{lem:G_0_bound}. As $x\mapsto x^s$ is sub-additive for $s\leq 1$  this implies the claim~\eqref{eq:bound_cN-K_0} if $\kappa$ is sufficiently large for $C_\kappa<1$ to hold.

We now show the convergence by proving the bound \eqref{eq:K_conv}. With the resolvent formula,
 we have
\begin{align}
&K_{N }^s \big(K_{N }^{-1} - K_{\infty}^{-1}\big)K_{\infty}^s
= K_{N }^{-1+s}\Big(K_{\infty}-K_{N })K_{\infty}^{-1+s}  \\
 &= K_{N }^{s-1}\Big((G_{N }^*-G_{\infty }^*)\mathbb{H}_0(1-G_{\infty })  + (1-G_{N }^*)\mathbb{H}_0(G_{N }-G_{\infty })\Big)K_{\infty}^{s-1}.\notag
\end{align}
Thus~\eqref{eq:lem:G_0_bound_diffN} and~\eqref{eq:K_0 control} imply that for $0\leq s<1/2$
\begin{align}
& \Big\| K_{N }^s \big(K_{N }^{-1} - K_{\infty}^{-1}\big)K_{\infty}^s\Big\| \nn\\
 	&\lesssim \|K_{N }^{-1+s} (G_{N }^*-G_{\infty }^*)\mathbb{H}_0^s\| \|\mathbb{H}_0^{1-s} (1-G_{\infty }) K_{\infty}^{-1+s}\| \nn \\
	&\quad + \|K_{N }^{-1+s} (1-G_{N }^*)\mathbb{H}_0^{1-s}\| \|\mathbb{H}_0^{s}(G_{N }-G_{\infty })K_{\infty}^{-1+s} \| \nn \\
 	&\lesssim N^{(2s-1) \alpha}( \| (\cN_++2)^s K_{N }^{-1+s}\| + \| (\cN_++2)^s K_\infty^{-1+s}\|),
\end{align}
where we used that $G^*_{N } \cN_+ = (\cN_++2) G^*_{N }$. In  view of~\eqref{eq:bound_cN-K_0} this proves \eqref{eq:K_conv}.
\end{proof}

We now discuss the operator $T_N$. Rewriting $\cH_{\mathrm{sing},0}$ as in~\eqref{eq:Hsing_0} extracts the divergent scalar contribution $E_{N }$, and $T_{\infty }$ is well defined as the limit of $T_{N }$. To make this explicit, we follow Section~\ref{sect:renorm}. Using the commutation rules
\begin{align} \label{eq:comm_rules}
\ui \nabla_x \ue^{-\ui kx}=\ue^{- \ui kx} (\ui \nabla_x + k), \qquad \ud\Gamma(\epsilon) a_k^* = a_k^*(\ud\Gamma(\epsilon)+\epsilon(k))
\end{align}
we write  $T_{N }$ as a sum  of terms in ``normal order'',
\begin{align} \label{eq:decompo_TN0}
T_{N }=\Theta_{N,0}+\Theta_{N,1}+\Theta_{N,2},
\end{align}
with
\begin{subequations}
 \begin{align}
 \Theta_{N,0}&=-2\sum_{k,\ell \in 2\pi\Z^3 \atop |k|,|\ell|>\kappa}  \frac{(k\cdot\ell)^2 \widehat  v_N(k)^2  \widehat  v_N(\ell)^2 }{(\ui \nabla_x-k-\ell)^2+\epsilon(k)+\epsilon(\ell)+\ud\Gamma(\epsilon)+1} -E_{N }  \\
 \Theta_{N,1}&= 4\sum_{k,\ell \in 2\pi\Z^3 \atop |k|,|\ell|>\kappa} e^{-\ui kx}a_k^* \theta_{N,1}(k,\ell)  a_\ell e^{\ui \ell x} \\
 \Theta_{N,2}&= \sum_{k_1,k_2 \in 2\pi\Z^3 \atop |k_1|,|k_2|>\kappa} \sum_{\ell_1,\ell_2 \in 2\pi\Z^3 \atop |\ell_1|,|\ell_2|>\kappa} e^{-\ui k_1x} e^{-\ui k_2x}a_{k_1}^* a_{k_2}^* \theta_{N,2}(k_1,k_2,\ell_1,\ell_2)  a_{\ell_1}a_{\ell_2} e^{-\ui \ell_1 x} e^{\ui \ell_2 x},
\end{align}
\end{subequations}
where the kernels are defined as
\begin{align}
\theta_{N,1}(k,\ell)
 &= - \sum_{\xi \in 2\pi\Z^3 \atop |\xi|>\kappa}\frac{(k\cdot \xi) (\ell\cdot \xi) \widehat  v_N(k) \widehat  v_N(\ell) \widehat  v_N(\xi)^2}{(\ui \nabla_x+k-\ell-\xi)^2 + \epsilon(k)+\epsilon(\ell)+\epsilon(\xi)+\ud\Gamma(\epsilon)+1}
\end{align}
and
\begin{align}
&\theta_{N,2}(k_1,k_2,\ell_1,\ell_2)\\
 &= - \frac{(k_1\cdot k_2) \widehat  v_N(k_1) \widehat  v_N(k_2)(\ell_1\cdot \ell_2) \widehat  v_N(\ell_1) \widehat  v_N(\ell_2)}{(\ui \nabla_x+k_1+k_2-\ell_1-\ell_2)^2 + \epsilon(k_1)+\epsilon(k_2)+\epsilon(\ell_1)+\epsilon(\ell_2)+\ud\Gamma(\epsilon)+1}. \nn
\end{align}

Observe that the series $\theta_{N,j}$ converge as functions of $\ui\nabla_x$ and $\dd\Gamma(\epsilon)$. As we will see, the operators $\Theta_{N,j}$ are bounded relatively to $K_{N }$. Because of this, $T_{\infty }$ is well defined on $D(K_{N })$.
This entails:

\begin{prop}\label{prop:T_N_relative}
For $\kappa$ sufficiently large, $0 < \alpha \leq 1/2$, $N,N'\in \N\cup\{\infty\}$ and all $\Psi\in D(K_{N' })$
\begin{equation*}
\|T_{N }\Psi\| \leq  C_\kappa\|K_{N' }\Psi\|
\end{equation*}
where $\lim_{\kappa\to \infty} C_\kappa=0$.
Moreover, for all $0< \varepsilon < 1/4$ there is $C>0$ so that for $N,N' \in \mathbb{N} \cup \{\infty\}$, $0 < \alpha \leq 1/2$
\begin{align*}
\| (T_{N } - T_{\infty }) K_{N' }^{-1} \| \leq C N^{-\alpha\varepsilon}.
\end{align*}
In particular, for $\kappa>0$ large enough, the operator $K_{\infty }+T_{\infty }$ is self-adjoint on
\begin{align*}
  D(K_{\infty}):=(1-G_{\infty })^{-1}D(-\Delta_x+\ud \Gamma(\epsilon)).
\end{align*}
and $\cH_{\mathrm{sing,0} }-E_{N }$ converges to $K_{\infty }+T_{\infty }$ in norm resolvent sense.
\end{prop}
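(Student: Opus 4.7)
The strategy mirrors the proof of Lemma~\ref{lem:T_Lambda relative} in Section~\ref{sect:renorm}, adapted to the present setting where the form factor $\widehat v_N$ is $N$-dependent with the piecewise structure of~\eqref{eq:w,v-def}. The crucial input is the uniform bound $|\widehat v_N(p)| \lesssim |p|^{-2}$ from~\eqref{eq:v_N-bound}, valid for every $N\in \N\cup\{\infty\}$, which supplies the decay needed for the sums defining $T_N$ to converge.

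For the first claim I would estimate the three pieces $\Theta_{N,0},\Theta_{N,1},\Theta_{N,2}$ appearing in the normal-ordered decomposition~\eqref{eq:decompo_TN0} separately. For $\Theta_{N,0}$ one rewrites the two summands with a common denominator, so that the numerator contains the factor $((\ui\nabla_x-k-\ell)^2 + \ud\Gamma(\epsilon) - (k+\ell)^2)$, while the remaining denominator $(k+\ell)^2 + \epsilon(k)+\epsilon(\ell)+1$ provides additional decay; combined with $|\widehat v_N|^2\lesssim |p|^{-4}$ and restricted to $|k|,|\ell|>\kappa$, this yields
\begin{equation*}
  \pm\,\Theta_{N,0} \lesssim C_\kappa\,(-\Delta_x+\ud\Gamma(\epsilon))^{s},\qquad \lim_{\kappa\to\infty}C_\kappa=0,
\end{equation*}
for any $s<1/2$. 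For $\Theta_{N,1}$ and $\Theta_{N,2}$ I would apply Cauchy--Schwarz in the creation/annihilation operators and bound the operator-valued kernels $\theta_{N,j}$ in the functional calculus of the commuting operators $\ui\nabla_x$ and $\ud\Gamma(\epsilon)$, obtaining estimates of the form
\begin{equation*}
  \pm\,\Theta_{N,j} \lesssim C_\kappa\,\cN_+^{1-s}(-\Delta_x+\ud\Gamma(\epsilon))^{s}
\end{equation*}
(with appropriate redistribution of powers between left and right of the quadratic form). The key sub-estimate is that, thanks to $|\widehat v_N(\xi)|^2\lesssim|\xi|^{-4}$, the internal sum $\sum_\xi(k\cdot\xi)(\ell\cdot\xi)|\widehat v_N(\xi)|^2/(\cdots)$ converges with an $\varepsilon$-loss in the powers of $|k|,|\ell|$. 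Applying Lemma~\ref{lem:K_0-props}---\eqref{eq:K_0 control} to absorb $(-\Delta_x+\ud\Gamma(\epsilon))^{s}$ and \eqref{eq:bound_cN-K_0} to absorb $\cN_+^{1-s}$---one obtains $\|T_N\Psi\|\leq C_\kappa\|K_{N'}\Psi\|$ uniformly in $N,N'\in\N\cup\{\infty\}$.

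For the second claim I would note that $\widehat v_N - \widehat v_\infty$ is supported in $\{|k|>N^\alpha\}$ and that $|\widehat v_N-\widehat v_\infty|(k) \lesssim |k|^{-2}$ on this set. In any of $\Theta_{N,j}-\Theta_{\infty,j}$ at least one factor of $\widehat v$ carries the indicator $\1_{|k|>N^\alpha}$; repeating the previous estimates with this extra indicator costs a factor $\lesssim N^{-2\alpha\varepsilon}$, obtained by interpolating against the unused power in $|k|^{-(4-\delta)}$ for any small $\delta>0$, provided $\varepsilon<1/4$. Combining again with Lemma~\ref{lem:K_0-props} gives $\|(T_N-T_\infty)K_{N'}^{-1}\|\lesssim N^{-\alpha\varepsilon}$.

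For the last claim, fixing $\kappa$ so large that $C_\kappa<1$ in claim~(i) with $N=N'=\infty$, the Kato--Rellich theorem yields that $K_\infty+T_\infty$ is self-adjoint on $D(K_\infty)$. Norm resolvent convergence of $\cH_{\mathrm{sing},0}-E_N = K_N+T_N-1$ to $K_\infty+T_\infty-1$ then follows from the second resolvent identity, splitting
\begin{equation*}
(K_N+T_N)^{-1}-(K_\infty+T_\infty)^{-1}=(K_N+T_N)^{-1}\bigl[(T_\infty-T_N)+(K_\infty-K_N)\bigr](K_\infty+T_\infty)^{-1},
\end{equation*}
and invoking the uniform relative bound of claim~(i), the rate of claim~(ii), and the norm resolvent convergence $K_N\to K_\infty$ from Lemma~\ref{lem:K_0-props}. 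The main technical obstacle will be the bookkeeping of the operator-valued kernels $\theta_{N,j}$, in particular ensuring that the denominators, which involve both the impurity kinetic energy and $\ud\Gamma(\epsilon)$, can be used to generate powers of $(-\Delta_x+\ud\Gamma(\epsilon))^s$ on either side of the quadratic form without sacrificing the uniformity in $N$.
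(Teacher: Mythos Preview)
Your strategy matches the paper's: decompose $T_N=\Theta_{N,0}+\Theta_{N,1}+\Theta_{N,2}$, bound each piece in terms of $\cN_+^{a}\ud\Gamma(\epsilon)^{b}$ or $(-\Delta_x+\ud\Gamma(\epsilon))^{\eps}$ using Cauchy--Schwarz and the decay $|\widehat v_N|\lesssim|p|^{-2}$, then dominate these by $K_{N'}$; for the difference, exploit that $\widehat v_N-\widehat v_\infty$ is supported on $\{|k|>N^\alpha\}$. The paper packages the kernel estimates as the separate Lemmas~\ref{lem:T_N bound} and~\ref{lem:theta_cv}, and the remainder of the proof is exactly the Kato--Rellich and resolvent-identity argument you describe.

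One step in your sketch is not quite right. You say that \eqref{eq:K_0 control} absorbs $(-\Delta_x+\ud\Gamma(\epsilon))^{s}$ and \eqref{eq:bound_cN-K_0} absorbs $\cN_+^{1-s}$, but these two estimates cannot simply be combined for the mixed operator: \eqref{eq:K_0 control} bounds $(-\Delta_x+\ud\Gamma(\epsilon))^{s}(1-G_{N'})K_{N'}^{-s}$, not $(-\Delta_x+\ud\Gamma(\epsilon))^{s}K_{N'}^{-s}$, so there is no direct route from Lemma~\ref{lem:K_0-props} alone to a bound on $\cN_+^{1-s}\ud\Gamma(\epsilon)^{s}K_{N'}^{-1}$. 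The paper's trick is to insert $1=(1-G_{N'})+G_{N'}$: on $(1-G_{N'})\Psi$ one uses $\cN_+^{1-2s}\ud\Gamma(\epsilon)^{2s}\leq -\Delta_x+\ud\Gamma(\epsilon)$ and then \eqref{eq:K_0 control} with exponent one; on $G_{N'}\Psi$ one invokes the $G$-bound \eqref{eq:lem:G_0_bound_1} together with the shift $\cN_+G_{N'}=G_{N'}(\cN_++2)$ to reduce to $\|(\cN_++2)\Psi\|$, which is then handled by~\eqref{eq:bound_cN-K_0}. A minor related point: the statement asks for an operator-norm bound $\|T_N\Psi\|\leq C_\kappa\|K_{N'}\Psi\|$ (this is what feeds Kato--Rellich), so your $\Theta_{N,j}$ estimates should be in norm, as in Lemma~\ref{lem:T_N bound}, not just the quadratic-form inequalities $\pm\Theta_{N,j}\lesssim\cdots$ you wrote.
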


\paragraph{Proof of Proposition \ref{prop:T_N_relative}.}

We split the proof of the bounds on $T_{N }=\Theta_{N,0}+\Theta_{N,1}+\Theta_{N,2}$ into two parts, one on uniform bounds for $\Theta_{N,j}$, and one on the differences $\Theta_{N,j}-\Theta_{\infty,j}$.

\begin{lem}\label{lem:T_N bound}
Let $0<s<1/2$. There is $C_\kappa$ with $\lim_{\kappa\to \infty} C_\kappa=0$ so that for all $\kappa>0$, $N\in \N\cup \{\infty\}$ and $\Psi\in D(\cN_+^{1-s}(-\Delta_x+\ud\Gamma(\epsilon)^{s})$
\begin{align*}
 \| \Theta_{N,1} \Psi\| &\leq C_\kappa \|\cN_+^{1/2-s}\ud \Gamma(\epsilon)^s\Psi\| \\
 \| \Theta_{N,2} \Psi\| &\leq C_\kappa \|\cN_+^{1-s}\ud \Gamma(\epsilon)^{s}\Psi\|.
\end{align*}
Moreover, for $\varepsilon>0$ there is $C$ so that for all $N\in \N$ and $\kappa>0$
\begin{align*}
 \|\Theta_{N,0}\Psi\| &\leq C \kappa^{-\varepsilon} \|(-\Delta_x +\ud \Gamma(\epsilon))^\eps \Psi\|.
\end{align*}
\end{lem}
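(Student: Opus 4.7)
I will bound each of the normal-ordered pieces $\Theta_{N,0}, \Theta_{N,1}, \Theta_{N,2}$ separately from its explicit kernel representation. The common tool will be a pointwise estimate on the resolvent-like factor $D$ that appears in each kernel: since $D$ is a sum of non-negative, mutually commuting summands among $\epsilon(k_i), \epsilon(\ell_j), \dd\Gamma(\epsilon)+1$ and $(\ui\nabla_x + \mathrm{shift})^2$, weighted AM--GM yields, as operators on $\mathscr{H}_+$,
\begin{equation*}
D^{-1}\ \lesssim\ \epsilon(k_1)^{-a_1}\epsilon(k_2)^{-a_2}\epsilon(\ell_1)^{-b_1}\epsilon(\ell_2)^{-b_2}(\dd\Gamma(\epsilon)+1)^{-c}
\end{equation*}
whenever $a_i, b_j, c\geq 0$ and $a_1+a_2+b_1+b_2+c\leq 1$. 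Combined with the uniform decay $|\widehat v_N(p)|\lesssim|p|^{-2}$ recorded in~\eqref{eq:v_N-bound}, this yields operator-valued bounds on the kernels $\theta_{N,j}(k,\ell)$ that decay in both $|k|,|\ell|$ and in $\dd\Gamma(\epsilon)+1$.

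\textbf{Bounds on $\Theta_{N,2}$ and $\Theta_{N,1}$.} For $\Theta_{N,2}$ I take the symmetric exponents $a_i = b_j = a \in (1/4 - \eta, 1/4]$ with small $\eta > 0$ and $c = 1-4a\geq 0$. Using $|k_1\cdot k_2||\widehat v_N(k_1)\widehat v_N(k_2)|\lesssim |k_1|^{-1}|k_2|^{-1}$ yields
\begin{equation*}
\|\theta_{N,2}(k,\ell)\|_{\mathrm{op}}\ \lesssim\ \frac{|k_1|^{-1-2a}|k_2|^{-1-2a}|\ell_1|^{-1-2a}|\ell_2|^{-1-2a}}{(\dd\Gamma(\epsilon)+1)^{1-4a}}.
\end{equation*}
To bound $\|\Theta_{N,2}\Psi\|$ I exploit the structure $\Theta_{N,2} = -a^{*2}(\ui\nabla v_{N,x})\,Q\,a^2(\ui\nabla v_{N,x})$ with $\|Q\|\leq 1$: I apply Lemma~\ref{lem:G_0_bound} (in the form $\|a^2(\ui\nabla v_{N,x})\Psi\|\lesssim C_\kappa\|\mathbb{H}_0^{1-s}\cN_+^s\Psi\|$ for $s\in[0,1/2)$) to the inner factor, commute appropriate weights through the bounded operator $Q$, and apply the adjoint of the same bound to the outer factor; the Hölder inequality $\dd\Gamma(\epsilon^s)\leq \cN_+^{1-s}\dd\Gamma(\epsilon)^s$ (on each sector) then replaces $\mathbb{H}_0^s$ by $\dd\Gamma(\epsilon)^s$, since the $-\Delta_x$ part is absorbed into the shifts in $Q$. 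The tail of the resulting momentum sum $\sum_{|k|>\kappa}|k|^{-3-2s}$ contributes $\kappa^{-2s}$, so $C_\kappa\to 0$. For $\Theta_{N,1}$ the argument is analogous: one first carries out the internal $\xi$-sum, which converges because $D\geq\epsilon(\xi)\sim|\xi|^2$ and $|\xi \widehat v_N(\xi)|\lesssim|\xi|^{-1}$ give an integrand $\lesssim|\xi|^{-4}$, and the remaining two-argument kernel has the same shape as $\theta_{N,2}$ but with one less pair of creation/annihilation operators, yielding the improved factor $\cN_+^{1/2-s}$.

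\textbf{Bound on $\Theta_{N,0}$ and main obstacle.} For $\Theta_{N,0}$, the subtracted scalar $E_N$ is precisely the value of the operator summand at $\ui\nabla_x=0$ and $\dd\Gamma(\epsilon)=0$. The resolvent identity therefore gives
\begin{equation*}
\Theta_{N,0}\ =\ 2\sum_{|k|,|\ell|>\kappa}\frac{(k\cdot\ell)^2\widehat v_N(k)^2\widehat v_N(\ell)^2}{D\,D_0}\bigl(-\Delta_x - 2\ui\nabla_x\cdot(k+\ell) + \dd\Gamma(\epsilon)\bigr),
\end{equation*}
with $D_0 = (k+\ell)^2+\epsilon(k)+\epsilon(\ell)+1$. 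Each of the three terms in the bracket is treated as a function of $(\ui\nabla_x, \dd\Gamma(\epsilon))$: a power $(-\Delta_x + \dd\Gamma(\epsilon)+1)^{\eps}$ is extracted and the remaining kernel is dominated by $|k|^{-2}|\ell|^{-2}(|k|^2+|\ell|^2)^{-(2-\eps')}$, which is summable in three dimensions and gives a prefactor $\mathcal{O}(\kappa^{-\eta(\eps)})$ with $\eta(\eps) > 0$. The principal difficulty throughout is that $|\widehat v_N(p)|\lesssim|p|^{-2}$ is the \emph{marginal} decay rate in three dimensions: the squared kernel $|p|^{-4}$ against the volume element $|p|^2$ is only logarithmic, and only the additional factor $|p|^{-2s}$ extracted from $\dd\Gamma(\epsilon)^s$ (respectively $(-\Delta_x+\dd\Gamma(\epsilon))^\eps$ for $\Theta_{N,0}$) saves convergence and provides $C_\kappa\to 0$. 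This marginality also forces the strict inequality $s < 1/2$, inherited from Lemma~\ref{lem:G_0_bound}, and is the reason that $E_N$ diverges logarithmically in $\kappa$ and must be subtracted.
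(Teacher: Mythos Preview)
Your pointwise kernel bounds via weighted AM--GM are correct and are essentially the same starting point as the paper's (see \eqref{eq:theta_1-bound}, \eqref{eq:theta_2-bound}). The treatment of $\Theta_{N,0}$ is also close to the paper's: both subtract the scalar via the resolvent identity and then bound the resulting function of $(\ui\nabla_x,\dd\Gamma(\epsilon))$ pointwise.

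The gap is in how you pass from the kernel bound to the operator norm bound for $\Theta_{N,1}$ and $\Theta_{N,2}$. The factorization $\Theta_{N,2} = -a^{*2}(\ui\nabla v_{N,x})\,Q\,a^2(\ui\nabla v_{N,x})$ with a fixed bounded $Q$ is not correct: after normal ordering, the resolvent in $\theta_{N,2}(k_1,k_2,\ell_1,\ell_2)$ carries the shifts $\epsilon(k_1)+\epsilon(k_2)+\epsilon(\ell_1)+\epsilon(\ell_2)$ and $(\ui\nabla_x+k_1+k_2-\ell_1-\ell_2)^2$, so the middle piece depends on \emph{all four} external momenta in a non-separable way. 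Hence Lemma~\ref{lem:G_0_bound}, which bounds $a^2(\ui\nabla v_{N,x})$ as a single operator, cannot be applied to the ``inner factor'' and its adjoint to the ``outer factor'' --- there is no such factorization to split. Even if there were, the step ``apply the adjoint of the same bound'' is not a valid manipulation: a relative bound $\|A\Psi\|\le C\|B\Psi\|$ does not yield any useful control of $\|A^*\Phi\|$, since $a^{*2}(\ui\nabla v_{N,x})$ is unbounded on $\mathscr H_+$.

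What the paper does instead is to estimate the bilinear form $\langle\Phi,\Theta_{N,j}\Psi\rangle$ directly. One inserts asymmetric momentum weights $\epsilon(k)^{\pm t/2}\epsilon(\ell)^{\mp(1+t)/2}$ in a Cauchy--Schwarz step, so that on one side the tail sum $\sum_{|k|>\kappa}\epsilon(k)^{-(1+t)}$ produces the constant $C_\kappa\to 0$, while on the other side the remaining weights recombine into $\dd\Gamma(\epsilon^t)$, which is then controlled by $\dd\Gamma(\epsilon)^t\cN_+^{1-t}$ via H\"older on each sector. Optimizing over the Young parameter $\delta$ gives the stated norm bound. This weighted duality argument is precisely the missing ingredient in your sketch; your kernel estimates feed into it, but do not replace it.
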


\begin{proof}
 We start by proving the estimate on $\Theta_{N,1}$. First note that using~\eqref{eq:v_N-bound} we have the point-wise bound
 \begin{align}
 |\theta_{N,1}(k,\ell)|
 &\lesssim \frac{1}{|k||\ell| (\epsilon(k)+\epsilon(\ell)+\ud\Gamma(\epsilon))^{1/2}}. \label{eq:theta_1-bound}
\end{align}
Using this and the Cauchy-Schwarz inequality, we have that for any $\Phi \in \mathscr{H}_+,\Psi \in \mathcal D(\dd\Gamma(\epsilon)^{s} \cN_+^{1-s})$ and $\delta>0$, denoting $t = s + 1/2 \in (1/2,1)$,
\begin{align}
&\langle\Phi, \Theta_{N,1} \Psi \rangle = \hspace{-6pt}\sum_{ |k|,|\ell|>\kappa}\hspace{-4pt}\Big\langle \epsilon(\ell)^{-(1+t)/2} \epsilon(k)^{t/2}a_k \Phi, \epsilon(\ell)^{(t+1)/2}\epsilon(k)^{-t/2}\theta_{N,1}(k,\ell) a_\ell \Psi  \Big\rangle \nn\\
	&\lesssim \delta\sum_{ |k|,|\ell|>\kappa} \epsilon(\ell)^{-(1+t)} \epsilon(k)^{t} \Big\| \frac{1}{(\epsilon(k) +\epsilon(\ell) + \dd\Gamma(\epsilon))^{t/2}}  a_k \cN_+^{-(1-t)/2} \Phi \Big\|^2 \nn\\
	&\quad +\frac{1}{\delta}\sum_{ |k|,|\ell|>\kappa}  \epsilon(k)^{-(1+t)} \epsilon(\ell)^{t}  \Big\|\frac{1}{(\epsilon(k) +\epsilon(\ell) + \dd\Gamma(\epsilon))^{1/2-t/2}}   a_\ell \cN_+^{(1-t)/2}  \Psi\Big \|^2 \nn\\
	&\lesssim C_\kappa \Big(\delta\|\dd\Gamma(\epsilon^t)^{1/2} \dd\Gamma(\epsilon)^{-t/2} \cN_+^{-(1-t)/2} \Phi\|^2 \nn\\
	&\qquad\qquad  + \delta^{-1}\|\dd\Gamma(\epsilon^t)^{1/2} \dd\Gamma(\epsilon)^{-(1-t)/2} \cN_+^{(1-t)/2} \Psi\|^2\Big) \nn\\
	&\lesssim C_\kappa \left(\delta\|\Phi\|^2 + \delta^{-1}\|\dd\Gamma(\epsilon)^{t-1/2} \cN_+^{1-t}\Psi\|^2\right) \nn\\
	&= C_\kappa \left(\delta\|\Phi\|^2 + \delta^{-1} \|\dd\Gamma(\epsilon)^{s} \cN_+^{1/2-s}\Psi\|^2\right),
\end{align}
where we used that
$
\dd\Gamma(\epsilon^t) \leq \dd\Gamma(\epsilon)^t \cN_+^{1-t}
$
and denoted
\begin{align}
C_\kappa = \sum_{|k|>\kappa}  \epsilon(k)^{-(1+t)} \underset{\kappa \to \infty}{\longrightarrow} 0.
\end{align}
Optimizing over $\delta$ yields the claim.

We now turn to the proof of the bound on $\Theta_{N,2}$.
With the same reasoning as~\eqref{eq:theta_1-bound} we find the bound
\begin{align}
 &|\theta_{N,2}(k_1,k_2,\ell_1,\ell_2)|
 \lesssim \frac{1}{|k_1||k_2||\ell_1||\ell_2|(\epsilon(k_1)+\epsilon(k_2)+\epsilon(\ell_1)+\epsilon(\ell_2)+\ud\Gamma(\epsilon))}.\label{eq:theta_2-bound}
\end{align}
 Similarly as above, let $\Phi \in \mathscr{H}_+,\Psi \in \mathcal D(\cN_+^{1-2s}\ud \Gamma(\epsilon)^{2s})$, $\delta>0$ using the Cauchy-Schwarz inequality and denoting again $t = (1+s)/2 \in (1/2,3/4)$, we obtain
\begin{align}
&\langle\Phi, \Theta_{N,2} \Psi \rangle \nn\\
&\lesssim \delta  \sum_{|k_1|,|k_2|>\kappa \atop |\ell_1|,|\ell_2|>\kappa}
	 \epsilon(\ell_1)^{-(1+t)} \epsilon(\ell_2)^{-(1+t)} \epsilon(k_1)^{t} \epsilon(k_2)^t \nn\\
	 &\qquad\qquad\times \|(\epsilon(k_1)+\epsilon(k_2)+\epsilon(\ell_1)+\epsilon(\ell_2)+\ud\Gamma(\epsilon))^{-t} a_{k_1} a_{k_2} \cN_+^{-2(1-t)} \Phi\|^2 \nn\\
	 &\quad +\frac{1}{\delta} \sum_{|k_1|,|k_2|>\kappa \atop |\ell_1|,|\ell_2|>\kappa}
	 \epsilon(k_1)^{-(1+t)} \epsilon(k_2)^{-(1+t)} \epsilon(\ell_1)^{t} \epsilon(\ell_2)^t  \nn\\
	 &\qquad\qquad\times \|(\epsilon(k_1)+\epsilon(k_2)+\epsilon(\ell_1)+\epsilon(\ell_2)+\ud\Gamma(\epsilon))^{-1+t} a_{k_1} a_{k_2} \cN_+^{2(1-t)} \Psi\|^2 \nn\\
	 &\lesssim C_\kappa^2 (\delta \|\dd\Gamma(\epsilon^t) \dd\Gamma(\epsilon)^{-t}\cN_+^{-(1-t)} \Phi \|^2
	 + \delta^{-1} \| \dd\Gamma(\epsilon^t) \dd\Gamma(\epsilon)^{-(1-t)}\cN_+^{1-t} \Psi \|^2) \nn\\
	 &\lesssim  C_\kappa^2(\delta \|\Phi \|^2 + \delta^{-1} \|\dd\Gamma(\epsilon)^{s} \cN_+^{1-s} \Psi\|^2).
\end{align}
Optimizing over $\delta$ proves the desired bound.

Finally, we prove the bound on $ \Theta_{N,0}$. Since $\Theta_{N,0}$ is a function of the commuting operators $\ui \nabla_x$ and $\dd\Gamma(\epsilon)$, it is sufficient to prove a point-wise bound on this function, by the functional calculus.
First, rewrite $\Theta_{N,0}$ as
\begin{align}
 \Theta_{N,0} =  \sum_{k,\ell \in 2\pi\Z^3}\bigg(&
 \frac{2(k\cdot \ell)^2\widehat  v_N (k)^2 \widehat  v_N(\ell)^2}{(\ui \nabla_x +k +\ell)^2 + \epsilon(k)+\epsilon(\ell) + \ud \Gamma(\epsilon)+1}\nn \\
 &\qquad \times\frac{(\ui \nabla_x +k +\ell)^2 +\ud \Gamma(\epsilon) - (k+\ell)^2)}{(k +\ell)^2 + \epsilon(k)+\epsilon(\ell)+1}\bigg).
\end{align}
Using that $| \widehat  v_N(k)| \lesssim k^{-2}$, we have for all $\varepsilon >0$,
\begin{align}
| \Theta_{N,0}|
	&\lesssim \sum_{  |k|,|\ell|>\kappa} \frac{1}{|k|^2|\ell|^2(\epsilon(k)+\epsilon(\ell))} \frac{(\ui \nabla_x)^2 + |\ui \nabla_x| |k| +\ud \Gamma(\epsilon) }{((\ui \nabla_x +k +\ell)^2 + \epsilon(k)+\epsilon(\ell) + \ud \Gamma(\epsilon))} \nn\\
	&\lesssim (\ui \nabla_x)^{2} \sum_{ |k|,|\ell|>\kappa} \frac{1}{|k|^3|\ell|^3} \frac{ 1}{(\ui \nabla_x +k +\ell)^2 + |k|^2+ |\ell|^2} \nn\\
	&\quad +|\ui \nabla_x| \sum_{ |k|,|\ell|>\kappa} \frac{1}{|k|^3|\ell|^3} \frac{|k|}{(\ui \nabla_x +k +\ell)^2 +|k|^2+ |\ell|^2} \nn\\
	&\quad + \dd\Gamma(\epsilon) \sum_{ |k|,|\ell|>\kappa} \frac{1}{|k|^3|\ell|^3} \frac{1}{\dd\Gamma(\epsilon) + |k|^2+ |\ell|^2}.
	%
\end{align}
Dropping $\ud\Gamma(\epsilon)$, $(\ui \nabla_x + k +\ell)^2$ from a power $(1-\eps)$ of the denominator and $|k|^2+|\ell|^2$ from the remaining power $\eps$, and using that
that
\begin{align}
\frac{ 1}{(\ui \nabla_x +k +\ell)^2 + |k|^2+ |\ell|^2} \lesssim \frac{1}{(\ui\nabla_x)^{2}}.
\end{align}
we obtain
\begin{align}
 | \Theta_{N,0}| \lesssim \kappa^{-\varepsilon} ((\ui \nabla_x)^2 + \dd\Gamma(\epsilon))^{\varepsilon}.
\end{align}
This proves the claim.
\end{proof}

\begin{lem} \label{lem:theta_cv}
For $0< \alpha,s <1/2$, $0 <\varepsilon < s$ there is $C$ so that for all $N\in \N$, $\kappa>0$ and $\Psi\in D(\cN_+^{1-s}(-\Delta_x+\ud\Gamma(\epsilon))^{s})$
\begin{align*}
\| (\Theta_{N,1}-\Theta_{\infty,1}) \Psi \| &\leq C N^{-\alpha \varepsilon} \|\cN_+^{\frac{1}{2}+\frac{\eps}{2}-s}\ud \Gamma(\epsilon)^{s}\Psi\|, \\
\| (\Theta_{N,2}-\Theta_{\infty,2})\Psi \| &\leq C N^{-\alpha  \varepsilon} \|\cN_+^{1-s}\ud\Gamma(\epsilon)^{s}\Psi\|.
\end{align*}
Moreover, for all $0<\eps\leq 2$ there is $C>0$ so that for $N\in \N$, $\kappa>0$
\begin{align*}
\| (\Theta_{N,0}-\Theta_{\infty,0})\Psi\| &\leq C N^{-\alpha\varepsilon} \|(-\Delta_x +\ud \Gamma(\epsilon))^{\eps} \Psi\|.
\end{align*}
\end{lem}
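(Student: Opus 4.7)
The plan is to follow the same Cauchy--Schwarz scheme that gave Lemma~\ref{lem:T_N bound}, but now extract a small factor $N^{-\alpha \varepsilon}$ from the key observation that $\widehat v_N$ and $\widehat v_\infty$ agree on the set $\{\kappa<|p|\leq N^\alpha\}$. Indeed, from the definition~\eqref{eq:w,v-def} and the bound $\sqrt N \Wphih(p)\lesssim |p|^{-2}$ (Lemma~\ref{lem:phi_W}), both kernels satisfy $|\widehat v_N(p)|,|\widehat v_\infty(p)|\lesssim |p|^{-2}$, and their difference is supported in $\{|p|>N^\alpha\}$, giving the pointwise estimate
\begin{equation*}
|\widehat v_N(p)-\widehat v_\infty(p)|\lesssim |p|^{-2}\,\1_{|p|>N^\alpha}\leq N^{-\alpha\varepsilon}|p|^{-2+\varepsilon}
\end{equation*}
for any $0\leq \varepsilon\leq 2$. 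This is the only ingredient beyond what appears in the proof of Lemma~\ref{lem:T_N bound}.

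First I would write each difference $\theta_{N,j}-\theta_{\infty,j}$ as a telescoping sum over the four factors of $\widehat v$ appearing in the definitions of $\theta_{N,1},\theta_{N,2}$ (respectively the two factors of $\widehat v_N(k)^2 \widehat v_N(\ell)^2$ in $\Theta_{N,0}$, seen as a function of $\ui\nabla_x$ and $\dd\Gamma(\epsilon)$). Every resulting term carries at least one factor $(\widehat v_N-\widehat v_\infty)(p)$ and hence yields the prefactor $N^{-\alpha\varepsilon}$ after the pointwise bound above, at the price of replacing one decay $|p|^{-2}$ by $|p|^{-2+\varepsilon}$.

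Next, for $\Theta_{N,1}$ and $\Theta_{N,2}$, I would reproduce the Cauchy--Schwarz split used in Lemma~\ref{lem:T_N bound} verbatim, with the only modification being that one of the $\ell^2$--sums of the form $\sum_{|k|>\kappa}\epsilon(k)^{-(1+t)}$ becomes $\sum_{|k|>\kappa}\epsilon(k)^{-(1+t)+\varepsilon}$. Convergence of these sums then forces the slight loss in the exponents on $\cN_+$ and $\dd\Gamma(\epsilon)$ that appears in the statement, explaining why $\varepsilon<s$ is required in the bound on $\Theta_{N,1}$ (and why the power of $\cN_+$ is shifted by $\varepsilon/2$). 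For $\Theta_{N,0}$, I would use functional calculus on the commuting pair $(\ui\nabla_x,\dd\Gamma(\epsilon))$ and mimic the pointwise estimate carried out at the end of the proof of Lemma~\ref{lem:T_N bound}, again with one $|k|^{-2}$ replaced by $|k|^{-2+\varepsilon}$; the splitting of the denominator into a fraction of $\varepsilon$ and $1-\varepsilon$ handles the interpolation of $(\ui\nabla_x)^2+\dd\Gamma(\epsilon)$.

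The main obstacle is thus not conceptual but one of bookkeeping: keeping track of how the gain of $|k|^\varepsilon$ in one factor interacts with the Cauchy--Schwarz splits, and verifying that the remaining moments $\sum |k|^{-4+\varepsilon}\epsilon(k)^{-(1-s)}$ etc.\ still converge uniformly in $N$ for $\kappa$ sufficiently large and $\varepsilon$ small enough. Once this is done, the claim follows directly. Combined with the uniform bounds of Lemma~\ref{lem:T_N bound} and the comparison estimate~\eqref{eq:bound_cN-K_0} of Lemma~\ref{lem:K_0-props}, Lemma~\ref{lem:theta_cv} immediately implies the second bound of Proposition~\ref{prop:T_N_relative} via the decomposition~\eqref{eq:decompo_TN0}.
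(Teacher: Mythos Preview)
Your proposal is correct and follows essentially the same approach as the paper: the paper also uses the pointwise bound $|\widehat v_N(p)-\widehat v_\infty(p)|\lesssim \1_{|p|>N^\alpha}|p|^{-2}\leq N^{-\alpha\varepsilon}|p|^{-2+\varepsilon}$, derives the corresponding kernel estimates on $\theta_{N,j}-\theta_{\infty,j}$ (with the $\varepsilon$-loss distributed symmetrically across all factors, which is a crude but sufficient version of your telescoping), and then reruns the Cauchy--Schwarz scheme of Lemma~\ref{lem:T_N bound} with the shifted parameter $t=s+\tfrac12+\tfrac{\varepsilon}{2}$. Your description of the bookkeeping and the role of the constraint $\varepsilon<s$ matches the paper's computation exactly.
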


\begin{proof}
 To estimate the convergence rate as $N\to\infty$, we use~\eqref{eq:w,v-def}, \eqref{eq:v_N-bound} in the form
 \begin{equation}
  |\widehat  v_N(p) - \widehat  v_\infty(p)|\lesssim \1_{|p|>N^\alpha} |p|^{-2} \leq N^{-\alpha \eps} |p|^{-2+\eps}.
 \end{equation}
 From this it follows that
\begin{align}
 |\theta_{N,1}(k,\ell) - \theta_{\infty,1}(k,\ell)| \lesssim \frac{N^{-\alpha\varepsilon}}{|k|^{1-\eps}|\ell|^{1-\eps} (\epsilon(k)+\epsilon(\ell)+\ud\Gamma(\epsilon))^{1/2-\eps/2}}
\end{align}
and
\begin{align}
 &|\theta_{N,2}-\theta_{\infty,2}|(k_1,k_2,\ell_1,\ell_2) \\
 &\lesssim \frac{N^{-\alpha\varepsilon}}{|k_1|^{1-\eps}|k_2|^{1-\eps}|\ell_1|^{1-\eps}|\ell_2|^{1-\eps}(\epsilon(k_1)+\epsilon(k_2)+\epsilon(\ell_1)+\epsilon(\ell_2)+\dd\Gamma(\epsilon))}\notag.
\end{align}

The rest of the proof follows as in the proof of of Lemma \ref{lem:T_N bound}. We only give the details for $\Theta_{N,1}$. Choosing $t = s + \frac{1}{2} + \frac{\eps}{2}$, from the condition on $s$ and $\varepsilon$, we have $1-\varepsilon+t > 3/2$, and therefore
\begin{align}
& N^{\alpha\eps}|\langle \Phi, (\Theta_{N,1}-\Theta_{\infty,1}) \Psi \rangle| \nn\\
&\lesssim \delta \sum_{
|k|,|\ell|>\kappa} \epsilon(\ell)^{-(1-\eps+t)} \epsilon(k)^{t-\eps} \bigg\|
\frac{(\cN_++1)^{-(1-t+\eps)/2}}{(\epsilon(k) +\epsilon(\ell) + \dd\Gamma(\epsilon))^{t/2-\eps/2}}  a_k  \Phi \bigg\|^2 \nn \\
	&\quad +\frac{1}{\delta}\sum_{
	|k|,|\ell|>\kappa}  \epsilon(k)^{-(1-\eps+t)} \epsilon(\ell)^{t-\eps}  \bigg\|\frac{(\cN_++1)^{(1-t+\eps)/2}}{(\epsilon(k) +\epsilon(\ell) + \dd\Gamma(\epsilon))^{1/2-t/2}}   a_\ell   \Psi \bigg\|^2 \nn\\
	&\lesssim C_\kappa \left(\delta \|\Phi\|^2 + \delta^{-1} \|\dd\Gamma(\epsilon)^{s} \cN_+^{1/2-s+\eps/2}\Psi\|^2\right).
\end{align}
As before, optimizing over $\delta$ gives the desired estimate.
\end{proof}

We conclude this section with the proof of Proposition~\ref{prop:T_N_relative}.
\begin{proof}[Proof of Proposition~\ref{prop:T_N_relative}]
 Recall that $T_{N }=\Theta_{N,0}+\Theta_{N,1}+\Theta_{N,2}$, so in view of Lemmas \ref{lem:T_N bound} and \ref{lem:theta_cv}, we only need to bound the operators $\cN_+^{1/2-s} \dd\Gamma(\epsilon)^s$, $\cN_+^{1-2s} \dd\Gamma(\epsilon)^{2s}$ and $(-\Delta_x + \dd\Gamma(\epsilon))^{s}$, for $0<s<1/4$, relative to $K_{N'}$.
The worst case is $\cN_+^{1-2s} \dd\Gamma(\epsilon)^{2s}$, for which we have
\begin{align}
 \|\cN_+^{1-2s} \dd\Gamma(\epsilon)^{2s} \Psi \| &\leq \|\cN_+^{1-2s} \dd\Gamma(\epsilon)^{2s} (1-G_N)\Psi \| + \|\cN_+^{1-2s} \dd\Gamma(\epsilon)^{2s} G_N\Psi \| \nn\\
 &\lesssim \|K_{N'} \Psi\| + \|(\cN_++2)\Psi \| \lesssim \|(K_{N'}+2) \Psi\|,
\end{align}
by applying first~\eqref{eq:K_0 control} and~\eqref{eq:lem:G_0_bound_1}, and then~\eqref{eq:bound_cN-K_0}.

This implies self-adjointness of $K_\infty+T_\infty$ by the Kato-Rellich theorem.
Resolvent convergence follows from the convergence of $T_{N}$ and the uniform relative bound. A detailed argument is provided in the more general case of Proposition~\ref{prop:HsingW}.
\end{proof}

\subsection{Analysis of the full Hamiltonian} 
\label{sec:H_W}

We now turn to the full Hamiltonian $\cH_N^U$, starting with the singular part $\cH_\mathrm{Sing}$ (which includes $\ud \Gamma(W_{N,x}))$.
In analogy with the previous section and Section~\ref{sect:renorm}, we introduce
\begin{align}
 G_{N,W} = - (-\Delta_x+ \ud \Gamma(\epsilon) +\ud \Gamma(W_{N,x})+1 )^{-1} \1_{\cN_+\leq N^{1-\alpha}} a^*(  \ui \nabla v_{N,x} )^2,
\end{align}
where now the particle number is cut off at $M=N^{1-\alpha}$ as in~\eqref{eq:Hsing_W}. Then, we can write
\begin{align}
 \mathcal H_{\mathrm{Sing}} = K_{N,W} +T_{N,W}+ E_{N,W}-1,
\end{align}
with $E_{N,W}$ defined in~\eqref{eq:E_NW} and
\begin{align}
 K_{N,W} &:= (1-G^*_{N,W})(-\Delta_x+ \ud \Gamma(\epsilon) + \ud\Gamma(W_{N,x}) +1)(1- G_{N,W}), \label{eq:K_NW}\\
T_{N,W} &:=  -\1_{\cN_+\leq N^{1-\alpha}} a(\ui \nabla v_{N,x} )^2(-\Delta_x+ \ud \Gamma(\epsilon)+\ud\Gamma(W_{N,x})+1)^{-1} a^*(\ui \nabla v_{N,x} )^2 \nn \\
&\qquad - E_{N,W}.\label{eq:T_NW}
\end{align}

In this section we prove that $\mathcal H_{\mathrm{Sing}} + R_N-E_{N,W}+1$ converges to the self-adjoint operator $K_{\infty}+T_{\infty} + R_{\infty}$ in norm resolvent sense, where $R_\infty$ is defined by setting $N=\infty$ in $R_N$, as in  Section~\ref{sect:renorm}.

\begin{prop}\label{prop:HsingW}
For  $\kappa$ sufficiently large there exists $C$ so that for all  $0<\alpha\leq 1/2$, $\eps>0$, and $N\in \N$ we have
\begin{align*}
&\| (\cH_N^U- E_{N,W}+1 \pm \ui)^{-1} - (K_{\infty } + T_{\infty } + R_{\infty}\pm \ui)^{-1} \| \leq C  N^{-(\frac14-\eps)\alpha}. 
\end{align*}
Moreover, for $0\leq s<5/8$ there exists $C$ so that for all $N\in \N$
\begin{equation*}
\| K_{N,W}^{s}(\cH_N^U- E_{N,W}\pm \ui)^{-1}\| \leq C.
\end{equation*}
\end{prop}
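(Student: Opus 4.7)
The proof follows the scheme of Section~\ref{sect:renorm} (cf.~Proposition~\ref{prop:H_BF trafo conv}) used to construct the renormalized Bogoliubov–Fröhlich Hamiltonian, adapted to accommodate the positive term $\ud\Gamma(W_{N,x})$ inside $\mathcal H_\mathrm{Sing}$ and the particle-number cutoff $\1_{\cN_+\leq N^{1-\alpha}}$ appearing in $G_{N,W}$. By construction $\mathcal H_\mathrm{Sing}=K_{N,W}+T_{N,W}+E_{N,W}-1$, so $\cH_N^U-E_{N,W}+1=K_{N,W}+T_{N,W}+R_N=:B_N$, and the proposition reduces to a comparison of $B_N$ with $B_\infty:=K_\infty+T_\infty+R_\infty$.

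The first step is to establish the analog of Lemma~\ref{lem:G_0_bound} for $G_{N,W}$. Since $W_N\geq 0$, the resolvent $(-\Delta_x+\ud\Gamma(\epsilon)+\ud\Gamma(W_{N,x})+1)^{-1}$ is dominated by its $W_N$-independent counterpart as a positive operator, and the cutoff $\1_{\cN_+\leq N^{1-\alpha}}$ commutes with all powers of $\cN_+$; the Cauchy–Schwarz argument of Lemma~\ref{lem:G_0_bound} therefore applies without modification, uniformly in $N$. This yields the analog of Lemma~\ref{lem:K_0-props}: $K_{N,W}$ is self-adjoint on $(1-G_{N,W})^{-1}D(-\Delta_x+\ud\Gamma(\epsilon))$ for $\kappa$ large, with the uniform estimates $\|\cN_+^s K_{N,W}^{-s}\|\lesssim 1$ and $\|(-\Delta_x+\ud\Gamma(\epsilon)+\ud\Gamma(W_{N,x}))^s(1-G_{N,W})K_{N,W}^{-s}\|\lesssim 1$ for $s\in[0,1]$. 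Normal-ordering $T_{N,W}$ as in~\eqref{eq:decompo_TN0} and applying the arguments of Lemma~\ref{lem:T_N bound}—where the extra $\ud\Gamma(W_{N,x})$ only improves the denominators in the associated kernels $\theta_{N,W,j}$—gives $\|T_{N,W}K_{N,W}^{-1}\|\leq C_\kappa$ with $C_\kappa\to 0$. Together with Lemma~\ref{lem:R_Lambda bound} applied to $R_N$, the Kato–Rellich and KLMN theorems produce the self-adjoint realization $B_N$, with $\|K_{N,W}(K_{N,W}+T_{N,W}\pm\ui)^{-1}\|\lesssim 1$. The stronger bound $\|K_{N,W}^s(B_N\pm\ui)^{-1}\|\lesssim 1$ for $s<5/8$ then follows by tracking powers in Lemma~\ref{lem:R_Lambda bound} (valid for $|s|<1/8$), which lifts the Kato–Rellich bound to $\|K_{N,W}^{1/2+s}(B_N\pm\ui)^{-1/2-s}\|\lesssim 1$ for $|s|<1/8$, i.e., to exponents up to $1/2+1/8=5/8$.

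For the norm-resolvent convergence I would use the resolvent identity $(B_N+\ui)^{-1}-(B_\infty+\ui)^{-1}=(B_N+\ui)^{-1}(B_\infty-B_N)(B_\infty+\ui)^{-1}$ and estimate $B_\infty-B_N$ termwise. The contributions from $G_{N,W}\to G_\infty$ (including the removal of the particle-number cutoff), from $T_{N,W}\to T_\infty$, and from $R_N\to R_\infty$ are handled by the methods of Lemmas~\ref{lem:G_0_bound} and~\ref{lem:theta_cv}, yielding rates of order $N^{-(1/2-\eps)\alpha}$. The main obstacle, and the bottleneck in the final rate, is the contribution of $\ud\Gamma(W_{N,x})$ to $K_{N,W}-K_\infty$, which does \emph{not} vanish in operator norm. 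The essential input here is a Hölder–Sobolev estimate: since $W\in L^2\cap L^1$, for any $\delta\in(0,1/4]$ one has $\|W_N\|_{L^{3/(2+2\delta)}}\lesssim N^{-\delta}\|W\|_{L^{3/(2+2\delta)}}$, and the Sobolev embedding $H^{1+\delta}(\T^3)\hookrightarrow L^{6/(1-2\delta)}(\T^3)$ gives $W_{N,x}\leq C N^{-\delta}(-\Delta_y)^{1+\delta}$ and hence
\begin{equation*}
 \ud\Gamma(W_{N,x}) \leq C N^{-\delta}\,\ud\Gamma((-\Delta_y)^{1+\delta}) \leq C N^{-\delta}\,\ud\Gamma(\epsilon)^{1+\delta}
\end{equation*}
as quadratic forms. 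Sandwiching this estimate between appropriate powers of $K_{N,W}$ and $K_\infty$, all controlled by the second statement of the proposition up to exponents strictly less than $5/8$, and optimizing $\delta\to 1/4$ yields the claimed rate $N^{-(1/4-\eps)\alpha}$.
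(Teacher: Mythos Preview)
Your overall scheme is right and matches the paper: write $\cH_N^U-E_{N,W}+1=K_{N,W}+T_{N,W}+R_N$, establish uniform relative bounds of $T_{N,W}$ and $R_N$ with respect to $K_{N,W}$ (whence the second claim via KLMN/Kato--Rellich), then prove resolvent convergence by a resolvent identity. But two steps in your sketch do not go through as written.

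\medskip

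\textbf{The normal-ordering of $T_{N,W}$ fails.} You propose to decompose $T_{N,W}$ into $\Theta_{N,W,0}+\Theta_{N,W,1}+\Theta_{N,W,2}$ ``as in~\eqref{eq:decompo_TN0}'', claiming that the extra $\ud\Gamma(W_{N,x})$ only improves the denominators. This is precisely the step that breaks: the normal-ordering in~\eqref{eq:decompo_TN0} relies on the commutation rules~\eqref{eq:comm_rules}, which hold because $-\Delta_x$ and $\ud\Gamma(\epsilon)$ are Fourier multipliers. The operator $\ud\Gamma(W_{N,x})$ is a multiplication operator in the boson variables, and $[\ud\Gamma(W_{N,x}),a_k^*]=a^*(W_{N,x}\ue^{\ui k\cdot})$ is \emph{not} proportional to $a_k^*$. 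Hence $(\Ho+\W)^{-1}a_k^*$ cannot be written as $a_k^*(\text{shifted resolvent})$, and no kernels $\theta_{N,W,j}$ of the type you envision exist. The paper's Lemma~\ref{lem:T_W-diff} (proved in Section~\ref{sect:T_NW}) circumvents this by working on each $n$-boson sector, introducing the two-body operator $\mathbb H^{(2)}$, and exploiting the permutation symmetry of $\Psi^{(n)}$ to absorb the combinatorial factors. Both the bound on $T_{N,W}$ and the estimate on $T_{N,W}-T_N$ require this machinery; the positivity of $W_N$ alone is not enough.

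\medskip

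\textbf{The bottleneck is misidentified.} You write that removing $\ud\Gamma(W_{N,x})$ from $K_{N,W}$ is the rate-limiting step and that the $G$-, $T$-, and $R$-differences give $N^{-(1/2-\eps)\alpha}$. In fact it is the other way around. The paper's proof splits the resolvent difference into three pieces~\eqref{eq:resolv_1}--\eqref{eq:resolv_3}, passing through the intermediate operators $K_{N,W}+T_\infty+R_\infty$ and $K_N+T_\infty+R_\infty$. The piece~\eqref{eq:resolv_2}, which isolates $K_{N,W}-K_N$ (hence $\ud\Gamma(W_{N,x})$), is controlled by~\eqref{eq:K_conv2} and contributes $N^{(-1/8+\eps)(1-\alpha)}$, which for $\alpha\leq 1/2$ is \emph{smaller} than the claimed rate. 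The actual bottleneck is~\eqref{eq:resolv_3}, i.e.\ $K_N\to K_\infty$, which by~\eqref{eq:K_conv} at $s=3/8+\eps$ gives $N^{-(1/4-2\eps)\alpha}$---the rate in the statement. Your Sobolev estimate $\ud\Gamma(W_{N,x})\lesssim N^{-\delta}\ud\Gamma(\epsilon)^{1+\delta}$ is essentially the paper's~\eqref{eq:bound_WN-Laplace} and can indeed be used to control~\eqref{eq:resolv_2}, but sandwiching $\ud\Gamma(\epsilon)^{1+\delta}$ between $K_{N,W}^{-a}$ and $K_\infty^{-b}$ with $a,b<5/8$ also costs a factor $N^{\delta/2+\eps}$ on each side via~\eqref{eq:cor_K_control3}, so the net gain is $N^{-\delta/2}$ rather than $N^{-\delta}$; optimizing at $\delta=1/4$ gives $N^{-1/8+\eps}$, consistent with the paper's bound on~\eqref{eq:resolv_2} but not the origin of the final rate.
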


In order to prove this, we proceed similarly as in  the previous section and first establish bounds on $G_{N,W}$, $K_{N,W}$, as well as their differences to $G_N$, $K_N$.

\begin{lem}[Properties of $G_{N,W}$]\label{lem:G_W_bound}
 For $0\leq s\leq 5/8$ there exists $(C_\kappa)_{\kappa>0}$, $C$ with $\lim_{\kappa\to \infty} C_\kappa=0$ so that for all $\kappa, \alpha>0$, and sufficiently large $N\in \N$
\begin{subequations}
\begin{align}
 \|G_{N,W}^*(-\Delta_x+\ud \Gamma(\epsilon))^s (\cN_++1)^{-s} \| &\leq C_\kappa, \quad &s<1/2,\label{eq:lem:G_W_bound_1} \\
 \|G_{N,W}^*(-\Delta_x+\ud \Gamma(\epsilon))^{s}(\cN_++1)^{-s}\|  &\leq C N^{s-\frac12+\eps}, \quad &s\geq 1/2.\label{eq:lem:G_W_bound_2}
\end{align}
Moreover, for $0\leq s\leq 1/2$ there is $C$ so that for all $\kappa, \alpha>0$, $N\in \N$,
\begin{equation}\label{eq:lem:G_W_bound_diffW}
  \big\|(G_{N,W}^*-G^*_{N}) (-\Delta_x+\ud \Gamma(\epsilon))^{s}(\cN_++1)^{-\frac12} \big\| \leq C N^{(-\frac14+(s-\frac14)_+)(1-\alpha)}
\end{equation}
\end{subequations}
\end{lem}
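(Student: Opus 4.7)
The three estimates~\eqref{eq:lem:G_W_bound_1}--\eqref{eq:lem:G_W_bound_diffW} mirror those of Lemma~\ref{lem:G_0_bound}, and I would follow that proof while adapting it to two new ingredients present in $G_{N,W}$: the positive perturbation $\ud\Gamma(W_{N,x})$ inside $\mathbb{H}_{0,W}:=-\Delta_x+\ud\Gamma(\epsilon)+\ud\Gamma(W_{N,x})+1$, and the projector $\1_{\cN_+\leq M}$ with $M=N^{1-\alpha}$ sitting between $a(\ui\nabla v_{N,x})^2$ and $\mathbb{H}_{0,W}^{-1}$.

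For the uniform bounds~\eqref{eq:lem:G_W_bound_1} and~\eqref{eq:lem:G_W_bound_2}, the starting point is the Cauchy--Schwarz operator inequality
\[
 a^*(\ui\nabla v_{N,x})^2\,a(\ui\nabla v_{N,x})^2 \leq C_{N,s}^2\,(-\Delta_x+\ud\Gamma(\epsilon))^{2(1-s)}\cN_+^{2s},
\]
derived exactly as in~\eqref{eq:lem:G_N_bound_5}, with $C_{N,s}:=\bigl\||k|\widehat v_N/\epsilon(k)^{(1-s)/2}\bigr\|_2^2$. Using~\eqref{eq:v_N-bound} and Lemma~\ref{lem:phi_W}, $C_{N,s}\to 0$ as $\kappa\to\infty$ for $s<1/2$, and $C_{N,s}\lesssim N^{(s-1/2)_++\eps}$ for $s\geq 1/2$, exactly as in the proof of Lemma~\ref{lem:G_0_bound}. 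The analytically new point is that $W_N\geq 0$ gives $\mathbb{H}_{0,W}\geq -\Delta_x+\ud\Gamma(\epsilon)+1$, so by the L\"owner--Heinz theorem $\|(-\Delta_x+\ud\Gamma(\epsilon))^t\,\mathbb{H}_{0,W}^{-t}\|\leq 1$ for every $0\leq t\leq 1$. Splitting $\mathbb{H}_{0,W}^{-1}=\mathbb{H}_{0,W}^{-(1-s)}\,\mathbb{H}_{0,W}^{-s}$ and bounding each piece against the adjacent power of $(-\Delta_x+\ud\Gamma(\epsilon))$ yields $\|(-\Delta_x+\ud\Gamma(\epsilon))^{1-s}\mathbb{H}_{0,W}^{-1}(-\Delta_x+\ud\Gamma(\epsilon))^s\|\leq 1$, which substitutes for the commutation with $\mathbb{H}_0$ used in Lemma~\ref{lem:G_0_bound}. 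Since $\cN_+$ commutes with all functions of $-\Delta_x$, $\ud\Gamma(\epsilon)$ and $\ud\Gamma(W_{N,x})$, and $\1_{\cN_+\leq M}$ only lowers $G_{N,W}G_{N,W}^*$, the computation carries over.

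For~\eqref{eq:lem:G_W_bound_diffW}, I would use $\1_M=1-\1_{\cN_+>M}$ together with the resolvent identity $\mathbb{H}_0^{-1}-\mathbb{H}_{0,W}^{-1}=\mathbb{H}_{0,W}^{-1}\ud\Gamma(W_{N,x})\mathbb{H}_0^{-1}$, where $\mathbb{H}_0:=-\Delta_x+\ud\Gamma(\epsilon)+1$, to split
\[
 G_{N,W}^*-G_N^* = a(\ui\nabla v_{N,x})^2\bigl(\mathbb{H}_0^{-1}-\mathbb{H}_{0,W}^{-1}\bigr) + a(\ui\nabla v_{N,x})^2\,\1_{\cN_+>M}\,\mathbb{H}_{0,W}^{-1}.
\]
The second summand produces the factor $\1_{\cN_+>M}(\cN_++1)^{-1/2}\leq M^{-1/2}=N^{-(1-\alpha)/2}$, well below the claimed rate. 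The first summand I would estimate by applying the Cauchy--Schwarz inequality above to each of $a(\ldots)^2\mathbb{H}_0^{-1}$ and $a(\ldots)^2\mathbb{H}_{0,W}^{-1}$ separately, then real-interpolating: at $s=1/2$ each is $O(1)$ (no smallness), while for $s\leq 1/4$ the weight $(-\Delta_x+\ud\Gamma(\epsilon))^s(\cN_++1)^{-1/2}$ can be redistributed through the Heinz-type bounds so that the cutoff piece contributes at least the factor $M^{-1/2}$. The stated rate $N^{(-1/4+(s-1/4)_+)(1-\alpha)}$ reflects this interpolation, with the breakpoint $s=1/4$ marking the transition from the cutoff-dominated regime to the resolvent-difference-dominated regime.

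The main obstacle I expect is the first summand in the decomposition above for $s$ close to $1/2$: although $\mathbb{H}_{0,W}^{-1}\ud\Gamma(W_{N,x})\mathbb{H}_0^{-1}$ is uniformly bounded (as a difference of resolvents), it has no intrinsic smallness in $N$, so any decay must be extracted by carefully choosing the Cauchy--Schwarz exponent $s'$ in the bound on $a(\ldots)^2$ and by moving powers of $\cN_+$ past the resolvents to combine with $(\cN_++1)^{-1/2}$ on the right. This is ultimately why the rate degrades to $1$ at $s=1/2$ and why the intermediate bound takes the particular piecewise-linear form in the exponent.
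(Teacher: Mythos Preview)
There is a genuine gap in your argument for~\eqref{eq:lem:G_W_bound_1}. The L\"owner--Heinz inequality gives $A\leq B\Rightarrow A^r\leq B^r$ for $0\leq r\leq 1$, which yields $\|A^{r/2}B^{-r/2}\|\leq 1$, i.e.\ $\|A^t B^{-t}\|\leq 1$ only for $t\leq 1/2$. Your claim that $\|(-\Delta_x+\ud\Gamma(\epsilon))^t\,\mathbb{H}_{0,W}^{-t}\|\leq 1$ for all $t\in[0,1]$ therefore fails for $t>1/2$. In the splitting $\mathbb{H}_{0,W}^{-1}=\mathbb{H}_{0,W}^{-(1-s)}\mathbb{H}_{0,W}^{-s}$ you need both $1-s\leq 1/2$ and $s\leq 1/2$, so the argument only goes through at the single point $s=1/2$; for $s<1/2$ the factor $\|(-\Delta_x+\ud\Gamma(\epsilon))^{1-s}\mathbb{H}_{0,W}^{-(1-s)}\|$ is not controlled, and for $s>1/2$ the other factor fails. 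Thus your direct approach yields only the a~priori bound at $s=1/2$ (with constant $\log N$, as in the paper), but not~\eqref{eq:lem:G_W_bound_1} or~\eqref{eq:lem:G_W_bound_2}.

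The paper takes a different route: starting from precisely that a~priori bound $\|G_{N,W}^*\mathbb{H}_0^{1/2}(\cN_++1)^{-1/2}\|\lesssim\log N$, it uses the resolvent identity
\[
G_{N,W}^*-G_N^*\1_{\cN_+\leq M}=-G_{N,W}^*\,\W\,\mathbb{H}_0^{-1}
\]
to write $G_{N,W}^*$ as $G_N^*\1_{\cN_+\leq M}$ (bounded by Lemma~\ref{lem:G_0_bound}) plus a correction. The correction $G_{N,W}^*\W\mathbb{H}_0^{s-1}(\cN_++1)^{-s}$ is then estimated by combining the a~priori bound with the Sobolev-type inequality $\W\lesssim N^{1-t}\cN_+^{(1-t)_+}\ud\Gamma(\epsilon)^t$ (valid for $t\in[3/4,3/2]$ since $W\in L^1\cap L^2$). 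Choosing $t>1$ extracts genuine $N$-decay, so the correction vanishes as $N\to\infty$ and~\eqref{eq:lem:G_W_bound_1} follows for $N$ large. The same mechanism drives~\eqref{eq:lem:G_W_bound_2} and~\eqref{eq:lem:G_W_bound_diffW}: for the latter your decomposition is essentially the paper's, but the smallness of the resolvent-difference term that you seek comes from this Sobolev bound on $\W$ with $t>1$, not from interpolation or from the cutoff $\1_{\cN_+\leq M}$.
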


\begin{proof}
Denote $\mathbb{H}_0=-\Delta_x + \ud \Gamma(\epsilon)+1$, $\W=\ud \Gamma(W_{N,x})$.
 From the proof of Lemma~\ref{lem:G_0_bound}, specifically~\eqref{eq:lem:G_N_bound_5}, using that $\W\geq 0$ we obtain the a priori bound on $G_{N,W}$,
 \begin{align}
 &(\cN_++1)^{-1/2}\mathbb{H}_0^{1/2}  G_{N,W}  G_{N,W}^*\mathbb{H}_0^{1/2} (\cN_++1)^{-1/2}  \notag \\
 &\quad \lesssim C_N^2 \mathbb{H}_0^{1/2} (\mathbb{H}_0 +\W)^{-1} \mathbb{H}_0(\mathbb{H}_0 +\W)^{-1} \mathbb{H}_0^{1/2}\lesssim C_N^2,
 \label{eq:G_W-apriori}
\end{align}
with the constant
\begin{equation}
 C_N = \| \epsilon^{-1/4}|k| \hat v_N\|_2^2 \lesssim \sum_{ |k| \leq N^{\alpha}} \frac{1}{|k|^{3}} + N \| |\nabla|^{1/2} \Wphih\|_2^2 \lesssim \log N,
\end{equation}
by Lemma~\ref{lem:phi_W}. This proves that $\|G_{N,W}^* \mathbb{H}_0^{1/2} (\cN_++1)^{-1/2} \| \lesssim \log N$. 
Now, by the resolvent formula,
\begin{align}
G_{N,W}^*-G^*_{N}\1_{\cN_+\leq N^{1-\alpha}}
&= - a(  \ui \nabla v_{N,x} )^2\1_{\cN_+\leq N^{1-\alpha}} \big((\mathbb{H}_0+\W )^{-1} -  \mathbb{H}_0^{-1} \big) \nn \\
	&= - G_{N,W}^* \W  \mathbb{H}_0^{-1}.\label{eq:G-diff}
\end{align}
As we already have bounds on $\| G^*_{N}\mathbb{H}_0^s(\cN_+ +1)^{-s}\1_{\cN_+\leq N^{1-\alpha}}\|\leq \| G^*_{N}\mathbb{H}_0^s(\cN_+ +1)^{-s}\|$ from Lemma~\ref{lem:G_0_bound}, we will prove the bounds on $G_{N,W}$ by estimating the right hand side.
With~\eqref{eq:G_W-apriori} we have
\begin{equation}
 \|G_{N,W}^* \W  \mathbb{H}_0^{-1+s}  (\cN_+ +1)^{-s} \| \lesssim \log N \| \mathbb{H}_0^{-1/2}\W  \mathbb{H}_0^{s-1}  (\cN_+ +1)^{1/2-s}\1_{\cN_+\leq N^{1-\alpha}} \|.
\end{equation}
To bound this, we use that the Sobolev inequality on the torus gives
\begin{align}\label{eq:bound_WN_onebody}
W_N \lesssim \|W_N\|_{3/(2t)} (-\Delta+1)^t,
\end{align}
for $t \in (0,3/2)$,  from which we obtain that
\begin{align}\label{eq:bound_WN-Laplace}
\dd\Gamma(W_{N,x}) \lesssim N^{1-t} \dd\Gamma(\epsilon^t) \lesssim N^{1-t} \cN_+^{(1-t)_+}\dd\Gamma(\epsilon)^t
\end{align}
uniformly in $x$. Note we assume that $W\in L^1\cap L^2$, which corresponds to $t \in [3/4,3/2]$.
Consider first the case $s\leq 1/4$. Applying the inequality once for $t=1$ and once for $t=3/2-2\eps$, for $0< \varepsilon < 3/8$ yields
\begin{equation}
 \| \mathbb{H}_0^{-1/2}\W  \mathbb{H}_0^{s-1}  (\cN_+ +1)^{1/2-s}\1_{\cN_+\leq N^{1-\alpha}} \|
 \lesssim N^{-\alpha(1/4-\eps)}.
\end{equation}
Similarly, for $1/4\leq s<1/2$, we use $t=2(1-s)>1$ and find
\begin{equation}
 \| \mathbb{H}_0^{-1/2}\W  \mathbb{H}_0^{s-1}  (\cN_+ +1)^{1/2-s}\1_{\cN_+\leq N^{1-\alpha}} \|\lesssim N^{-\alpha(1/2-s)}.
\end{equation}
Choosing $N$ sufficiently large then proves the claim~\eqref{eq:lem:G_W_bound_1}. The inequality~\eqref{eq:lem:G_W_bound_2}, where $s\geq 1/2$, follows from the same the reasoning, but here we do not need to use the cutoff $\1_{\cN_+\leq N^{1-\alpha}}$ or choose $N$ large (the restriction $s\leq5/8$ ensures that only the norms $\|W_N\|_p$ with $1\leq p\leq 2$ intervene).
To prove the final bound~\eqref{eq:lem:G_W_bound_diffW},  we use again the resolvent formula to obtain
\begin{align}
 &\|(G_{N,W}^*-G^*_{N}) \mathbb{H}_0^{s}(\cN_++1)^{-\frac12} \| \nn\\
 &\leq \| G^*_{N} \mathbb{H}_0^{s}(\cN_++1)^{-\frac12}\1_{\cN_+>N^{1-\alpha}}\| + \|G_{N,W}^* \W  \mathbb{H}_0^{-1+s}  (\cN_+ +1)^{-\frac12}\|.
\end{align}
The first term is bounded by $N^{(1-\alpha)(s-1/2)}$ by Lemma~\ref{lem:G_0_bound}, and the second
by $N^{-\frac14+(s-\frac14)_++ \eps}$, as before (without having to use the cutoff on $\cN_+$ due to the larger power of $(\cN_++1)^{-1}$). Both of these quantities are smaller than the claimed one for an appropriate choice of $\eps$ (depending on $\alpha, s$). This proves the claim.
\end{proof}

\begin{lem}[Properties of $K_{N,W}$]\label{lem:K_W-props}
For $\kappa$ sufficiently large and $0<\alpha\leq 1/2$ the following holds. The operators $K_{N,W}$ are self-adjoint and converge to $K_\infty$ in norm-resolvent sense as $N\to \infty$.

Moreover, for $0\leq s\leq 1$ there is $C>0$, such that for large enough $N\in \mathbb{N}$
\begin{subequations}
 \begin{align}\label{eq:K_W control}
 \| (-\Delta_x+\ud \Gamma(\epsilon) +\ud \Gamma(W_{N,x}))^{s}(1- G_{N,W}) K_{N,W}^{-s}\Big\|\leq C, \\
 \|\cN_+^s  K_{N,W}^{-s}\|\leq C, \label{eq:bound_cN-K_W}
\end{align}
and
  for $0\leq s<1/2$, $\eps>0$ there exists $C>0$ so that for $N\in \N$ sufficiently large
 \begin{align}
\label{eq:K_conv2}
\big\|  K_{N,W}^s \big( K_{N,W}^{-1} - K_{N}^{-1}\big)K_{N}^s\big\| &\leq C  N^{(-\frac14+(s-\frac14)_+)(1-\alpha)}
\end{align}
\end{subequations}
\end{lem}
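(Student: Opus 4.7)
The proof will closely parallel that of Lemma~\ref{lem:K_0-props}, with Lemma~\ref{lem:G_W_bound} playing the role of Lemma~\ref{lem:G_0_bound}. First, choosing $\kappa$ large enough so that the constant $C_\kappa$ in \eqref{eq:lem:G_W_bound_1} (taken at $s=0$) satisfies $C_\kappa<1$, we obtain that $1-G_{N,W}$ is invertible by Neumann series for every $N$. Hence $K_{N,W}$ is a product of three invertible operators, and, being manifestly symmetric, self-adjoint on $D(K_{N,W})=(1-G_{N,W})^{-1}D(-\Delta_x+\ud\Gamma(\epsilon)+\ud\Gamma(W_{N,x}))$. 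The comparison estimate \eqref{eq:K_W control} will then follow by complex interpolation: for $s=1$ we insert $(1-G_{N,W}^*)^{-1}(1-G_{N,W}^*)$ and use the Neumann bound on $\|(1-G_{N,W}^*)^{-1}\|$; the $s=0$ case is immediate from \eqref{eq:lem:G_W_bound_1}. The bound \eqref{eq:bound_cN-K_W} then follows by writing $\cN_+^s=\cN_+^s(1-G_{N,W})+\cN_+^s G_{N,W}$, commuting $\cN_+$ past $G_{N,W}$ (which shifts its argument by $2$), and combining \eqref{eq:K_W control} with the $s=0$ case of Lemma~\ref{lem:G_W_bound} together with the subadditivity of $x\mapsto x^s$ for $0\leq s\leq 1$.

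The key new estimate is \eqref{eq:K_conv2}. The resolvent identity gives
\begin{equation*}
K_{N,W}^{s}\bigl(K_{N,W}^{-1}-K_{N}^{-1}\bigr)K_{N}^{s}=K_{N,W}^{s-1}\bigl(K_{N}-K_{N,W}\bigr)K_{N}^{s-1},
\end{equation*}
and expanding the definitions one finds
\begin{align*}
K_{N}-K_{N,W}&=(G_{N,W}^{*}-G_{N}^{*})\,\mathbb{H}_0(1-G_{N})+(1-G_{N,W}^{*})\,\mathbb{H}_0(G_{N,W}-G_{N})\\
&\quad -(1-G_{N,W}^{*})\,\ud\Gamma(W_{N,x})\,(1-G_{N,W}).
\end{align*}
For the first two contributions I will insert $\mathbb{H}_0^{\pm s}$ and use \eqref{eq:lem:G_W_bound_diffW} on the $G$-difference together with \eqref{eq:K_0 control}, \eqref{eq:K_W control}, and \eqref{eq:bound_cN-K_W} (to exchange $\mathbb{H}_0^{s}$ for $(\cN_++1)^{s}$ against $K_{N'}^{-s}$); this gives a term of order $N^{(-\frac14+(s-\frac14)_+)(1-\alpha)}$. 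For the third contribution I will apply the Sobolev inequality \eqref{eq:bound_WN-Laplace}, $\ud\Gamma(W_{N,x})\lesssim N^{1-t}\cN_+^{(1-t)_+}\ud\Gamma(\epsilon)^{t}$, choose $t$ close to $1$, and again use \eqref{eq:K_0 control} and \eqref{eq:K_W control} to absorb the kinetic factors into $K_N^{s-1}$ and $K_{N,W}^{s-1}$; the resulting bound is of the same or smaller order as the first by the same balancing between $t$ and $s$ carried out in the proof of \eqref{eq:lem:G_W_bound_diffW}.

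Finally, norm resolvent convergence $K_{N,W}\to K_\infty$ will follow by applying \eqref{eq:K_conv2} with $s=0$ to compare $K_{N,W}^{-1}$ with $K_N^{-1}$, and then invoking the convergence $K_N^{-1}\to K_\infty^{-1}$ already established in Lemma~\ref{lem:K_0-props}. The main obstacle is the third contribution to $K_N-K_{N,W}$: the operator $\ud\Gamma(W_{N,x})$ is not small in $N$ on its own, and only becomes controllable once the particle-number cutoff $\1_{\cN_+\leq N^{1-\alpha}}$ in the definition of $G_{N,W}$ is combined with the Sobolev loss $N^{1-t}$ from \eqref{eq:bound_WN-Laplace}. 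Balancing these two effects, together with the regularity allowed by $s<5/8$ (i.e.\ the constraint $W\in L^2$, which bounds $\|W_N\|_{3/(2t)}$ only for $t\in[3/4,3/2]$), is what ultimately dictates the exponent in \eqref{eq:K_conv2}.
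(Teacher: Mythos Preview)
Your overall strategy is correct and matches the paper's: self-adjointness and \eqref{eq:K_W control}--\eqref{eq:bound_cN-K_W} follow exactly as you describe, by the same interpolation argument as in Lemma~\ref{lem:K_0-props} with Lemma~\ref{lem:G_W_bound} in place of Lemma~\ref{lem:G_0_bound}; and norm-resolvent convergence follows from \eqref{eq:K_conv2} together with Lemma~\ref{lem:K_0-props}.

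The difference lies in the decomposition of $K_N-K_{N,W}$. The paper writes it as
\[
(1-G_{N,W}^*)(\mathbb{H}_0+\W)(G_N-G_{N,W})+(1-G_{N,W}^*)\W(1-G_N)+(G_N^*-G_{N,W}^*)\mathbb{H}_0(1-G_N),
\]
which is algebraically equivalent to yours but arranged so that each factor $(1-G_{N,W}^*)$ sits next to $(\mathbb{H}_0+\W)$ (hence can be absorbed into $K_{N,W}^{s-1}$ via \eqref{eq:K_W control}), and each $(1-G_N)$ sits on the right (hence can be absorbed into $K_N^{s-1}$ via \eqref{eq:K_0 control}). In your third term, $(1-G_{N,W}^*)\W(1-G_{N,W})$, the right factor is $(1-G_{N,W})$ but is sandwiched by $K_N^{s-1}$, so you would need an extra step---writing $(1-G_{N,W})=(1-G_N)+(G_N-G_{N,W})$---before you can invoke \eqref{eq:K_0 control}. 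This is fixable, but the paper's grouping avoids it entirely. For the middle term the paper distributes the kinetic weight asymmetrically, bounding $\|\ud\Gamma(\epsilon)^{-1/2}\W\ud\Gamma(\epsilon)^{s-1}\|$ directly via \eqref{eq:bound_WN_onebody}; this is what produces the rate $N^{-1/4+(s-1/4)_+ +\eps}$, not the particle-number cutoff as you suggest.

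Two small corrections to your closing paragraph: the constraint here is $0\leq s<1/2$, not $s<5/8$ (the latter appears in Corollary~\ref{cor:K_control_kin} and Proposition~\ref{prop:HsingW}); and the cutoff $\1_{\cN_+\leq N^{1-\alpha}}$ enters only indirectly, through the bounds on $G_{N,W}$ already established in Lemma~\ref{lem:G_W_bound}, not in the estimate of the $\W$ term itself.
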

\begin{proof}
 In view of the bounds on $G_{N,W}$ from Lemma~\ref{lem:G_W_bound}, self-adjointness and the estimates~\eqref{eq:K_W control},~\eqref{eq:bound_cN-K_W} follow from exactly the same argument as their analogues for $K_N$ in Lemma~\ref{lem:K_0-props}. Convergence to $K_\infty$ will follow once we have established~\eqref{eq:K_conv2}, since we already know that $K_N$ converges to $K_\infty$.

 It remains to prove (\ref{eq:K_conv2}). Similarly as above, we have with $\mathbb{W}=\ud \Gamma(W_{N,x})$, $\mathbb{H}_0=-\Delta_x+\ud \Gamma(\epsilon)+1$,
\begin{subequations}
 \begin{align}
&K_{N,W}^s \big(K_{N,W}^{-1} - K_N^{-1}\big)K_N^s 
= K_{N,W}^{s-1}\big(K_N-K_{N,W}\big)K_N^{s-1} \notag \\
 &= K_{N,W}^{s-1}\big((1-G_{N,W}^*)(\mathbb{H}_0+\W)(G_N-G_{N,W})\big)K_N^{s-1}\label{eq:K_W-diff-G_W}\\
 & \quad + K_{N,W}^{s-1}\big( (1-G_{N,W}^*)\W(1-G_N)\big) K_N^{s-1} \label{eq:K_W-diff-W}\\
 & \quad + K_{N,W}^{s-1}\big((G_N^*-G_{N,W}^*)\mathbb{H}_0(1-G_N)\big)K_N^{s-1}.
 \label{eq:K_W-diff-W_2}
 \end{align}
\end{subequations}
To bound (\ref{eq:K_W-diff-G_W}), we use that $\|(\mathbb{H}_0+\W)^{s}\Ho^{-s}\|\lesssim 1$ (which holds by~\eqref{eq:bound_WN-Laplace} and for $s=1/2$ by operator monotonicity or interpolation) to obtain
\begin{align}
 \|\eqref{eq:K_W-diff-G_W}\|
 	&\stackrel{\eqref{eq:K_W control}}{\lesssim} \|(\mathbb{H}_0+\W)^{s} (G_N-G_{N,W})K_N^{s-1} \| \lesssim \|\mathbb{H}_0^{s} (G_N-G_{N,W})K_N^{s-1} \|  \notag\\
 	&\stackrel{\eqref{eq:lem:G_W_bound_diffW}}{\lesssim} N^{(-\frac14+(s-\frac14)_+)(1-\alpha)} \| (\cN_++3)^{\frac12} K_N^{s-1}\|.
\end{align}
 The final quantity is bounded by~\eqref{eq:bound_cN-K_0}, which proves the required estimate. The bound on~\eqref{eq:K_W-diff-W_2} is obtained in the same way.

The middle term~\eqref{eq:K_W-diff-W} is bounded using~\eqref{eq:K_0 control},~\eqref{eq:K_W control} and (\ref{eq:bound_WN-Laplace}), to obtain for $\eps >0$
\begin{align}
\| \eqref{eq:K_W-diff-W} \| & \lesssim \|K_{N,W}^{s-1}(1-G_{N,W}^*)  (\mathbb{H}_0+\W)^{1/2}\| \|  (\mathbb{H}_0+\W)^{-1/2} \mathbb{H}_0^{1/2} \| \nn \\
&\quad \times\|\ud \Gamma(\epsilon)^{-1/2}\W\ud \Gamma(\epsilon)^{s-1}\| \|\ud \Gamma(\epsilon)^{1-s} (1-G_N) K_{N,0}^{s-1}\| \nn\\
& \lesssim N^{-\frac14+(s-\frac14)_+ +\eps }.
\end{align}
This concludes the proof of (\ref{eq:K_conv2}), and the proposition.
\end{proof}

\begin{cor}[Comparison estimates for $K_{N,W}$]\label{cor:K_control_kin}
For all $0\leq s<1/2$, and $s\leq t\leq 1$ we have for $\kappa, N$ large enough
\begin{equation*}
 \|\cN_+^{t-s}(-\Delta_x + \dd\Gamma(\epsilon))^{s} K_{N,W}^{-t}\|\lesssim 1.
\end{equation*}
Moreover, for $0<s,\eps \leq 1/8$
\begin{equation} \label{eq:cor_K_control3}
\|(-\Delta_x + \ud\Gamma(\epsilon))^{1/2+s} K_{N,W}^{-1/2-s}\| \lesssim N^{s+\eps}.
\end{equation}
\end{cor}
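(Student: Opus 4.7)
The plan is to establish both estimates by splitting each operator via $1 = (1-G_{N,W}) + G_{N,W}$ and controlling the two summands using the bounds on $G_{N,W}$ and $K_{N,W}$ from Lemmas~\ref{lem:G_W_bound} and~\ref{lem:K_W-props}. Since $(-\Delta_x + \dd\Gamma(\epsilon))^s \leq \Ho^s$ for $s\geq 0$, with the convention $\Ho = -\Delta_x + \dd\Gamma(\epsilon) + 1$, it suffices to bound $\|\cN_+^{t-s}\Ho^s K_{N,W}^{-t}\|$ and $\|\Ho^{1/2+s} K_{N,W}^{-1/2-s}\|$.

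For the first inequality, the key observation is that $\cN_+$ commutes with $\Ho$, so the pointwise bound $a^{t-s}b^s \leq (a+b)^t \leq a^t + b^t$ (valid for $0\leq s\leq t\leq 1$ by subadditivity of $x\mapsto x^t$) gives the operator inequality $\cN_+^{2(t-s)}\Ho^{2s} \lesssim \cN_+^{2t} + \Ho^{2t}$. Applied to the $(1-G_{N,W})$-summand this yields
\begin{equation*}
 \|\cN_+^{t-s}\Ho^s (1-G_{N,W}) K_{N,W}^{-t}\Psi\|^2 \lesssim \|\cN_+^t (1-G_{N,W})K_{N,W}^{-t}\Psi\|^2 + \|\Ho^t (1-G_{N,W})K_{N,W}^{-t}\Psi\|^2.
\end{equation*}
The $\Ho^t$-term is $O(1)$ by \eqref{eq:K_W control} combined with operator monotonicity $\Ho^t \leq (\Ho+\W)^t$ for $t\leq 1$, and the $\cN_+^t$-term follows from \eqref{eq:bound_cN-K_W} after using the shift identity $\cN_+^t G_{N,W} = G_{N,W}(\cN_++2)^t$ and the boundedness of $G_{N,W}$ from \eqref{eq:lem:G_W_bound_1} at $s=0$.

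For the $G_{N,W}$-summand, the same shift identity together with the commutativity of $\cN_+$ and $\Ho$ gives
\begin{equation*}
 \cN_+^{t-s}\Ho^s G_{N,W}K_{N,W}^{-t} = \bigl(\Ho^s G_{N,W}(\cN_++2)^{-s}\bigr)\bigl((\cN_++2)^t K_{N,W}^{-t}\bigr),
\end{equation*}
where the second factor is bounded by \eqref{eq:bound_cN-K_W}. For the first factor, passing to the adjoint and exploiting that $G^*_{N,W}$ annihilates states with fewer than two excitations leads to $(\cN_++2)^{-s}G^*_{N,W} = G^*_{N,W}\cN_+^{-s}$ on its range, where $\cN_+^{-s} \lesssim (\cN_++1)^{-s}$. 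This reduces the estimate to \eqref{eq:lem:G_W_bound_1}, which is valid precisely for $s<1/2$.

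The second inequality \eqref{eq:cor_K_control3} follows from the same decomposition applied to $\Ho^{1/2+s}K_{N,W}^{-1/2-s}$, with no $\cN_+$-factor. The $(1-G_{N,W})$-piece remains uniformly bounded via \eqref{eq:K_W control}, which still applies since $1/2+s\leq 5/8\leq 1$, while the $G_{N,W}$-piece now picks up the factor $N^{s+\eps}$ from \eqref{eq:lem:G_W_bound_2}. The main technical hurdle throughout is the careful bookkeeping of particle-number shifts and the verification that operator-monotonic inequalities apply in the correct exponent ranges; once this is done, everything reduces to the already-proved bounds of Lemmas~\ref{lem:G_W_bound} and~\ref{lem:K_W-props}.
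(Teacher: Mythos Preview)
Your overall strategy of splitting via $1=(1-G_{N,W})+G_{N,W}$ and reducing to Lemmas~\ref{lem:G_W_bound},~\ref{lem:K_W-props} is the same as the paper's, and your treatment of the $G_{N,W}$-piece is correct. However, your handling of the $(1-G_{N,W})$-piece contains a genuine gap that affects both inequalities, stemming from a misuse of operator monotonicity.

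The operators $\Ho=-\Delta_x+\ud\Gamma(\epsilon)+1$ and $\W=\ud\Gamma(W_{N,x})$ do \emph{not} commute (because $-\Delta_x$ does not commute with the $x$-dependent multiplication operator $W_{N,x}$). L\"owner--Heinz gives $\Ho^{r}\le(\Ho+\W)^{r}$ for $0\le r\le1$, which yields only $\|\Ho^{r/2}\phi\|\le\|(\Ho+\W)^{r/2}\phi\|$. To conclude $\|\Ho^{r}\phi\|\le\|(\Ho+\W)^{r}\phi\|$ you would need $\Ho^{2r}\le(\Ho+\W)^{2r}$, i.e.\ $r\le1/2$. In your first inequality, the Young splitting produces a term $\|\Ho^{t}(1-G_{N,W})K_{N,W}^{-t}\Psi\|$ which, for $1/2<t\le1$, you cannot bound by~\eqref{eq:K_W control} in this way. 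In the second inequality the same issue appears with $r=1/2+s>1/2$: your claim that the $(1-G_{N,W})$-piece is uniformly bounded is not justified, and indeed the paper shows it is only $\lesssim N^{s}$ (still compatible with the final bound $N^{s+\eps}$, but requiring a separate argument).

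The paper avoids this by \emph{first} replacing $\Ho^{s}$ by $(\Ho+\W)^{s}$ (legitimate since $s<1/2$, and since $\cN_+$ commutes with both $\Ho$ and $\W$), and then using $\cN_+^{t-s}(\Ho+\W)^{s}\le(\Ho+\W)^{t}$ via the joint functional calculus for the commuting pair $(\cN_+,\Ho+\W)$ together with $\cN_+\le\Ho+\W$. This never requires comparing $\Ho^{r}$ to $(\Ho+\W)^{r}$ for $r>1/2$. For the second inequality, the paper bounds $\|\Ho^{1/2+s}(\Ho+\W)^{-1/2-s}\|$ explicitly via a resolvent-type identity and~\eqref{eq:bound_WN_onebody}, obtaining $\lesssim N^{s}$ (the restriction $s\le1/8$ ensures the relevant $L^{p}$-norm of $W_N$ is finite).
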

\begin{proof}
Let us denote $\Ho = -\Delta_x+\ud\Gamma(\epsilon)+1$ and $\W = \dd\Gamma(W_{N,x})$. Using Lemma~\ref{lem:G_W_bound}, 
 we obtain with  $\cN_+ G_{N,0} = G_{N,0}(\cN_++2)$
\begin{align}
  &\|\cN_+^{t-s}\Ho^{s}K_{N,W}^{-t}\| \nn \\
  &\quad \lesssim \|\cN_+^{t-s}(\Ho+\W+1)^{s}(1-G_{N,W})K_{N,W}^{-t}\| + \|\cN_+^{t-s}(\Ho+\W)^{s}G_{N,W}K_{N,W}^{-t}\| \nn \\
  &\quad \lesssim \|(\Ho+\W+1)^{t}(1-G_{N,W})K_{N,W}^{-t}\| +  C_\kappa \|(\cN_++2)^{t}K_{N,W}^{-t}\|. 
\end{align}
We also used that $\W \geq 0$ and that $s < 1/2$. The first bound thus follows from~\eqref{eq:K_W control} and~\eqref{eq:bound_cN-K_W}.
To prove the second bound, the same calculation using~\eqref{eq:lem:G_W_bound_2} yields
\begin{multline}
 \|\Ho^{1/2+s} K_{N,W}^{-1/2-s} \| \\ \lesssim \|\Ho^{1/2+s}(1-G_{N,W})K_{N,W}^{-1/2-s}\|
    +  C N^{s+\eps} \|(\cN_++1)^{1/2+s}K_{N,W}^{-1/2-s}\|.
\end{multline}
The second term on the right hand side is bounded using~\eqref{eq:bound_cN-K_W}. For the first term, we use~\eqref{eq:K_W control} to obtain
\begin{align}
 &\|\Ho^{1/2+s}(1-G_{N,W})K_{N,W}^{-1/2-s}\|\lesssim \| \Ho^{1/2+s} (\Ho+\W)^{-1/2-s}\| \nn \\
 &\lesssim \| \Ho^{-1/2+s} (\Ho+\W)^{1/2-s}\| + \| \Ho^{s-1/2}\W (\Ho+\W)^{-1/2-s}\| \nn \\
 & \lesssim 1 + \|W_{N}\|_{3/2}^{1/2}\|W_N\|_{3/(2-4s)}^{1/2}\lesssim N^{s},
\end{align}
where we used again that $\Ho^{t} (\Ho+\W)^{-t}$ is uniformly bounded for $-1/2\leq t\leq 1/2$, and~\eqref{eq:bound_WN-Laplace}. The restriction $s\leq 1/8$ means that $3/(2-4s)\leq 2$, so the corresponding norm of $W_N$ is finite by hypothesis.
\end{proof}

\begin{lem}\label{lem:T_W-diff}
For all $0<s<1/4$ there are $C$, $(C_{\kappa})_{\kappa>0}$ with  $C_\kappa \to 0$ as $\kappa \to \infty$ such that for all $\kappa>0$, and sufficiently large $N\in \N$, it holds that for all $\Psi\in D(\cN_+^{1/2-s/2}(-\Delta_x + \ud \Gamma(\epsilon))^{s})$
\begin{align} 
\nn
\big|\langle \Psi,T_{N,W} \Psi \rangle\big| \leq C_\kappa  \big\| (-\Delta_x+\dd\Gamma(\epsilon))^s \cN_+^{1/2-s}\Psi\big\|^2
\end{align}
and
\begin{align} 
\nn
\big|\langle \Psi, (T_{N,W}-T_{N}) \Psi \rangle\big| &\leq C  N^{-(1-\alpha)s}\big\|\cN_+^{(1-s)/2}(-\Delta_x +\ud \Gamma(\epsilon))^s\Psi\big\|^2.
\end{align}
\end{lem}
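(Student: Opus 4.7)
The approach is to compare $T_{N,W}$ with $T_N$ (which is analyzed in Proposition~\ref{prop:T_N_relative}) through a resolvent identity. Setting $\mathbb{H}_0 = -\Delta_x+\ud\Gamma(\epsilon)+1$, $\mathbb{W} = \ud\Gamma(W_{N,x})$, $g_x = \ui\nabla v_{N,x}$, $M = N^{1-\alpha}$, and writing $\1_{\leq M},\,\1_{>M}$ for the spectral projectors of $\cN_+$, the resolvent identity $(\mathbb{H}_0+\mathbb{W})^{-1} = \mathbb{H}_0^{-1} - \mathbb{H}_0^{-1}\mathbb{W}(\mathbb{H}_0+\mathbb{W})^{-1}$, combined with $[a(g_x)^2\mathbb{H}_0^{-1}a^*(g_x)^2,\cN_+]=0$ and the fact that the cutoff is trivial on the Fock vacuum $\varnothing$, yields
\begin{align*}
T_{N,W} - T_N = \1_{>M}\,a(g_x)^2\mathbb{H}_0^{-1}a^*(g_x)^2\,\1_{>M} + \1_{\leq M}\,B\,\1_{\leq M} - \langle \varnothing,B\varnothing\rangle,
\end{align*}
where $B := a(g_x)^2\mathbb{H}_0^{-1}\mathbb{W}(\mathbb{H}_0+\mathbb{W})^{-1}a^*(g_x)^2$. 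The vacuum subtraction precisely equals $E_N - E_{N,W}$ by the scalar version of the same identity.

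The cutoff term is controlled by a quadratic-form variant of the Cauchy-Schwarz used in the $\Theta_{N,2}$ bound of Lemma~\ref{lem:T_N bound}, placing the weights symmetrically across bra and ket so as to bound the form by $\|\cN_+^{(1-s)/2}\ud\Gamma(\epsilon)^s \1_{>M}\Psi\|^2$. The prefactor $N^{-(1-\alpha)s} = M^{-s}$ is extracted via $n^{1-s}\1_{n>M}\leq M^{-s}\,n$ on each $n$-particle sector, while the remaining factor of $\cN_+$ is absorbed into $(-\Delta_x+\ud\Gamma(\epsilon))^s \cN_+^{(1-s)/2}$ using $\cN_+\lesssim -\Delta_x + \ud\Gamma(\epsilon)$ on $\mathscr{H}_+$ and interpolation. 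For the middle term, splitting $\mathbb{W} = \mathbb{W}^{1/2}\mathbb{W}^{1/2}$ and applying Cauchy-Schwarz gives
\begin{align*}
|\langle \Psi,\1_{\leq M}B\1_{\leq M}\Psi\rangle| \leq \|\mathbb{W}^{1/2}\mathbb{H}_0^{-1}a^*(g_x)^2\1_{\leq M}\Psi\|\cdot\|\mathbb{W}^{1/2}(\mathbb{H}_0+\mathbb{W})^{-1}a^*(g_x)^2\1_{\leq M}\Psi\|,
\end{align*}
and each factor is estimated via the Sobolev-type inequality $\mathbb{W} \lesssim N^{1-t}\cN_+^{(1-t)_+}\ud\Gamma(\epsilon)^t$ from \eqref{eq:bound_WN-Laplace}, with $t$ chosen close to one. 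Exploiting $\cN_+\leq M$ on the support of $\1_{\leq M}$ to trade excess factors of $N^{1-t}$ for $M^{-s}$ produces the claimed rate $N^{-(1-\alpha)s}$ while keeping the weight $\cN_+^{(1-s)/2}(-\Delta_x+\ud\Gamma(\epsilon))^s$; the scalar subtraction $\langle \varnothing, B\varnothing\rangle$ compensates for what would otherwise be a logarithmic divergence as $\kappa\to\infty$.

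The first bound on $T_{N,W}$ alone follows from writing $T_{N,W} = T_N + (T_{N,W}-T_N)$: a symmetric Cauchy-Schwarz refinement of Lemma~\ref{lem:T_N bound} produces $|\langle \Psi, T_N\Psi\rangle| \leq C_\kappa\|(-\Delta_x+\ud\Gamma(\epsilon))^s\cN_+^{1/2-s}\Psi\|^2$, while for $N$ sufficiently large the difference term is absorbed via the elementary comparison $\cN_+^{1/2-s}\leq\cN_+^{(1-s)/2}$ together with the $N^{-(1-\alpha)s}$ prefactor. The main technical obstacle will be to balance the two sides of $B$ in the Cauchy-Schwarz step so as to attain simultaneously the sharp rate $N^{-(1-\alpha)s}$ and the sharp weight $\cN_+^{(1-s)/2}$; this requires careful coordination between the interpolation parameter $t$ in the Sobolev bound on $\mathbb{W}$ and the particle-number cutoff at $M = N^{1-\alpha}$.
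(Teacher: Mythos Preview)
There is a genuine gap in your treatment of the term $\1_{\leq M}B\1_{\leq M} - \langle\varnothing,B\varnothing\rangle$. Your Cauchy-Schwarz step bounds $|\langle\Psi,\1_{\leq M}B\1_{\leq M}\Psi\rangle|$ and then invokes the scalar subtraction only as a vague compensation. But this cannot work: by the very resolvent identity you use, $\langle\varnothing,B\varnothing\rangle = E_{N,W}-E_N$, and the paper shows (in the proof of Theorem~\ref{thm:asymptotic}) that this quantity is of order one, not $o(1)$, as $N\to\infty$. Hence for any $\Psi$ not orthogonal to the vacuum you are trying to bound a difference of two order-one quantities by $N^{-(1-\alpha)s}\|\cN_+^{(1-s)/2}(-\Delta_x+\ud\Gamma(\epsilon))^s\Psi\|^2$, and your Cauchy-Schwarz estimate, which treats them separately, cannot see the cancellation. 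Concretely, on the vacuum itself your Cauchy-Schwarz bound for $\1_{\leq M}B\1_{\leq M}$ yields $|E_{N,W}-E_N|\sim 1$, while the right-hand side of the claimed inequality vanishes because $\cN_+^{(1-s)/2}\varnothing=0$; so the bound you obtain for the full $B$ term cannot possibly have the form you want.

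The paper's proof handles this by first passing to the normal-ordered decomposition $T_{N,W}=\Theta_{N,W,0}+\Theta_{N,W,1}+\Theta_{N,W,2}$ on each $n$-particle sector. The subtraction of $E_{N,W}$ is built into $\Theta_{N,W,0}$, which becomes a function of $\ui\nabla_x$ and $\ud\Gamma(\epsilon)$; the resolvent difference is then expanded not as $\mathbb{H}_0^{-1}\mathbb{W}(\mathbb{H}_0+\mathbb{W})^{-1}$ globally, but by first reducing the $\mathbb{W}$ inside the resolvent to the potentials acting on the two freshly created particles (the rest being controlled via the regularity of $\Psi^{(n)}$ in the spectator variables), and only then exposing the cancellation against the vacuum term. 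For $\Theta_{N,W,1}$ and $\Theta_{N,W,2}$ the paper uses carefully chosen iterated resolvent identities so that in each summand the variable carrying $W_N$ sits next to a resolvent that commutes with a derivative in a variable where $\Psi^{(n)}$ has regularity; the symmetry of $\Psi^{(n)}$ is then exploited to absorb the factors of $n$. Your global Cauchy-Schwarz destroys exactly this structure. To salvage your approach you would at minimum have to normal-order $B$ before estimating, which brings you back to the paper's decomposition.
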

It is important to note that this lemma together with Lemma~\ref{lem:theta_cv} implies that $\lim_{N\to\infty} \langle\Psi,T_{N,W}\Psi\rangle=\langle\Psi,T_\infty\Psi\rangle$, which is crucial for the fact that $\ud \Gamma(W_{N,x})$ does not change the excitation spectrum. It only contributes a scalar term of order one which is $\lim_{N\to\infty} (E_{N,W}-E_{N})$. The fact that $\ud \Gamma(W_{N,x})$ does contribute at order one in this way makes the proof of Lemma~\ref{lem:T_W-diff} rather subtle. This proof is given in Section~\ref{sect:T_NW} below.

\begin{lem}\label{lem:R_N_bound}
 For  $3/8<s\leq 1/2$, $|t| \leq 1/2 - s$ and $\delta>0$, there exists $\kappa_0$, so that for all $\kappa\geq \kappa_0$ there is $C$ with
 \begin{subequations}
  \begin{align}\label{eq:lem:R_N_bound_1}
    |\langle \Phi, R_N\Psi\rangle|
    & \leq\delta \left\|\left\{(-\Delta_x+\ud\Gamma(\epsilon))^{s}(\cN_++1)^{1/2- s + t} + C\right\}\Phi\right\| \nn \\
    &\quad \times  \left\|\left\{(-\Delta_x+\ud\Gamma(\epsilon))^{s}(\cN_++1)^{1/2-s -t}+ C\right\}\Psi\right\|.
    \end{align}
 for all  $N\in \N\cup\{\infty\}$, $\Phi, \Psi\in D((-\Delta_x+\ud\Gamma(\epsilon))^{s}\cN_+^{1/2+t})$.
 Moreover, for $3/8<s< 1/2$ there is $C$ so that for all $N\in \N$, $\kappa>0$
\begin{equation}
  \pm (R_N - R_\infty)
	\leq C N^{-(4s-3/2)\alpha} (-\Delta_x+\ud\Gamma(\epsilon))^{2 s} (\cN_++1)^{1-2s}. \label{eq:lem:R_N_bound_2}
\end{equation}
\end{subequations}
\end{lem}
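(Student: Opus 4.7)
The plan is to bound the three types of terms comprising $R_N$ separately. The first, $2a^*(\ui\nabla v_{N,x})a(\ui\nabla v_{N,x})$, is positive semi-definite and will yield the small prefactor $\delta$ after controlling the $\ell^2$-norm of the form factor. The second group, $-2a^*(\ui\nabla v_{N,x})\ui\nabla_x - 2\ui\nabla_x a(\ui\nabla v_{N,x})$, couples the impurity gradient to boson creation/annihilation and is the main technical obstacle. The third, $a^*(w_x^\kappa) + a(w_x^\kappa)$, is a bounded linear term with form factor supported on $|p|\leq\kappa$ and will be absorbed into the additive constants $+C$ on the right-hand side.

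For the positive quadratic term I will apply the weighted Cauchy--Schwarz inequality $a^*(f)a(f) \leq \left(\sum_k |\widehat f(k)|^2/\omega(k)\right)\,\ud\Gamma(\omega)$ with weight $\omega(k) = \epsilon(k)^{2s}$. Since $|\widehat{\ui\nabla v_N}(k)|\lesssim |k|^{-1}$ uniformly on $|k|>\kappa$ by~\eqref{eq:v_N-bound}, the resulting prefactor is $\sum_{|k|>\kappa}|k|^{-2-4s}\lesssim\kappa^{1-4s}$, which tends to zero as $\kappa\to\infty$ whenever $s>1/4$, guaranteed by $s>3/8$. Combining with the H\"older-type bound $\ud\Gamma(\epsilon^{2s}) \leq (\cN_++1)^{1-2s}\ud\Gamma(\epsilon)^{2s}$ on particle sectors (valid since $2s\leq 1$) and $\ud\Gamma(\epsilon)\leq -\Delta_x+\ud\Gamma(\epsilon)$, one obtains the operator inequality $a^*(\ui\nabla v_{N,x})a(\ui\nabla v_{N,x}) \leq C_\kappa (-\Delta_x+\ud\Gamma(\epsilon))^{2s}(\cN_++1)^{1-2s}$ with $C_\kappa\to 0$. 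Since this operator commutes with $\cN_+$, a block-diagonal Cauchy--Schwarz argument in each particle sector redistributes the factor $(\cN_++1)^{1-2s}=(\cN_++1)^{1/2-s+t}(\cN_++1)^{1/2-s-t}$ asymmetrically between $\Phi$ and $\Psi$, producing the desired bilinear estimate.

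The gradient-coupling term is treated by $|\langle \Phi, a^*(\ui\nabla v_{N,x})\ui\nabla_x\Psi\rangle|\leq \|a(\ui\nabla v_{N,x})\Phi\|\|\nabla_x\Psi\|$. The first factor is bounded as above, producing $C_\kappa^{1/2}\|(-\Delta_x+\ud\Gamma(\epsilon))^s(\cN_++1)^{(1-2s)/2}\Phi\|$; the second satisfies $\|\nabla_x\Psi\|\leq\|(-\Delta_x+\ud\Gamma(\epsilon))^{1/2}\Psi\|$. To match the target bilinear form with its $t$-asymmetry, I will re-group the kinetic power by writing $(-\Delta_x+\ud\Gamma(\epsilon))^{1/2} = (-\Delta_x+\ud\Gamma(\epsilon))^s\cdot(-\Delta_x+\ud\Gamma(\epsilon))^{1/2-s}$ and absorbing the excess $(-\Delta_x+\ud\Gamma(\epsilon))^{1/2-s}$ into the $(\cN_++1)^{1/2-s-t}$ factor on the $\Psi$-side via a particle-sector argument that accounts for the $\cN_+\mapsto\cN_++1$ shift induced by $a^*$. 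The lower bound $s>3/8$ arises precisely here: the interpolation across the admissible range $|t|\leq 1/2-s$ requires $2s>3/4$. The Weyl linear term is handled by $|\langle\Phi, a^{\#}(w_x^\kappa)\Psi\rangle|\leq C(\kappa)\|\Phi\|\|(\cN_++1)^{1/2}\Psi\|$, then absorbing $\|(\cN_++1)^{1/2}\Psi\|$ into the ``$+C$''-term on the right via $\cN_+\leq C(-\Delta_x+\ud\Gamma(\epsilon))$ on $\mathscr{H}_+$. Finally, the convergence bound~\eqref{eq:lem:R_N_bound_2} follows by repeating the whole analysis with $\ui\nabla v_N$ replaced by $\ui\nabla(v_N-v_\infty)$, whose Fourier coefficients vanish on $|k|\leq N^\alpha$ and still satisfy $|\widehat{\ui\nabla(v_N-v_\infty)}(k)|\lesssim |k|^{-1}$; the prefactor becomes $\sum_{|k|>N^\alpha}|k|^{-2-4s}\lesssim N^{(1-4s)\alpha}$, and the additional interpolation required to pair the $\nabla_x$-factor against the target form loses a further $N^{\alpha/2}$, producing the stated rate $N^{-(4s-3/2)\alpha}$.
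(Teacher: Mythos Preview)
Your treatment of the $a^*a$ term and of $a^{\#}(w_x^\kappa)$ is essentially the paper's, and is fine. The gap is in the gradient--coupling term $a^*(\ui\nabla v_{N,x})\ui\nabla_x + \hc$.

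You propose to Cauchy--Schwarz immediately and then bound $\|\nabla_x\Psi\|\le\|(-\Delta_x+\ud\Gamma(\epsilon))^{1/2}\Psi\|$, after which you claim the excess factor $(-\Delta_x+\ud\Gamma(\epsilon))^{1/2-s}$ can be ``absorbed into $(\cN_++1)^{1/2-s-t}$ via a particle-sector argument.'' That step is impossible: on any fixed particle sector $(\cN_++1)^{1/2-s-t}$ is a constant, while $(-\Delta_x+\ud\Gamma(\epsilon))^{1/2-s}$ is unbounded (take a single boson of momentum $p\to\infty$). No sector-wise argument can trade kinetic energy for particle number. Your sentence explaining why $s>3/8$ enters is accordingly hand-waving; nothing in the argument you wrote actually uses $s>3/8$.

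What the paper does instead is split the impurity gradient \emph{before} applying Cauchy--Schwarz: one writes
\[
|\langle\Phi,\ui\nabla_x a(\nabla v_{N,x})\Psi\rangle|\le \||\nabla_x|^{2s}\Phi\|\,\||\nabla_x|^{1-2s}a(\nabla v_{N,x})\Psi\|,
\]
so the $\Phi$-factor is already of the target form. Then $|\nabla_x|^{1-2s}$ is commuted past $a(\nabla v_{N,x})$ using $\ui\nabla_x e^{-\ui kx}=e^{-\ui kx}(\ui\nabla_x+k)$ and the elementary inequality $\big||p-k|^{1-2s}-|p|^{1-2s}\big|\le |k|^{1-2s}$. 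The commuted term $a(\nabla v_{N,x})|\nabla_x|^{1-2s}$ is estimated with weight $\epsilon^{-a}$, $a=2s-1/2$, producing $\|(-\Delta_x+\ud\Gamma(\epsilon))^{s}\cN_+^{1-2s}\Psi\|$; the commutator produces $\|a(|\nabla|^{2-2s}v_{N,x})\Psi\|$, controlled by $\|\epsilon^{-s}|\nabla|^{2-2s}v_N\|_2\|\ud\Gamma(\epsilon)^s\cN_+^{1/2-s}\Psi\|$. Both prefactors involve $\sum_{|k|>\kappa}|k|^{-8s}$, which is finite iff $s>3/8$ --- \emph{that} is the origin of the constraint. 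For the difference $R_N-R_\infty$ the same norms with $v_N-v_\infty$ (supported on $|k|>N^\alpha$) give directly $(\sum_{|k|>N^\alpha}|k|^{-8s})^{1/2}\lesssim N^{-(4s-3/2)\alpha}$, with no extra ``loss of $N^{\alpha/2}$'' step.
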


The proof of this Lemma is given in Section~\ref{sect:R_N} below.
These estimates allow us to define a self-adjoint realization of $K_{N,W} + T_{N',W} + R_{N'}$.

\begin{lem}\label{lem:R_N_pert}
For $\kappa>0$, $N'\in \N\cup\{\infty\}$ sufficiently large and all $N\in \N\cup\{\infty\}$ the quadratic form
 \begin{equation*} \nn
  \langle \Psi, (K_{N,W} +T_{N',W}) \Psi\rangle + \langle \Psi, R_{N'}\Psi\rangle
 \end{equation*}
 defines a unique self-adjoint operator $K_{N,W} +T_{N',W}+R_{N'}$ with domain contained in $Q(K_{N,W})$.

 Moreover, for $3/8 <  t < 5/8$ there exists a constant $C$ so that for all $N\in \N\cup\{\infty\}$ and $N'\in \N\cup\{\infty\}$ sufficiently large
 \begin{equation*} 
\| K_{N,W}^t(K_{N,W}+T_{N',W}+R_{N'}+ \ui)^{-1}K_{N,W}^{1-t} \| \leq C.
 \end{equation*}
\end{lem}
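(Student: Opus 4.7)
The plan for self-adjointness is to apply the KLMN theorem. Polarizing Lemma~\ref{lem:T_W-diff} and combining with Lemma~\ref{lem:R_N_bound} at $t=0$, the quadratic forms of $T_{N',W}$ and $R_{N'}$ are dominated by multiples of $\|((-\Delta_x+\ud\Gamma(\epsilon))^s\cN_+^{1/2-s}+C)\Psi\|^2$ for some $s\in(\tfrac38,\tfrac12)$, with prefactor going to zero as $\kappa\to\infty$. Corollary~\ref{cor:K_control_kin} (first statement with $t_c=1$, $s_c=s<\tfrac12$) controls this quantity by $\langle\Psi,K_{N,W}\Psi\rangle+C'\|\Psi\|^2$, yielding a KLMN relative form bound of $T_{N',W}+R_{N'}$ by $K_{N,W}$ with arbitrarily small constant. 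The KLMN theorem then produces the unique self-adjoint realization of $K_{N,W}+T_{N',W}+R_{N'}$ with form domain $Q(K_{N,W})$, and $L:=K_{N,W}+T_{N',W}+R_{N'}+\ui$ has a bounded resolvent.

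For the interpolation bound, I would use the factorization
\begin{equation*}
L = K_{N,W}^{1-t}\bigl(1+B_t\bigr)K_{N,W}^{t},\qquad B_t := K_{N,W}^{-(1-t)}(T_{N',W}+R_{N'}+\ui)K_{N,W}^{-t},
\end{equation*}
which gives the identity $K_{N,W}^t L^{-1}K_{N,W}^{1-t}=(1+B_t)^{-1}$. Since $K_{N,W}$ has compact resolvent (Lemma~\ref{lem:K_W-props}), any bounded $B_t$ will be automatically compact, so $1+B_t$ is Fredholm of index zero; its invertibility follows from that of $L$, and $(1+B_t)^{-1}$ is then bounded. The whole argument thus reduces to showing that $B_t$ is a bounded operator on $\mathscr{H}_+$.

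For the $R_{N'}$-part of $B_t$, I apply Lemma~\ref{lem:R_N_bound} with shift parameter $\tau=\tfrac12-t\in(-\tfrac18,\tfrac18)$ and $s\in(\tfrac38,\tfrac12-|\tau|)$; this interval is non-empty precisely for $t\in(\tfrac38,\tfrac58)$. The factors $A_\pm:=(-\Delta_x+\ud\Gamma(\epsilon))^s\cN_+^{1/2-s\pm\tau}+C$ satisfy $\|A_+ K_{N,W}^{-(1/2+\tau)}\|\lesssim 1$ and $\|A_- K_{N,W}^{-(1/2-\tau)}\|\lesssim 1$ by Corollary~\ref{cor:K_control_kin}, the exponents $t_c=\tfrac12\pm\tau$ lying in the admissible range $[s,1]$ exactly because $|\tau|\leq\tfrac12-s$. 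Combined with the Cauchy--Schwarz bound of Lemma~\ref{lem:R_N_bound}, this yields $\|K_{N,W}^{-(1-t)}R_{N'}K_{N,W}^{-t}\|\lesssim 1$. The $\ui$-contribution is harmless since $\|\ui K_{N,W}^{-1}\|\leq 1$. For the $T_{N',W}$-part I would decompose $T_{N',W}$ into its normal-ordered components $\Theta_{N,W,j}$ as in Section~\ref{sect:W=0}, now with $\ud\Gamma(W_{N,x})$ in every intermediate resolvent, and prove the asymmetric bound $\|K_{N,W}^{-(1-t)}\Theta_{N,W,j}K_{N,W}^{-t}\|\lesssim 1$ via the kernel estimates of the type of Lemma~\ref{lem:T_N bound} together with Corollary~\ref{cor:K_control_kin}, invoking duality ($\Theta_j^*$ swapping the roles of $t$ and $1-t$) when $t<\tfrac12$.

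The main obstacle is this last step: Lemma~\ref{lem:T_W-diff} alone provides only a symmetric form bound, which by Cauchy--Schwarz translates into $\|AK_{N,W}^{-r}\|$-estimates (with $A=(-\Delta_x+\ud\Gamma(\epsilon))^s\cN_+^{1/2-s}$) that are available only for $r\geq\tfrac12$, so a direct polarization argument handles only $t=\tfrac12$. The required asymmetric bound on $T_{N',W}$ must be obtained by repeating the $\Theta$-decomposition analysis that underlies Proposition~\ref{prop:T_N_relative}, now with $\ud\Gamma(W_{N,x})$ present in every resolvent; this is the technical core of the argument. Once it is in place, the restrictive window $\tfrac38<t<\tfrac58$ is dictated purely by the parameter constraints $\tfrac38<s\leq\tfrac12$ and $|\tau|\leq\tfrac12-s$ of Lemma~\ref{lem:R_N_bound}, while the $T$-contribution is uniform in $t\in[0,1]$ and does not further restrict this range.
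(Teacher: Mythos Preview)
Your KLMN argument for self-adjointness is correct and matches the paper exactly, as does your treatment of the $R_{N'}$-contribution via the shift parameter in Lemma~\ref{lem:R_N_bound}. The overall strategy of reducing the resolvent bound to an estimate on $K_{N,W}^{-(1-t)}(T_{N',W}+R_{N'})K_{N,W}^{-t}$ is also what the paper does, though the paper packages it as Kato--Rellich applied to the $(K_{N,W}+C)^{1-t}$-conjugated operator rather than your factorization $L=K^{1-t}(1+B_t)K^t$.

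There is one genuine gap in your argument. Your claim that ``since $K_{N,W}$ has compact resolvent, any bounded $B_t$ will be automatically compact'' is false: compactness of $K_{N,W}^{-1}$ does not imply compactness of $K_{N,W}^{-(1-t)}(T+R)K_{N,W}^{-t}$ without knowing that $TK_{N,W}^{-t}$ or $K_{N,W}^{-(1-t)}T$ is itself bounded, which is precisely what is at stake. The Fredholm route is therefore circular. The fix is immediate: once you show $\|K_{N,W}^{-(1-t)}(T_{N',W}+R_{N'})K_{N,W}^{-t}\|<1$ (for $\kappa$, $N'$ large), the Neumann series gives $(1+B_t)^{-1}$ bounded directly --- this is exactly the content of the paper's Kato--Rellich step.

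On the $T_{N',W}$-part you have put your finger on a real subtlety. The paper's proof simply asserts that $\|(K_{N,W}+C)^{t-1}T_{N',W}(K_{N,W}+C)^{-t}\|<1$ follows from Lemma~\ref{lem:T_W-diff}, but as you correctly observe, the \emph{statement} of that lemma gives only a symmetric form bound, and polarization combined with Corollary~\ref{cor:K_control_kin} yields the asymmetric estimate only at $t=\tfrac12$. The paper is relying implicitly on the fact that the \emph{proofs} of Lemmas~\ref{lem:T_N bound} and~\ref{lem:T_W-diff} actually establish bilinear estimates $|\langle\Phi,\Theta_{N,j}\Psi\rangle|$ with different weights on $\Phi$ and $\Psi$, not merely diagonal ones. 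Your plan to revisit the $\Theta$-decomposition is therefore the honest way to fill this in; you do not need to redo everything with $\ud\Gamma(W_{N,x})$ in the resolvent, since you can split $T_{N',W}=T_{N'}+(T_{N',W}-T_{N'})$, use the genuine operator bounds of Lemma~\ref{lem:T_N bound} for $T_{N'}$, and adapt the estimates in the proof of Lemma~\ref{lem:T_W-diff} to bilinear form for the difference. Your assessment that this is ``the technical core'' is accurate, and the paper does not spell it out either.
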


\begin{proof}
 Recall that by Corollary \ref{cor:K_control_kin} we can bound uniformly the quadratic forms $(-\Delta+\dd\Gamma(\epsilon))^s\cN_+^{1-s}$ by $K_{N,W}$ for all $0\leq s < 1$. Therefore, it follows from Lemmas \ref{lem:T_W-diff} and \ref{lem:R_N_bound} that $T_{N',W}$ and  $R_{N'}$ are form bounded by $K_{N,W}$, uniformly in $N,N'$. The same lemmas show that for $\kappa$ and $N'$ large enough the relative constant can be chosen smaller than one. This allows us to apply the KLMN theorem \cite[Theorem X.17]{ReeSim2}: the quadratic form $K_{N,W}+T_{N',W}+R_{N'}$ then defines a unique self-adjoint operator with domain contained in $Q(K_{N,W})$.
This also implies the resolvent bound for $t=1/2$.
Moreover, Lemmas~\ref{lem:R_N_bound} and \ref{lem:T_W-diff} imply that
\begin{equation}
 \begin{aligned}
  \|(K_{N,W}+C)^{t-1} R_{N'} (K_{N,W}+C)^{-t}\| <1\\
  \|(K_{N,W}+C)^{t-1} T_{N',W}(K_{N,W}+C)^{-t}\|<1
 \end{aligned}
\end{equation}
for $3/8<t<5/8$ and $\kappa$, $C$, $N'$ sufficiently large. This implies the bound in the general case by application of the Kato-Rellich Theorem to $K_{N,W}$ and $(K_{N,W}+C)^{t-1} R_{N'} (K_{N,W}+C)^{1-t}$, $(K_{N,W}+C)^{t-1} T_{N',W}(K_{N,W}+C)^{1-t}$.
\end{proof}

We can now prove Proposition \ref{prop:HsingW}.
\begin{proof}[Proof of Proposition \ref{prop:HsingW}.]
Recall that $\cH_N^U - E_{N,W}+1=  K_{N,W} +T_{N,W} + R_N$, hence the resolvent identity gives
\begin{subequations}
 \begin{align}
& (\cH_N^U - E_{N,W}+1\pm \ui  )^{-1} - (K_\infty + T_\infty + R_{\infty} \pm \ui )^{-1} \nn \\
  &=  ( K_{N,W} +T_{N,W} + R_N \pm \ui)^{-1} (T_\infty + R_\infty -T_{N,W} - R_{N}) \nn \\
  &\qquad \qquad \qquad\qquad\qquad\qquad\qquad\qquad  \times ( K_{N,W} + T_\infty + R_{\infty} \pm \ui )^{-1} \label{eq:resolv_1} \\
  &\quad + ( K_{N,W} + T_\infty + R_{\infty} \pm \ui )^{-1} (K_{N} - K_{N,W})  \nn \\
  &\qquad \qquad \qquad\qquad\qquad\qquad\qquad\qquad \times (K_{N} + T_\infty + R_{\infty}\pm  \ui )^{-1} \label{eq:resolv_2} \\
  &\quad +  (K_{N} + T_\infty + R_{\infty}\pm  \ui )^{-1} (K_{\infty} - K_{N})  \nn \\
  &\qquad \qquad \qquad\qquad\qquad\qquad\qquad\qquad  \times (K_{\infty} + T_\infty + R_{\infty}\pm  \ui )^{-1}. \label{eq:resolv_3}
\end{align}
\end{subequations}
Using the bound from Lemma \ref{lem:R_N_pert}, the resolvents can absorb factors of $K_{N,W}^{t}$ for $t<5/8$. Therefore we will obtain the convergence from the estimates on $T_{N,W}-T_{N}$ in Lemma~\ref{lem:T_W-diff}, on $T_{N}-T_\infty$ in Lemma~\ref{lem:theta_cv} and on $R_N - R_\infty$ in Lemma~\ref{lem:R_N_bound}, together with the comparison estimates between $(-\Delta_x+\dd\Gamma(\epsilon))^{s}\cN_+^{t-s}$ and $K_{N,W}^{-t}$ in Corollary~\ref{cor:K_control_kin}.
Explicitly, we obtain from Lemma~\ref{lem:T_W-diff}, Lemma~\ref{lem:theta_cv} with $s=1/8-\eps$ and \eqref{eq:lem:R_N_bound_2} with $s=1/2-\eps$, for $0<\varepsilon<1/8$,
\begin{align}
\|(\ref{eq:resolv_1})\|
	&\lesssim N^{(-1/8+\eps)(1-\alpha)} + N^{-(1/2-4\eps)\alpha} .
\end{align}
From (\ref{eq:K_conv2}) with $s=3/8+\eps$ for any $0<\eps<1/8$,
\begin{align}
\|(\ref{eq:resolv_2})\| \lesssim \|K_{N,W}^{s}(K_{N,0}^{-1} - K_{N,W}^{-1})K_{N,0}^{s}\| \lesssim  N^{(-1/8+\eps)(1-\alpha)}.
\end{align}
and from (\ref{eq:K_conv}) with $s=3/8+\eps$, for any $0<\eps<1/8$
\begin{align}
\|(\ref{eq:resolv_3})\| \lesssim  \|K_{N,0}^{s}(K_{N,0}^{-1} - K_{\infty}^{-1})K_{\infty}^{s}\| \lesssim N^{-(1/4-2\varepsilon)\alpha}.
\end{align}
Gathering the above estimates proves the claimed convergence, and inserting that $M^{-1/8+\eps}\lesssim N^{-(1/4-2\varepsilon)\alpha}$ for $M=N^{1-\alpha}$ and $\alpha\leq1/2$ yields the stated rate. The relative bound for $K_{N,W}^s$ is a direct corollary to Lemma~\ref{lem:R_N_pert}.
\end{proof}

\subsubsection{Proof of Lemma \ref{lem:T_W-diff}}\label{sect:T_NW}

We again use the notation $\Ho = -\Delta_x + \ud \Gamma(\epsilon)+1$, $\W = \ud \Gamma(W_{N,x})$.
We start by proving the bound on the difference $T_{N,W}-T_{N}$, which will also imply the bound on $T_{N,W}$ by using Lemma~\ref{lem:T_N bound}.
We first introduce a cutoff at $\1_{\cN_+\leq M}$ into $T_N$, to match that of $T_{N,W}$. Since $T_N$ commutes with $\cN_+$, Lemma~\ref{lem:T_N bound} gives us for $0<s<1/4$,
\begin{align}
\big|\langle \Psi, T_{N}\1_{\cN_+>M} \Psi \rangle\big| &\leq M^{-s} \big|\langle \Psi, T_{N}\cN_+^{s} \Psi \rangle\big| \nn\\
&\lesssim N^{-(1-\alpha)s} \| \cN_+^{(1+s)/2} \Psi\| \|\cN_+^{(1-s)/2} \Ho^s\Psi  \|. \label{eq:T_N-diff-cutoff}
\end{align}
It thus remains to consider the difference $T_N \1_{\cN_+\leq M}- T_{N,W}$.
Recall that $T_{N,W}$ is given by (\ref{eq:T_NW}).
To evaluate the difference with $T_{N,0}$, we decompose $T_{N,W}$ simarly to \eqref{eq:decompo_TN0}, in the form
\begin{align}
T_{N,W}=\Theta_{N,W,0}+\Theta_{N,W,1}+\Theta_{N,W,2}.
\end{align}
Since $T_{N,W}$ commutes with $\cN_+$ it is characterized by its restriction to the $n$-boson sectors $\mathscr{H}_+^{(n)}$ for $n\leq M$.
Denote by $(\tau_x f)(y) = f(x- y )$ the translation by $x$ on $L^2(\T^3_x)\otimes L^2(\T^3_y)= L^2(\T^3_x, L^2(\T^3_y))$ and set $\mathcal T_x = \tau_x \otimes \tau_x \otimes \1$, which satisfies
\begin{align}
\mathcal T_x^*  (-\Delta_x) \mathcal T_x = |\nabla_x - \nabla_{y_1} -\nabla_{y_2}|^2.
\end{align}
Moreover, denote
\begin{equation}
\begin{aligned}
\mathbb{H}^{(2)} &= -(\nabla_{y_{1}}+\nabla_{y_{2}})^2 +\epsilon(\ui \nabla_{y_1})+\epsilon(\ui \nabla_{y_2})+W_N(y_1)+W_N(y_2)+1. 
\end{aligned}
\end{equation}
For $\Psi^{(n)} \in \mathscr{H}_+^{(n)}\subset L^2\big(\T^3_x,L^2( \T^{3})^{\otimes_{\mathrm{s}} n}\big)  $ we then set
\begin{subequations}
\begin{align}
  &\langle  \Psi^{(n)}, \Theta_{N,W,0} \Psi^{(n)}\rangle  \\
  & = -2 \Big \langle \nabla  v_{N} \otimes  \nabla v_{N} \otimes \Psi^{(n)},\mathcal T_x^*\big( \Ho + \W\big)^{-1} \mathcal T_x-\big(\mathbb{H}^{(2)}\big)^{-1} \nabla  v_{N} \otimes  \nabla v_{N} \otimes \Psi^{(n)} \Big\rangle, \notag \\
   &\langle \Psi^{(n)}, \Theta_{N,W,1} \Psi^{(n)}\rangle \\
   & = - 4 n \Big \langle   \nabla  v_{N,x} \otimes   \nabla v_{N,x} \otimes \Psi^{(n)}, \big( \Ho + \W\big)^{-1}  \nabla v_{N,x}  \otimes \Psi^{(n)}\otimes \nabla v_{N,x} \Big\rangle,\notag \\
  &\langle  \Psi^{(n)}, \Theta_{N,W,2} \Psi^{(n)}\rangle \\
  & = - n(n-1) \Big \langle  \nabla  v_{N,x} \otimes  \nabla v_{N,x} \otimes \Psi^{(n)}, \big( \Ho + \W\big)^{-1}  \Psi^{(n)}\otimes \nabla v_{N,x}  \otimes \nabla v_{N,x}\Big\rangle.\notag
 \end{align}
\end{subequations}
Note that the tensor products here do not involve symmetrization. The coordinate $x$ of the particle is always integrated over at the end, while the tensor products concern the variables $y_1, \dots, y_{n+2}$.
A simple computation shows that these operators sum to $T_{N,W}$.
The operators $\Theta_{N,0}$, $\Theta_{N,1}$, $\Theta_{N,2}$ without $\ud\Gamma(W_{N,x})$ can of course be expressed in the same way, with the difference that $\mathbb{W}$ is set to zero.

When calculating the difference of these operators, the resolvent formula yields expressions involving $W_{N,x}(y_i)$ for $i=1, \dots, n+2$.
The key is to use the regularity of $\Psi^{(n)}$ in either $y_i$ or $x$. We need to take care, however, since not all of the resolvents commute with derivatives in these directions.

 We start by simplifying $\Theta_{N,W,0}$ by replacing  $\Ho + \W$ with  $\Ho+W_{N,x}(y_1)+W_{N,x}(y_2)$, which no longer contains the potentials in $y_3,\dots, y_{n+2}$ and thus commutes with derivatives in these directions.
 To estimate the difference, we will use that
 \begin{multline}
\|(\epsilon^{1/4}\otimes \epsilon^{1/4} \otimes \1)\big(\Ho + \W\big)^{-1/2}\| \\
\leq \| (\epsilon^{1/4} \otimes \1)\Ho^{-1/4}\|^2 \|\Ho^{1/2} \big(\Ho + \W\big)^{-1/2}\| \lesssim 1.
\end{multline}
 We thus take $0<s<1/4$, and bound
\begin{align}
 &\Big|\Big \langle \nabla  v_{N,x} \otimes  \nabla v_{N,x} \otimes \Psi^{(n)}, \nn \\
 &\qquad \big( (\Ho + \W)^{-1}-(\Ho+W_{N,x}(y_1)+W_{N,x}(y_2) )^{-1}\nabla  v_{N,x} \otimes  \nabla v_{N,x} \otimes \Psi^{(n)} \Big\rangle\Big| \notag \\
 &=\Big|\Big \langle \nabla  v_{N,x} \otimes  \nabla v_{N,x} \otimes \Psi^{(n)}(\Ho + \W)^{-1}, \nn \\
 &\qquad \Big(\sum_{j=3}^{n+2}W_{N,x}(y_j)\Big)(\Ho+W_{N,x}(y_1)+W_{N,x}(y_2))^{-1} \nabla v_{N,x} \otimes  \nabla v_{N,x} \otimes \Psi^{(n)}\Big\rangle\Big|\notag \\
 &\leq \|\epsilon^{-1/4}\nabla v_N\|^4 \|\Ho^{-1/2} \W \Ho^{-1/2} \dd\Gamma(\epsilon)^{-s}\| \|\Psi^{(n)}\| \| \ud \Gamma(\epsilon)^{s}\Psi^{(n)} \|
 \notag \\
 &\lesssim  (\log N)^2 N^{-s} \|\Psi^{(n)}\| \| \ud \Gamma(\epsilon)^{s}\Psi^{(n)}\|,
\end{align}
since $\|\epsilon^{-1/4}\nabla v_N\|^2 \lesssim \log N $ (see the definition of $v_N$ in (\ref{eq:w,v-def}) and Lemma~\ref{lem:phi_W}), and $\W$ is bounded using  (\ref{eq:bound_WN-Laplace}).
Another application of the resolvent identity leads to
\begin{align}
 &\langle\Psi^{(n)},\Theta_{N,W,0}\Psi^{(n)}  \rangle =  2 \Big \langle \nabla  v_{N} \otimes  \nabla v_{N} \otimes \Psi^{(n)} (\mathcal T_x^* \Ho \mathcal T_x+W_{N}(y_1)+W_{N}(y_2))^{-1},  \notag \\
 &\qquad \Big(-\Delta_x +2 \nabla_x (\nabla_{y_1}+\nabla_{y_2})+\sum_{j=3}^{n+2} \epsilon(\ui \nabla_{y_j})\Big) (\mathbb{H}^{(2)})^{-1} \nabla  v_{N} \otimes  \nabla v_{N} \otimes \Psi^{(n)}\Big\rangle\notag \\
 &\qquad +\mathcal{O}\big(N^{-s}(\log N)^2 \|\Psi\| \| \ud \Gamma(\epsilon)^{s}\Psi\|\big).
\end{align}
To arrive at the quadratic form of $\Theta_{N,0}$, we have to set $W_{N}$ to zero in the first resolvent and $(H^{(2)})^{-1}$.
 Consider for example the term arising from expanding $(\mathcal T_x^* \Ho \mathcal T_x+W_{N}(y_1)+W_{N}(y_2))^{-1}$. Using that $\mathbb{H}^{(2)}$ commutes with derivatives in $x$ and $y_3,\dots, y_{n+2}$, it is bounded by
\begin{align}
  &\Big|\Big \langle\nabla  v_{N} \otimes  \nabla v_{N} \otimes \Psi^{(n)} (\mathcal T_x^* \Ho \mathcal T_x+W_N(y_1)+W_N(y_2))^{-1}W_N(y_1)(\mathcal T_x^* \Ho \mathcal T_x)^{-1}, \notag \\
  &\quad   \Big(-\Delta_x +2 \nabla_x (\nabla_{y_1}+\nabla_{y_2}) + \sum_{j=3}^{n+2} \epsilon(\ui \nabla_{y_j}) \Big)(\mathbb{H}^{(2)})^{-1}\nabla  v_{N} \otimes  \nabla v_{N} \otimes \Psi^{(n)} \Big\rangle\Big| \nn \\
  &\leq  \| \Ho^{-1/2} (\nabla v_N)^{\otimes 2}\|  \|(\mathbb{H}^{(2)})^{-1/2} (\nabla v_N)^{\otimes 2}\| \|\Ho^{-1/2} W_{N,x}(y_1) \Ho^{-s}\epsilon(\ui \nabla_{y_1})^{-1/2} \| \notag \\
  &\hspace{4pt} \times\bigg\{ \Big( \| \Ho^{-1+s} \mathcal T_x^*\Delta_x^{1-s}\mathcal T_x \|  + \| \Ho^{-1+s} \mathcal T_x^*|\nabla_x|^{1-2s} |\nabla_{y_1}|\mathcal T_x \|\Big) \|(-\Delta_x)^{s}\Psi^{(n)}\|   \|\Psi^{(n)}\| \nn\\
  &\qquad \qquad +  \| \Ho^{-1+s} \dd\Gamma(\epsilon)^{1-s} \|  \|(\dd\Gamma(\epsilon))^{s}\Psi^{(n)}\|  \|\Psi^{(n)}\|\bigg\} \nn \\
    &\leq  \|\epsilon^{-1/4}\nabla v_N\|^4 \|\epsilon^{-1/2} W_N \epsilon^{-1/2-s}\| \|\mathbb{H}_0^s\Psi^{(n)}\| \|\Psi^{(n)}\|  \nn \\
  &\lesssim (\log N)^2 N^{-s}\|\Ho^{s}\Psi^{(n)}\| \|\Psi^{(n)}\| ,
\end{align}
where we applied (\ref{eq:bound_WN_onebody}).
The error terms coming from replacing $\mathbb{H}^{(2)}$ by $-(\nabla_{y_{1}}+\nabla_{y_{2}})^2 +\epsilon(\ui \nabla_{y_1})+\epsilon(\ui \nabla_{y_2})$ are bounded in a similar way, which yields
\begin{align}
 \|\langle \Psi^{(n)}, \big(\Theta_{N,W,0} -\Theta_{N,0}\big) \Psi^{(n)} \rangle\|
 \lesssim (\log N)^2N^{-s} \| \Ho^{s}\Psi^{(n)}\| \|\Psi\| \label{eq:diff_theta_0_fin}
\end{align}
for $0<s<1/4$ and proves the claim for $\Theta_{N,W,0}$.

To bound the difference of $\Theta_{N,W,1}$ and $\Theta_{N,1}$ we use that on $\mathscr{H}_+^{(n+2)}$
\begin{align}
 \Ho^{-1} - &\Big( \Ho + \W \Big)^{-1}
 =  \Big(\Ho +\sum_{j=1}^{n+1} W_{N,x}(y_j)\Big)^{-1} W_{N,x}(y_{n+2}) \Big( \Ho + \W \Big)^{-1}\notag\\
 &\qquad + \Big(\Ho+\sum_{j=1}^{n+1} W_{N,x}(y_j)\Big)^{-1} \sum_{j=2}^{n+1} W_{N,x}(y_j)  \Big(\Ho+W_{N,x}(y_1)\Big)^{-1}   \notag\\
 &\qquad +\Big(\Ho+W_{N,x}(y_1)\Big)^{-1}  W_{N,x}(y_1)  \Ho^{-1}  \label{eq:decompo_theta_1}.
\end{align}
With this decomposition, we can use the regularity of $\Psi$ and the fact that the derivative in the relevant variable commutes with the resolvent on the right to estimate the middle term of each summand. To bound the factor $n$ in the definition of $\Theta_{N,W,1}$, we also need to use the symmetry of $\Psi^{(n)}$. We proceed in the following way, denoting by $S_{a,b}$, for $1\leq a<b \leq n+2$, the projection onto functions symmetric under permutations of $y_a, \dots, y_b$, we have for all $\Phi \in L^2(\T^{3(n+2)})$
\begin{align}
 &\| \epsilon^{1/4}(\ui \nabla_{y_1}) \epsilon^{1/4}(\ui \nabla_{y_{n+2}})\ud \Gamma(\epsilon)^{-1/2} S_{3,n+2}\Phi\|^2 \notag \\
&= n^{-1}\langle S_{3,n+2} \Phi, \ud \Gamma(\epsilon)^{-1} \epsilon(\ui \nabla_{y_1})^{1/2}\sum_{j=3}^{n+2}\epsilon(\ui \nabla_{y_j})^{1/2} S_{3,n+2}\Phi \rangle \notag \\
&\leq n^{-1/2}\langle S_{3,n+2} \Phi, \ud \Gamma(\epsilon)^{-1/2} \epsilon(\ui \nabla_{y_1})^{1/2} S_{3,n+2}\Phi \rangle \notag \\
&\lesssim n^{-1/2} \|\Phi\|^2.
\end{align}
Then, from the decomposition (\ref{eq:decompo_theta_1}) and the estimate above we obtain,  using the symmetry of $\Psi^{(n)}$ and again (\ref{eq:bound_WN-Laplace}),  that for $0\leq s<1/4$
\begin{align}
 & |\langle \Psi^{(n)}, \big(\Theta_{N,W,1} -\Theta_{N,1}\big)\Psi^{(n)} \rangle| \notag\\
 &\lesssim  N^{-s} n \| S_{2,n+1} \ud \Gamma(\epsilon)^{-1/2}\nabla  v_N \otimes  \nabla v_N \otimes \epsilon(\ui \nabla_{y_{n+2}})^s\Psi^{(n)}\| \notag \\
 &\qquad \times \| S_{3,n+2}\ud \Gamma(\epsilon)^{-1/2}\nabla  v_N  \otimes \Psi^{(n)}\otimes  \nabla v_N\| \notag\\
 &\quad + N^{-s} n  \|S_{2,n+1}  \ud \Gamma(\epsilon)^{-1/2}\nabla  v_N \otimes  \nabla v_N \otimes\Psi^{(n)}\| \notag \\
& \qquad \times \| S_{3,n+2}\ud \Gamma(\epsilon)^{-1/2}\nabla  v_N  \otimes \ud \Gamma(\epsilon)^s\Psi^{(n)}\otimes  \nabla v_N\|\notag\\
 &\quad + N^{-s} n  \|S_{2,n+1}  \ud \Gamma(\epsilon)^{-1/2}\nabla  v_N \otimes  \nabla v_N \otimes\Psi^{(n)}\| \notag \\
& \qquad \times \| S_{3,n+2}\ud \Gamma(\epsilon)^{-1/2}\nabla  v_N  \otimes \epsilon(\ui \nabla_{y_{1}})^s \Psi^{(n)}\otimes  \nabla v_N\|\notag\\
 %
 &\lesssim N^{-s} n^{1/2} \|\epsilon^{-1/4} \nabla v_N\|_{2}^4 \|\Psi^{(n)}\| \left(\|\epsilon(\ui\nabla_1)^s\Psi^{(n)}\| + \|\dd\Gamma(\epsilon^s) \Psi^{(n)}\|\right) \nn \\
 &\lesssim (\log N )^2 N^{-s} \|\cN_+^{1/4}\Psi^{(n)}\| \| \cN_+^{1/4}\dd\Gamma(\epsilon)^s \Psi^{(n)}\|. \label{eq:diff_theta_1_fin}
 \end{align}

For the difference of $\Theta_{N,W,2}$ and $\Theta_{N,2}$ we proceed in a similar way, writing on $\mathscr{H}_+^{(n+2)}$
\begin{multline}
 \Ho^{-1} - \Big(\Ho + \W\Big)^{-1} =\Ho^{-1}\sum_{j=3}^{n+2} W_{N,x}(y_j)\Big(\Ho+\sum_{j=3}^{n+2} W_{N,x}(y_j)\Big)^{-1} \\
  + \Big(\Ho + \W \Big)^{-1}\Big(W_{N,x}(y_{1})+W_{N,x}(y_{2})\Big)\Big(\Ho+\sum_{j=3}^{n+2} W_{N,x}(y_j)\Big)^{-1}.
\end{multline}
From this we obtain
\begin{subequations}
 \begin{align}
 & \|\langle \Psi^{(n)}, (\Theta_{N,W,2} -\Theta_{N,2})\Psi^{(n)} \rangle\| \notag\\
 &=n(n-1) \Big| \Big \langle  S_{1,n} \,\big( \tau_x\nabla  v_N \otimes   \tau_x \nabla v_N \otimes \Psi^{(n)} \big) ,\Big(\Ho^{-1}- (\Ho +\W)^{-1}\Big) \notag\\
 &\qquad\qquad\qquad S_{3,n+2}\,\big(\Psi^{(n)}\otimes \tau_x\nabla v_N  \otimes \tau_x\nabla v_N \big)\Big\rangle\Big| \notag\\
 &\lesssim N^{-s}n(n-1)  \| \ud \Gamma(\epsilon)^{-1/2} S_{3,n+2} (\epsilon(\ui \nabla_{y_{1}})^s+\epsilon(\ui \nabla_{y_{2}})^s)\Psi^{(n)} \otimes \nabla v_N \otimes \nabla v_N  \| \notag\\
 &\qquad \times \| \ud \Gamma(\epsilon)^{-1/2}S_{1,n} \nabla v_N \otimes \nabla v_N\otimes \Psi^{(n)}\|  \label{eq:diff_theta_2_est_1}
 \\
 &\quad  +N^{-s}n(n-1)  \| \ud \Gamma(\epsilon)^{-1/2}S_{1,n} \nabla  v_N \otimes  \nabla v_N \otimes(\ud \Gamma(\epsilon)^{s}\Psi^{(n)})\|\notag \\
 &\qquad\times \| \ud \Gamma(\epsilon)^{-1/2}S_{3,n+2}\Psi^{(n)}\otimes \nabla v_N \otimes \nabla v_N\|,\label{eq:diff_theta_2_est_2}
\end{align}
\end{subequations}
where we used the bound (\ref{eq:bound_WN_onebody}).
We now use the symmetry to absorb the factors in $n$. We have for $\Phi \in L^2(\T^{3(n+2)})$
\begin{align}
 &\| \epsilon^{1/4}(\ui \nabla_{y_1}) \epsilon^{1/4}(\ui \nabla_{y_2})\ud \Gamma(\epsilon)^{-1/2} S_{1,n}\Phi\|^2 \notag\\
 &= \langle S_{1,n} \Phi, \ud \Gamma(\epsilon)^{-1} \epsilon(\ui \nabla_{y_1})^{1/2}\epsilon(\ui \nabla_{y_2})^{1/2} S_{1,n}\Phi \rangle \notag \\
 &= \frac{1}{n(n-1)}  \bigg\langle S_{1,n} \Phi, \Big(\sum_{\ell=1}^{n+2} \epsilon(\ui \nabla_{y_\ell})\Big)^{-1}\sum_{i\neq j=1}^n\epsilon(\ui \nabla_{y_i})^{1/2}\epsilon(\ui \nabla_{y_j})^{1/2} S_{1,n}\Phi \bigg\rangle \notag \\
 &\leq \frac{1}{\sqrt{n(n-1)}} \|\Phi\|^2,
\end{align}
where in the final step we used the Cauchy-Schwarz inequality on the double sum. We can now bound (\ref{eq:diff_theta_2_est_1}) and (\ref{eq:diff_theta_2_est_2}). Consider the first factor in (\ref{eq:diff_theta_2_est_1}). We have
\begin{align}
 &\| \ud \Gamma(\epsilon)^{-1/2}S_{1,n} \nabla v_N \otimes \nabla v_N \otimes \Psi^{(n)}\| \notag \\
 &\leq \|\epsilon^{-1/4} \nabla  v_N\|^2 \|\Psi^{(n)}\| \| \epsilon^{1/4}(\ui \nabla_{y_1}) \epsilon^{1/4}(\ui \nabla_{y_2})\ud \Gamma(\epsilon)^{-1/2} S_{1,n} \1_{L^2(\T^{3(n+2)})}\| \nn \\
 &\leq \frac{ \log N}{(n(n-1))^{1/4}}  \|\Psi^{(n)}\|.
\end{align}
Similarly, the second factor in (\ref{eq:diff_theta_2_est_1}) is bounded by
\begin{align}
&\| \ud \Gamma(\epsilon)^{-1/2} S_{3,n+2} (\epsilon(\ui \nabla_{y_{1}})^s+\epsilon(\ui \nabla_{y_{2}})^s)\Psi^{(n)} \otimes \nabla v_N \otimes \nabla v_N  \|\nn \\
	&\lesssim \| \epsilon^{1/4}(\ui \nabla_{y_{n+1}}) \epsilon^{1/4}(\ui \nabla_{y_{n+2}})\ud \Gamma(\epsilon)^{-1/2} S_{3,n+2} \1_{L^2(\R^{3(n+2)})}\| \|\epsilon(\ui \nabla_{y_{1}})^s \Psi^{(n)}\| \nn \\
	&\lesssim \frac{  \log N}{(n(n-1))^{1/4}} \|\dd\Gamma(\epsilon)^s \cN_+^{-s} \Psi^{(n)}\|.
\end{align}
From this we obtain that
\begin{align}
(\ref{eq:diff_theta_2_est_1})
	&\lesssim N^{-s} (\log N)^2 \sqrt{n(n-1)}  \| \Psi^{(n)}\| \|\dd\Gamma(\epsilon)^s\cN_+^{-s} \Psi^{(n)}\| \nn \\
	&\lesssim N^{-s} (\log N)^2  \| \cN_+^{1/2}\Psi^{(n)}\|  \|\cN_+^{1/2-s} \dd\Gamma(\epsilon)^s  \Psi^{(n)}\|.
\end{align}
The same computation leads to
\begin{align}
(\ref{eq:diff_theta_2_est_2})
	&\lesssim N^{-s} (\log N)^2  \| \cN_+^{(1+s)/2}\Psi^{(n)}\| \| \cN_+^{(1-s)/2} \dd\Gamma(\epsilon)^s \Psi^{(n)}\|.
\end{align}
We thus obtain
\begin{align} \label{eq:diff_theta_2_fin}
&\|\langle \Psi^{(n)}, \big(\Theta_{N,W,2} -\Theta_{N,2}\big)\Psi^{(n)} \rangle\| \nn\\
&\quad \lesssim  N^{-s} (\log N)^2 \|\cN_+^{(1+s)/2}\Psi^{(n)}\| \|\cN_+^{(1-s)/2} \dd\Gamma(\epsilon)^s  \Psi^{(n)}\|.
\end{align}
Collecting the estimates (\ref{eq:diff_theta_0_fin}), (\ref{eq:diff_theta_1_fin}) and (\ref{eq:diff_theta_2_fin}) and the bound on the cutoff~\eqref{eq:T_N-diff-cutoff} shows the bound on the difference $T_N-T_{N,W}$.

To prove the estimate of $T_{N,W}$, simply combine the bounds (\ref{eq:diff_theta_0_fin}), (\ref{eq:diff_theta_1_fin}) and (\ref{eq:diff_theta_2_fin}) on $\Theta_{N,W,j}$ with those on $\Theta_{N,j}$ from Lemma~\ref{lem:T_N bound} for $n\leq M$. Moreover, in~\eqref{eq:diff_theta_2_fin} bound $n^s$ by $M^s$, which yields
\begin{equation}
 \|\langle \Psi^{(n)}, \big(\Theta_{N,W,2} -\Theta_{N,2}\big)\Psi^{(n)} \rangle\| \lesssim  (M/N)^{s} (\log N)^2 \|\cN_+^{1/2}\Psi^{(n)}\| \|\cN_+^{1/2-s} \dd\Gamma(\epsilon)^s  \Psi^{(n)}\|.
\end{equation}
This implies the claim when $N$ is chosen large enough.
\qed

\subsubsection{Proof of Lemma \ref{lem:R_N_bound}}\label{sect:R_N}
Recall that
\begin{align*}
 R_N =2 a^*(\ui \nabla v_{N,x}) a(\ui \nabla v_{N,x}) - 2 a^*(\ui \nabla v_{N,x}) \ui \nabla_x  + \hc  +  a(w^\kappa_x) +a^*(w^\kappa_x).
\end{align*}

We will only give the proof of the estimate of $R_N$ in (\ref{eq:lem:R_N_bound_1}), as the one for the difference $R_N - R_{\infty}$ in (\ref{eq:lem:R_N_bound_2}) follows from the same reasoning, replacing $v_N$ by ${v}_N(k) - {v}_\infty$, which using~\eqref{eq:v_N-bound} satisfies
\begin{align}
\| \epsilon^{1/2-2s} |k| |\widehat {v}_N(k) - \widehat {v}_\infty (k)| \|_2 \lesssim \| |\1_{|k| > N^{\alpha}} |k|^{-4s}\|_2 \lesssim N^{-\alpha(4s-3/2)},
\end{align}
for $3/8<s\leq 1/2$.  We now prove (\ref{eq:lem:R_N_bound_1}) by estimating each term.

\emph{The term $a^*(w^\kappa_x)$.}
This term is bounded using that $w^\kappa\in L^2(\T^3)$ (see the definition~\eqref{eq:w,v-def}). We obtain that for all $\delta>0$ using Young's inequality
\begin{align}
|\braket{\Phi,a(w^\kappa_x)\Psi}|  
&\leq \|w^\kappa\|^2 \| (\cN_++1)^{1/4} \Phi\| \| (\cN_++1)^{1/4} \Psi\| \nn \\
&\leq \| (\delta \mathcal N_+^{3/8} + C_\delta ) \Phi \| \| (\delta \mathcal N_+^{3/8} +C_\delta )\Psi \|.
\end{align}

For the other terms in $R_N$, we make the following observation. Each term $X$ in $R_N$ satisfies that $\mathcal N X = X \left(\mathcal N + k\right)$ for some $k \in \mathbb{Z}$, therefore to conclude the proof of the desired bound (\ref{eq:lem:R_N_bound_1}) it is enough to prove estimates of the form
\begin{align}
|\braket{\Phi,X\Psi}| 
	&\lesssim \delta \|((-\Delta_x+\ud\Gamma(\epsilon))^{s} \mathcal N_+^{a}\Phi \| \|((-\Delta_x+\ud\Gamma(\epsilon))^{s} \mathcal N_+^{b} \Psi\|.
\end{align}
for some $0 \leq a,b$ such that $a+b \leq 1-2s$.

\emph{The term $ a^*(\ui \nabla v_{N,x}) \ui \nabla_x$.}
For this term, we decompose
\begin{align}
 | \langle \Phi, \ui \nabla_x  a(\nabla v_{N,x}) \Psi\rangle|  \leq \||\nabla_x|^{2s} \Phi\| \|\| \nabla_x|^{1-2s}a(\nabla v_{N,x}) \Psi\|.
\end{align}
The first factor above is bounded by $ \|((-\Delta_x+\ud\Gamma(\epsilon))^{s}\Phi \|$. To bound the second factor, we commute the gradient to the left,
\begin{align}\label{eq:lemm_RN_comm}
| \nabla_x|^{1-2s}a(\nabla v_{N,x})=  a(\nabla v_{N,x}) |\nabla_x|^{1-2s} + [| \nabla_x|^{1-2s}, a(\nabla v_{N,x})].
\end{align}
For $a > 1/4$, we have from (\ref{eq:w,v-def}) that $\epsilon^{-a}\nabla v_N \in L^2$ and that
\begin{align} \label{eq:C_k_alpha}
C_{\kappa} = \sup_{N\geq 1}\|\epsilon^{-a}\nabla v_N \|_{2} \to 0, \textrm{ as } \kappa \to \infty.
\end{align} Using the Cauchy-Schwarz inequality, we can therefore bound the contribution coming from the first term in (\ref{eq:lemm_RN_comm}) by
\begin{align}
 \| a(\nabla v_{N,x}) |\nabla_x|^{1-2s}{\Psi}\| &\leq \|\epsilon^{-a}\nabla v_N\|_{2} \| \ud \Gamma(\epsilon^{2a})^{1/2} |\nabla_x|^{1-2s}{\Psi}\| \nn \\
 &\leq C_\kappa \| (-\Delta_x+ \ud \Gamma(\epsilon))^{a +1/2 -s}\cN_+^{1/2 - a} {\Psi}\| \nn \\
  &\leq C_\kappa \| (-\Delta_x+ \ud \Gamma(\epsilon))^{s}\cN_+^{1-2s}\Psi\|,
\end{align}
where we took $a = 2s-1/2 > 1/4$ for $s > 3/8$.
To bound the term coming from the commutator in (\ref{eq:lemm_RN_comm}), we use that $\ui \nabla_x \ue^{-\ui kx}=\ue^{- \ui kx} (\ui \nabla_x + k)$ and that ${||p-k|^{1-2s}-|p|^{1-2s} |\lesssim |k|^{1-2s}}$, to obtain
\begin{align}
 \| [| \nabla_x|^{1-2s}, a(\nabla v_{N,x})]\Psi\| &\lesssim \| a(|\nabla|^{2-2s} v_{N,x}) \Psi\| \nn \\
 &\leq \|\epsilon^{-s}|\nabla|^{2-2s} v_N\|_{2} \| \ud \Gamma(\epsilon^{2s})^{1/2}\Psi\|, \nn \\
 &\leq \|\epsilon^{-s}|\nabla|^{2-2s} v_N\|_{2} \| \ud \Gamma(\epsilon)^{s}\cN_+^{1/2-s}\Psi\|,
\end{align}
where the norm tends to zero for $\kappa\to \infty$ since $|k|^{2-4s} \widehat  v_N(k) \lesssim |k|^{-4s}\1_{|k|>\kappa}$.
This proves that for all $\delta>0$,
\begin{align}
 | \langle \Phi, \ui \nabla_x  a(\nabla v_{N,x}) \Psi\rangle|
 \leq \delta \|(-\Delta_x+ \ud \Gamma(\epsilon))^{s} \Phi\|
 \| (-\Delta_x+ \ud \Gamma(\epsilon))^{s}\cN_+^{1-2s}\Psi\|,
\end{align}
for $\kappa $ sufficiently large.

\emph{The term $a^*(\ui \nabla v_{N,x}) a(\ui \nabla v_{N,x})$.}
Finally, we finish the bound (\ref{eq:lem:R_N_bound_1}) on $R_N$ by estimating the term with a creation and an annihilation operator. Similarly as above, we have
\begin{equation}
\braket{ \Phi, a^*(\ui \nabla v_{N,x}) a(\ui \nabla v_{N,x})  \Psi} \leq \|\epsilon^{-s}\nabla v_N \|_{2}^2 \|\ud \Gamma(\epsilon)^s \cN_+^{1/2-s}\Phi\| \|\ud \Gamma(\epsilon)^s \cN_+^{1/2-s}\Psi\|.
\end{equation}
Using (\ref{eq:C_k_alpha}) we conclude the proof of the desired bound (\ref{eq:lem:R_N_bound_1}).

\section{Proof of the main results}\label{sect:proof}

In this section we prove Theorem~\ref{thm:asymptotic}, Theorem~\ref{thm:spect}, and
Corollary~\ref{cor:operators}. We start by recalling the results that form the basis of the proofs, and proving precise upper and lower bounds on the spectrum of $H_N$ as preliminary lemmas.

In Section~\ref{sect:trafo} we computed
\begin{equation}\label{eq:H_N-outro}
 U_X^* H_N U_X = \1_{\cN_+ \leq N} \mathcal H_N  \1_{\cN_+ \leq N},
\end{equation}
with $U_X$ the excitation map (\ref{eq:excitation-map}). We then showed in Proposition~\ref{prop:U_G} and~\eqref{eq:H_NU-error}
\begin{align}\label{eq:U-remind}
 U^*\cH_N U=  4\pi  \mathfrak{a}_{V} (N-1)  + 8\pi  \ao_W \sqrt{N}   + e^{(U)}_N  +\cH_N^U  +\cL_4 +\cE
\end{align}
where $U$ is a unitary operator on $\mathscr{H}_+$. Here, the notation $\cH_N^U$ is defined in~\eqref{eq:H_N-U}, $\cL_4$ is the quartic term defined in Proposition~\ref{prop:H-ex}, $e^{(U)}_N$ is the scalar of order one defined in Proposition~\ref{prop:U_G}, and the error $\cE=\cE^{(U)}+ \mathcal{O}(N^{-3/2+2\alpha} (\cN_++1)^3) $ satisfies
\begin{align}\label{eq:EU-proof}
 \pm \cE & \lesssim N^{-\alpha/2}  \ud \Gamma(W_{N,x})  +N^{-\alpha/4}\mathcal{L}_4 + N^{-1/2-\alpha/4}\cN_+^2 + N^{-3/2+2\alpha}\cN_+^3\nn\\
	& \qquad +   N^{-\alpha/4}\big( \log N \dd\Gamma(|\ui \nabla|)  +(\log N)^2 +(\mathcal N_++1)^{1/2} |\nabla_x|\big).
\end{align}
for any $\alpha\leq1/10$.

Moreover, we proved the following conservation estimates under the transformation $U$.
\begin{lem}\label{lem:U-conserve}
Let $U= U_q U_W U_c U_\mathrm{B} U_\mathrm{G}$ be the product of the unitaries from Section~\ref{sect:trafo}. Under the assumptions of Proposition \ref{prop:U_G}, for any  $t\geq 0$ we have
\begin{equation*}
 U^*\cN_+^{t} U = \mathcal{O}(\cN_+^{t}+1), \qquad U\cN_+^{t} U^* = \mathcal{O}(\cN_+^{t}+1),
\end{equation*}
and it holds
\begin{align*}
 U^*\cL_4 U &= \mathcal{O}\Big( \cN_+ + N^{-1/2-\alpha/2} \cN_+^2 +\cL_4 + N^{-1/2}\log N \ud \Gamma(|\ui \nabla|)+ N\Big),\\
 U^*\ud\Gamma(W_{N,x})U& = \mathcal{O}\Big(\ud\Gamma(W_{N,x})+ \cN_+ + \sqrt{N}\Big).
\end{align*}

\end{lem}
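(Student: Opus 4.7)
The plan is to prove all three estimates by sequential application of the conservation bounds already established for each of the five individual unitaries in Sections~\ref{sect:quadratic}--\ref{sect:Gross}. Writing $U^* X U = U_\mathrm{G}^* U_\mathrm{B}^* U_c^* U_W^* U_q^* X U_q U_W U_c U_\mathrm{B} U_\mathrm{G}$, we peel the transformations off one at a time starting from $U_q$ (innermost). The overall structure is purely bookkeeping: each stage preserves $\cN_+^{t}+1$ up to a multiplicative constant, and the extra contributions appearing when transforming $\cL_4$ or $\ud\Gamma(W_{N,x})$ always fit into the right-hand side of the claimed estimates.

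For the first claim on $\cN_+^t$, the required ingredients are Lemma~\ref{lem:N_T1} (for $U_q$), the bound $U_W^*(\cN_++1)^t U_W \lesssim (\cN_++1)^t$ of Lemma~\ref{lem:Weyl-BB}, Lemma~\ref{lem:Uc_N} (for $U_c$), Lemma~\ref{lem:U_B-conserve} (for $U_\mathrm{B}$), and the estimate~\eqref{eq:U_G-N} derived in the proof of Proposition~\ref{prop:U_G} (for $U_\mathrm{G}$). Composing these yields $U^*\cN_+^t U = \mathcal{O}(\cN_+^t+1)$. The two-sided version $U \cN_+^t U^* = \mathcal{O}(\cN_+^t+1)$ follows from the same argument applied in the opposite order, the required estimates being symmetric in $t\mapsto -t$ on the generators, which is precisely what the Duhamel--Gr\"onwall arguments in the cited lemmas provide.

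For $U^*\cL_4 U$, start with Lemma~\ref{lem:L4_T1}, which gives $U_q^*\cL_4 U_q = \cL_4 + \mathcal{O}(N^{-1}(\cN_++1)^2 + N)$. Applying $U_W$ and using Lemma~\ref{lem:Weyl-BB} introduces an additional contribution $N^{-1/3+\alpha/6}\ud\Gamma(|\ui\nabla|) + N^{-\alpha/2}(\cN_++1)$. Conjugation by $U_c$ with $M = N^{1/2+\alpha/2}$ preserves $\cL_4$ modulo a term $M^{-1}\cN_+^2 = N^{-1/2-\alpha/2}\cN_+^2$ and preserves $\ud\Gamma(|\ui\nabla|)$ by Lemma~\ref{lem:Uc_p}. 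Next, Lemma~\ref{lem:U_B-conserve} preserves $\cL_4$ up to $N^{-1}\cN_+^2 + N^{-1+2\alpha}$ while transforming $\ud\Gamma(|\ui\nabla|)$ into $\ud\Gamma(|\ui\nabla|) + \mathcal{O}(\log N)$; this is the stage responsible for the logarithm. Finally, $U_\mathrm{G}$ contributes as in~\eqref{eq:L_4-U_G}, \eqref{eq:U_G-p}, \eqref{eq:U_G-N}, all of which are subleading. Combining every term and bounding the resulting $\cN_+$-polynomials by the claimed right-hand side gives the second assertion.

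For $U^*\ud\Gamma(W_{N,x}) U$, the chain of estimates is~\eqref{eq:Q_2-T_1-aprio} from the proof of Lemma~\ref{lem:Hint_T1} for $U_q$, \eqref{eq:apriori_Q2_T2} for $U_W$ (which is where the additive $\sqrt{N}$ originates, via the shift $U_W^* a_y U_W = a_y + \sqrt{N}\Wphi(y-x)$), \eqref{eq:Tc_Q2} of Lemma~\ref{lem:Uc_p} for $U_c$, Lemma~\ref{lem:U_B-Q_2} for $U_\mathrm{B}$, and~\eqref{eq:Q_2-U_G} for $U_\mathrm{G}$. Composing these preserves the form $\ud\Gamma(W_{N,x}) + \cN_+ + \sqrt{N}$. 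The main technical point, and the only one that is not fully mechanical, is to verify at each stage that the error terms produced when commuting the transformation through $\ud\Gamma(W_{N,x})$ (which involves the potential $W_N$ evaluated at the impurity position $x$) can be reabsorbed into $\ud\Gamma(W_{N,x}) + \cN_+$ by Cauchy--Schwarz estimates of the type used in~\eqref{eq:a-Q_2 bound}; this is already implicit in the proofs of the relevant lemmas, so no new ideas are required.
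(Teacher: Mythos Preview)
Your proposal is correct and follows exactly the same approach as the paper: the paper's proof consists solely of citing the relevant conservation bounds for each individual unitary (Lemmas~\ref{lem:N_T1}, \ref{lem:Weyl-BB}, \ref{lem:Uc_N}, \ref{lem:U_B-conserve} and~\eqref{eq:U_G-N} for $\cN_+$; Lemmas~\ref{lem:L4_T1}, \ref{lem:Weyl-BB}, \ref{lem:Uc_p}, \ref{lem:U_B-conserve} and~\eqref{eq:L_4-U_G},~\eqref{eq:U_G-p} for $\cL_4$; \eqref{eq:Q_2-T_1-aprio}, \eqref{eq:apriori_Q2_T2}, Lemmas~\ref{lem:Uc_p}, \ref{lem:U_B-Q_2} and~\eqref{eq:Q_2-U_G} for $\ud\Gamma(W_{N,x})$), and you have reproduced precisely this list with the same sequential composition argument.
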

\begin{proof}
 This follows from the various bounds in Section~\ref{sect:trafo}. Specifically: for $\cN_+$, Lemmas~\ref{lem:N_T1}, \ref{lem:Weyl-BB}, \ref{lem:Uc_N},~\ref{lem:U_B-conserve} and Equation~\eqref{eq:U_G-N}; for $\cL_4$ Lemmas~\ref{lem:L4_T1}, \ref{lem:Weyl-BB}, \ref{lem:Uc_p}, \ref{lem:U_B-conserve} and Equations~\eqref{eq:L_4-U_G},~\eqref{eq:U_G-p};
 For $\ud\Gamma(W_{N,x})$ Equations~\eqref{eq:Q_2-T_1-aprio},~\eqref{eq:apriori_Q2_T2}, Lemmas \ref{lem:Uc_p}, \ref{lem:U_B-Q_2} and Equation \eqref{eq:Q_2-U_G}.
\end{proof}

In Proposition~\ref{prop:CV_transf_Ham} we proved that $\cH_N^U-E_{N,W}$ converges to an operator that fiarily equivalent to $H_\mathrm{BF}$ in norm resolvent sense,
where $E_{N,W}$ is given by~\eqref{eq:E_NW} and contains the terms of order $\log N$. Moreover, Proposition~\ref{prop:HsingW} and Corollary~\ref{cor:K_control_kin} imply that for $0\leq s<1$
\begin{equation}\label{eq:H_eff,N-form_bound}
 \cN_+^{1-s}(-\Delta_x + \ud \Gamma(\epsilon))^s \lesssim \cH_N^U-E_{N,W} +C,
\end{equation}
for some $C>0$, independent of $N$.

\begin{lem}[Upper bound]\label{lem:upper}
Assume the hypothesis of Theorem~\ref{thm:asymptotic} and let
\begin{equation*}
 \lambda_N:=4\pi  \ao_V (N-1)  + 8\pi  \ao_W \sqrt{N}   + e^{(U)} + E_{N,W}.
\end{equation*}
Then for any $k\in \N$
\begin{equation*}
 e_k(H_N) - \lambda_N \leq e_k(H_\mathrm{BF}) + \mathcal{O}(N^{-1/80}).
\end{equation*}
\end{lem}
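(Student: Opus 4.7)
By the min-max principle, it suffices to exhibit a $(k+1)$-dimensional trial subspace of $Q(H_N)$ on which the quadratic form of $H_N$ is bounded above by $\lambda_N+e_k(H_\mathrm{BF})+\mathcal{O}(N^{-1/80})$. The natural choice is to transport eigenstates of $H_\mathrm{BF}$ back to $\mathscr{H}_N$ through the chain of unitaries. Since $H_\mathrm{BF}$ has compact resolvent (Proposition~\ref{prop:renorm}), I fix orthonormal eigenvectors $\Phi_0,\dots,\Phi_k\in D(H_\mathrm{BF})$ with eigenvalues $e_0(H_\mathrm{BF})\leq\dots\leq e_k(H_\mathrm{BF})$; by~\eqref{eq:H_BF_Ukappa_infty}, the vectors $\widetilde\Phi_j:=(U_\kappa^\infty)^*\Phi_j$ are orthonormal eigenvectors of $\widetilde H_\mathrm{BF}:=K_\infty+T_\infty+R_\infty-1$ with the same eigenvalues, and they lie in $D(K_\infty)$. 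I then take as trial subspace $\mathcal V_{k+1}:=U_XU\,\mathrm{span}(\1_{\cN_+\leq N}\widetilde\Phi_0,\dots,\1_{\cN_+\leq N}\widetilde\Phi_k)$. Iterating the bound of Lemma~\ref{lem:K_0-props} gives $\|\cN_+^p\widetilde\Phi_j\|<\infty$ for every $p$, so the truncation modifies each vector by $\mathcal{O}(N^{-\infty})$, preserves near-orthonormality, and reduces the problem to estimating $\langle\Psi,U^*\cH_NU\Psi\rangle$ for unit vectors $\Psi=\sum c_j\widetilde\Phi_j$ in the $(k+1)$-dimensional span.

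Using the decomposition~\eqref{eq:U-remind}, this splits into the scalar $\lambda_N$, the main term $\langle\Psi,(\cH_N^U-E_{N,W})\Psi\rangle$, and the remainder $\langle\Psi,(\cL_4+\cE)\Psi\rangle$. For the main term I aim to deduce from the norm-resolvent convergence of Proposition~\ref{prop:HsingW}, combined with the uniform form bound~\eqref{eq:H_eff,N-form_bound}, that the finite matrix
\begin{equation*}
A_{ij}:=\langle\widetilde\Phi_i,(\cH_N^U-E_{N,W})\widetilde\Phi_j\rangle-e_j(H_\mathrm{BF})\delta_{ij}
\end{equation*}
satisfies $\|A\|=\mathcal{O}(N^{-1/80})$. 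The idea is to write, on $Q(K_\infty)\subset Q(K_{N,W})$,
\begin{equation*}
A_{ij}=\langle\widetilde\Phi_i,\big[(\cH_N^U-E_{N,W}+1)-(\widetilde H_\mathrm{BF}+1)\big]\widetilde\Phi_j\rangle,
\end{equation*}
and to use the resolvent formula from Proposition~\ref{prop:HsingW} together with the fact that $(\widetilde H_\mathrm{BF}+1)\widetilde\Phi_j=(e_j+1)\widetilde\Phi_j$ to convert the right-hand side into an inner product involving the resolvent difference, whose norm is controlled by $N^{-1/80}$ after inserting bounded factors of the form $K_{N,W}^{1/2}K_\infty^{-1/2}$ (see Lemma~\ref{lem:K_W-props}). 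The supremum of the Rayleigh quotient over the $(k+1)$-dimensional span is then bounded above by the top eigenvalue of $\mathrm{diag}(e_j(H_\mathrm{BF}))+A$, namely $e_k(H_\mathrm{BF})+\mathcal{O}(N^{-1/80})$.

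For the remainder $\langle\Psi,(\cL_4+\cE)\Psi\rangle$ I would exploit the $N$-independent regularity of the trial states: Corollary~\ref{cor:K_control_kin}, applied via the comparison Lemma~\ref{lem:K_W-props} between $K_\infty$ and $K_{N,W}$, yields $\|(-\Delta_x+\dd\Gamma(\epsilon))^s\cN_+^{1-s}\widetilde\Phi_j\|\lesssim 1$ for $0\leq s<1/2$ as well as $\|\cN_+^p\widetilde\Phi_j\|\lesssim 1$ for every $p$. Combined with a Sobolev-type estimate of the form $\langle\Psi,\cL_4\Psi\rangle\lesssim N^{-1}\langle\Psi,(\cN_++1)(\dd\Gamma(-\Delta)+N)\Psi\rangle$ (the mechanism underlying the a priori bound of Lemma~\ref{lem:L4_T1}), this gives $\langle\widetilde\Phi_j,\cL_4\widetilde\Phi_j\rangle=\mathcal{O}(N^{-1/2})$. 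Each summand in the bound~\eqref{eq:EU-proof} for $\cE$ is treated in the same way, the worst scaling like $N^{-\alpha/4}$; with the optimal choice $\alpha=1/10$ every contribution stays below $N^{-1/80}$. The main obstacle is thus the quantitative passage from the norm-resolvent statement of Proposition~\ref{prop:HsingW} to the matrix-element estimate on $\widetilde\Phi_j$: the operators $\cH_N^U-E_{N,W}$ and $\widetilde H_\mathrm{BF}$ do not share an operator domain, so one must rely on the $N$-independent form bounds of Corollary~\ref{cor:K_control_kin} and Lemma~\ref{lem:K_W-props} to evaluate their difference meaningfully on the fixed eigenvectors $\widetilde\Phi_j$ while preserving the explicit rate.
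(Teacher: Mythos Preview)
The decisive gap is your treatment of $\cL_4$. Your trial vectors $\widetilde\Phi_j$ lie in $D(K_\infty)$, but the paper records explicitly (just after Lemma~\ref{lem:K sa}) that $D(K_\infty)\cap D\big((-\Delta_x+\ud\Gamma(\epsilon))^{s}\big)=\{0\}$ for every $s\geq 1/2$; in particular $\widetilde\Phi_j\notin Q(\ud\Gamma(-\Delta))$, and the right-hand side of your proposed Sobolev estimate is infinite. The estimate itself is also not correct: the scale-invariant Sobolev bound gives only $\cL_4\lesssim\cN_+\,\ud\Gamma(-\Delta+1)$ with \emph{no} $N^{-1}$ factor, since $\|V_N\|_{3/2}$ is $N$-independent. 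Any bound that decays in $N$ requires $\cL_4\lesssim N^{2-2t}\cN_+\,\ud\Gamma\big((-\Delta+1)^{t}\big)$ with $t>1$, i.e.\ strictly more than $H^1$ regularity per boson, which the fixed eigenstates of $H_\mathrm{BF}$ simply do not have.

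The paper avoids this by a different choice of trial space: it uses the first $k+1$ eigenvectors of the \emph{$N$-dependent} operator $\cH_N^U$, localized with a cutoff $\chi_M$ at $M=N^{\alpha/9}$. These vectors lie in $D(K_{N,W})$, so one may invoke the regime $s>1/2$ of Corollary~\ref{cor:K_control_kin} (equation~\eqref{eq:cor_K_control3}), which controls $(-\Delta_x+\ud\Gamma(\epsilon))^{1/2+s}$ by $N^{s+\eps}K_{N,W}^{1/2+s}$; the growing factor $N^{s}$ is then beaten by the decay $N^{2-2t}$ of $\|V_N\|_{3/(2t)}$ at $t=1+\alpha/4$, giving $\langle\chi_M\Phi,\cL_4\chi_M\Phi\rangle\lesssim N^{-\alpha/8}$. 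The main term is then automatically $\langle\Phi,(\cH_N^U-E_{N,W})\Phi\rangle$, and the link to $e_k(H_\mathrm{BF})$ comes directly from norm-resolvent convergence of eigenvalues, which sidesteps the matrix-element problem for $A_{ij}$ that you correctly flag as an obstacle.
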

\begin{proof}
 By Proposition~\ref{prop:CV_transf_Ham}, $ \cH_N^U-E_{N,W}$ converges to $(U_\kappa^\infty)  H_\mathrm{BF} (U_\kappa^\infty)^*$ in norm resolvent sense, so the min-max values also converge. Moreover, the spectral projections associated to an interval around $e_k(H_{\mathrm{BF}})$ that contains no other eigenvalues of $H_\mathrm{BF}$ also converge, cf.~\cite[Thm.VII.23]{ReeSim1}. In particular, their ranks are equal for $N$ sufficiently large.
 To deduce the rate of convergence, consider such an interval and write for all eigenvalues in this interval $e_j(\cH_N^U-E_{N,W})=r_j^{-1}-\ui$, where $r_j$ is the corresponding eigenvalue of $(\cH_N^U-E_{N,W}+\ui)^{-1}$.  The eigenvalues $r_j$ all lie in the complement of a disc in $\C$, where the function $z\mapsto z^{-1}-\ui$ is Lipschitz. It is thus sufficient to bound $|r_j- (e_k(H_{\mathrm{BF}})+\ui)^{-1}|$, which is bounded by the norm-difference of the resolvents (cf.~\cite[Theorem 6.3.3]{Bhatia}, which applies to compact operators by the same proof). Proposition~\ref{prop:HsingW} thus gives us the rate of convergence
\begin{align} \label{eq:conv_vp_Htild_Hbf}
 | e_k(H_{\mathrm{BF}}) - e_k(\cH_N^U)-E_{N,W}| 
 & \lesssim N^{(-\frac14+\eps)\alpha},
\end{align}
for $0\leq \alpha \leq 1/2$ and $\varepsilon>0$, where we will take $\alpha=1/10$, the largest value permitted by Proposition~\ref{prop:U_G}. It is thus sufficient to prove the upper bound using $e_k(\cH_N^U)-E_{N,W}$.
Let $Y_k$ denote the span of the first $k+1$ eigenfunction of $\cH_N^U$.
Our test function for $H_N$ is of the form $$\Psi=U_X U\chi_M \Phi$$ with $\Phi\in Y_k$ normalized and $\chi_M$ a cutoff on $\cN_+\leq M=N^{\alpha/9}$. 
More precisely, let $\chi, \eta:\R\to [0,1]$ be two smooth functions with $\chi^2+\eta^2=1$ and $\chi(r)= 1$ for $r\leq1/2$, $\chi(r)=0$ for $r\geq1$, and define $\eta_M=\eta(\cN_+/M)$, $\chi_M=\chi(\cN_+/M)$.

\emph{Normalizing $\Psi$.} We first show that $\Psi=U_X U\chi_M \Phi$ is almost normalized. Recall that $U_X$ is a partial isometry, hence we
first bound, using that $U^*\cN_+U\lesssim \cN_+$ by Lemma~\ref{lem:U-conserve}, and~\eqref{eq:H_eff,N-form_bound},
\begin{align}
\|U\chi_M \Phi \|^2- \|\Psi \|^2 &= \langle U\chi_M\Phi, (1- U_X^*U_X)U\chi_M\Phi\rangle =  \langle U\chi_M\Phi, \1_{\cN_+>N} U\chi_M\Phi\rangle \nn\\
&\leq  N^{-1} \langle \chi_M\Phi, U^* \cN_+ U\chi_M\Phi\rangle 
\nn\\
%
%
&\lesssim N^{-1} \langle \Phi,(\cH_N^U -E_{N,W}+C)\Phi\rangle. \label{eq:Psi_BF-Nbound}
\end{align}
Then, we find that for $\delta>0$ small enough
\begin{align}
 1-\|U\chi_M \Phi \|^2 &= \langle \Phi, \eta_M^2 \Phi\rangle \leq M^{-5/4+\delta} \langle \Phi, \cN_+^{5/4-\delta} \Phi\rangle \nn \\  %
& \lesssim N^{-\alpha/8} \langle \Phi,(\cH_N^U -E_{N,W}+C))^2\Phi\rangle \lesssim N^{-\alpha/8}, \label{eq:eta_M-Phi_bound}
\end{align}
where we used  the second bound of Proposition~\ref{prop:HsingW} together with~\eqref{eq:bound_cN-K_W}, and chose $M=N^{\alpha/9}$.
We also used that the $k+1$ first eigenvalues of $\cH_N^U -E_{N,W}$ are uniformly bounded in $N$ because of the resolvent convergence we discussed before.
In particular, the map $U_X U\chi_M :Y_k\to \mathscr{H}_N$ is injective for large enough $N$. Thus we have for the min-max value
\begin{align}
 e_k(H_N)&=\min_{X \subset \mathscr{H}_N\atop \mathrm{dim}X=k+1} \max_{\Psi \in X} \frac{\langle \Psi, H_N \Psi\rangle }{\|\Psi\|^2} \\
 &\leq \max_{\Phi \in Y_k \atop \|\Phi\|=1} \langle \chi_M\Phi,  U^* U_X ^* H_N U_X U  \chi_M\Phi\rangle(1+ \mathcal O( N^{-\alpha/8})) \nn\\
 &\leq \max_{\Phi \in Y_k \atop \|\Phi\|=1} \langle \Phi,  \chi_M U^* \1_{\cN_+\leq N}\cH_N \1_{\cN_+\leq N} U \chi_M  \Phi\rangle(1+ \mathcal O(N^{-\alpha/8})). \nn
\end{align}

\emph{Removing $\1_{\cN_+\leq N}$.} Since $ \1_{ N\leq \cN_+} \cH_N  \1_{ N> \cN_+} = 0$ and $\1_{ N> \cN_+} \cH_N  \1_{ N> \cN_+} \geq 0$, we can remove the projections for an upper bound.
We thus have with~\eqref{eq:U-remind}
\begin{align}
e_k(H_N) - \lambda_N \leq \max_{\Phi \in Y_k \atop \|\Phi\|=1} \langle \chi_M\Phi,  (\cH_N^U-E_{N,W} + \cL_4 +\cE )  \chi_M\Phi \rangle(1+\mathcal O(N^{-\alpha/8})). \label{eq:minmax_ub}
\end{align}

\emph{Estimating $\cE$.} We can now bound the error term $\cE$ which satisfies (\ref{eq:EU-proof}). We proceed term by term. The powers of the number operator are bounded by $\chi_M \cN_+^k\leq M^{k-1}\chi_M\cN_+$, which is then bounded as in~\eqref{eq:Psi_BF-Nbound}.
Moreover, by~\eqref{eq:H_eff,N-form_bound} we have that
\begin{equation}
 \langle \chi_M\Phi, (\ud \Gamma(|\ui\nabla|) + \cN_+^{1/2} |\nabla_x|)\chi_M\Phi\rangle \lesssim \langle \chi_M\Phi, \cN_+^{1/2} (-\Delta_x + \ud \Gamma(\epsilon))^{1/2}\chi_M\Phi\rangle
\end{equation}
is of order one since $\chi_M$ commutes with the operator in the middle.
For $t=1-\alpha/4 \leq 1$, we have (compare with~\eqref{eq:bound_WN-Laplace})
\begin{equation}
 \ud \Gamma(W_{N,x})\lesssim \ud \Gamma((-\Delta)^t) \|W_N\|_{3/2t}\lesssim \cN_+^{1-t} \ud\Gamma(\epsilon)^{t}\|W_N\|_{3/2t},
\end{equation}
and using the bound~\eqref{eq:H_eff,N-form_bound}, we obtain for some $C>0$,
\begin{equation}\label{eq:W-H_N-bound}
 N^{-\alpha/2}\ud \Gamma(W_{N,x})  \leq C N^{-\alpha/4}(\cH_N^U-E_{N,W}+C).
\end{equation}
Hence from~\eqref{eq:EU-proof}, the above estimates give
\begin{align} \label{eq:chi_M-E-est}
| \langle \chi_M\Phi, \mathcal E \chi_M\Phi\rangle| \lesssim  N^{-\alpha/4} \log N\langle \chi_M \Phi, (\cH_N^U-E_{N,W}+\log N) \chi_M \Phi\rangle.
\end{align}

\emph{Estimating $\mathcal L_4$.} For the remaining term $\cL_4$, we use the analogue of the  bound~\eqref{eq:bound_WN_onebody} on $W_N$ to obtain for $1\leq t<3/2$
\begin{equation}
 \cL_4 \leq \|V_N\|_{3/2t} \ud \Gamma((-\Delta+1)^{t}) \cN_+ \lesssim  N^{2-2t} \ud \Gamma(-\Delta+1)^{t}\cN_+.
\end{equation}
Here, we cannot use $t<1$ since $\cL_4$ does not come with a small constant. However, $\cL_4$ comes from $V_N$ which scales with $N$, while the singularities of $\cH_N^U$ come from $W_N$ which only scales with $\sqrt{N}$, so by using $t>1$ we gain more decay from $V_N$ than we loose form $\cH_N^U$ and this term will still be small.
To see this, we set $t=1+\alpha/4$ and first use the cutoff at $M$, which gives
\begin{equation}
 \langle \chi_M \Phi, \cL_4 \chi_M\Phi\rangle \lesssim  N^{-\alpha/2} M \langle \chi_M \Phi, \ud \Gamma(-\Delta)^{1+\alpha/4} \chi_M\Phi\rangle.
\end{equation}
As $\ud\Gamma(-\Delta)$ commutes with $\cN_+$ and thus $\chi_M$, we can now apply the bound~\eqref{eq:cor_K_control3} from Corollary~\ref{cor:K_control_kin} with the second bound from Proposition~\ref{prop:HsingW}, with $M = N^{\alpha/9}$, to obtain
\begin{align}
 \langle \chi_M \Phi, \cL_4 \chi_M\Phi\rangle &\lesssim  N^{-\alpha/2} M\|\ud \Gamma(-\Delta)^{1/2+\alpha/8} K_{N,W}^{-1/2-\alpha/8}\|^2  \nn \\
	 &\qquad \times\|K_{N,W}^{1/2+\alpha/8} (\cH_N^U-E_{N,W}+\ui)^{-1})\|^2 \nn\\
 &\lesssim N^{-\alpha/4+2\eps}M\lesssim N^{-\alpha/8}, \label{eq:est_L4_ub}
\end{align}
for $\varepsilon>0$ small enough.
Putting these bounds together, the inequality (\ref{eq:minmax_ub}) yields
\begin{align}
 e_k(H_N) - \lambda_N
 \leq \max_{\Phi \in Y_k \atop \|\Phi\|=1} \langle \chi_M \Phi,(\cH_N^U-E_{N,W})\chi_M\Phi\rangle\left(1 + \mathcal{O}(N^{-\alpha/8})\right). \label{eq:upper_first}
\end{align}

\emph{Removing $\chi_M$.} It remains to prove that the quadratic form on the right gives approximately $\langle \Phi, (\cH_N^U-E_{N,W})\Phi\rangle$.
We use the IMS localization formula \cite[Proposition 6.1]{LewNamSerSol-15}
\begin{equation}
 \cH_N^U = \chi_M \cH_N^U \chi_M + \eta_M \cH_N^U \eta_M + \tfrac12 ([\chi_M,[\chi_M,\cH_N^U]]+ [\eta_M,[\eta_M,\cH_N^U]]).\label{eq:IMS}
\end{equation}
Using that the smallest eigenvalue of $\cH_N^U-E_{N,W}$ is approximated by $e_0(H_\mathrm{BF})$ from (\ref{eq:conv_vp_Htild_Hbf}), we find for the term localized on large particle numbers
\begin{equation}
 \langle \Phi,\eta_M( \cH_N^U -E_{N,W})\eta_M \Phi\rangle\geq  e_0(H_\mathrm{BF}) \| \eta_M\Phi\|^2  + \mathcal{O}(N^{-\alpha/8}) = \mathcal{O}(N^{-\alpha/8}),
\end{equation}
where we used the approximate normalization~\eqref{eq:eta_M-Phi_bound}.
Collecting from the explicit expression of $\cH_N^U$ in~\eqref{eq:H_N-U} the terms that do not commute with $\cN_+$, we see that
\begin{align}
[\chi_M,[\chi_M, \cH_N^U]] &=[\chi_M,[\chi_M, a(\ui \nabla v_{N,x})^2  + 2\ui \nabla_x a(\ui \nabla v_{N,x}) +a(w^\kappa)+\hc ] \nn\\
&= \big(2\ui \nabla_x a(\ui \nabla v_{N,x}) +a(w^\kappa)\big)(\chi_M(\cN_++1)-\chi_M(\cN_+))^2+\hc\nn \\
&\qquad +a(\ui \nabla v_{N,x})^2 (\chi_M(\cN_++2)-\chi_M(\cN_+))^2 + \hc
\end{align}
With the bounds of Lemma~\ref{lem:R_N_bound} on the first line and~\eqref{eq:lem:G_N_bound_5} (with $t=1/2+\eps$) together with~\eqref{eq:H_eff,N-form_bound} for the last line, this gives
\begin{align}
 \big|\big\langle\Phi,[\chi_M,[\chi_M, \cH_N^U]] \Phi\big \rangle\big| &\leq C N^{\eps} \langle \Phi, (\cH_N^U-E_{N,W}+C)\Phi\rangle \sup_{x\geq 0}\big(\chi(\tfrac{x+1}{M})-\chi(\tfrac{x}{M})\big)^2 \nn \\
 &\lesssim M^{-2}N^{\eps} \lesssim N^{-\alpha/8}.\label{eq:localise_error}
\end{align}
The same reasoning applies to the commutator with $\eta_M$, and we thus obtain
\begin{align}\label{eq:upper-localise}
 \langle \chi_M \Phi,(\cH_N^U-E_{N,W})\chi_M\Phi\rangle \leq \langle \Phi,(\cH_N^U-E_{N,W})\Phi\rangle + \mathcal{O}(N^{-\alpha/8}).
\end{align}
In view of~\eqref{eq:upper_first} and with the choice $\alpha=1/10$ this proves the claim.
\end{proof}

\begin{lem}[Lower bound]\label{lem:lower}
 Assume the hypothesis of Theorem~\ref{thm:asymptotic} and let
\begin{equation*}
 \lambda_N:=4\pi  \ao_V (N-1)  + 8\pi  \ao_W \sqrt{N}   + e^{(U)} + E_{N,W}.
\end{equation*}
Then for any $k\in \N$
\begin{equation*}
 e_k(H_N) - \lambda_N \geq e_k(H_\mathrm{BF}) + \mathcal{O}(N^{-1/80}).
\end{equation*}
\end{lem}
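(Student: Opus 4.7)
The plan is to mirror the proof of Lemma~\ref{lem:upper} using the dual max--min characterization
\begin{equation*}
e_k(H_N)=\sup_{\phi_1,\ldots,\phi_k\in \mathscr{H}_N}\inf\{\langle\Psi,H_N\Psi\rangle: \|\Psi\|=1,\; \Psi\perp \phi_1,\ldots,\phi_k\}.
\end{equation*}
I would take as test vectors $\phi_j:=U_X U\chi_M\Phi_j$, $j=0,\ldots,k-1$, where $\Phi_0,\ldots,\Phi_{k-1}$ are the first $k$ eigenfunctions of $\mathcal{H}_N^U-E_{N,W}$, $\chi_M=\chi(\mathcal{N}_+/M)$, and $M=N^{\alpha/9}$ with $\alpha=1/10$. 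These vectors lie in $\mathscr{H}_N$ and are asymptotically orthonormal for $N$ large, exactly as in the upper bound: the required high-moment bound $\|\mathcal{N}_+^{5/8-\varepsilon}\Phi_j\|\lesssim 1$ follows from Proposition~\ref{prop:HsingW} and Corollary~\ref{cor:K_control_kin}.

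Fix a normalized $\Psi\in\mathscr{H}_N$ orthogonal to all $\phi_j$. We may assume $\langle\Psi,H_N\Psi\rangle\leq\lambda_N+e_k(H_\mathrm{BF})+1$, since otherwise the claim is trivial. Set $\Phi:=U^*U_X^*\Psi\in\mathscr{H}_+$; then $\|\Phi\|=1$, $\langle\Phi,\chi_M\Phi_j\rangle=0$ for all $j$, and the BEC condition~\eqref{eq:equiv_cond} combined with Lemma~\ref{lem:U-conserve} yields the a priori bound $\langle\Phi,\mathcal{N}_+\Phi\rangle\lesssim\sqrt{N}$. The identity~\eqref{eq:U-remind} then gives
\begin{equation*}
\langle\Psi,H_N\Psi\rangle = \lambda_N + \langle\Phi,(\mathcal{H}_N^U-E_{N,W})\Phi\rangle + \langle\Phi,\mathcal{L}_4\Phi\rangle + \langle\Phi,\mathcal{E}\Phi\rangle.
\end{equation*}
The crucial observation is that for the lower bound we simply drop $\langle\Phi,\mathcal{L}_4\Phi\rangle\geq 0$, avoiding the delicate high-moment estimate~\eqref{eq:est_L4_ub}. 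The error $\mathcal{E}$ is then absorbed as follows: the $N^{-\alpha/4}\mathcal{L}_4$ part is dominated by the positive $\mathcal{L}_4$ contribution, the $N^{-\alpha/2}\mathrm{d}\Gamma(W_{N,x})$ and $N^{-\alpha/4}\log N\,\mathrm{d}\Gamma(|\mathrm{i}\nabla|)$ terms by small multiples of $\mathcal{H}_N^U-E_{N,W}+C$ via~\eqref{eq:W-H_N-bound} and~\eqref{eq:H_eff,N-form_bound}, and the polynomial-in-$\mathcal{N}_+$ terms after localizing at a higher threshold described below.

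Next, apply IMS localization to $\mathcal{H}_N^U-E_{N,W}$ at the \emph{higher} threshold $M':=N^{1/2+1/80}$. The a priori bound $\langle\Phi,\mathcal{N}_+\Phi\rangle\lesssim\sqrt{N}$ gives $\|\eta_{M'}\Phi\|^2\leq M'^{-1}\langle\Phi,\mathcal{N}_+\Phi\rangle\lesssim N^{-1/80}$, while the commutator terms satisfy $\langle\Phi,[\chi_{M'},[\chi_{M'},\mathcal{H}_N^U]]\Phi\rangle=O(M'^{-2}N^{\varepsilon})=o(N^{-1/80})$ as in~\eqref{eq:localise_error}. The large-particle-number piece contributes $\langle\eta_{M'}\Phi,(\mathcal{H}_N^U-E_{N,W})\eta_{M'}\Phi\rangle\geq(\inf\sigma(\mathcal{H}_N^U-E_{N,W}))\,\|\eta_{M'}\Phi\|^2=O(N^{-1/80})$, since $\mathcal{H}_N^U-E_{N,W}$ is uniformly bounded below. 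For the low-particle-number piece, the projections $|\langle\chi_{M'}\Phi,\Phi_j\rangle|$ are $O(N^{-c})$ for some $c>0$: indeed $\langle\Phi,\chi_M\Phi_j\rangle=0$, while $\|(1-\chi_M)\Phi_j\|$ is tiny by the high-moment bound on $\Phi_j$ and $\|(1-\chi_{M'})\Phi\|\lesssim N^{-1/160}$. A standard perturbative min--max then yields
\begin{equation*}
\langle\chi_{M'}\Phi,(\mathcal{H}_N^U-E_{N,W})\chi_{M'}\Phi\rangle\geq e_k(\mathcal{H}_N^U-E_{N,W})\,\|\chi_{M'}\Phi\|^2+O(N^{-c}),
\end{equation*}
and the quantitative norm-resolvent convergence~\eqref{eq:conv_vp_Htild_Hbf} of Proposition~\ref{prop:HsingW} gives $e_k(\mathcal{H}_N^U-E_{N,W})=e_k(H_\mathrm{BF})+O(N^{-(1/4-\varepsilon)\alpha})$. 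Combining these estimates proves the lemma.

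The main obstacle is the bookkeeping of \emph{two} distinct particle-number cutoffs: the small one $M=N^{\alpha/9}$, used to define the $\phi_j$ while exploiting the high-moment control $\mathcal{N}_+^{5/8-\varepsilon}\Phi_j\in L^2$ available for eigenfunctions, and the large one $M'\gg\sqrt{N}$, which is forced upon us because the generic $\Phi$ only satisfies the first-moment bound $\langle\Phi,\mathcal{N}_+\Phi\rangle\lesssim\sqrt{N}$ from BEC, not the higher moments enjoyed by $\Phi_j$. Checking that all error contributions---IMS commutators at scale $M'$, residual projections on $\Phi_j$, the $\mathcal{E}$ remainder, and the eigenvalue convergence rate---simultaneously lie at rate $N^{-1/80}$ for the choice $\alpha=1/10$ is the delicate step, but no new inequality beyond those already established in Sections~\ref{sect:trafo}--\ref{sect:conv} is required.
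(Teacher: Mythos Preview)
Your strategy via the max--min principle with test vectors $\phi_j=U_X U\chi_M\Phi_j$ is a reasonable dual to the paper's approach, but there is a genuine gap in how you control the error $\mathcal{E}$. You write the identity
\[
\langle\Psi,H_N\Psi\rangle = \lambda_N + \langle\Phi,(\cH_N^U-E_{N,W})\Phi\rangle + \langle\Phi,\cL_4\Phi\rangle + \langle\Phi,\cE\Phi\rangle,
\]
and then propose to handle the polynomial-in-$\cN_+$ part of $\cE$ ``after localizing at a higher threshold''. But your IMS localization is applied only to $\cH_N^U-E_{N,W}$; the quantity $\langle\Phi,\cE\Phi\rangle$ is still evaluated on the \emph{unlocalized} $\Phi$. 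The bound~\eqref{eq:EU-proof} contains $N^{-1/2-\alpha/4}\cN_+^2$ and $N^{-3/2+2\alpha}\cN_+^3$, and the only a priori control you have on $\Phi=U^*U_X^*\Psi$ is the first moment $\langle\Phi,\cN_+\Phi\rangle\lesssim\sqrt{N}$ together with $\cN_+\leq N$ on $U_X^*\Psi$; via Lemma~\ref{lem:U-conserve} this gives at best $\langle\Phi,\cN_+^2\Phi\rangle\lesssim N^{3/2}$, so the $\cN_+^2$-contribution to the error is $\gtrsim N^{1-\alpha/4}$, which blows up. Splitting $\cN_+^k=\chi_{M'}^2\cN_+^k+\eta_{M'}^2\cN_+^k$ does not help: the $\eta_{M'}$-piece is still of this size, and there is nothing on the right-hand side to absorb it (the positive term $\langle\eta_{M'}\Phi,(\cH_N^U-E_{N,W})\eta_{M'}\Phi\rangle$ is only $\gtrsim\langle\eta_{M'}\Phi,\cN_+\eta_{M'}\Phi\rangle$, not $\cN_+^2$).

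The paper circumvents this by localizing \emph{before} conjugating by $U$: it applies IMS to $\cH_N^{\leq N}=U_X^*H_N U_X$ at scale $M=N^{1/2+\alpha/8}$. On the large-$\cN_+$ sector one uses Condition~\ref{cond:BEC} directly to obtain $\eta_M\cH_N^{\leq N}\eta_M\geq\lambda_N\eta_M^2$, so $\cE$ never needs to be estimated there. On the small-$\cN_+$ sector one writes $\chi_M\cH_N\chi_M=\chi_M U(\lambda_N+\cH_N^U-E_{N,W}+\cL_4+\cE)U^*\chi_M$, and now the crucial sandwich $\chi_M U\cN_+^k U^*\chi_M\lesssim M^{k-1}(\cN_++1)$ from Lemma~\ref{lem:U-conserve} makes the polynomial terms small. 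This yields an \emph{operator} lower bound, after which the paper feeds in the eigenspace $X_k$ of $H_N$ and uses min--max---no second cutoff or approximate-orthogonality argument is needed. Your proof can be repaired by adopting this order of operations (localize $U_X^*\Psi$, not $\Phi$), but then the orthogonality $\langle\Phi,\chi_M\Phi_j\rangle=0$ is lost, and the argument essentially collapses to the paper's.
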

\begin{proof}
\emph{Localizing in $\mathcal N_+$.}
In order to control the error terms with $\cN_+^2$, $\cN_+^3$ in $\cE$ from~\eqref{eq:EU-proof}, we need to localize on a subspace of $\mathscr{H}_+$ with $\mathcal N_+\leq M$. Let $\chi_M, \eta_M$ be as in Lemma~\ref{lem:upper}, with the choice  of $M=N^{1/2+\alpha/8}$ as the cutoff scale.
We now apply the localization formula~\eqref{eq:IMS} to $\cH_N^{\leq N}:=U_X^* H_N U_X$.
The term localized by $\chi_M$ will be analyzed further, and the other terms need to be estimated as errors.
This is where the condensation hypothesis, Condition~\ref{cond:BEC} enters into play. It translates to (compare with (\ref{eq:equiv_cond}))
\begin{align}\label{eq:BEC_apriori}
 &\cH_N^{\leq N}  \geq (4\pi \mathfrak{a}_V N+  c \cN_+ -d\sqrt{N}) \1_{\cN_+\leq N}.
\end{align}
For the choice $M=N^{1/2+\alpha/8}$, using that $\eta_M^2 \leq M^{-1} \mathcal N_+ \eta_M^2$, we can reconstruct the energy $ \lambda_N$ on the sector of large particle numbers using the gap, that is
\begin{align}
 \eta_M  \cH_N^{\leq N} \eta_M
 &\geq \lambda_N \eta_M^2\1_{\cN_+\leq N} \label{eq:H_N eta}
\end{align}
for $N$ sufficiently large.

To control the commutators, we collect the terms that do not commute with $\cN_+$ from Proposition~\ref{prop:H-ex}, which gives
\begin{align}
 [\eta_M,[\eta_M,\cH_N^{\leq N}]]& = [\eta_M,[\eta_M, (\cL_2 + \cL_3+Q_1)]].
\end{align}
Arguing as in~\eqref{eq:localise_error} and using the bounds $\cL_2+\cL_3 \lesssim \cL_4 + N$, $Q_1\lesssim \ud\Gamma(W_{N,x})+\sqrt{N}$, which follow from the Cauchy-Schwarz inequality (cf.~\eqref{eq:CS_L4}), we can bound the localization errors by
\begin{align}
 [\eta_M,[\eta_M,\cH_N^{\leq N}]] + [\chi_M,[\chi_M,\cH_N^{\leq N}]] &\lesssim M^{-2}(\cL_4 + \ud\Gamma(W_{N,x}) + N)\1_{ M/2\leq \cN_+\leq M}\nn \\
 & \lesssim N^{-1-\alpha/4}(\cH_N^{\leq N}+N).
\end{align}
Here we used that $ \mathds{1}_{\cN_+ \leq N}(\cL_4 + \ud\Gamma(W_{N,x})) \lesssim \cH_N^{\leq N} + N$ which follows again from the Cauchy-Schwarz inequality applied to the terms composing $\cH_N$.

For the term localized at $\cN_+\leq M=N^{1/2+\alpha/8}$, first note that $\chi_M\1_{\cN_+\leq N}=\chi_M$ for $N$ large enough, so we can drop the projections.
Making use of the unitary to rewrite this term as in~\eqref{eq:U-remind} yields together with~\eqref{eq:H_N eta}
\begin{align}
 \cH_N^{\leq N}& \geq \lambda_N \1_{\cN_+\leq N}+ \chi_M U\big(\cH_N^U -E_{N,W}+\cL_4 +\cE\big)U^* \chi_M.
\end{align}
\emph{Estimating $\cE$.} We can now control the error term $\cL_4 + \cE$. Recall the estimate (\ref{eq:EU-proof}). As $\cL_4\geq 0$, we can simply drop $\cL_4+\mathcal{O}(N^{-\alpha/4}\cL_4)\geq 0$ for large $N$. To bound $\cN_+^2$, $\cN_+^3$, we use that by Lemma~\ref{lem:U-conserve}
\begin{equation}
\begin{aligned}
 \chi_MU\cN_+^k U^*\chi_M \lesssim  \chi_M(\cN_++1)^k \chi_M  \lesssim M^{k-1} \chi_M (\cN_++1) \chi_M.
\end{aligned}
\end{equation}
 We have thus obtained
\begin{align}
 &\chi_M U\big(\cL_4 +\cE\big)U^* \chi_M \geq - \chi_MU\Big(N^{-\alpha/2}  \ud \Gamma(W_{N,x})+ (N^{-\alpha/8}+N^{-1/2+9\alpha/4})\cN_+  \nn\\
 &\qquad
	   +   N^{-\alpha/4}\big(\log N \dd\Gamma(|\ui \nabla|)  +(\log N)^2 +(\mathcal N_++1)^{1/2} |\nabla_x|\Big)U^*\chi_M.
 \end{align}
These remaining error terms have to be estimated using $\cH_N^U-E_{N,W}$.
The term $\ud \Gamma(W_{N,x})$ is bounded as in~\eqref{eq:W-H_N-bound}.
With $\ud\Gamma(|\ui \nabla|)\leq \cN_+^{1/2}\ud\Gamma(\epsilon)^{1/2}$ and~\eqref{eq:H_eff,N-form_bound} we also get
\begin{multline}
 N^{-\alpha/8}\cN_+ +   N^{-\alpha/4}\big(\log N \dd\Gamma(|\ui \nabla|)  + (\log N)^2 +(\mathcal N_++1)^{1/2} |\nabla_x|) \\ \lesssim N^{-\alpha/8}(\cH_N^U-E_{N,W}+C).
\end{multline}
We have thus shown that
\begin{align}
 U_X^* H_N U_X \geq  \lambda_N\1_{\cN_+\leq N}  + (1-CN^{-\alpha/8})\chi_M U\big( \cH_N^U-E_{N,W}\big)U^* \chi_M - CN^{-\alpha/8}.\label{eq:lower-temp}
\end{align}

\emph{Removing the localization.}
We are now ready to bring the proof of the lower bound to conclusion. Let $X_k\subset L^2(\T^{3(N+1)})$ denote the span of the first $k+1$ eigenvectors of $H_N$ and let $\Psi \in X_k$ with norm equal to one. We first need to show that $\chi_M U_X^* \Psi\approx U_X^*\Psi$ for large $N$.
We have with $M=N^{1/2+\alpha/8}$ as before, using the a priori bound (\ref{eq:BEC_apriori}) and the upper bound from Lemma~\ref{lem:upper},
\begin{align}
 \| (1-\chi_M) U_X^* \Psi\|^2 &\leq 2M^{-1} \langle(1-\chi_M) U_X^* \Psi, \cN_+(1-\chi_M) U_X^* \Psi \rangle \nn \\
 &\lesssim M^{-1} \langle \Psi, (H_N-4\pi  \mathfrak{a}_{V} (N-1)) \Psi \rangle + \mathcal{O}(N^{-\alpha/8}) \nn \\
 &= \mathcal{O}(N^{-\alpha/8}).
\end{align}
In particular, $\chi_M U_X^*:X_k\to \mathscr{H}_+$ is injective for sufficiently large $N$.
For the min-max values this implies, coming from~\eqref{eq:lower-temp},
\begin{align}
 e_k(H_N)-\lambda_N &\geq \max_{\Psi\in X_k \atop \|\Psi\|=1} (1-CN^{-\alpha/8}) \langle \chi_MU_X^*\Psi ,U( \cH_N^U-E_{N,W}\big)U^*\chi_MU_X^*\Psi\rangle \nn \\
 &\qquad + \mathcal{O}(N^{-\alpha/8}) \nn \\
  %
  %
  &\geq  e_k( \cH_N^U-E_{N,W}) + \mathcal{O}(N^{-\alpha/8}).
\end{align}
Taking $\alpha=1/10$, the largest value permitted by Proposition~\ref{prop:U_G}, proves the claim.

\end{proof}

\subsection{Proof of Theorem~\ref{thm:asymptotic}}\label{sect:Thm1}

\begin{proof}[Proof of Theorem~\ref{thm:asymptotic}] \leavevmode\\
The upper and lower bounds proved in the lemmas above imply that
\begin{equation}
\inf \sigma(H_N) = \lambda_N + \mathcal{O}(1).
\end{equation}
The proof now simply consists in analyzing
\begin{equation}
  \lambda_N:=4\pi  \ao_V (N-1)  + 8\pi  \ao_W \sqrt{N}   + e^{(U)} + E_{N,W},
\end{equation}
where $e^{(U)}$ is given in Proposition \ref{prop:U_G} and $E_{N,W}$ by~\eqref{eq:E_NW}. Using Lemma \ref{lem:aM-difference} and straightforward estimates on the sums, we obtain that $e^{(U)}$ is of order one. Hence, it remains to extract the logarithmic contribution from $E_{N,W}$. Let us write
\begin{align}
\begin{aligned}
  E_{N,W}&= - 2\Big\langle \nabla  v_N \otimes \nabla v_N ,  (\mathbb{H}^{(2)})^{-1} \nabla v_N \otimes \nabla v_N\Big \rangle,  \\
 \mathbb{H}^{(2)} &= -(\nabla_{y_{1}}+\nabla_{y_{2}})^2 +\epsilon(\ui \nabla_{y_1})+\epsilon(\ui \nabla_{y_2})+W_N(y_1)+W_N(y_2)+1.
\end{aligned}
\end{align}
Recall that for large momenta, which are the relevant ones for the divergence, $v_N$ is essentially given by the scattering solution, $v_N=\sqrt{N}\Wphi$  for $N>N^\alpha$ (cf.~\eqref{eq:w,v-def}).
The term of order $\log N$ is already contained in $E_N:=E_{N,0}$, which is obtained by replacing $\mathbb{H}^{(2)} $ with
\begin{equation}
 \mathbb{H}_0^{(2)} = -(\nabla_{y_{1}}+\nabla_{y_{2}})^2 +\epsilon(\ui \nabla_{y_1})+\epsilon(\ui \nabla_{y_2})+1.
\end{equation}
We have
\begin{subequations}
 \begin{align}
 &E_{N,W}-E_{N}\nn \\
 &=4\Big\langle \nabla  v_N \otimes \nabla v_N ,  (\mathbb{H}^{(2)})^{-1} W_N(y_1)(\mathbb{H}^{(2)}_0)^{-1} \nabla v_N \otimes \nabla v_N\Big \rangle \nn \\
 &= 4\Big\langle \nabla  v_N \otimes \nabla v_N ,  (\mathbb{H}^{(2)}_0)^{-1} W_N(y_1)(\mathbb{H}^{(2)}_0)^{-1} \nabla v_N \otimes \nabla v_N\Big \rangle \label{eq:E_N-diff1}\\
 &\quad  - 4\Big\langle \nabla  v_N \otimes \nabla v_N ,  (\mathbb{H}^{(2)}_0)^{-1} W_N(y_1) (\mathbb{H}^{(2)})^{-1}  \nn \\
 &\qquad\qquad \times (W_{N}(y_1)+W_N(y_2))(\mathbb{H}^{(2)}_0)^{-1} \nabla v_N \otimes \nabla v_N\Big \rangle \label{eq:E_N-diff2}.
\end{align}
\end{subequations}
Using that $\|W_N^{1/2}(\mathbb{H}^{(2)})^{-1/2}\|$ is uniformly bounded, we can bound the expression of the last two lines by
\begin{align}
 |\eqref{eq:E_N-diff2}|\lesssim \| W_N(y_1)^{1/2} (\mathbb{H}^{(2)}_0)^{-1} \nabla  v_N \otimes \nabla v_N  \|^2= \tfrac12 \eqref{eq:E_N-diff1}.
\end{align}
This is now bounded using~\eqref{eq:bound_WN_onebody} and Lemma~\ref{lem:phi_W} with $0<s<1/4$
\begin{align}
 \eqref{eq:E_N-diff1} \leq \underbrace{\|\epsilon^{-1/4+s/2} \nabla v_N\|^4}_{\stackrel{\text{Lemma }\ref{lem:phi_W}}{\lesssim} N^{2s}} \underbrace{\|\epsilon^{-1/2-s}W_N \epsilon^{-1/2-s}\|}_{\stackrel{\eqref{eq:bound_WN_onebody}}{\lesssim} N^{-2s}}\lesssim 1.
\end{align}
We have thus reduced the problem to the analysis of
\begin{equation}\label{eq:E_N0}
 E_N=-2\sum_{p,q\in 2\pi \Z^3}\frac{(p\cdot q)^2  |\widehat  v_N(p)|^2 |\widehat  v_N(q)|^2}{((p+q)^2+\epsilon(p)+\epsilon(q)+1}.
\end{equation}
We may cut-off any of the sums at scale $\sqrt{N}$ at the expense of an error of order one, since, with $p^2 \widehat  v(p)=\mathcal{O}(1)$ from~\eqref{eq:v_N-bound} and $\|\nabla v_N\|_{2}=\mathcal{O}(N^{1/4})$ from Lemma~\ref{lem:phi_W},
\begin{align}
 \sum_{|p|> \sqrt{N} }\frac{(p\cdot q)^2  |\widehat  v_N(p)|^2 |\widehat  v_N(q)|^2}{((p+q)^2+\epsilon(p)+\epsilon(q)+1} \leq \|\nabla v_N\|^2_{2} \sum_{|p|> \sqrt{N} } \frac{1}{p^2 \epsilon(p)}\lesssim 1.
\end{align}
To compute the sum, we can replace $\epsilon(p)$ by $p^2$, at the price of another error of order one. We also need to deal with $\widehat  v_{N}$ which we recall is defined in (\ref{eq:w,v-def}). We can replace $\widehat  v_{N} \1_{\kappa < |p| \leq N^{\alpha}}$ by $4\pi \ao_{W}$ in (\ref{eq:E_N0}) up to an error of order one. For the large momentum part $\widehat  v_{N} \1_{|p| > N^{\alpha}} =\sqrt N \Wphih(p)  \ue^{\ui p x}$, we use the scattering equation (\ref{eq:scattW}) and the definition of $\aW$ in (\ref{eq:aW_N}), to obtain
\begin{align}
\Big| \sqrt{N} \Wphih(p) - \frac{4\pi \aW}{p^2} \Big| 
	&\lesssim \frac{1}{2} \left|\int \sqrt{N} W_N(1+ \varphi_{\mathrm{I}}) \frac{(e^{i p \cdot x} - 1)}{p^2}\right| \nn \\
	&\lesssim \| \varphi_{\mathrm{I}} \|_{\infty} \int  N^{3/2} W(\sqrt N x) \frac{|p\cdot x|}{p^2} \lesssim  \frac{1}{|p|\sqrt{N}}.
\end{align}
Therefore, using the cutoff in the sum and also that $\aW = \ao_{W} + \mathcal O(N^{-1/2})$, we can replace entirely $\widehat  v_{N} $ by $4\pi \ao_W/p^2$ in (\ref{eq:E_N0}) up to an error of order one.
Next, we may also replace the sum by the corresponding integral, as the difference of the summand at two different points decays more quickly. Thus,
\begin{align}
 E_N=- \frac{2}{(2\pi)^6} \int\limits_{|p|,|q|\leq \sqrt{N}}  \frac{(4\pi \ao_W)^4 (p\cdot q)^2\ud q\ud p}{p^4q^4((p+q)^2+p^2+q^2+1)}+\mathcal{O}(1).
\end{align}
This integral can now be evaluated up to order $\log N$. Using spherical coordinates and Gaussian integrals, we obtain
\begin{align}
 \int  \frac{ (p\cdot q)^2\ud q}{(p^2q^4((p+q)^2+p^2+q^2)}
 &= 2\pi \int_{-1}^1  \int_0^\infty  \frac{s^2 \ud r\ud s}{(2r^2 + 2p^2 + 2|p| rs)}  \\
 &= \pi \int_{-1}^1  \int_0^\infty \int_0^\infty   s^2 \ue^{-t(r^2+p^2+|p|rs)} \ud t \ud r \ud s \nn \\
 &= \frac{\pi^2}{2|p|} \int_{-1}^1 \frac{s^2 \ud s}{\sqrt{1-s^2/4}} = \frac{\pi^2}{|p|} \Big(4\arcsin\big(\tfrac12\big) -2\sqrt{\tfrac34}\Big).\nn
\end{align}
Integrating over $p$, this yields
\begin{align}
 E_N= -32\pi\ao_W^4\Big(\frac{2\pi}{3} -\sqrt{3}\Big) \log N + \mathcal{O}(1).
\end{align}
This completes the proof of Theorem~\ref{thm:asymptotic}.
\end{proof}

\subsection{Proof of Theorem~\ref{thm:spect} and Corollary~\ref{cor:operators}}

\begin{proof}[Proof of Theorem~\ref{thm:spect}] \leavevmode\\
This follows immediately by combining the upper bound of Lemma~\ref{lem:upper} with the lower bound of Lemma~\ref{lem:lower} and cancelling all the scalar contributions to $e_k(H_N)$, $e_0(H_N)$.
\end{proof}

\begin{proof}[Proof of Corollary~\ref{cor:operators}] \leavevmode\\

\noindent
\textit{a)} Convergence of eigenfunctions up to subsequences follows immediately from the convergence of spectral projections of part \textit{b)}. Indeed,  denote $e_0:=e_0(H_\mathrm{BF})$ and let $B$ be an interval containing $e_k(H_{\mathrm{BF}})-e_0$, but no different eigenvalues of $H_\mathrm{BF}-e_0$. Then, for $N$ sufficiently large, $e_k(H_N)-e_0(H_N)\in B$ and $U_\mathrm{B}^* U_X^* \Psi_N = U_\mathrm{B}^* U_X^* \1_{B}(H_N-e_0(H_N))\Psi_N$.
With \textit{b)} the difference to $  \1_{B}(H_\mathrm{BF}-e_0)U_\mathrm{B}^* U_X^*\Psi_N$ tends to zero. The latter sequence has a convergent subsequence in the finite-dimensional range of $\1_{B}(H_\mathrm{BF}-e_0)=\1(H_\mathrm{BF}=e_k(H_\mathrm{BF}))$.

\medskip\noindent
\textit{b)} Recall from (\ref{eq:U-remind}) that $ U_X^* H_N U_X = \1_{\cN_+ \leq N} \mathcal H_N  \1_{\cN_+ \leq N}$, with
\begin{equation}
 U^* \mathcal H_N U =  \cH_N^U-E_{N,W} + \lambda_N + \mathcal L_4  + \cE.\nn
\end{equation}
We already know from the norm-resolvent convergence of Proposition~\ref{prop:CV_transf_Ham} that
\begin{equation}\label{eq:proj-H_BF-converge}
 \lim_{N\to \infty} \1_{B}(\cH_N^U-E_{N,W}-e_0)=(U_\kappa^\infty)^* \1_{B}(H_\mathrm{BF}-e_0)U_\kappa^\infty. 
\end{equation}
We thus need to take care of the interaction term $\mathcal L_4$, the error term $\cE$ and the convergence of the unitary $U= U_q U_W U_c U_\mathrm{B}U_\mathrm{G}$, which depends on $N$ via the generators of its factors defined in Section~\ref{sect:trafo}.

\emph{Convergence of $U$.} Concerning the unitary, we have
\begin{equation}
 \lim_{N\to \infty} U\1_{B}(\cH_N^U-E_{N,W}-e_0)U^* = \lim_{N\to \infty} U_\mathrm{B}U_\kappa^\infty\1_{B}(\cH_N^U-E_{N,W}-e_0)(U_\mathrm{B}U_\kappa^\infty)^*. \label{eq:U-converge}
\end{equation}
The reason is that for the transformations $U_q$, $U_W$ and $U_c$ acting on large momenta,
\begin{align}
 0=\lim_{N\to \infty} (U_q-1) (\cN_++1)^{-1/2}&=\lim_{N\to \infty} (U_c -1)(\cN_++1)^{-1/2} \nn  \\ &= \lim_{N\to \infty} (U_W -1)(\cN_++1)^{-1/2} ,
\end{align}
which follows easily using Duhamel's formula as in Sections~\ref{sect:quadratic},~\ref{sect:cubic} in the first two cases and a simple bound on the generator (compare~\eqref{eq:U_kappa-conv}) in the last case.
By the same reasoning, the transformations $U_\mathrm{G}$ and $U_\mathrm{B}$ that act on small momenta converge to their limits, defined by setting $N=\infty$ in the generators, when regularised by $(\cN_++1)^{-1/2}$ (for $N=\infty$, $U_\mathrm{G}$ equals $U_\kappa^\infty$ as defined in~\eqref{eq:def_U_kappa_Lambda}). Since all the unitaries are uniformly bounded on $Q(\cN_+)$ (cf.~Lemma~\ref{lem:U-conserve}) and $(\cN_++1)^{1/2}$ is uniformly bounded on the range of the spectral projection by~\eqref{eq:H_eff,N-form_bound} this proves~\eqref{eq:U-converge}.

\emph{Convergence of the spectral projections.} It remains to prove that
\begin{equation}
 \lim_{N\to \infty}\Big( \1_{B}(\cH_N^U-E_{N,W}-e_0))-U^*U_X^* \1_{B}(H_N-e_0(H_N))U_XU\Big)=0.
\end{equation}
We will argue by induction on the number $k$ of the min-max value. For any $k$ let $B_{k}$ be an interval containing $e_k(H_\mathrm{BF})-e_0$, but no other part of the spectrum of $H_\mathrm{BF}-e_0$, as in the hypothesis. Define $\delta_k$ by
$\sup{B_{k}}= e_k(H_\mathrm{BF}) -e_0 +\delta_k$,
and then for $j\in \N_0$
\begin{equation}
\begin{aligned}
\mathbb{P}_k &= U^* U^*_X \1\big( H_N\leq e_k(H_N)+\delta_k\big)U_X U  = 1-\mathbb{Q}_k, \\
\widetilde{\mathbb{P}}_j &= \1\big(\cH_N^U\leq e_k(\cH_N^U)+\delta_k\big) = 1 - \widetilde{\mathbb{Q}}_k.
\end{aligned}
\end{equation}
Then we have for $N$ sufficiently large
\begin{align}
\begin{aligned}
 U^* U^*_X \1_{B_{k}}(H_N-e_0(H_N) ) U_X U &= \mathbb{P}_k - \mathbb{P}_{k-1}, \\
\1_{B_{k}}(\cH_N^U-E_{N,W}-e_0) &= \widetilde{\mathbb{P}}_k - \widetilde{\mathbb{P}}_{k-1},
\end{aligned}
\end{align}
with the convention that $\mathbb{P}_{-1} = \widetilde{\mathbb{P}}_{-1} = 0$. Therefore, it is enough to prove that $\mathbb{P}_k - \widetilde{\mathbb{P}}_k = \widetilde{\mathbb{Q}}_k \mathbb{P}_k - \widetilde{\mathbb{P}}_k \mathbb{Q}_k$ converges to zero for every $k$. In fact, we will prove that $\widetilde{\mathbb{Q}}_k \mathbb{P}_k$ and $\widetilde{\mathbb{P}}_k \mathbb{Q}_k$ converge to zero.
Let us now assume the statement that has been proved up to $k-1$ for some $k\geq 0$ (it is obvious for $k<0$, as $\mathbb{P}_{-1}=0$).
In Lemma~\ref{lem:upper}, we showed that
\begin{align}
  &\mathrm{Tr} \Big((\cH_N^U - E_{N,W}-e_0)\widetilde{\mathbb{P}}_{k}\Big) \nn\\
 &\!\!\!\! \stackrel{\eqref{eq:upper-localise}}{\geq} \mathrm{Tr} \Big(\widetilde{\mathbb{P}}_k\chi_M(\cH_N^U-E_{N,W}-e_0)\chi_M \widetilde{\mathbb{P}}_k\Big) + \mathcal{O}(N^{-1/80}) \nn\\
 &\stackrel{}{\geq}\mathrm{Tr} \Big(\widetilde{\mathbb{P}}_k\chi_M (U^*U_X H_N U_X U -\lambda_N - e_0) \chi_M \widetilde{\mathbb{P}}_k\Big) + \mathcal{O}(N^{-1/80}) \nn \\
 & \geq \mathrm{Tr}\Big(\widetilde{\mathbb{P}}_k\chi_M  (U^*U_X H_N U_X U -e_0(H_N))\mathbb{P}_k\chi_M \widetilde{\mathbb{P}}_k\Big) \nn \\
 &\qquad +(e_{k+1}(H_N)-e_0(H_N))\mathrm{Tr}\Big(\widetilde{\mathbb{P}}_k\chi_M  \mathbb{Q}_k\chi_M \widetilde{\mathbb{P}}_k\Big)+ \mathcal{O}(N^{-1/80}), \label{eq:spectral-PQ}
\end{align}
where we specifically used (\ref{eq:minmax_ub}) together with (\ref{eq:est_L4_ub}) and (\ref{eq:chi_M-E-est}) for the second inequality to get rid of $\mathcal L_4$ and $\cE$, as well as the lower bound on $e_0(H_N)$ from Lemma~\ref{lem:lower}.
We may remove the localization $\chi_M$ in the two terms in (\ref{eq:spectral-PQ}) thanks to the presence of $\mathbb{P}_k$ and because
\begin{equation}
 \| \widetilde{\mathbb{P}}_j(1-\chi_M)\|^2 \stackrel{\eqref{eq:eta_M-Phi_bound}}{=} \mathcal{O}(N^{-1/80}).
\end{equation}
Let $d_k$ be the dimension of $\mathrm{ran}\,\1_{B_{k}}(\cH_N^U-E_{N,W}-e_0)$, which equals the multiplicity of $e_k(H_\mathrm{BF})$ by~\eqref{eq:proj-H_BF-converge} for $N$ large enough.
By the induction hypothesis, which implies in particular that $\mathbb{P}_{k-1}\widetilde{\mathbb{P}}_k-\mathbb{P}_{k-1}\to0$, we then have by Theorem \ref{thm:spect},
\begin{equation}
 \lim_{N\to \infty} \mathrm{Tr}\Big(\widetilde{\mathbb{P}}_{k}\chi_M  (U^*U_X H_N U_X U -e_0(H_N))\mathbb{P}_{k-1}\chi_M \widetilde{\mathbb{P}}_{k}\Big)= \sum_{j=0}^{k-1}d_j (e_j(H_\mathrm{BF})-e_0).
\end{equation}
Moreover, by (\ref{eq:conv_vp_Htild_Hbf}) we also have
\begin{equation}
 \lim_{N\to \infty} \mathrm{Tr} \Big((\cH_N^U - E_{N,W}-e_0)\widetilde{\mathbb{P}}_{k}\Big) = \sum_{j=0}^k d_j (e_j(H_\mathrm{BF})-e_0).
\end{equation}
We can thus take the limit in~\eqref{eq:spectral-PQ}, whence the terms with $j<k$ cancel, giving
\begin{align}
d_k(e_k(H_\mathrm{BF})-e_0)  \geq  \limsup_{N\to \infty} \Big(& (e_k(H_\mathrm{BF})-e_0)\mathrm{Tr}\big((\widetilde{\mathbb{P}}_{k}-\widetilde{\mathbb{P}}_{k-1}) \mathbb{P}_k\big) \\
 &\quad  + (e_{k+1}(H_\mathrm{BF})-e_0)\mathrm{Tr}\big((\widetilde{\mathbb{P}}_{k}-\widetilde{\mathbb{P}}_{k-1}) \mathbb{Q}_k \big)\Big).\nn
\end{align}
As $ \mathbb{P}_k +\mathbb{Q}_k=1$ and $d_k=\mathrm{Tr}(\widetilde{\mathbb{P}}_{k}-\widetilde{\mathbb{P}}_{k-1})$ for large $N$, dividing by $d_k$ and eliminating $e_0$ expresses $e_k(H_{BF})$ as a convex combination of itself and $e_{k+1}(H_\mathrm{BF})$. But since $e_k(H_\mathrm{BF})<e_{k+1}(H_\mathrm{BF})$ this implies,
\begin{equation}
0= \lim_{N\to \infty} \mathrm{Tr}\big((\widetilde{\mathbb{P}}_{k}-\widetilde{\mathbb{P}}_{k-1}) \mathbb{Q}_k \big)=\lim_{N\to \infty} \mathrm{Tr}\big(\widetilde{\mathbb{P}}_{k} \mathbb{Q}_k \big)\geq\lim_{N\to \infty} \| \widetilde{\mathbb{P}}_{k} \mathbb{Q}_k\|,
\end{equation}
where we used the induction hypothesis once more.
The argument for the convergence of $\mathbb{P}_k  \widetilde{\mathbb{Q}}_k$ is essentially the same, using the inequalities proved in Lemma~\ref{lem:lower}.

\medskip\noindent\textit{c)}
This follows from \textit{b)} by an approximation argument.
To be precise, let $\eps >0$ and let $N$ be large enough for $\|U_X^* \Psi_N-\Psi\|<\eps$ to hold. Choose  $k$  so that
\begin{equation}
\| \1(H_\mathrm{BF}>e_k(H_\mathrm{BF})+\delta)U^*_\mathrm{B} \Psi\|<\eps,
\end{equation}
where $\delta$ is such that part \textit{b)} applies to $B_\delta(e_k(H_\mathrm{BF})H_\mathrm{BF}))$.
By the convergence of spectral projections proved there, we have
\begin{equation}
 \| \1(H_N>e_k(H_N)+\delta) U_X \Psi\|<2\eps, 
\end{equation}
for sufficiently large $N$.
Consequently, using unitarity of the group,
\begin{align}
&\Big\| U_X^*\ue^{-\ui t(H_N-e_0(H_N))}\Psi_N - U_\mathrm{B} \ue^{-\ui t (H_\mathrm{BF}-e_0(H_\mathrm{BF}))} U_\mathrm{B}^*\Psi\| \nn \\
&\leq \Big\|  \sum_{j=0}^k \Big(\ue^{-\ui t(e_j(H_N)-e_0(H_N))} U_X^*\1(H_N=e_j(H_N))U_X \Psi \nn\\
&\qquad - \ue^{-\ui t(e_j(H_\mathrm{BF})-e_0(H_\mathrm{BF}))}U_\mathrm{B}\1(H_N=e_j(H_N))U_\mathrm{B}^*\Psi\Big) \Big\| +4\eps,
\end{align}
which is less than $5\eps$ for sufficiently large $N$ by the convergence of the eigenvalue differences from Theorem~\ref{thm:spect} and the spectral projections from part \textit{b)}.
\end{proof}

\appendix

\section{The scattering problem on the torus}\label{app:scatt} 

\subsection{Properties of the scattering solution}

Let $ 0\leq v \in L^2(\T^3)$ be even and let us define
\begin{align}
\varphi = - \frac{1}{2} \frac{1}{P^{\perp}(-\Delta + \frac{1}{2}v)P^{\perp}} P^{\perp} v \in P^{\perp} L^2(\T^3), \label{eq:phi_scatt}
\end{align}
where $P^{\perp}$ is the orthogonal projection to the complement of the constant $\1$, i.e, functions that integrate to zero. 
The Fourier coefficients of $\phi$ then solve the scattering equation (\ref{eq:scattV}) with potential $v$,
\begin{equation}\label{eq:scat:gen}
 p^2 \widehat\phi(p) + \tfrac12 \sum_{q\in 2\pi \in \Z^3 \setminus\{0\}} \widehat\phi(q) \widehat v(p-q) = - \tfrac12 \widehat v(p)
\end{equation}

\begin{lem}[Regular estimates]
	\label{lem:reg_est}
There is $C>0$ so that for $0\leq v\in L^2(\T^3)$ 
\begin{subequations}
 \begin{align}
\|\phi\|_{2} &\leq C  \|v\|_{2}^{2/3} \|v\|_{1}^{4/3}, \qquad \| \nabla \phi\|_{2} \leq C  \|v\|_{2}^{1/3} \|v\|_{1}^{2/3},    \label{eq:est_phi_2_p} \\
| \widehat\phi (p)|
	&\leq \frac{C}{p^2}   \|v\|_{2}^{2/3} \|v\|_{1}^{4/3}, \quad p \in  2\pi\Z^3\setminus\{0\}, \label{eq:est_phi_pw} \\
 \| \widehat\phi\|_1 
	&\leq C  \|v\|_{1}^{1/3} \|v\|_2^{2/3} \left( 1+  (1+\|v\|_{2}^2) \|v\|_{1} \right). \label{eq:est_phi_1_infty}
\end{align}
\end{subequations}
\end{lem}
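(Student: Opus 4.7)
The scattering equation \eqref{eq:scat:gen}, rewritten as $p^2\widehat\phi(p) = -\tfrac12 \widehat{v(1+\phi)}(p)$ for $p\in 2\pi\Z^3\setminus\{0\}$, together with the physical-space form $(-\Delta + \tfrac12 v)\phi = -\tfrac12 v + c\cdot\1$ (with $c$ chosen so $\phi\in P^\perp L^2$), is the starting point for all three bounds. The main analytic ingredients are the nonnegativity of $v$ and of $1+\phi$, the zero-mean Sobolev inequality $\|\phi\|_6 \lesssim \|\nabla\phi\|_2$ on $\T^3$, and the interpolation $\|v\|_{6/5} \leq \|v\|_1^{2/3}\|v\|_2^{1/3}$. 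The plan is to control $\|\nabla\phi\|_2$ first by an energy estimate, then to bootstrap to $\|\phi\|_2$ and the pointwise Fourier bound, and finally to obtain $\|\widehat\phi\|_1$ by splitting the momentum sum.

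The gradient bound in \eqref{eq:est_phi_2_p} follows from testing the scattering equation against $\phi$. Because $\phi$ has zero mean the constant $c$ drops out, and
\begin{equation*}
 \|\nabla\phi\|_2^2 + \tfrac12 \int v\phi^2 = -\tfrac12 \int v\phi.
\end{equation*}
Discarding the nonnegative middle term, estimating the right-hand side by H\"older as $|\int v\phi| \leq \|v\|_{6/5}\|\phi\|_6$, and absorbing one power of $\|\nabla\phi\|_2$ via Sobolev yields $\|\nabla\phi\|_2 \lesssim \|v\|_{6/5}$; the interpolation above then finishes this step.

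For the $L^2$ bound on $\phi$ and the pointwise estimate \eqref{eq:est_phi_pw}, the Fourier form of the scattering equation gives $|\widehat\phi(p)| \leq \frac{1}{2p^2}\|v(1+\phi)\|_1$, and bounding $\|v(1+\phi)\|_1 \leq \|v\|_1 + \|v\|_{6/5}\|\phi\|_6$ through H\"older and Sobolev, combined with the previous step, produces the right-hand side of \eqref{eq:est_phi_pw}. Squaring and summing over $p$, using that $\sum_{p\neq 0} p^{-4}$ converges on $\T^3$, gives the $L^2$ bound in \eqref{eq:est_phi_2_p}.

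The last estimate \eqref{eq:est_phi_1_infty} is obtained by splitting $\|\widehat\phi\|_1 = \sum_{0<|p|\leq R} |\widehat\phi(p)| + \sum_{|p|>R} |\widehat\phi(p)|$ at a scale $R$ to be optimized. The low-momentum part is bounded by Cauchy--Schwarz against the counting measure by $R^{3/2}\|\phi\|_2$. For the high-momentum part, Cauchy--Schwarz against $\sum_{|p|>R} p^{-4} \lesssim R^{-1}$ combined with the Fourier form of the scattering equation gives a bound by $R^{-1/2}\|v(1+\phi)\|_2$, where $\|v(1+\phi)\|_2 \leq \|v\|_2 + \|v\phi\|_2$ is controlled using either an $L^\infty$ bound on $\phi$ obtained by bootstrapping from the already-established pointwise Fourier estimate, or directly via $\|v\phi\|_2 \leq \|v\|_3 \|\phi\|_6$. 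Optimizing over $R$ and inserting the bounds from the earlier steps yields \eqref{eq:est_phi_1_infty}. The principal technical point, and the main bookkeeping obstacle, is to arrange the various $L^p$ norms of $v$ so that precisely the combination $\|v\|_1^{1/3}\|v\|_2^{2/3}\bigl(1 + (1+\|v\|_2^2)\|v\|_1\bigr)$ emerges.
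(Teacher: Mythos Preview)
Your treatment of \eqref{eq:est_phi_2_p} and \eqref{eq:est_phi_pw} is correct and essentially parallel to the paper's, just in physical rather than Fourier space: where the paper bounds $\||p|^{-1}\widehat v\|_2$ by splitting the sum at a scale $A\sim(\|v\|_2/\|v\|_1)^{2/3}$, you use the Sobolev inequality $\|\phi\|_6\lesssim\|\nabla\phi\|_2$ together with $\|v\|_{6/5}\le\|v\|_1^{2/3}\|v\|_2^{1/3}$, which is the same content via Hardy--Littlewood--Sobolev.

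For \eqref{eq:est_phi_1_infty}, however, there is a genuine gap. Your high-momentum bound $\sum_{|p|>R}|\widehat\phi(p)|\lesssim R^{-1/2}\|v(1+\phi)\|_2$ requires $v\phi\in L^2$. Under the sole hypothesis $v\in L^2$, the preceding steps only give $\phi\in H^1\subset L^6$, whence $v\phi\in L^{3/2}$; your option $\|v\phi\|_2\le\|v\|_3\|\phi\|_6$ needs $v\in L^3$, which is not assumed, and your other option (``bootstrapping'' to $\phi\in L^\infty$ from $|\widehat\phi(p)|\lesssim p^{-2}$) fails because $\sum_{p\neq 0}p^{-2}$ diverges in three dimensions and the natural elliptic bootstrap lands on the borderline space $W^{2,3/2}$, which does not embed in $L^\infty$.

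The paper avoids this obstruction by a different mechanism. From the scattering equation one has $2\|\widehat\phi\|_1\le\||p|^{-2}\widehat v\|_1+\||p|^{-2}\,\widehat v*\widehat\phi\|_1$. After first establishing $\|\widehat\phi\|_1<\infty$ qualitatively by iterating the smoothing estimate $\||p|^{-2}\,\widehat v*h\|_{6q/(q+6)+\varepsilon}\lesssim\|h\|_q$, the second term is split at a scale $A$: the low part is controlled by $A\,\|\widehat v*\widehat\phi\|_\infty\lesssim A\,\|v\|_2^{2/3}\|v\|_1^{4/3}$, while the high part uses Young's inequality $\|\widehat v*\widehat\phi\|_2\le\|v\|_2\|\widehat\phi\|_1$, producing a self-referential inequality $\|\widehat\phi\|_1\le C_1+C A^{-1/2}\|v\|_2\,\|\widehat\phi\|_1$. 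Choosing $A\sim 1+\|v\|_2^2$ makes the coefficient on the right less than one and allows absorption. This absorption step is precisely what circumvents the need for $\phi\in L^\infty$ (equivalently $v\phi\in L^2$) and is also what produces the specific form $\|v\|_1^{1/3}\|v\|_2^{2/3}\bigl(1+(1+\|v\|_2^2)\|v\|_1\bigr)$; it is not merely a bookkeeping issue as your last paragraph suggests.
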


\begin{rem} \label{rem:Lp-scaling}
Taking $v_n(x) = n^2 v(n x)$ for $n \geq 1$, that is $\widehat v_n (p) = n^{-1} \widehat  v (p/n)$, we obtain
\begin{align*}
\|v_n\|_{2}^{2/3} \|v_n\|_{1}^{4/3}& = n^{-1} \|v\|_{2}^{2/3} \|v\|_{1}^{4/3},\\
\|v_n\|_{1}^{1/3} \|v_n\|_2^{2/3} &=  \|v\|_{1}^{1/3} \|v\|_2^{2/3}, \\
 (1+\|v_n\|_{2}^2) \|v_n\|_{1} &=  (n^{-1}+\|v\|_{2}^2) \|v\|_{1}.
\end{align*}
\end{rem}

\begin{proof}
Since $v\geq 0$ is in $L^2$, we have $\phi\in L^2(\T^3)$. 
Writing the scalar product of $-\Delta \phi$ with $\phi$ in Fourier space, we obtain with $v\geq 0$
\begin{align}
\| p \widehat\phi \|_2^2  \leq \| p \widehat\phi \|_2^2 + \tfrac{1}{2}\|\sqrt v \phi \|_2^2 
	&= - \tfrac{1}{2} \langle \widehat v, \widehat\phi\rangle \leq \left(\sum_{p \in 2\pi\Z^3\setminus\{0\}} \frac{| \widehat v(p)|^2}{p^2}\right)^{1/2} \|p \widehat\phi\|_2.
\end{align}
We deduce that $\phi \in H^1(\T^3)$ and $\|p \widehat\phi\|_2  \leq \| |p|^{-1} \widehat v\|_{2}$. Let us now bound, for any $A\geq 1$,
\begin{align}
\| |p|^{-1} \widehat v\|_{2}^2 
	&= \sum_{p \in 2\pi\Z^3\setminus\{0\}} \frac{| \widehat v(p)|^2}{p^2} = \sum_{|p| > A} \frac{| \widehat v(p)|^2}{p^2} +  \sum_{|p| \leq A}\frac{| \widehat v(p)|^2}{p^2}\nn \\
	&\leq  A^{-2} \|\widehat v\|_{2}^2 + C A \|\widehat v\|_{\infty}^2 
	\leq C \|v\|_{2}^{2/3} \|v\|_{1}^{4/3},
\end{align}
where we took $A = \|v\|_{2}^{2/3} \|v\|_1^{-2/3} \geq 1$, used that $\|\widehat v\|_{\infty} \leq \|v\|_1 \leq \|v\|_2$. This proves the second estimate in (\ref{eq:est_phi_2_p}). The same computation gives also $\| |p|^{-2} \ast |\widehat v|^2 \|_\infty \leq C \|v\|_{2}^{2/3} \|v\|_{1}^{4/3}$,
\begin{align} \label{eq:bound_v_ast_phi}
\|\widehat v \ast \widehat\phi\|_{\infty} 
	&\leq \sup_{q \in 2\pi\Z^3\setminus\{0\}}\left(\sum_{p \in 2\pi\Z^3\setminus\{0\}} \frac{| \widehat v(p-q)|^2}{p^2}\right)^{1/2}\|p \widehat\phi\|_2 \leq C \|v\|_{2}^{2/3} \|v\|_{1}^{4/3},
\end{align}
From this and the scattering equation (\ref{eq:scat:gen}), we obtain the point-wise bound on $\widehat\varphi$ (\ref{eq:est_phi_pw}) as well as the first estimate in (\ref{eq:est_phi_2_p}).
Multiplying (\ref{eq:scat:gen}) by $|p|^{-2}$ and summing over $p$, we also obtain 
\begin{align} \label{eq:phi_1}
2 \|\widehat \phi\|_1 
	&\leq \||p|^{-2} \widehat v  \|_1 + \||p|^{-2} \widehat v \ast \widehat\phi\|_1.
\end{align}
The first term is bounded as follows, picking $A = (\|v\|_2/\|v\|_1)^{2/3} \geq 1$, we obtain
\begin{align}
\||p|^{-2} \widehat v  \|_1 \leq \| \1_{|p|\leq A} |p|^{-2} \|_1 \|\widehat v\|_{\infty} +  \| \1_{|p|> A} |p|^{-2} \|_2 \|v\|_{2} \leq C \|v\|_{1}^{1/3} \|v\|_2^{2/3}.
\end{align}
By iterating the equation (\ref{eq:phi_scatt}) and using that $\||p|^{-2} \widehat{v} \ast h \|_{6p/(p+6)+\varepsilon} \leq C_\varepsilon \|h\|_{p}$ for all $p\geq 6/5$, $\varepsilon>0$, we can estimate the second term in (\ref{eq:phi_1}) and already obtain that $\| \widehat\phi\|_1  < \infty$. To obtain a quantitative estimate, we decompose the second term for $A \geq 1$ as
\begin{align}
\||p|^{-2} \widehat v \ast \widehat\phi\|_1 
	&\leq \||p|^{-2}\1_{|p|\leq A} \widehat v \ast \widehat\phi\|_1 + \||p|^{-2}\1_{|p|> A} \widehat v \ast \widehat\phi\|_1 \nn \\
	&\leq \||p|^{-2} \1_{|p|\leq A}\|_1 \|\widehat v \ast \widehat\phi\|_\infty + \||p|^{-2}\1_{|p|> A}\|_2 \|\widehat v \ast \widehat\phi\|_2 \nn \\
	&\leq C \left( A  \|v\|_{2}^{2/3} \|v\|_{1}^{4/3}  + A^{-1/2}  \|\widehat v\|_2 \| \widehat\phi\|_1\right).
\end{align}
For $A =  C^2\|v\|_2^2 + 1> C^2 \|v\|_2^2$ this  yields
\begin{align}
\| \widehat\phi\|_1 
	&\lesssim (1-A^{-1/2} \|v\|_{2})^{-1} \left( \|v\|_{1}^{1/3} \|v\|_2^{2/3} + A \|v\|_{2}^{2/3} \|v\|_{1}^{4/3}\right) \nn\\
	&\lesssim \|v\|_{1}^{1/3} \|v\|_2^{2/3} \left( 1+  (1+\|v\|_{2}^2) \|v\|_{1} \right),
\end{align}
which concludes the proof.
\end{proof}

We now consider, for $ m \geq 1$, the truncated scattering solution 
\begin{equation}
\widehat\phi_{m}(p) = \1_{|p|> m} \widehat\phi(p). 
\end{equation}

\begin{lem}[Estimates on $\phi_m$]
	\label{lem:est_phi_alpha}
There is $C>0$ so that for $0\leq v\in L^2(\T^3)$ and all $m \geq 1$, we have
\begin{subequations}
 
\begin{align}
&\|\phi_m\|_{2} \leq C m^{-1/2} \|v\|_{2}^{2/3} \|v\|_{1}^{4/3}, \qquad \| \nabla \phi_m\|_{2} \leq C  \|v\|_{2}^{1/3} \|v\|_{1}^{2/3}, \label{eq:phi_alpha} \\
&\| \phi - \phi_m \|_{\infty} \leq C m  \|v\|_{2}^{2/3} \|v\|_{1}^{4/3} \label{eq:diff_phi_infty}
\end{align}
Moreover, for $0\leq s < 1$ there is $C>0$ so that if $s< 1/2$
\begin{align}
\| |p|^s \widehat\phi_m\|_2
	&\leq C m^{-(1/2-s)} \|v\|_{2}^{2/3} \|v\|_{1}^{4/3} \label{eq:s_leq}\\
\| |p|^{1/2} \widehat\phi\|_2 
	&\leq C  \|v\|_{2}^{2/3} \|v\|_{1}^{4/3} \sqrt{|\log \|v\|_2| + 1},
\end{align}
and for $1/2 < s \leq 1$
\begin{align*}
\| |p|^{s} \widehat\phi_m\|_2 
	&\leq C m^{-(1-s)/2} \|v\|_{2}^{2-s/3} \|v\|_{1}^{2(2-s)/3}.
\end{align*}

\end{subequations}

\end{lem}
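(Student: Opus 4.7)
The plan is to deduce every estimate by plain momentum-space summation, after truncating to $|p|>m$, combining the two workhorses already obtained in Lemma~\ref{lem:reg_est}: the pointwise decay
\[
|\widehat\phi(p)|\le \frac{C}{p^2}\,\|v\|_{2}^{2/3}\|v\|_{1}^{4/3},\qquad p\in 2\pi\Z^3\setminus\{0\},
\]
and the $H^1$-type control $\|p\widehat\phi\|_2\le C\,\|v\|_{2}^{1/3}\|v\|_{1}^{2/3}$. The former is sharp at low $|p|$ but only gives $\widehat\phi\in\ell^{2,-\eps}$, while the latter captures the genuine $H^1$-regularity and is the right object at high frequency. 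All six estimates will come from choosing the right one, or balancing both at a suitable threshold.

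For the first four estimates no balancing is needed. Using $|\widehat\phi_m|\le|\widehat\phi|$ and $\sum_{|p|>m}|p|^{-4}\lesssim m^{-1}$, the pointwise decay at once yields the $L^2$--bound in~\eqref{eq:phi_alpha}, and the $H^1$-bound $\|p\widehat\phi\|_2$ gives the gradient estimate in~\eqref{eq:phi_alpha}. For~\eqref{eq:diff_phi_infty} I use $\sum_{0<|p|\le m}|p|^{-2}\lesssim m$ together with $\|\phi-\phi_m\|_\infty\le\sum_{0<|p|\le m}|\widehat\phi(p)|$. For $0\le s<1/2$ the sum $\sum_{|p|>m}|p|^{2s-4}\lesssim m^{2s-1}$ still converges and gives~\eqref{eq:s_leq} directly from the pointwise bound.

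The borderline $s=1/2$ and the range $s\in(1/2,1]$ are where both inputs must be combined. For $s=1/2$ I split $\sum_{p\ne0}|p||\widehat\phi(p)|^2$ at a threshold $A\ge1$: on $\{0<|p|\le A\}$ the pointwise decay produces $C\|v\|_2^{4/3}\|v\|_1^{8/3}\sum_{0<|p|\le A}|p|^{-3}\lesssim\|v\|_2^{4/3}\|v\|_1^{8/3}\log(A+1)$, while on $\{|p|>A\}$ one estimates $|p|\le A^{-1}|p|^2$ and uses the $H^1$-bound to get $A^{-1}\|p\widehat\phi\|_2^2\lesssim A^{-1}\|v\|_2^{2/3}\|v\|_1^{4/3}$. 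The choice $A=\max(1,\|v\|_2^{-2/3}\|v\|_1^{-4/3})$ balances the two contributions and, using $\|v\|_1\le\|v\|_2$ on the unit torus to bound $\log A$ by $|\log\|v\|_2|+1$, produces precisely the logarithmic factor claimed. For $s\in(1/2,1]$ the same dichotomy applied to $\sum_{|p|>m}|p|^{2s}|\widehat\phi(p)|^2$ at some $A\ge m$ gives a pointwise contribution controlled by $\|v\|_2^{4/3}\|v\|_1^{8/3}A^{2s-1}$ and an $H^1$ contribution controlled by $A^{2s-2}\|v\|_2^{2/3}\|v\|_1^{4/3}$; optimizing in $A$ (with the constraint $A\ge m$) yields the claimed $m$-decay of order $m^{-(1-s)/2}$ and the stated scaling in $\|v\|_2,\|v\|_1$.

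I expect the proof itself to be short, since all the heavy lifting was already done in Lemma~\ref{lem:reg_est}; the only real bookkeeping item is the last paragraph, where one must keep track of how the optimal threshold depends on $\|v\|_2$ and $\|v\|_1$ and verify the constraint $A\ge1$ (respectively $A\ge m$), which in the application to $V_N$ and $W_N$ is automatic since $\|v\|_2^{-2/3}\|v\|_1^{-4/3}$ grows polynomially in $N$.
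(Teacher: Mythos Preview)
Your treatment of the four ``easy'' estimates \eqref{eq:phi_alpha}, \eqref{eq:diff_phi_infty}, \eqref{eq:s_leq} coincides with the paper's. The remaining two cases, however, diverge from the paper's proof, and one of them contains an error.

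\textbf{The range $1/2<s\le 1$.} Your optimization does \emph{not} give the $m^{-(1-s)/2}$ decay you claim. With the two terms $c_1A^{2s-1}$ and $c_2A^{2s-2}$ (where $c_1=C\|v\|_2^{4/3}\|v\|_1^{8/3}$, $c_2=C\|v\|_2^{2/3}\|v\|_1^{4/3}$), the unconstrained optimum is at $A^*\sim c_2/c_1=\|v\|_2^{-2/3}\|v\|_1^{-4/3}$, which is independent of $m$; when $m\le A^*$ your bound carries no $m$ at all, and when $m>A^*$ the constraint forces $A=m$, whence the low sum is empty and the bound becomes $c_2m^{2s-2}$, i.e.\ decay $m^{s-1}$. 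Neither case matches $m^{-(1-s)/2}$. (Amusingly, the geometric mean of your two cases is exactly the stated bound, so your approach does imply the lemma after an extra line --- but not by the mechanism you describe.) The paper bypasses all of this with the one-line H\"older interpolation
\[
\||p|^s\widehat\phi_m\|_2\le \||p|\widehat\phi_m\|_2^{\,s}\,\|\widehat\phi_m\|_2^{\,1-s},
\]
which together with \eqref{eq:phi_alpha} immediately yields $m^{-(1-s)/2}\|v\|_2^{(2-s)/3}\|v\|_1^{2(2-s)/3}$.

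\textbf{The endpoint $s=1/2$.} Here your route genuinely differs from the paper's: you split $\sum_p|p|\,|\widehat\phi(p)|^2$ directly and balance the pointwise decay against the $H^1$ bound, whereas the paper goes back to the scattering equation and estimates $\||p|^{-3/2}\widehat v\|_2$ and $\||p|^{-3/2}\widehat v\ast\widehat\phi\|_2$ separately. Your approach is simpler, but the step ``use $\|v\|_1\le\|v\|_2$ to bound $\log A$ by $|\log\|v\|_2|+1$'' does not hold as stated: with $A=\|v\|_2^{-2/3}\|v\|_1^{-4/3}$ one has $\log A=-\tfrac23\log\|v\|_2-\tfrac43\log\|v\|_1$, and $\|v\|_1\le\|v\|_2$ gives a \emph{lower} bound on $\log A$, not an upper one (take $\|v\|_2=1$, $\|v\|_1\to0$). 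Your argument really produces a factor $\sqrt{|\log\|v\|_2|+|\log\|v\|_1|+1}$; this is harmless in the applications to $V_N,W_N$ since both logarithms are $\sim\log N$, but it does not give the lemma as written.
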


\begin{proof}

The estimates (\ref{eq:phi_alpha}), (\ref{eq:diff_phi_infty}) and  (\ref{eq:s_leq}) follow easily from (\ref{eq:est_phi_pw}) and (\ref{eq:est_phi_2_p}). The estimate for $1/2<s<1$ follows by Hölder's inequality $\| |p|^{s} \widehat  \phi_m\|_2  \leq \| |p| \widehat  \phi_m\|_2^{s} \|  \widehat  \phi_m\|_2^{1-s}$ together with the two bounds in (\ref{eq:phi_alpha}).

It remains to prove the case $s=1/2$. We use the equation (\ref{eq:scat:gen}) to obtain
\begin{align}
2 \||p|^{1/2} \widehat {\phi}\|_2 \leq \| |p|^{-3/2} \widehat  v \|_2 + \| |p|^{-3/2} \widehat  v \ast \widehat \phi \|_2.
\end{align}
We bound the first term as follows, for any $A\geq 1$, we have
\begin{align}
 \| |p|^{-3/2} \widehat  v \|_2^2
 	&\leq \| \1_{|p|\leq A}|p|^{-3/2} \widehat  v \|_2^2 + \|\1_{|p|>A} |p|^{-3/2} \widehat  v \|_2^2 \nn \\
 	&\leq C \log A \|\widehat  v\|_\infty^2 + A^{-3} \|\widehat  v \|_2^2\nn \\
	&\leq C   \|\widehat  v\|_\infty^2(\log ( \|\widehat  v\|_2/ \|\widehat  v\|_\infty) + 1)
\end{align}
where we chose $A =  \|\widehat  v\|_2^{2/3} /  \|\widehat  v\|_\infty^{2/3} \geq 1$. Similarly, we have
\begin{align}
\| |p|^{-3/2} \widehat  v \ast \widehat \phi \|_2
	&\leq C \|\widehat  v \ast \widehat \phi\|_\infty \sqrt{\log ( \|\widehat  v \ast \widehat \phi\|_2/ \|\widehat  v \ast \widehat \phi\|_\infty) + 1}\nn \\
	&\leq C  \|v\|_{2}^{2/3} \|v\|_{1}^{4/3} \sqrt{|\log \|v\|_2| + 1}.
\end{align}
where we used the bounds on $\|\widehat  v \ast \widehat  \phi \|_\infty \leq \|v\|_{2}^{2/3} \|v\|_{1}^{4/3} \leq \|v\|_{2}^2$ from (\ref{eq:bound_v_ast_phi}) and 
\begin{align}
\|\widehat  v \ast \widehat  \phi \|_2  \leq C \|v\|_2 \left( \||p|^{-2} \widehat  v\|_{1} +  (1+\|v\|_{2}^2) \|v\|_{2}^{1/3} \|v\|_{1}^{2/3}\right) \leq C (\|v\|_2+1)^4.
\end{align}
Finally combing the above estimates, we obtain
\begin{align}
\||p|^{1/2} \widehat {\phi}\|_2  \leq C  \|v\|_{2}^{2/3} \|v\|_{1}^{4/3} \sqrt{|\log \|v\|_2| + 1}.
\end{align}
\end{proof}

\subsection{Comparison between torus and free space scattering lengths}

For a generic potential $0\leq v\in L^2(\T^3)$, we define the torus scattering length by minimization problem
\begin{align}
4\pi \ao_{\T^3}(v) = \inf_{\varphi \in H^1(\T^3) \atop \int \phi =0 } \left\{ \int_{\T^3} |\nabla \varphi|^2 + \frac{1}{2} \int_{\T^3} v |1+\varphi|^2 \right\}
\end{align}
By the Lax-Milgram Theorem this problem has a unique minimizer $\varphi_{\T^3}$, which satisfies the equation~\eqref{eq:scat:gen}.
 We may extend $v$ beyond the unit cube by zero, to obtain a potential $0\leq  v\in L^1(\R^3)$. Denote by  $a_{\mathbb{R}^{3}}(v)$ the scattering length on the full space defined in (\ref{eq:scatt_length-intro}) of this potential.

\begin{lem}\label{lem:diff_scatt_gen}
Let $0 \leq v \in L^2(\T^3)$. There exists $C>0$ so that, denoting $v_n(x) = n^{2} v(nx)$, we have for all  $n\geq 1$
\begin{align*}
| n \ao_{\T^3}(v_n) - \ao_{\mathbb{R}^{3}}(v)| \leq C n^{-1}.
\end{align*}
\end{lem}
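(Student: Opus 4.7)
\emph{Plan.} By the scaling property $\ao_{\R^3}(v_n) = n^{-1}\ao_{\R^3}(v)$, which follows from the fact that $\widetilde\varphi_{\R^3,n}(x) := \varphi_{\R^3}(nx)$ solves the full-space scattering equation for $v_n$, the claim reduces to showing $|\ao_{\T^3}(v_n) - \ao_{\R^3}(v_n)| \lesssim n^{-2}$. Without loss of generality assume $n$ is large enough that $\mathrm{supp}\,v_n \subset [-1/4, 1/4]^3$ inside the fundamental domain, since small $n$ can be absorbed into the constant. I would establish the two-sided comparison $\ao_{\T^3}(v_n) = \ao_{\R^3}(v_n) + O(n^{-2})$ by constructing, for each variational problem, a trial function built from the minimizer of the other.

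\emph{Upper bound.} Fix $\chi \in C^\infty_c(\T^3)$ equal to one on a neighborhood of the origin containing $\mathrm{supp}\,v_n$ and supported away from the boundary of the fundamental domain, so that $\chi\widetilde\varphi_{\R^3,n}$ (extended by zero) is a well-defined function on $\T^3$. Set $c_n := \int_{\T^3} \chi \widetilde\varphi_{\R^3,n}$ and use $\varphi := \chi \widetilde\varphi_{\R^3,n} - c_n$ as a trial function in the torus variational problem. The decay estimates $\widetilde\varphi_{\R^3,n}(x) = -\ao_{\R^3}(v)/(n|x|) + O((n|x|)^{-2})$ and $|\nabla\widetilde\varphi_{\R^3,n}(x)| = O((n|x|^2)^{-1})$ outside $\mathrm{supp}\,v_n$ imply (i) that the cutoff-generated pieces of $\int_{\T^3}|\nabla\varphi|^2$ contribute only $O(n^{-2})$, (ii) that $c_n = O(n^{-1})$ by direct integration, and (iii) that the expansion
\[
\int v_n(1+\varphi)^2 = \int v_n(1+\widetilde\varphi_{\R^3,n})^2 - 2 c_n \int v_n(1+\widetilde\varphi_{\R^3,n}) + c_n^2 \int v_n
\]
produces corrections $O(c_n \cdot \ao_{\R^3}(v_n)) + O(c_n^2 \|v_n\|_1) = O(n^{-2})$, since both $\int v_n(1+\widetilde\varphi_{\R^3,n}) = 8\pi\ao_{\R^3}(v_n)$ and $\|v_n\|_1$ are $O(n^{-1})$. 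This yields $\ao_{\T^3}(v_n) \leq \ao_{\R^3}(v_n) + Cn^{-2}$.

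\emph{Lower bound.} For the reverse direction, I would build an $\R^3$-trial function from $\varphi_{\T^3,n}$. Using the Green's function representation $\varphi_{\T^3,n} = C_n - \tfrac12 \int_{\T^3} G_{\T^3}(\cdot - y) v_n(y)(1+\varphi_{\T^3,n}(y))\,dy$ (with $C_n$ enforcing mean zero) together with the decomposition $G_{\T^3}(x) = (4\pi|x|)^{-1} + R(x)$ on the fundamental domain, where $R$ extends smoothly through the origin, one shows that outside $\mathrm{supp}\,v_n$ the torus solution differs from $-\ao_{\T^3}(v_n)/|x|$ only by a smooth remainder of size $O(n^{-1})$ (and similarly for its gradient). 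Patching $\chi(\varphi_{\T^3,n} - C_n)$ with the exterior tail $-\ao_{\T^3}(v_n)/|x|$ through a smooth interpolation supported in the annular region where $|x|\sim 1$ produces a function in $1 + \dot H^1(\R^3)$ whose full-space energy equals the torus energy of $\varphi_{\T^3,n}$ up to $O(n^{-2})$, by the same bookkeeping as in the upper bound.

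\emph{Main obstacle.} The delicate point is not any single estimate but the bookkeeping that all cutoff errors, mean-zero shifts, and contributions from the smooth part $R$ of the torus Green's function combine to give exactly the $n^{-2}$ rate. The quadratic structure of the variational energy is crucial: a constant shift of order $n^{-1}$ multiplies $\int v_n \sim n^{-1}$, producing the squared small parameter. Together with the smoothness of $R$, this yields the sharp $n^{-2}$ bound, which translates via the scaling back into the stated $n^{-1}$ bound on $|n\ao_{\T^3}(v_n) - \ao_{\R^3}(v)|$.
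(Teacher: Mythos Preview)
Your upper bound is essentially the paper's argument: cut off the rescaled full-space minimizer, adjust by a constant to make it mean-zero, and check that the shift and the cutoff terms contribute $O(n^{-2})$.

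Your lower bound, however, is substantially more elaborate than what the paper does. You propose to analyze the torus Green's function, extract the Coulomb singularity, and patch $\chi(\varphi_{\T^3,n}-C_n)$ to an exterior tail $-\ao_{\T^3}(v_n)/|x|$. This can be made to work, but it is unnecessary: a compactly supported function already lies in $\dot H^1(\R^3)$ and is an admissible trial state for the full-space problem, so no tail matching is required. The paper simply takes $\varphi_n=\chi\,\varphi_{\T^3}^{(n)}$ itself as a trial function in $\R^3$. Since $\chi\equiv 1$ on $\supp v_n$, the potential term is unchanged, and after an integration by parts the excess in the kinetic term reduces to
\[
\int_{\R^3}|\nabla\chi|^2\,|\varphi_{\T^3}^{(n)}|^2\ \lesssim\ \|\varphi_{\T^3}^{(n)}\|_2^2,
\]
which is $O(n^{-2})$ by the a priori $L^2$ bound on the torus scattering solution (Lemma~\ref{lem:reg_est} with Remark~\ref{rem:Lp-scaling}). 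This bypasses the Green's-function decomposition and the delicate bookkeeping you describe. What the paper's approach buys is brevity and robustness: the entire lower bound hinges on a single $L^2$ estimate rather than on pointwise control of $\varphi_{\T^3}^{(n)}$ in the annulus.
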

\begin{proof}
Let $\varphi_{\mathbb{R}^{3}}$ be the solution to (\ref{eq:phi_full_space}), then from \cite[Theorem 6]{NamRicTri-23} we know that
\begin{align}
 -\frac{1}{|x|+1} \lesssim \varphi_{\mathbb{R}^{3}}(x) \leq 0, \quad |\nabla \varphi_{\mathbb{R}^{3}}(x)| \lesssim \frac{1}{|x|^2 + 1}
\end{align}
for all $x \in \mathbb{R}^{3}$.
Let $0\leq \chi \in C^\infty(\R^3)$ such that $\chi(x) = 1$ for $|x| \leq 1/3$ and $\chi(x) = 0$ for $|x| \geq 1/2$. We set $\varphi_{n}(x) = \chi(x)\varphi_{\mathbb{R}^{3}}(n x)+c_n$, with $c_n$ chosen so that the integral of $\phi_n$ over the unit cube equals zero. Then $\phi_n$ satisfies for $n\geq 2$
\begin{align}
-\Delta \varphi_n + \tfrac{1}{2} v_n(1+ \varphi_n) = (-\Delta \chi) \varphi(n\cdot) - 2 n \nabla \chi \cdot \nabla \varphi(n\cdot) +\tfrac12 v_n c_n
\end{align}
since $\supp \, v_n \subset \{|x|\leq 1/3\} \subset \{\chi \equiv 1\}$ for $n\geq 2$. 
Note that 
\begin{equation}
 |c_n| \leq \int |\chi \phi_{\R^3}(n\cdot)| \lesssim  \int\limits_{[-1/2,1/2]^3}\frac{1}{n |x|+1} \lesssim n^{-1},
\end{equation}
so, since $\phi_{\R^3}$ is bounded and $\int v_n = n^{-1} \int v$, 
\begin{equation}
 \int v_n (|1+\phi_n|^2 -|1+\chi \phi_{\R^3}(n\cdot)|^2)  =\mathcal{O}(n^{-2}).
\end{equation}
Then, using the variational characterization of $4\pi \ao_{\T^3}(v_n)$ and (\ref{eq:rel_scat_R3}), we obtain
\begin{align}
4\pi \ao_{\T^3}(v_n) 
	&\leq \int_{\T^3} |\nabla \varphi_n|^2 + \frac{1}{2} \int_{\T^3} v_n |1+\varphi_n|^2 \nn\\
	&=  4\pi \ao_{\mathbb{R}^{3}}(v_n) + \int_{\mathbb{R}^{3}} \left((-\Delta \chi) \varphi_{\R^3}(n\cdot) - 2 n \nabla \chi \cdot \nabla \varphi_{\R^3}(n\cdot)\right) \varphi_n +\mathcal{O}(n^{-2})  \nn \\
	&\leq 4\pi \ao_{\mathbb{R}^{3}}(v_n) + C \int\limits_{[-1/2,1/2]^3} \left(\frac{1}{(n |x|+1)^2} +\frac{n}{(n |x|+1)^3}\right)  +\mathcal{O}(n^{-2})   \nn\\
	&\leq 4\pi \ao_{\mathbb{R}^{3}}(v) n^{-1}+\mathcal{O}(n^{-2}) ,
\end{align}
where we used that $\supp \chi \subset [-1/2,1/2]^3$ and the exact scaling relation $n \ao_{\mathbb{R}^{3}}(v_n) = \ao_{\mathbb{R}^{3}}(v)$.

Reciprocally, take this time  $\varphi_{n}(x) = \chi(x)\varphi_{\T^3}^{(n)}(x)$ where 
\begin{equation}
\varphi_{\T^3}^{(n)} = - \frac12 \frac{1}{P^{\perp}(-\Delta + \frac{1}{2}v_n)P^{\perp}} P^{\perp} v_n \in P^{\perp}\in L^2(\T^3) \cong L^2([-1/2,1/2]^3). 
\end{equation}
The same computation as above now leads to, since $\phi_{\T^3}$ is real,
\begin{align}
4\pi \ao_{\mathbb{R}^{3}}(v_n) -  4\pi  \ao_{\T^3}(v_n) 
	&\leq \int_{\mathbb{R}^{3}} \left((-\Delta \chi) \varphi_{\T^3}^{(n)} - 2 \nabla \chi \cdot \nabla \varphi_{\T^3}^{(n)}\right) \chi \varphi_{\T^3}^{(n)}\nn \\
	&= \int_{\mathbb{R}^{3}} |\nabla \chi|^2 |\varphi_{\T^3}^{(n)}|^2\nn \\ 
	&\leq  C \|\varphi_{\T^3}^{(n)}\|_2^2 =\mathcal{O}( n^{-2}),
\end{align}
where for the last estimate we used Lemma \ref{lem:reg_est} together with Remark \ref{rem:Lp-scaling}.
\end{proof}

\subsection*{Funding}
This work was partially funded by the Deutsche Forschungsgemeinschaft (DFG, German Research Foundation) – Project-ID 470903074 – TRR 352 and supported by the Agence National de la Recherche (ANR, project MaBoP ANR-23-CE40-0025). J.L. acknowledges financial support by the EIPHI Graduate School (ANR-17-EURE-0002) and Bourgogne-Franche-Comté Region through the project SQC.




\end{document}